\pdfoutput=1

\documentclass[nolineno, thm-restate]{socg-lipics-v2021}
\hideLIPIcs

\usepackage{amsthm}
\usepackage{mathtools}
\usepackage{caption}
\usepackage[linesnumbered,ruled,vlined]{algorithm2e}
\usepackage{xspace}
\usepackage{multirow}
\usepackage{tikz}
\usetikzlibrary{calc,angles,decorations.pathreplacing,arrows,arrows.meta}

\usepackage[T1]{fontenc} 
\usepackage{lmodern} 

\usepackage{todonotes}

\usetikzlibrary{patterns.meta}
\tikzdeclarepattern{
  name=mylines,
  parameters={
      \pgfkeysvalueof{/pgf/pattern keys/size},
      \pgfkeysvalueof{/pgf/pattern keys/angle},
      \pgfkeysvalueof{/pgf/pattern keys/line width},
  },
  bounding box={
    (0,-0.5*\pgfkeysvalueof{/pgf/pattern keys/line width}) and
    (\pgfkeysvalueof{/pgf/pattern keys/size},
     0.5*\pgfkeysvalueof{/pgf/pattern keys/line width})},
  tile size={(\pgfkeysvalueof{/pgf/pattern keys/size},
              \pgfkeysvalueof{/pgf/pattern keys/size})},
  tile transformation={rotate=\pgfkeysvalueof{/pgf/pattern keys/angle}},
  defaults={
    size/.initial=5pt,
    angle/.initial=45,
    line width/.initial=.4pt,
  },
  code={
      \draw [line width=\pgfkeysvalueof{/pgf/pattern keys/line width}]
        (0,0) -- (\pgfkeysvalueof{/pgf/pattern keys/size},0);
  },
}

\usepgflibrary{patterns} 
\usepgflibrary[patterns] 
\usetikzlibrary{patterns} 
\usetikzlibrary[patterns] 

\definecolor{guard}{HTML}{2527F2}

\definecolor{feasible region}{HTML}{1A8F2D} 

\definecolor{notvisible}{HTML}{E9002D}
\definecolor{SkyBlue}{rgb}{0.53, 0.81, 0.92}
\definecolor{visible triangle}{HTML}{EEEEEE}

\definecolor{Qi}{HTML}{FC4E2A}
\definecolor{Qi outer}{HTML}{F08F6E}
\definecolor{Qi inner}{HTML}{FABF7B}

\definecolor{shadowcolor}{HTML}{009ADE}

\definecolor{darkgray}{HTML}{616161}

\definecolor{extracolor}{HTML}{AF58BA}

\tikzset{
    qi/.style={color=Qi outer, pattern={mylines[size=2pt, line width=.6pt, angle=45]}, pattern color=Qi inner},
    F/.style={thick, color=feasible region, pattern={mylines[size=4pt, line width=1.2pt, angle=-45]}, pattern color=feasible region!30},
    Fbare/.style={color=white, pattern={mylines[size=4pt, line width=1.2pt, angle=-45]}, pattern color=feasible region!30},
    nv/.style={thick, color = notvisible, pattern = dots, pattern color = notvisible!70},
    shadow/.style={color = white, pattern = horizontal lines, pattern color = shadowcolor!40},
    shadow2/.style={color = white, pattern = vertical lines, pattern color = shadowcolor!20},
    extra/.style={color=white, pattern= crosshatch, pattern color=extracolor!35},
}

\DeclareFontShape{T1}{lmr}{m}{scit}{<->ssub * lmr/m/scsl}{}%
{ \sffamily%
  \DeclareFontShape{T1}{lmss}{m}{sc}{<->ssub * lmr/m/sc}{}%
  \DeclareFontShape{T1}{lmss}{bx}{sc}{<->ssub * cmr/bx/sc}{}%
}

\theoremstyle{claimstyle}
\newtheorem{case}{Case}
\newtheorem{subcase}{Subcase}[case]
\newtheorem{subsubcase}{Subsubcase}[subcase]
\makeatletter\@addtoreset{case}{theorem}\makeatother
\makeatletter\@addtoreset{subcase}{case}\makeatother
\makeatletter\@addtoreset{subsubcase}{subcase}\makeatother
\theoremstyle{plain}

\newenvironment{proofsketch}{%
    \proof}{\endproof}

\usepackage[ruled,vlined]{algorithm2e}
\LinesNumbered

\newcommand{\FeasibleRegion}[1]{F\!\lrParents{#1}}


\newcommand{\I}[2]{\lrBrackets{#1, #2}}
\newcommand{\Fc}{\mathcal{V}\xspace}
\DeclareMathOperator{\Gsc}{\textsc{G}}
\newcommand{\GscNEW}[1]{\textsc{G}\!\lrParents{#1}}

\newcommand{\visPol}[2]{V\!P\hspace{-0.2 em}\lrParents{#1, #2}}
\newcommand{\lastVisPoint}[3]{lastVisiblePoint\hspace{-0.2 em}\lrParents{#1, #2, #3}}
\newcommand{\tri}[1]{\triangle #1}

\newcommand{\Oh}[1]{\mathcal{O}\!\lrParents{#1}}
\newcommand{\OhMega}[1]{\Omega\!\lrParents{#1}}
\newcommand{\OhTheta}[1]{\Theta\!\lrParents{#1}}
\newcommand{\lrBrackets}[1]{\left[ #1 \right]}
\newcommand{\lrCurlyBrackets}[1]{\left\{ #1 \right\}}
\newcommand{\lrParents}[1]{\left( #1 \right)}
\newcommand{\lrSize}[1]{\left| #1 \right|}

\newcommand{\myMax}[2]{\max\!\lrCurlyBrackets{#1 \,|\, #2}}

\newcommand{\linesegment}[2]{\overline{#1\,#2}}
\newcommand{\lineextension}[2]{L(#1,#2)}
\newcommand{\ray}[2]{\overrightarrow{#1\,#2}}

\usepackage{amsmath}
\usepackage{amssymb}
\newcommand{\R}{\mathbb{R}}

\newcommand{\cagp}{contiguous art gallery problem\xspace}

\newcommand{\CAG}{Contiguous Art Gallery\xspace}
\newcommand{\opt}{{k^*}\xspace}

\newcommand{\Function}[1]{\textup{\textsc{#1}}\xspace}
\newcommand{\Greedy}{\Function{GreedyInterval}}
\newcommand{\RepeatedGreedy}{Algorithm~\ref{alg:greedy}\xspace}

\newcommand{\Cpp}{C\nolinebreak\hspace{-.05em}\raisebox{.4ex}{\tiny\bf +}\nolinebreak\hspace{-.10em}\raisebox{.4ex}{\tiny\bf +}\xspace}

\funding{Supported by Independent Research Fund Denmark (DFF), grants~9131-00113B and 10.46540/3103-00334B}

\keywords{Art Gallery, Computational Geometry, Combinatorics, Discrete Algorithms}

\title{The \CAG Problem is Solvable in Polynomial Time}
\ccsdesc[500]{Theory of computation~Computational geometry}
\authorrunning{Merrild, Rysgaard, Schou and Svenning}

\author{Magnus Christian Ring Merrild}{Department of Computer Science, Aarhus University, Denmark}{merrild@cs.au.dk}{https://orcid.org/0009-0004-7272-7839}{}

\author{Casper Moldrup Rysgaard}{Department of Computer Science, Aarhus University, Denmark}{rysgaard@cs.au.dk}{https://orcid.org/0000-0002-3989-123X}{}

\author{Jens Kristian Refsgaard Schou}{Department of Computer Science, Aarhus University, Denmark }{jkrs@cs.au.dk}{https://orcid.org/0000-0002-9915-7061}{}

\author{Rolf Svenning}{Department of Computer Science, Aarhus University, Denmark}{rolfsvenning@cs.au.dk}{https://orcid.org/0000-0002-9903-4651}{}

\acknowledgements{We thank Joseph O'Rourke for organizing the open problem session~\cite{CCCG2024_open} at the Canadian Conference on Computational Geometry 2024 (CCCG24) at Brock University, Canada, where Thomas C. Shermer proposed the \cagp. The authors also thank the participants of CCCG24 for their lively discussions of the problem at the conference, especially Frederick Stock. This paper was submitted to the Symposium on Computational Geometry 2025 conference and accepted as a merge with \cite{robson2024analyticarccoverproblem} and \cite{biniaz2024contiguousboundaryguarding} into \cite{cagp_merge}. We thank our collaborators for the engaging discussions that inspired optimizations to our work.}

\begin{document}

\maketitle

\begin{abstract}
    In this paper, we study the \emph{Contiguous Art Gallery Problem}, introduced by Thomas C. Shermer at the 2024 Canadian Conference on Computational Geometry, a variant of the classical art gallery problem from 1973 by Victor Klee. 
    In the contiguous variant, the input is a simple polygon $P$, and the goal is to partition the boundary into a minimum number of polygonal chains such that each chain is visible to a guard. 
    We present a polynomial-time RAM algorithm, which solves the contiguous art gallery problem. 
    Our algorithm is simple and practical, and we make a \Cpp implementation available.
    
    In contrast, many variations of the art gallery problem are at least \textsc{NP}-hard, making the contiguous variant stand out. 
    These include the classical art gallery problem and the \emph{edge-covering problem}, both of which being proven to be $\exists\R$-complete recently by Abrahamsen, Adamaszek, and Miltzow [J. ACM 2022] and Stade [SoCG 2025], respectively.
    Our algorithm is a greedy algorithm that repeatedly traverses the polygon's boundary.
    To find an optimal solution, we show that it is sufficient to traverse the polygon polynomially many times, resulting in a runtime of $\mathcal{O}\!\left( n^6 \log n \right)$ arithmetic operations. We further bound the bit complexity of the computed values, showing that problem is in \textsc{P}.
    Additionally, we provide algorithms for the restricted settings, where either the endpoints of the polygonal chains or the guards must coincide with the vertices of the polygon.
\end{abstract}

\section{Introduction}

The \emph{art gallery problem}, introduced by Victor Klee in 1976, is a classical computational geometry problem where the goal is to find a minimum set of guards (points) in the interior of an input polygon $P$ that sees every other point in $P$ of the polygon. 
There are numerous variations of this problem, many of which are at least \textsc{NP}-hard or require complicated algorithms with a high polynomial running time.
This is particularly the case for unrestricted variants, where guard locations and the part of the polygon they cover are not constrained to the vertices of the input polygon.
In this paper, we study the \emph{contiguous art gallery problem} where the boundary of $P$ should be partitioned into a minimum number of contiguous intervals, i.e., polygonal chains, such that each chain is visible to a guard in the interior of $P$.
Neither the guards nor the endpoints of the chains are restricted to the vertices of $P$.
The problem was introduced by Thomas C. Shermer at the Canadian Conference on Computational Geometry 2024.
We resolve it by providing a polynomial-time real RAM algorithm.

\begin{restatable}{theorem}{thmCAGIsInP}
\label{thm:CAG_is_in_P}
    The \cagp for a simple polygon with $n$ vertices is solvable in $\Oh{\opt n^5\log n}$ arithmetic operations, where $\opt$ is the size of an optimal solution.
\end{restatable}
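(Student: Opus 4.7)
The plan is to prove Theorem~\ref{thm:CAG_is_in_P} by exhibiting an explicit greedy algorithm and analyzing its correctness and running time in three pieces.

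\textbf{A single greedy subroutine.} First I would design a subroutine \Greedy$(s)$ which, given a point $s \in \partial P$, walks along $\partial P$ from $s$ and returns the farthest point $t$ such that the polygonal chain from $s$ to $t$ is jointly visible, i.e., there exists some guard $g \in P$ with the entire chain contained in $\visPol{P}{g}$. Equivalently, $t$ is the last point for which the common visibility region $\bigcap_{p \in [s,t]} \visPol{P}{p}$ is nonempty. Sweeping $t$ forward along $\partial P$, this region only shrinks, and it changes combinatorially only at discrete events: when the sweep passes a reflex vertex of $P$ or a point where a supporting line through a pair of polygon vertices meets $\partial P$. Maintaining the intersection across these $\Oh{n}$ events is a standard visibility-sweep and can be done in $\Oh{n\log n}$ time per call to \Greedy.

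\textbf{Iterated greedy from a fixed start.} Starting from any $s_0 \in \partial P$ and repeatedly setting $s_{i+1} := \Greedy(s_i)$ partitions $\partial P$ into some number $k(s_0)$ of chains, each jointly visible. The correctness lemma I would prove is: if $s_0$ coincides with the starting point of some chain in an optimal partition, then $k(s_0) = \opt$. The argument is a standard exchange/shift: since \Greedy extends the current chain as far as possible, the $i$-th greedy chain reaches at least as far along $\partial P$ as the $i$-th chain of the optimal partition that starts at $s_0$, so by induction the greedy uses no more chains than the optimum.

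\textbf{Enumerating candidate starts.} The remaining and hardest step is to show that a start $s_0$ satisfying the hypothesis of the correctness lemma can be found in a polynomial-size set $S \subseteq \partial P$ that we can enumerate a priori. Here one argues that in any optimal partition the chain endpoints can, possibly after a continuous perturbation that preserves optimality, be taken to be points on $\partial P$ determined by $\Oh{1}$ vertices of $P$; concretely, intersections of $\partial P$ with lines through pairs (or triples) of polygon vertices that witness the collapse of a common visibility region. A careful case analysis bounds $|S|$ by $\Oh{n^5}$. Running the iterated greedy from each candidate in $S$ and returning the best partition yields total running time $\Oh{|S| \cdot \opt \cdot n \log n} = \Oh{\opt n^6 \log n}$.

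The main obstacle is the combinatorial characterization of $S$. The exchange argument in the second step is routine once \Greedy is in place, but showing that every optimal partition is equivalent to one whose first chain begins at a point definable by only a constant number of polygon vertices requires a delicate geometric analysis of the precise configurations that force the common visibility region to become empty. This is where the structural core of the proof lies, and where the polynomial bound on $|S|$ — and hence the polynomial running time promised by the theorem — ultimately comes from.
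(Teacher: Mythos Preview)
Your proposal has a genuine gap in the third step, and it is precisely the step you flag as ``the main obstacle.'' The paper explicitly discusses this approach in Section~\ref{sec:limitations_of_existing_approaches} and leaves it as an \emph{open problem}: it is not known whether the endpoints of an optimal partition can be taken among points of constant degree (intersections of $\partial P$ with lines through $\Oh{1}$ vertices of $P$). Figure~\ref{fig:2_guard_incidence_points} shows that first-degree points do not suffice, and the paper states in Section~\ref{sec:open_problems} that deciding whether constant-degree points suffice is still open. So your bound $|S|=\Oh{n^5}$ is asserted without proof, and the ``delicate geometric analysis'' you allude to is not a routine case analysis but an unresolved structural question.

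The paper's route is genuinely different and avoids this obstacle entirely. Rather than enumerating candidate starts, it runs the greedy iteration from a \emph{single} arbitrary start and keeps going for many revolutions around $\partial P$. The combinatorial analysis (Section~\ref{sec:the_greedy_algorithm_a_combinatorial_viewpoint}) shows that once the greedy sequence hits an optimal endpoint it stays optimal forever, and identifies three ``progress conditions'' (positive fingerprint, repetition, edge jump) any of which certify optimality or advance toward it. The geometric analysis (Section~\ref{sec:the_greedy_algorithm_geometry}) then shows, via a pivot-line argument tracking how guards and feasible regions sit relative to lines through pairs of blocking vertices, that a progress condition must occur within $\Oh{kn^2}$ revolutions; after $n+1$ such events the sequence is provably optimal. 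This gives $\Oh{\opt n^3}$ revolutions, each costing $\Oh{n^3\log n}$ (not the $\Oh{n\log n}$ you claim for a single \Greedy call; maintaining the feasible region as an intersection of visibility polygons is more expensive, see Appendix~\ref{apx:alg_greedy_interval}), for the stated $\Oh{\opt n^6\log n}$ bound.
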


In section~\ref{sec:bit_complexity} we show that the bit complexity of the arithmetic operations are bounded polynomially in the number of bits used to encode the input polygon, therefore showing the aforementioned algorithm is in fact a polynomial-time RAM algorithm. 

\begin{restatable}{theorem}{thmCAGIsTruelyInP}
\label{thm:CAG_is_truely_in_P}
    The \cagp for a simple polygon encoded in $N$ bits is a member of the complexity class \textsc{P}.
\end{restatable}

In contrast to many other art gallery variants, it is a surprising and positive result that this variant allows for an efficient and simple algorithm.
We demonstrate this by implementing the algorithm in \Cpp using CGAL~\cite{fabri2000design, cgal:cgal_intersections_reg_boolean_fwzh-rbso2-24b, cgal_vis_pol_hhb-visibility-2-24b, cgal:eb-24b, cgal:arrangements_wfzh-a2-24b}.
Our code is available online~\cite{github_impl}.

\subsection{Related work}

\begin{table}[t]
    \centering

    \caption{Summary of the work related to the \cagp, for polygons with $n$ vertices where $\opt$ is the size of the minimal decomposition. 
    We abbreviate Art Gallery as AG, $\dagger$~refers to the guard location being restricted to vertices, and $\ddagger$~refers to the vertices of the pieces (the guarded area) being restricted to the vertices of the input polygon. 
    Notice how the difficulty of the four problems increases along the diagonal from the bottom left vertex to the top right.}
    \label{tab:table_of_related_work}
    \renewcommand*{\arraystretch}{1.3}
    \begin{tabular}{l|c|c|c}
        Problem & Vertex restricted & Unrestricted & With holes \\ \hline \hline
        Standard AG & \textsc{NP}-hard$^\dagger$~\cite{vertex_restricted_AG_nphard} & $\exists\R$-complete~\cite{AG_is_ER_complete} & $\exists\R$-complete~\cite{AG_is_ER_complete} \\\hline
        Edge-covering AG & \textsc{NP}-hard$^\dagger$~\cite{pidgeon_figure} & $\exists\R$-complete~\cite{stade2025pointboundaryartgalleryproblem} & $\exists\R$-complete~\cite{stade2025pointboundaryartgalleryproblem} 
        \\\hline
        Star-shaped partition & $\Oh{n^7\log n}^\ddagger$~\cite{doi:10.1137/0214056} & $\Oh{n^{105}}$~\cite{starshaped_decomposition} & \textsc{NP}-hard~\cite{joseph_survey} \\\hline
        \multirow{2}{*}{Contiguous AG} & $\Oh{n^2 \log^2 n}^\ddagger$ \emph{new} & \multirow{2}{*}{$\Oh{\opt n^5\log n}$ \emph{new}} & \multirow{2}{*}{\emph{Open}} \\
        & $\Oh{n^2\log n}^\dagger$ \emph{new} &  &
    \end{tabular}
\end{table}
The literature contains a variety of variations of the classical art gallery formulation~\cite{joseph_survey, Tom_survey, urrutia_survey}, and we summarize the most important ones in Table~\ref{tab:table_of_related_work}.
Most art gallery variants can be framed as \emph{decomposition problems}~\cite{decomposition_1985_KEIL1985197, decomposition_2000_MARKKEIL2000491}, where the goal is to decompose an input polygon $P$ into less complicated components whose union is $P$. 
If the pieces may overlap, it is \emph{covering problem}, and if not it is a \emph{partition problem}. 
In the art gallery setting each piece must be guarded.
The variants can further be categorized as \emph{restricted} or \emph{unrestricted}.
In a restricted version of the problem, the guards and/or the vertices of each piece must coincide with the vertices of $P$.
Conversely, an unrestricted version imposes no such constraints on the placement of guards or the vertices of the pieces.
Variants also differ based on the complexity of $P$.
We consider the two important cases of simple polygons with and without holes, with the latter typically being significantly more complicated.
In Table~\ref{tab:table_of_related_work} summarizes the difficulty of three fundamental art gallery variants and our contiguous variant.

\textit{The classical art gallery problem} can be viewed as a covering problem in which the input polygon is decomposed into a minimal number of \emph{star-shaped} polygons. 
A polygon is star-shaped exactly when it is possible to place a guard that sees all other points in the polygon. 
The decision problem \emph{``Can this polygon be guarded by $k$ guards?''} was recently shown by Abrahamsen, Adamaszek, and Miltzow~\cite{AG_is_ER_complete} to be $\exists\R$-complete~\cite{ExistR_introduction} for any simple polygon with or without holes. If the guards are restricted to the vertices of the polygon, then Lee and Lin~\cite{AG_is_NP_hard} showed that the problem is \textsc{NP}-hard. 

\textit{The edge-covering problem} is a variant of the art gallery problem, where the aim is to only cover the edges of a polygon, motivated by protecting valuable art on the walls of the gallery. 
The edge-covering variant was shown by Laurentini to be \textsc{NP}-hard~\cite{pidgeon_figure} for guards restricted to vertices and $\exists\R-$hard for unrestricted guards and polygons with holes.
As seen in Figure~\ref{fig:cag_not_star_shaped}, covering the edges does not imply covering the interior of $P$.

\textit{The minimum star-shaped partition problem} was shown by Keil~\cite{doi:10.1137/0214056} to be solvable in $\Oh{n^7\log n}$ arithmetic operations, when the star-shaped regions start and end at vertices. Without vertex restriction, Abrahamsen, Blikstad, Nusse, and Zhang~\cite{starshaped_decomposition} introduced a breakthrough algorithm solving this harder variant using $\Oh{n^{105}}$ arithmetic operations. With holes, computing the minimum star-shaped partition is \textsc{NP}-hard due to O'Rourke~\cite{joseph_survey}.

\textit{The \cagp} was introduced at the 2024 Canadian Conference on Computational Geometry, by Thomas C. Shermer and this variant is the study of this paper. 
Framed as a decomposition problem, the goal is to partition the boundary $\partial P$ into a minimal number of polygonal chains such that each can be seen by a guard interior to $P$.
We mainly focus on the unrestricted version for a simple polygon without holes, but we also describe algorithms for restricted versions. \RepeatedGreedy does not generalize to polygons with holes, see Appendix~\ref{apx:polygons_with_holes}, and determining the hardness of this variant is an interesting open problem.
The placement of guards in optimal solutions may differ across all variants, as demonstrated in Figures~\ref{fig:cag_not_star_shaped}~and~\ref{fig:AG_neq_CAG}.

\begin{figure}[ht]
    \centering
    
    \def\coords{
        \coordinate (A) at (0,10);
        \coordinate (B) at (12,10);
        \coordinate (C) at (12,0);
        \coordinate (D) at (0,0);
        \coordinate (E) at (2.24,9.2);
        \coordinate (F) at (2,7.5);
        \coordinate (G) at (9.76,9.2);
        \coordinate (H) at (2.24, 0.8);
        \coordinate (I) at (9.76, 0.8);
        \coordinate (J) at (10,7.5);
        \coordinate (K) at (10,2.5);
        \coordinate (L) at (2,2.5);
        \coordinate (M) at (14,8.34); 
        \coordinate (N) at (-2, 8.34); 
        \coordinate (Q) at (14,1.67); 
        \coordinate (R) at (-2, 1.67); 
        \coordinate (O) at (14,5);
        \coordinate (P) at (-2,5);
        \coordinate (X) at (6,10);
        \coordinate (Y) at (6,0);
    }
    
    \begin{tikzpicture}[scale = 0.3]
        \begin{scope}
            \coords
            \draw[color=red] (J) -- (M) -- (Q) -- (K) -- (I);
            \draw[color=blue] (L) -- (R) -- (N) -- (F) -- (E);
            \draw[color=teal] (J) -- (G) -- (B) -- (A) -- (E) -- (F);
            \draw[color=green] (I) -- (C) -- (D) -- (H) -- (L);
            \fill[color=blue] (L) circle[radius=0.2];
            \fill[color=red] (J) circle[radius=0.2];
            \fill[color=teal] (E) circle[radius=0.2];
            \fill[color=green] (I) circle[radius=0.2];
        \end{scope}

        \begin{scope}[xshift=22cm]
            \coords
            \draw[color=red] (X) -- (B) -- (G) -- (J) -- (M) -- (Q) -- (K) -- (I) -- (C) -- (Y);
            \draw[color=blue] (Y) -- (D) -- (H) -- (L) -- (R) -- (N) -- (F) -- (E) -- (A) -- (X);
            \fill[color=red] (P) circle[radius=0.2];
            \fill[color=blue] (O) circle[radius=0.2];
            \draw[dashed, color = red] (P) -- (X);
            \draw[dashed, color = red] (P) -- (G);
            \draw[dashed, color = red] (P) -- (J);
            \draw[dashed, color = red] (P) -- (K);
            \draw[dashed, color = red] (P) -- (I);
            \draw[dashed, color = red] (P) -- (Y);

        \end{scope}
    \end{tikzpicture}
    
    \caption{Left, an optimal solution to the vertex restricted guards contiguous art gallery problem, discovered by greedily maximizing in one direction. 
    Right, an optimal unrestricted contiguous art gallery solution requires 2 guards whose guarded pieces start and end at non-vertex points on the boundary, and these boundary points are not directly defined by segments of the polygon.}
    \label{fig:2_guard_incidence_points}
\end{figure}
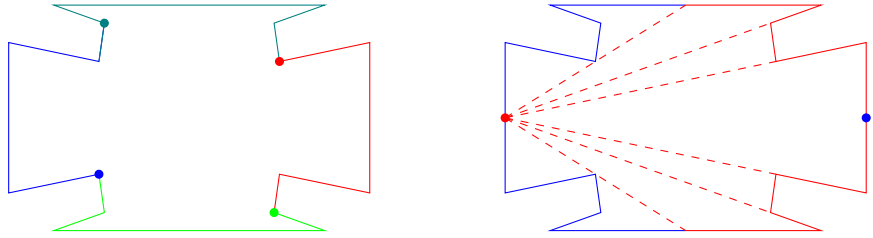

\begin{figure}[ht]
    \centering

    \begin{tikzpicture}
        \draw (0, 0) -- (-5, 0) -- (-5, 3) -- (-4, 3) -- (-4, 1) -- (-1, 1) -- (-1, 3) -- (0, 3) -- cycle;
        \fill[color=blue] (-1.5, 0) circle[radius=0.07] node[below right] {\textcolor{blue}{$g_r$}};
        \fill[color=red] (-3.5, 0) circle[radius=0.07]  node[below left] {\textcolor{red}{$g_\ell$}};;
        \draw[dashed, color=blue] (0, 3) -- (-1.5, 0) -- (-5, 1.4);
        \draw[dashed, color=red] (0, 1.4) -- (-3.5, 0) -- (-5, 3);
        \draw [decorate,decoration={brace,amplitude=5pt,raise=3pt}]
                (0, 3) -- node[xshift=15pt] {$r$} (0, 1.4);
        \draw [decorate,decoration={brace,amplitude=5pt,raise=3pt}]
                (-5, 1.4) -- node[xshift=-15pt] {$\ell$} (-5, 3);
        \draw [decorate,decoration={brace,amplitude=5pt,raise=5pt}]
                (-1.5, 0) -- node[yshift=-15pt] {$s$} (-3.5, 0);
    \end{tikzpicture}
    
    \caption{A polygon demonstrating where the blue guard $g_r$ sees all of $r$ and none of $\ell$, and the red guard $g_\ell$ is symmetric. 
    Any guard placed along $s$ will yield a trade-off between how much of $\ell$ and $r$ it sees.
    This shows that it is non-trivial with unrestricted guards to find a finite set of polygonal chains that include an optimal solution as there may be infinite maximal intervals.}
    \label{fig:CAG_tradeoff_infinite_different}
\end{figure}
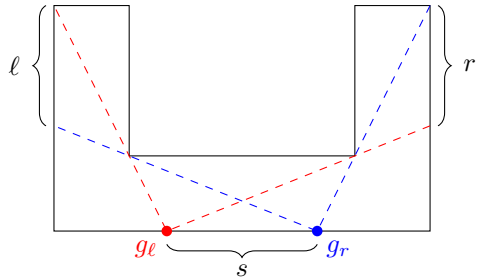

\begin{figure}[ht]
    \centering
    
    \def\coords{
        \coordinate (h00) at (0.8, 0.3);
        \coordinate (h01) at (1.8, -0.2);
        \coordinate (h02) at (2.5, 0.0);
        \coordinate (h03) at (1.7, 1.6);
        \coordinate (h04) at (2.0, 0.5);
        \coordinate (h10) at (-0.6598076211353314, 0.5428203230275511);
        \coordinate (h11) at (-0.7267949192431118, 1.6588457268119896);
        \coordinate (h12) at (-1.2499999999999996, 2.165063509461097);
        \coordinate (h13) at (-2.235640646055102, 0.672243186433546);
        \coordinate (h14) at (-1.433012701892219, 1.4820508075688776);
        \coordinate (h20) at (-0.14019237886466884, -0.8428203230275508);
        \coordinate (h21) at (-1.0732050807568885, -1.458845726811989);
        \coordinate (h22) at (-1.250000000000001, -2.165063509461096);
        \coordinate (h23) at (0.5356406460551006, -2.272243186433546);
        \coordinate (h24) at (-0.5669872981077817, -1.982050807568877);
        \coordinate (b00) at (0.9301208850715019, 0.5172240161046361);
        \coordinate (b01) at (1.345427755869507, 1.210537978279261);
        \coordinate (b02) at (0.6333609120801347, 1.2220015638506319);
        \coordinate (b03) at (0.37268901556668577, 0.5261981285661453);
        \coordinate (b10) at (-0.9129895799297773, 0.546896307010069);
        \coordinate (b11) at (-1.721070519370448, 0.5599056264000507);
        \coordinate (b12) at (-1.3749648537990262, -0.06249414229983674);
        \coordinate (b13) at (-0.6420454545454548, 0.05965909090909144);
        \coordinate (b20) at (-0.017131305141725006, -1.064120323114705);
        \coordinate (b21) at (0.3756427635009406, -1.7704436046793115);
        \coordinate (b22) at (0.7416039417188911, -1.1595074215507952);
        \coordinate (b23) at (0.26935643897876876, -0.5858572194752367);
        \coordinate (t0) at (-0.307900130957115, 0.1153499781738141);
        \coordinate (t1) at (0.054054054054053946, -0.3243243243243241);
        \coordinate (t2) at (0.25384607690306116, 0.2089743461505098);
    }
    
    \begin{tikzpicture}
        \begin{scope}
            \coords
            \draw[red] (h00) -- (h01) -- (h02) -- (h03) -- (h04) -- (b00);
            \draw[teal] (b00) -- (b01) -- (b02) -- (b03);
            \draw[red] (b03) -- (h10);
            \draw[blue] (h10) -- (h11) -- (h12) -- (h13) -- (h14) -- (b10);
            \draw[red] (b10) -- (b11) -- (b12) -- (b13);
            \draw[blue] (b13) -- (h20);
            \draw[teal] (h20) -- (h21) -- (h22) -- (h23) -- (h24) -- (b20);
            \draw[blue] (b20) -- (b21) -- (b22) -- (b23);
            \draw[teal] (b23) -- (h00);
            \fill[color=red] (h04) circle[radius=0.07];
            \fill[color=blue] (h14) circle[radius=0.07];
            \fill[color=teal] (h24) circle[radius=0.07];
            \draw[dashed, red] (h04) -- (t0);
            \draw[dashed, blue] (h14) -- (t1);
            \draw[dashed, teal] (h24) -- (t2);
            \fill[gray, opacity=0.5] (t0) -- (t1) -- (t2) -- cycle;
        \end{scope}
        
        \begin{scope}[xshift=7cm]
            \coords
            \draw[red]  (h00) -- (h01) -- (h02) -- (h03) -- (h04) -- (b00);
            \draw[blue] (h10) -- (h11) -- (h12) -- (h13) -- (h14) -- (b10);
            \draw[teal] (h20) -- (h21) -- (h22) -- (h23) -- (h24) -- (b20);
            \draw[green]  (b00) -- (b01) -- (b02) -- (b03) -- (h10);
            \draw[orange] (b10) -- (b11) -- (b12) -- (b13) -- (h20);
            \draw[cyan]   (b20) -- (b21) -- (b22) -- (b23) -- (h00);
            \fill[color=red]    ($(h04) + (0.1,-0.3)$) circle[radius=0.07];
            \fill[color=blue]   ($(h14) + (0,0.1)$) circle[radius=0.07];
            \fill[color=teal]   ($(h24) - (0.2,0)$) circle[radius=0.07];
            \fill[color=green]  (b03) circle[radius=0.07];
            \fill[color=orange] ($(b13) + (0,0.2)$) circle[radius=0.07];
            \fill[color=cyan]   ($(b23) - (0.15,0)$) circle[radius=0.07];
        \end{scope}
    \end{tikzpicture}

    \caption{Left, 3 optimal edge-covering guards where the inner gray triangle is not visible. 
    Right, 6 guards that optimally contiguously cover the boundary.
    Guards are visualized as points whose color matches the part of the boundary they guard.}
    \label{fig:cag_not_star_shaped}
\end{figure}
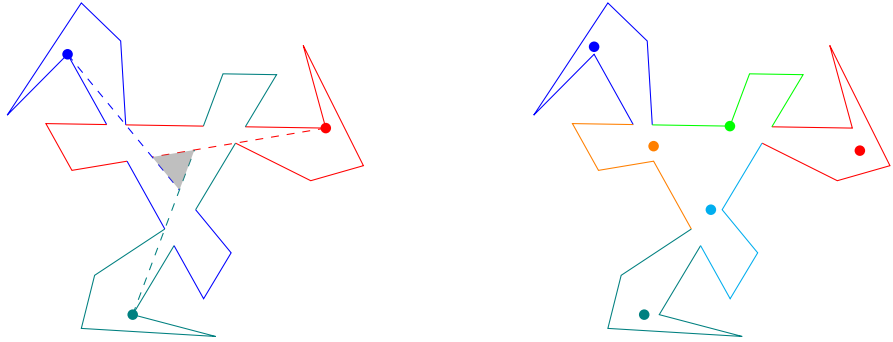

\begin{figure}[ht]
    \centering

    \begin{tikzpicture}
        \coordinate (G1) at (-1.5, 0);
        \coordinate (G2) at (0, 0);
        \coordinate (G3) at (1.5, 0);
        
        \foreach \h/\i in {l/-1, r/1} {
        \foreach \v/\j in {b/-1, t/1} {
            \coordinate (A\h\v) at (1*\i, 1*\j);
            \coordinate (B\h\v) at (0.8*\i, 1.4*\j);
            \coordinate (C\h\v) at (1.4*\i, 1.4*\j);
            \coordinate (D\h\v) at (2*\i, 1*\j);
            \coordinate (E\h\v) at (2*\i, 0.8*\j);
            \coordinate (F\h\v) at (2.4*\i, 0.95*\j);
            \coordinate (G\h\v) at (2.4*\i, 0.5*\j);
            \coordinate (H\h\v) at (2*\i, 0.4*\j);
            \draw (A\h\v) -- (B\h\v) -- (C\h\v) -- (D\h\v) -- (E\h\v) -- (F\h\v) -- (G\h\v) -- (H\h\v);
            
            \draw[dashed] (G2) -- (E\h\v);
            \draw[dashed] (G2) -- (H\h\v);

            \coordinate (Z\h\v) at (1.25*\i, 0.5*\j);
        }}
        \draw (Art) -- (Alt);
        \draw (Arb) -- (Alb);
        \draw (Hrt) -- (Hrb);
        \draw (Hlt) -- (Hlb);
        
        \draw[dashed] (G1) -- (Blt);
        \draw[dashed] (G1) -- (Blb);
        \draw[dashed] (G3) -- (Brt);
        \draw[dashed] (G3) -- (Brb);
        
        \foreach \g in {1, 2, 3}
            \node[draw, fill=blue, shape=rectangle, minimum size=0.14cm, inner sep=0pt] at (G\g) {};
        \foreach \g in {lt, lb, rt, rb}
            \fill[color=red] (Z\g) circle[radius=0.07];
    \end{tikzpicture}
    
    \caption{A polygon, where the red guards (circles) represent an optimal solution to the contiguous art gallery problem and the blue guards (squares) are an optimal solution to the classical art gallery problem.}
    \label{fig:AG_neq_CAG}
\end{figure}
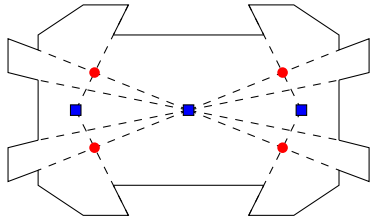

\subsection{Limitations of Existing Approaches}
\label{sec:limitations_of_existing_approaches}

We first sketch why the restricted version of the \cagp is relatively simpler to solve in polynomial time than the unrestricted variant, which is our focus.
When partitioning the boundary of a simple polygon, it is natural to view it as a circle and the polygonal chains as arches or intervals; see Figure~\ref{fig:intervals_on_boundaries}. 
By doing so, finding a minimal partition is similar to finding a \textit{minimal circle-cover} among a set of intervals $C$.
This problem was studied by Lee and Lee~\cite{circle_cover_arc_minimization_1984_LEE1984109} for finite cardinality $C$, giving an algorithm that runs in $\Oh{\lrSize{C}\log{\lrSize{C}}}$ time.
Thus, given a finite set $C$ of polygon chains of the boundary of $P$ such that an optimal solution is contained in $C$, then the \cagp can be solved in $\Oh{\lrSize{C}\log{\lrSize{C}}}$ time by simply viewing each polygonal chain as an arc and running the minimal circle-covering algorithm.
In the restricted \cagp, that is, restricting guard positions to vertices of $P$ or restricting the endpoints of the polygonal chains to coincide with vertices of $P$, leads to a set of polygonal chains $C$ that includes an optimal solution of size $\Oh{n^2}$ and $\Oh{n}$, respectively.
The set $C$ must also be computed, and in Appendix~\ref{apx:vertex_restricted_variants} we describe how to solve these two problems in $\Oh{n^2 \log n}$ and $\Oh{n^2 \log^2 n}$ time, respectively.

The unrestricted \cagp is our main focus, where guards may be placed anywhere interior to $P$, and polygonal chains of the boundary of $P$ may start and end anywhere.
In this setting, it is non-trivial to generate a polynomial-sized set of intervals that contains an optimal solution, which would allow the use of the algorithm for the circle-cover minimization problem. 
An approach that does not work is to generate all the different candidate intervals that are maximal, as there may be infinitely many of these, as shown in Figure~\ref{fig:CAG_tradeoff_infinite_different}.
Another approach that we were unable to rule out is based on showing that an optimal solution coincides with a point of low $\emph{degree}$~\cite{starshaped_decomposition} either with a guard or the endpoint of a polygonal chain.
Points of degree $0$ are the vertices of $P$, and points of degree $i > 0$ are points formed by intersections of lines formed by pairs of points of degree $(i-1)$.
Letting $D_i$ be the number of points of degree $i$, then $D$ grows as $\OhTheta{n^{4^i}}$ so even if this approach is feasible, it leads to high polynomial running time even for low degree points.
For the minimum star-shaped partition problem the authors showed that first-degree points suffice as candidates~\cite{starshaped_decomposition}.

\subsection{Our Greedy Optimal Solution}

Our solution takes a different and remarkably simple approach.
At the core of our method is the \Greedy algorithm, which, given a point $x$ on the polygon's boundary, finds the furthest point $y$ along the boundary in the clockwise direction such that the polygonal chain from $x$ to $y$ can be guarded by an interior point of the polygon.
We denote these as \emph{greedy intervals}.
Repeatedly taking these greedy steps until the boundary has been covered gives a solution of size $k \leq \opt + 1$. 
Each traversal of the boundary we denote by a \emph{revolution}, and continuing this process performing $\Oh{k n^3}$ revolutions ultimately yields an optimal solution. 
Performing a single revolution by repeatedly using the \Greedy algorithm takes $\Oh{n^2 \log n}$ arithmetic operations, leading to $\Oh{\opt n^5\log n}$ total arithmetic operations.
The model of computation used here is the real RAM model~\cite{PR_complexity_class} where arithmetic operations on real numbers take unit time.
Here, the size of the input $n$ is the number of real-valued inputs, that is, the vertices of a circular list of vertices of $P$.
The main result of our paper is Theorem~\ref{thm:CAG_is_in_P} that the \cagp is solvable in polynomial time in the real RAM model. 

We also consider the problem in the RAM model, where the polygon is represented as an array of points, each point having two coordinates consisting of fractions with integer numerator and denominator. We consider the case where $N$ bits are used to describe the entire input and show that $poly(N)$ operations is sufficient to find an optimal guarding. This is the contents of Section~\ref{sec:bit_complexity}.

When the input polygon has holes our approach does not work and it is an intriguing open problem to determine whether this variant of the \cagp is solvable in polynomial time.

\subsection{Concurrent work}
This work is concurrent with \cite{robson2024analyticarccoverproblem} and
\cite{biniaz2024contiguousboundaryguarding} that both solve the \cagp using different techniques, the merge of this paper \cite{cagp_merge} was published at SoCG25. A notable difference is that this paper traverses the polygon clockwise, while the three others are counter-clockwise.

In \cite{robson2024analyticarccoverproblem} it is explored how \Greedy can be expressed as a piecewise rational linear function and show that repeatedly composing \Greedy with itself leads to a piecewise rational linear function representing \emph{all} optimal solutions. 

In \cite{biniaz2024contiguousboundaryguarding} it is explored how to generate $\Oh{n^4}$ candidate solutions by starting \RepeatedGreedy from a carefully selected set of points, such that at least one of these will be optimal.

\subsection{Organization}

In Section~\ref{sec:greedy_algo} we present our algorithm for solving the \cagp, and in Section~\ref{sec:alg_greedy_interval} we cover details of how our algorithm can be implemented using basic computational geometry operations. 
In Sections~\ref{sec:the_greedy_algorithm_a_combinatorial_viewpoint}~and~\ref{sec:the_greedy_algorithm_geometry}, we derive an upper bound on the number of iterations of our greedy algorithm based on the geometric and combinatorial properties of our algorithm. 
In Section~\ref{sec:bit_complexity} we then use this bound to prove that the \cagp is in the complexity class \textsc{P}.
Finally, in Section~\ref{sec:open_problems} we state related open problems.

In Appendix~\ref{apx:vertex_restricted_variants} we describe how to solve the \cagp under vertex restrictions. 
In Appendix~\ref{apx:proof_of_supporting_lemmas} we prove geometric supporting lemmas. 
In Appendix~\ref{apx:counter_examples} we give two examples that illustrate interesting behavior of \RepeatedGreedy.

\section{The Repeated Greedy Algorithm}
\label{sec:greedy_algo}

Formally, let $P$ be a simple polygon with $n$ vertices $v_0,\dots,v_{n-1}$ and edges $e_i$ connecting $v_i$ and $v_{i+1}$.
The boundary of $P$ we denote as $\partial P$. 
The goal of the \cagp is to find the minimum $\opt$ such that one can partition the boundary $\partial P$ into $\opt$ contiguous \emph{visible} polygonal chains $I_1,I_2,I_3,\dots,I_\opt$ whose union is $\partial P$. 
A chain $I_i$ is visible if there exists a guard $g_i$ such that, for all points $x$ on $I_i$, the line segment $\linesegment{x}{g_i}$ is contained in $P$.
Conceptually, each $I_i$ corresponds to an interval of the boundary between two points $a$ and $b$, denoted as $\lrBrackets{a, b}$.
We allow guards to be collocated and consider them non-blocking, i.e., they can see through each other.
Furthermore, for convenience, we always index the edges and vertices of $P$ modulo $n$, meaning that $e_n = e_0$ and $v_n = v_0$. 

\begin{algorithm}[ht]
    \caption{\label{alg:greedy}}
    \KwIn{A simple polygon $P$ with vertices $v_0, v_1, v_2, \dots, v_n$}
    \KwOut{An explicit solution to the \cagp on $P$}
    $x_0 \longleftarrow v_0$\\
    \nllabel{line:start} \For{$i=1$ \KwTo $T$} {
        \nllabel{line:loop_greedy_interval1} $x_i, g_i \longleftarrow GreedyInterval\!\lrParents{x_{i - 1}, P}$ \\
    }
    \nllabel{line:loop_greedy_interval2} $j \longleftarrow \myMax{j \leq T - 2}{x_j \in \left( x_{T - 1}, x_T \right] }$ \\
    \nllabel{line:return_last_segments1} 
    \Return $\lrCurlyBrackets{\lrParents{x_{i - 1}, x_i, g_i} |\; j < i \leq T}$ \tcp*[h]{The last segments/guards covering $\partial P$} 
    \nllabel{line:return_last_segments2}
\end{algorithm}

To solve the \cagp, we propose Algorithm~\ref{alg:greedy} which is a con\-cep\-tu\-al\-ly simple greedy algorithm that starts at any point $x_0$ in $\partial P$ (Line~\ref{line:start}). 
Then repeatedly finds \emph{greedy} intervals, that is, the longest visible interval of $\partial P$ from a given starting point of $\partial P$, starting from where the previous segment ended (Lines~\ref{line:loop_greedy_interval1}~-~\ref{line:loop_greedy_interval2}).
The greedy interval from any point $x \in P$ can be found in $\Oh{poly(n)}$ time by combining basic computational geometry operations to calculate visibility polygons~\cite{visibility_of_polygon_from_an_edge_convex_viewing_lemma_1981_1675729, visibility_polygon_1981_ELGINDY1981186}, intersections of polygons~\cite{Weiler_polygon_clipping_1977_10.1145/965141.563896} and common tangents between disjoint polygons~\cite{common_tangents}.
The \Greedy algorithm is described in details in Section~\ref{sec:alg_greedy_interval} as Algorithm~\ref{alg:greedy_interval}. 
After $T \geq c k^2n^3$ iterations, where $c$ is a sufficiently large constant and $k$ is the number of iterations in the first revolution, the algorithm returns the start and endpoint of the last greedy intervals that form a partition of the boundary and the corresponding guards that see the segments (Lines~\ref{line:return_last_segments1}~-~\ref{line:return_last_segments2}).


\section{The \Greedy algorithm}
\label{sec:alg_greedy_interval}

In this section, we present and analyze the \Greedy algorithm for computing greedy intervals, i.e. the longest visible interval $I$ from a given point $x \in P$, along with a corresponding guard $g$ that sees $I$.
The algorithm uses only basic computational geometry operations. 
We assume for simplicity that $P$ is not star-shaped, as otherwise $I = \partial P$.
The idea is to extend $I$ from one vertex of $P$ to the next (clockwise around $\partial P$), starting from the edge containing $x$, while maintaining the feasible region of $I$. 
The feasible region of $I$ is the set of points of $P$ that can see all of $I$, i.e. the region of $P$ where a guard for $I$ can be placed.
This happens on lines~\ref{VisibilityPolygon}~-~\ref{VisibilityPolygon_and_intersection} where $\visPol{x}{P}$ is a subroutine to compute the \emph{visibility polygon}~\cite{visibility_polygon_1981_ELGINDY1981186} of point $x \in P$ and $\cap$ computes the intersection between two simple polygons~\cite{Weiler_polygon_clipping_1977_10.1145/965141.563896}.
The correctness of extending $I$ in discrete steps around $\partial P$ follows by contrapositive of Lemma~\ref{lem:convex_viewing_lemma}, which is that any point $u \in P$ that can see two points $a, b$ on some edge $e_j$, can see all points on $e_j$ between $a$ and $b$. 
Finally, we extend $I$ along edge $e_{i-1}$ towards the last vertex $v_i$ that causes the feasible region to be empty.
This happens on line~\ref{extending_F_on_last_edge} where $\lastVisPoint{P}{F_{i - 1}}{e_{i - 1}}$ is a subroutine for computing the point furthest along $e_{i-1}$ visible from $F_{i - 1}$. 
To do so simply and efficiently we prove Lemma~\ref{lem:extending_on_last_edge_vertices_of_F_enough} which shows that it suffices to consider the vertices of the last feasible region.

\begin{algorithm}[ht]
    \caption{\Greedy\label{alg:greedy_interval}}
    \KwIn{Point $x$ on edge $e_i = (v_i, v_{i + 1})$ of a simple and a non-star-shaped polygon $P$}
    \KwOut{The endpoint of the longest visible interval $I$ from $x$ and a guard $g$ that can see $I$}
    \nllabel{VisibilityPolygon} $F_i \longleftarrow \visPol{P}{x}$ \\
    \While{$F  \neq \emptyset$} {
        $i \longleftarrow i + 1$ \\
        \nllabel{VisibilityPolygon_and_intersection} $F_i \longleftarrow F_{i - 1} \cap \visPol{P}{v_i}$ \\
    }
    \nllabel{extending_F_on_last_edge} \Return $\lastVisPoint{P}{F_{i - 1}}{e_{i - 1}}$
\end{algorithm}

The main concern is to show that the running time of \Greedy is $\Oh{poly(n)}$. 
Theorem~\ref{thm:running_time_alg_greedy_segment} gives the precise output-sensitive running time, depending on the number $e$ of edges of $P$ that are intersected by $I$. 

\begin{restatable}{theorem}{thmGreedyIntervalRuntime}
\label{thm:running_time_alg_greedy_segment}
    The running time of the \Greedy algorithm for a polygon $P$ with $n$ vertices where the greedy interval that is output intersects $e$ edges of $P$ is $\Oh{e n \log n}$.
\end{restatable}

In the following three sections, we clarify how to perform the necessary computational geometry primitives and their efficiency to establish Theorem~\ref{thm:running_time_alg_greedy_segment}.

\subsection{Visibility polygons}
\label{sec:vis_pol}

For a simple polygon $P$ with $n$ vertices, the \emph{visibility polygon}~\cite{visibility_polygon_1981_ELGINDY1981186} $\visPol{P}{x}$ of point $x \in P$ is a polygon containing all points of $P$ visible from $x$. 
A point that guards $x$ must be placed in $\visPol{P}{x}$, which we also call the feasible region of $x$. 
The visibility polygon $\visPol{P}{x}$ has $\Oh{n}$ vertices and can be computed in $\Oh{n}$ time using the algorithm by Gindy and Avis~\cite{visibility_polygon_1981_ELGINDY1981186}.

\subsection{Intersecting polygons}
\label{sec:intersections}

Computing the intersection of two polygons $P$ and $Q$ with $n$ combined vertices is known as \emph{polygon clipping}, and is a classical problem in computer graphics and computational geometry. 
The running time of many of these algorithms, and in particular the output, depends on the number of intersections $h$ between $P$ and $Q$.
An algorithm by Martínez, Rueda, and Feito~\cite{martinez_2009_clipping_MARTINEZ20091177} shows that the problem of polygon clipping can be solved in $\Oh{(n + h) \log n}$ time.
Their approach is similar to the classical sweep line algorithm by Bentley and Ottmann~\cite{sweep_intersections_segments_1979_1675432} for computing the intersections between a set of segments.
There are also earlier algorithms for polygon clipping, but they are less efficient or do not handle degenerate cases~\cite{hormann_polygon_clipping_1998_10.1145/274363.274364, Vatti_polygon_clipping_1992_10.1145/129902.129906, Weiler_polygon_clipping_1977_10.1145/965141.563896}.

In general, the intersection of two polygons of sizes $n$ and $m$ may have size $h=\OhMega{nm}$, which could cause the repeated intersections of \Greedy to grow to size $\OhMega{n^{n}}$. 
However, we prove that any feasible region will have at most $\Oh{n}$ vertices:

\begin{lemma}
\label{lem:complexity_of_feasible_region_vertices}
    The feasible region of a set of points $S$ in a simple polygon $P$ with $n$ vertices has $\Oh{n}$ vertices.
\end{lemma}

\begin{proof}
    Let $v_1,v_2,\dots,v_t$ be the vertices of $S$. Let $F := VP(P, S)$ be the intersection $\bigcap\limits_{i = 1}^tVP(P, v_i)$. The boundaries of $VP(P, v_i)$ consists of sub line segments of $\partial P$ and sub line segments of rays from $v_i$ to reflex vertices of $P$. Hence, the boundary of $F$ consists of the same sub line segments.

    Any edge of $\partial P$ can only contribute to one edge of $\partial F$ (since $F \cap C$ is convex for every convex subset of $P$, Lemma~\ref{lem:feasible_region_is_connected}), so at most $n$ edges of $\partial F$ come from these. Thus, we only need to restrict the number of edges coming from rays. 

    Consider a reflex vertex $v_i$ and the rays stemming from visibility polygons defined by $v_i$. We now consider three rays, $r_1$, $r_2$ and $r_3$, which all pass through $v_i$. We will show, that at most two of these rays will contain sub line segments that are part of $\partial F$.

    Order the rays, by the angle they make with edge $v_{i-1}v_i$ and assume that $r_1$ and $r_3$ contain sub line segments part of $\partial F$. Thus $F$ will be contained in the half planes consisting of everything away from the path $v_{i-1}v_iv_{i+1}$ on the other side of $r_1$ and $r_3$ extended to lines. However, the only intersection of $r_2$ with these two half planes is $v_i$, hence it can not contribute an edge to $\partial F$. Doing this repeatedly, we get that at most two rays through $v_i$ can contribute edges to the feasible region, see Figure~\ref{fig:reflex_2_rays}.

    By Lemma~\ref{lem:feasible_region_is_connected} it holds that $F \cap C$ is convex for each convex subset $C$ of $P$, and we get that each ray can contribute at most one edge to the feasible region. Thus each reflex vertex can contribute at most two edges to $F$, meaning that the complexity of $F$ is at most $3n = \Oh{n}$.

\begin{figure}[h]
    \centering
    \begin{tikzpicture}
        \draw (4.68, -1.16) node[below] {$v_{i+1}$} -- (6.18, 0.62) node[above] {$v_i$} -- (6.95, -1.46) node[below] {$v_{i-1}$};

        \draw[feasible region] (3.88, -0.45) -- (8.78, 1.83);
        \draw[red] (3.76, 0.31) -- (9.77, 1.08);
        \draw[feasible region] (4.75, 1.33) -- (8.98, -0.78);

        \draw[feasible region, ->] (5.24, 1.09) -- (5.46, 1.55);
        \draw[feasible region, ->] (7.14, 1.07) -- (6.91, 1.55);

        \node[above, feasible region] at (4.64, -0.1) {$r_3$};
        \node[above, red] at (4.51, 0.4) {$r_2$};
        \node[above, feasible region] at (5.08, 1.17) {$r_1$};
        
    \end{tikzpicture}
    \caption{$r_1, r_2$ and $r_3$ are rays passing through $v_i$. If both $r_1$ and $r_3$ contribute to the edges of $F$, then $F$ lies in the quarter plane shown with arrows. Now $r_2$ will not be able to contribute to edges of $F$.}
    \label{fig:reflex_2_rays}
\end{figure}
    
\end{proof}

Combining Lemma~\ref{lem:complexity_of_feasible_region_vertices} with the polygon clipping algorithm, we get a running time of $\Oh{n \log n}$ to compute the intersection between the previous feasible region and a visibility polygon.

\subsection{The last visible point}
\label{sec:greedy_segment_ext_last_edge}

When computing $\lastVisPoint{P}{F_{i - 1}}{e_{i - 1}}$, the input is a simple polygon $P$ with $n$ vertices, an edge $e_{i - 1} = \lrParents{v_{i - 1}, v_i}$, and a feasible region $F_{i - 1}$ with $f$ vertices where $v_{i - 1}$ is visible from all points of $F_{i - 1}$ and $v_i$ is not visible from any point of $F_{i - 1}$. 
The output is the point $y$ on $e_{i - 1}$ closest to $v_i$ visible from $F_{i - 1}$ and the guard $g \in F_{i - 1}$ that can see $y$. 
We first show that it suffices to consider only the vertices of $F_{i - 1}$.

\begin{lemma}
\label{lem:extending_on_last_edge_vertices_of_F_enough}
    Let ${P}$, ${F_{i - 1}}$, and ${e_{i - 1}}$ be defined as above. Then the vertices of ${F_{i - 1}}$ can see at least as far along ${e_{i - 1}}$ as all of ${F_{i - 1}}$.
\end{lemma}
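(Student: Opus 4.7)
The statement reduces to showing the maximum $y^* := \max_{p \in F_{i-1}} s(p)$, where $s(p)$ denotes the point of $e_{i-1}$ farthest from $v_{i-1}$ visible from $p$, is attained at a vertex of $F_{i-1}$. Upper semi-continuity of $s$ together with compactness of $F_{i-1}$ guarantees the maximum is attained at some $p^* \in F_{i-1}$. The plan is to exhibit a vertex of $F_{i-1}$ also attaining $s \geq y^*$ via a supporting-line argument at $p^*$.

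Since $v_i$ is not visible from $F_{i-1}$ by hypothesis, there is a reflex vertex $C$ of $P$ lying on segment $\overline{p^* y^*}$ that blocks further vision, so $p^*$, $C$, and $y^*$ are collinear. Let $L := \lineextension{C}{y^*}$ be the corresponding window line. The key claim is that $L$ is a supporting line of the polygon $F_{i-1}$, i.e., $F_{i-1}$ lies in the closed half-plane of $L$ containing $p^*$. If some $q \in F_{i-1}$ lay strictly on the other side of $L$, then the ray $\overrightarrow{qC}$ would meet the line containing $e_{i-1}$ at a point $y' > y^*$; combining the visibility guarantees built into $F_{i-1}$ (by construction $q$ sees every vertex of the already-covered boundary chain, and by Lemma~\ref{lem:convex_viewing_lemma} the whole chain) with the local geometry at the reflex vertex $C$, one shows that the segment $\overline{qy'}$ lies in $P$, yielding $s(q) \geq y' > y^*$, a contradiction.

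Any supporting line of a polygon, convex or not, meets it either at a single vertex or along an entire edge, since edges are line segments. If $L \cap F_{i-1}$ is a single vertex, it must coincide with $p^*$ and we are done. Otherwise $L$ contains a whole edge of $F_{i-1}$; both endpoints of that edge are then vertices of $F_{i-1}$ lying on $L$, and each of them projects through $C$ onto exactly $y^*$ with unobstructed sight (again by the $F_{i-1}$-visibility guarantees together with local continuity of the sightline at the grazing vertex $C$), so each achieves $s \geq y^*$.

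The main obstacle will be the degenerate case in which several reflex vertices $C_1, C_2, \ldots$ simultaneously block $p^*$ at $y^*$, making the window line $L$ non-unique: a hypothetical $q$ violating the supporting condition for one $L_j$ could still be limited by another blocker $C_{j'}$, so the single-blocker contradiction breaks down. The plan for addressing this is either (i) a small perturbation of $p^*$ that breaks the tie and reduces to the generic single-blocker case, or (ii) replacing the single $L$ by the intersection of all ``visible'' half-planes $L_j^+$ over active blockers: since $F_{i-1}$ must lie in this intersection by maximality of $y^*$, an extremal point of $F_{i-1}$ in the appropriate direction occurs where two $L_j$'s meet, and such a meeting is already a vertex of $F_{i-1}$ by the construction of $F_{i-1}$ as an intersection of visibility polygons whose edges are precisely lines of the form $L_j$.
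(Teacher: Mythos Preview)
Your approach is genuinely different from the paper's, and the central step has a real gap.

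\textbf{The gap.} You claim that if $q\in F_{i-1}$ lies on the far side of $L$, then $\overline{qy'}\subset P$. Your justification is that $q$ sees every vertex of the already-covered chain and you can combine this with ``local geometry at $C$''. But the blocking vertex $C$ is \emph{not} in the covered chain: Lemma~\ref{lem:Blocking_corner_not_in_view} says precisely that blocking corners lie outside $[y,z]$. So membership in $F_{i-1}$ gives you no control over whether $q$ sees $C$, and without that you cannot conclude $\overline{qC}\subset P$, let alone $\overline{qy'}\subset P$. A different blocker $C'$ (not one that is simultaneously active at $p^*$, but one that only obstructs $q$) can cut the segment $\overline{qy'}$; your final paragraph on degeneracies addresses only multiple blockers that are simultaneously grazed by $\overline{p^*y^*}$, which is a different issue. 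A smaller but related sloppiness: since $p^*\in L$, ``the half-plane of $L$ containing $p^*$'' does not pick out a side.

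\textbf{What the paper does instead.} The paper avoids the blocking vertex $C$ entirely. It first reduces to a boundary point $p_B$ on an edge $(a,b)$ of $F_{i-1}$ seeing some $y_B$, and then uses the one vertex every point of $F_{i-1}$ is guaranteed to see, namely $v_{i-1}$. Since $p_B$ sees both $y_B$ and $v_{i-1}$, and $\overline{y_Bv_{i-1}}\subset e_{i-1}\subset\partial P$, the contrapositive of Lemma~\ref{lem:convex_viewing_lemma} gives $\triangle p_By_Bv_{i-1}\subset P$. Choosing the endpoint $a$ on the $v_{i-1}$ side of $\lineextension{p_B}{y_B}$ and repeating the triangle argument with $\triangle ap_Bv_{i-1}$, one covers $\overline{ay_B}$ by two triangles in $P$, so $a$ sees $y_B$. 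The point is that $v_{i-1}$, unlike your $C$, is visible from every point of $F_{i-1}$ by construction, so the Convex Viewing Lemma applies cleanly. If you want to rescue the supporting-line picture, this is the missing ingredient: route the argument through $v_{i-1}$ rather than through $C$.
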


\begin{proof}
    We prove the lemma by showing that points in the interior of ${F_{i - 1}}$ can see no further than points on the boundary $\partial {F_{i - 1}}$ which again can see no further than the vertices of ${F_{i - 1}}$.
    
    First, let $y_I$ be a point on ${e_{i - 1}}$ visible from a point $p_I$ in the interior.
    Clearly, the point of $\linesegment{p_I}{y_I} \cap {F_{i - 1}}$ with minimal distance to $y_I$ is a point on the boundary of ${F_{i - 1}}$ that can see $y_I$.
    
    Second, let $y_B$ be a point along $e_{i - 1}$ visible from a point $p_B$ on edge $(a, b)$ of the boundary of $F_{i-1}$ that can see $y_B$. 
    Note $F_{i - 1}$ cannot intersect the line $H := \lineextension{v_{i-1}}{v_i}$ defined by $e_{i - 1}$, since that would imply a point in $F_{i - 1}$ that can see $v_i$ contradicting the input assumption that $e_{i-1}$ cannot be seen from $F_{i-1}$. 
    Recall that $F_{i - 1}$ is connected by Lemma~\ref{lem:feasible_region_is_connected} which further implies that it must lie exclusively on one side of $H$. 
    If ${F_{i - 1}}$ is below $H$ then all points of $F_{i - 1}$ can see no further than $v_{i - 1}$.
    Otherwise, ${F_{i - 1}}$ is above $H$, and in particular edge $(a, b)$ of $F_{i-1}$ is above $H$. 
    Let $a$ (symmetrically $b$) be the vertex on the same side of the half-plane defined by $\lineextension{p_B}{y_B}$ as $v_{i - 1}$.
    By the contrapositive of Lemma~\ref{lem:convex_viewing_lemma} the triangle $\tri{p_B y_B v_{i - 1}}$ is inside $P$.
    Thus, if $a$ is in $\tri{y_B p_B v_{i - 1}}$ it can also see $y_B$.
    Otherwise, $a$ and $v_i$ lie on opposite sides of the half-plane defined by $(p_B, v_{i - 1})$. 
    See Figure~\ref{fig:extending_on_last_edge_vertices_of_F_enough} for an example of this case.
    Since $a$ can see $v_{i - 1}$, the triangle $\tri{a p_B v_{i - 1}}$ is also inside $P$ by the contrapositive of Lemma~\ref{lem:convex_viewing_lemma}.
    Thus, the segment from $a$ to $y_B$ is fully contained in triangles $\tri{a p_B v_{i - 1}}$ and $\tri{y_B p_B v_{i - 1}}$, which implies that $a$ can see $y_B$.
\end{proof}

\begin{figure}[ht]
    \centering
    
    \begin{tikzpicture}[scale = -1]
        \coordinate (vi-1) at (2.5, 3);
        \coordinate (vi) at (6, 1);
        \coordinate (a) at (0, 2);
        \coordinate (b) at (0, 0.5);
        \coordinate (yB) at ($(vi-1)!0.7!(vi)$);
        \coordinate (pB) at ($(a)!0.6!(b)$);
        \coordinate (Tri1) at ($  (a)!0.333!(vi-1)!0.333!(pB) $);
        \coordinate (Tri2) at ($ (yB)!0.333!(vi-1)!0.333!(pB) $);


        \draw[feasible region] (a) -- (b);
        \draw (vi-1) --   (vi);
        \draw[densely dashed] (pB) -- (vi-1);
        \draw[densely dashed, guard] (a) -- (vi-1);

        \draw[densely dashed, color = guard] (pB) -- (yB);
        \draw[densely dashed, color = guard] (pB) -- (vi-1);
        \draw[densely dashed] ($(vi-1)!-0.3!(vi)$) -- (vi-1);
        \draw[densely dashed] (vi) -- ($(vi-1)!1.3!(vi)$);

        \filldraw[color = feasible region] (a)       circle (1.5 pt) node[below right] {$a$};
        \filldraw[color = feasible region] (b)       circle (1.5 pt) node[above left] {$b$};
        \filldraw (vi-1)                             circle (1.5 pt) node[left, xshift=-2pt] {$v_{i - 1}$};
        \filldraw (vi)                               circle (1.5 pt) node[above] {$v_i$};
        \filldraw (yB)                               circle (1.5 pt) node[above, xshift=2pt] {$y_B$};
        \filldraw[color = feasible region] (pB)      circle (1.5 pt) node[below right] {$p_B$};
        \node[above] at ($(vi)!1.3!(vi-1)$) {$H$};
    \end{tikzpicture}
    
    \caption{The case in the proof of Lemma~\ref{lem:extending_on_last_edge_vertices_of_F_enough} where vertex $a$ from the feasible region $F_{i-1}$ satisfies the following conditions: (1) it lies above the half-plane $H$ (2) it is on the same side of the half-plane defined by $(p_B, y_B)$ as $v_{i - 1}$ (3) it is on the opposite side of the half-plane defined by $(p_B, v_{i - 1})$ from $v_i$.}
    \label{fig:extending_on_last_edge_vertices_of_F_enough}
\end{figure}

\begin{remark}
\label{rem:_multiple_optimal_guards_lie_on_line_with_endpoint}
    The proof of Lemma~\ref{lem:extending_on_last_edge_vertices_of_F_enough} also shows that if multiple guard placements are optimal, they must be collinear with the endpoint ($y_B$ above).
\end{remark}

Thus we want to find the vertex of $F_{i-1}$, which sees most of $e_{i-1}$. We do this by computing the common tangents with a part of $\partial P$ and $F_{i-1}$. First, we need some terminology:

\begin{definition}[Free area and congested area]
Consider $e_{i-1} = (v_{i-1}, v_i)$ and $F_{i-1}$. Let $t_r$ and $t_m$ be the right and left tangent to $F_{i-1}$ going through $v_{i-1}$. The region between these two tangents bounded by $F_{i-1}$ will be without any part of $\partial P$ (since $F_{i-1}$ can see $v_{i-1}$) and we will therefor call it the free area. Let the left tangent to $F_{i-1}$ through $v_i$ be $t_\ell$. The region bounded by $t_m, t_\ell$ and $e_{i-1}$ will need to contain parts of $\partial P$ and will be called the congested area. Let the intersection point between $t_\ell$ and $t_m$ be $q$, see Figure~\ref{fig:last_feasible_region}.

\begin{figure}[h]
    \centering
    \begin{tikzpicture}
        \def\coords{
            \coordinate (A) at (4,1);
            \coordinate (B) at (1,1);
            \coordinate (C) at (3.5,3);
            \coordinate (D) at (2,3.5);
            \coordinate (E) at (1,4);
            \coordinate (F) at (0.5,4.5);
            \coordinate (G) at (0.42773,5);
            \coordinate (H) at (1.5,5);
            \coordinate (I) at (2,4.5);
            \coordinate (J) at (3,5);
            \coordinate (K) at (3.5,5);
            \coordinate (L) at (4,4);
            \coordinate (M) at (0.5,1.5);
            \coordinate (N) at (2,1.5);
            \coordinate (O) at (1.5,2.5);
            \coordinate (P) at (0.5,2);
            \coordinate (Q) at (1,2);
            \coordinate (R) at (0,1.5);
            \coordinate (S) at (0,2.5);
            \coordinate (T) at (1,3);
            \coordinate (U) at (0,4.5);
            \coordinate (V) at (0,5);
            \coordinate (W) at (0.922, 1.5);
            \coordinate (X) at (0.8274, 2.1673);
            \coordinate (Y) at (0.7422, 2.8645);
            \coordinate (Z) at (1,3.5);
            \coordinate (A1) at (0.0512, 4.4437);
            \coordinate (B1) at (0.5445, 0.9416);
            \coordinate (C1) at (1.5, 1);
            }

        \def\target{
            \coords
            \draw[thick] (A) node[anchor = north] {$v_{i-1}$} -- (B) node[anchor = north] {$v_i$};
            \fill (A) circle (2pt);
            \fill (B) circle (2pt);
        }

        \def\F{
            \coords
            \filldraw[F] (C) -- (D) -- (E) -- (F) -- (G) -- (H) -- (I) -- (J) -- (K) -- (L)  -- cycle;
            \node[feasible region] at ($(I) + (0,0.6)$) {$F_{i-1}$};
            }

        \newcommand{\partP}[1]{
            \coords
            \draw[#1] (B) -- (M) -- (N) -- (O) -- (P) -- (Q) -- (R) -- (S) -- (T) -- (E) -- (F) -- (U) -- (V);
        }

        \def\tr{
            \coords
            \draw[blue, thick] ($(A)!-0.05!(L)$) -- ($(A)!1.4!(L)$);
            \node[blue] at ($(A)!0.5!(L)$) [anchor = west] {$t_r$};
        }

        \def\tm{
            \coords
            \draw[blue] ($(A)!-0.05!(F)$) -- ($(A)!1.2!(F)$);
            \node[blue] at ($(A)!0.5!(F)$) [anchor = south west, label distance = 0.3] {$t_m$};
        }

        \def\tl{
            \coords
            \draw[blue] ($(B)!-0.02!(F)$) -- ($(B)!1.2!(F)$);
            \node[blue] at ($(B)!0.6!(F)$) [anchor = south east, label distance = 0.3] {$t_l$};
        }

        \def\cutout{
            \coords
            \draw[thick, dashed, blue] (B) -- (W) -- (N) -- (O) -- (X) -- (Y) -- (T) -- (E);
        }

        \def\Bbound{
            \coords
            \draw[thick, dashed, blue] (B) -- (W) -- (N) -- (O) -- (X) -- (Y) -- (T) -- (Z);
        }

        \def\B{
            \coords
            \filldraw[color = blue, fill = SkyBlue] (B) -- (W) -- (N) -- (O) -- (X) -- (Y) -- (T) -- (Z) -- (A1) -- (B1) -- cycle;
        }

        \def\qr{
            \coords
            \filldraw[feasible region] (F) circle (2pt) node[anchor = south] {$q$};
            \filldraw[blue] (A1) circle (2pt) node[anchor = east] {$r$};
        }

        \def\q{
            \coords
            \filldraw[blue] (F) circle (2pt) node[anchor = north east] {$q$};
        }

        \def\tangents{
            \coords
            \draw[red] ($(N)!-0.5!(C)$) -- ($(N)!1.3!(C)$);
            \draw[red] ($(F)!-0.7!(Z)$) -- ($(F)!1.3!(Z)$);
            \draw[red] ($(A1)!-0.3!(D)$) -- ($(A1)!1.3!(D)$);
            \draw[red] ($(G)!-0.3!(A1)$) -- ($(G)!1.3!(A1)$);
        }

        \def \gcy{
            \coords
            \filldraw[feasible region] (C) circle (2pt) node[anchor = north west] {$g$};
            \filldraw[blue] (N) circle (2pt) node[anchor = north west] {$c$};
            \filldraw[black] (C1) circle (2pt) node[anchor = north] {$y$};
        }

        \begin{scope}
            \F
            \partP{thick}
            \tr
            \tm
            \tl
            \target
            \q
        \end{scope}
            
    \end{tikzpicture}
    \caption{The last feasible region marked in green and tangents $t_r, t_m$ and $t_\ell$ marked in blue.} \label{fig:last_feasible_region}

\end{figure}

\end{definition}

The only parts of $\partial P$, that can block $F_{i-1}$ from seeing $v_i$ is the parts in the congested area. Thus, we construct a polygon from the $\partial P$ in the congested area:

\begin{definition}[Blocking polygon]
\label{def:blocking_polygon}
    We define the blocking polygon $B$ by considering the following path: Starting from $v_i$, follow $t_\ell$ until you hit $\partial P$ again. Then follow $\partial P$ into the congested region until you hit $t_\ell$ again. Continue this until you hit $q$ or $F_{i-1}$. If one hits $q$, one moves a little to the left to a point $r$ that lies below $t_m$ and then finishes by moving parallel to $t_\ell$ until you can move directly left to $v_i$. 
    
    If you hit $F_{i-1}$ before $q$, then this must occur along $t_m$. By Lemma~\ref{lem:feasible_region_is_connected}, $F_{i-1}$ will only touch $t_m$ along one edge of $\partial F_{i-1}$, so this edge is the one our path will touch. When the path hits this edge, we back up a small bit (small here depends on how close the rest of the path previously has gone to $t_m$ and the distance from the point of contact to $q$) and then head directly to $r$ (defined just as above) and then finish as above. By choosing the distance of the backtrack and distance between $q$ and $r$ small enough, it can be made so that the path does not intersect itself, as the path could not have hit $t_m$ before the intersection point with $F_{i-1}$. This is illustrated on Figure~\ref{fig:blocking_polygon}.
    This path will be $\partial B$ and $B$ is the region bounded by this path.
    
    \begin{figure}[h]
    \centering
    \begin{tikzpicture}[scale = 0.9]
        \def\coords{
            \coordinate (A) at (4,1);
            \coordinate (B) at (1,1);
            \coordinate (C) at (3.5,3);
            \coordinate (D) at (2,3.5);
            \coordinate (E) at (1,4);
            \coordinate (F) at (0.5,4.5);
            \coordinate (G) at (0.42773,5);
            \coordinate (H) at (1.5,5);
            \coordinate (I) at (2,4.5);
            \coordinate (J) at (3,5);
            \coordinate (K) at (3.5,5);
            \coordinate (L) at (4,4);
            \coordinate (M) at (0.5,1.5);
            \coordinate (N) at (2,1.5);
            \coordinate (O) at (1.5,2.5);
            \coordinate (P) at (0.5,2);
            \coordinate (Q) at (1,2);
            \coordinate (R) at (0,1.5);
            \coordinate (S) at (0,2.5);
            \coordinate (T) at (1,3);
            \coordinate (U) at (0,4.5);
            \coordinate (V) at (0,5);
            \coordinate (W) at (0.922, 1.5);
            \coordinate (X) at (0.8274, 2.1673);
            \coordinate (Y) at (0.7422, 2.8645);
            \coordinate (Z) at (1,3.5);
            \coordinate (A1) at (0.0512, 4.4437);
            \coordinate (B1) at (0.5445, 0.9416);
            \coordinate (C1) at (1.5, 1);
            }

        \def\target{
            \coords
            \draw[thick] (A) node[anchor = north] {$v_{i-1}$} -- (B) node[anchor = north] {$v_i$};
            \fill (A) circle (2pt);
            \fill (B) circle (2pt);
        }

        \def\F{
            \coords
            \filldraw[F] (C) -- (D) -- (E) -- (F) -- (G) -- (H) -- (I) -- (J) -- (K) -- (L)  -- cycle;
            \node[feasible region] at ($(I) + (0,0.6)$) {$F_{i-1}$};
            }

        \newcommand{\partP}[1]{
            \coords
            \draw[#1] (B) -- (M) -- (N) -- (O) -- (P) -- (Q) -- (R) -- (S) -- (T) -- (E) -- (F) -- (U) -- (V);
        }

        \def\tr{
            \coords
            \draw[blue, thick] ($(A)!-0.05!(L)$) -- ($(A)!1.4!(L)$);
            \node[blue] at ($(A)!0.5!(L)$) [anchor = west] {$t_r$};
        }

        \def\tm{
            \coords
            \draw[blue] ($(A)!-0.05!(F)$) -- ($(A)!1.2!(F)$);
            \node[blue] at ($(A)!0.5!(F)$) [anchor = south west, label distance = 0.3] {$t_m$};
        }

        \def\tl{
            \coords
            \draw[blue] ($(B)!-0.02!(F)$) -- ($(B)!1.2!(F)$);
            \node[blue] at ($(B)!0.6!(F)$) [anchor = south east, label distance = 0.3] {$t_l$};
        }

        \def\tlgray{
            \draw[gray] ($(B)!-0.02!(F)$) -- ($(B)!1.2!(F)$);
        }

        \def\cutout{
            \coords
            \draw[thick, dashed, blue] (B) -- (W) -- (N) -- (O) -- (X) -- (Y) -- (T) -- (E);
        }

        \def\Bbound{
            \coords
            \draw[thick, dashed, blue] (B) -- (W) -- (N) -- (O) -- (X) -- (Y) -- (T) -- (Z);
        }

        \def\B{
            \coords
            \filldraw[color = blue, fill = SkyBlue] (B) -- (W) -- (N) -- (O) -- (X) -- (Y) -- (T) -- (Z) -- (A1) -- (B1) -- cycle;
        }

        \def\qr{
            \coords
            \filldraw[feasible region] (F) circle (2pt) node[anchor = south east] {$q$};
            \filldraw[blue] (A1) circle (2pt) node[anchor = east] {$r$};
        }

        \def\q{
            \coords
            \filldraw[blue] (F) circle (2pt) node[anchor = north east] {$q$};
        }

        \def\tangents{
            \coords
            \draw[red] ($(N)!-0.5!(C)$) -- ($(N)!1.3!(C)$);
            \draw[red] ($(F)!-0.7!(Z)$) -- ($(F)!1.3!(Z)$);
            \draw[red] ($(A1)!-0.3!(D)$) -- ($(A1)!1.3!(D)$);
            \draw[red] ($(G)!-0.3!(A1)$) -- ($(G)!1.3!(A1)$);
        }

        \def \gcy{
            \coords
            \filldraw[feasible region] (C) circle (2pt) node[anchor = north west] {$g$};
            \filldraw[blue] (N) circle (2pt) node[anchor = north west] {$c$};
            \filldraw[black] (C1) circle (2pt) node[anchor = north] {$y$};
        }

        \begin{scope}
            \F
            \partP{very thin, gray}
            \tlgray
            \cutout
            \target
        \end{scope}

        \begin{scope}[xshift = 5cm]
            \F
            \Bbound
            \target
        \end{scope}

        \begin{scope}[xshift = 10cm]
            \B
            \F
            \qr
            \target
        \end{scope}
    \end{tikzpicture}
    \caption{Following $t_\ell$ and $\partial P$ into the congested area, one traces the boundary of $B$, marked in blue. Once we hit $F_{i-1}$ we backtrack and head directly for $r$.} \label{fig:blocking_polygon}
\end{figure}
\end{definition} 

Now the claim is the following: $F_{i-1}$ and $B$ are disjoint polygons and one of the common tangents between $F_{i_1}$ and $B$ will be the line of sight of the guard that sees the most of $e_{i-1}$, see Figure~\ref{fig:common_tangents}. Thus, to find $\lastVisPoint{P}{F_{i-1}}{e_{i-1}}$ we can run the $\textsc{CommonTangents}$ algorithm of Abrahamsen and Walczak \cite{common_tangents} to find all common tangents (of which there are at most four) and manually check them to find the guard position that sees the most of $e_{i-1}$. The claims will be proven in the following.

\begin{figure}[h]
    \centering
    \begin{tikzpicture}
        \def\coords{
            \coordinate (A) at (4,1);
            \coordinate (B) at (1,1);
            \coordinate (C) at (3.5,3);
            \coordinate (D) at (2,3.5);
            \coordinate (E) at (1,4);
            \coordinate (F) at (0.5,4.5);
            \coordinate (G) at (0.42773,5);
            \coordinate (H) at (1.5,5);
            \coordinate (I) at (2,4.5);
            \coordinate (J) at (3,5);
            \coordinate (K) at (3.5,5);
            \coordinate (L) at (4,4);
            \coordinate (M) at (0.5,1.5);
            \coordinate (N) at (2,1.5);
            \coordinate (O) at (1.5,2.5);
            \coordinate (P) at (0.5,2);
            \coordinate (Q) at (1,2);
            \coordinate (R) at (0,1.5);
            \coordinate (S) at (0,2.5);
            \coordinate (T) at (1,3);
            \coordinate (U) at (0,4.5);
            \coordinate (V) at (0,5);
            \coordinate (W) at (0.922, 1.5);
            \coordinate (X) at (0.8274, 2.1673);
            \coordinate (Y) at (0.7422, 2.8645);
            \coordinate (Z) at (1,3.5);
            \coordinate (A1) at (0.0512, 4.4437);
            \coordinate (B1) at (0.5445, 0.9416);
            \coordinate (C1) at (1.5, 1);
            }

        \def\target{
            \coords
            \draw[thick] (A) node[anchor = north] {$v_{i-1}$} -- (B) node[anchor = north] {$v_i$};
            \fill (A) circle (2pt);
            \fill (B) circle (2pt);
        }

        \def\F{
            \coords
            \filldraw[F] (C) -- (D) -- (E) -- (F) -- (G) -- (H) -- (I) -- (J) -- (K) -- (L)  -- cycle;
            \node[feasible region] at ($(I) + (0,0.6)$) {$F_{i-1}$};
            }

        \newcommand{\partP}[1]{
            \coords
            \draw[#1] (B) -- (M) -- (N) -- (O) -- (P) -- (Q) -- (R) -- (S) -- (T) -- (E) -- (F) -- (U) -- (V);
        }

        \def\tr{
            \coords
            \draw[blue, thick] ($(A)!-0.05!(L)$) -- ($(A)!1.4!(L)$);
            \node[blue] at ($(A)!0.5!(L)$) [anchor = west] {$t_r$};
        }

        \def\tm{
            \coords
            \draw[blue] ($(A)!-0.05!(F)$) -- ($(A)!1.2!(F)$);
            \node[blue] at ($(A)!0.5!(F)$) [anchor = south west, label distance = 0.3] {$t_m$};
        }

        \def\tl{
            \coords
            \draw[blue] ($(B)!-0.02!(F)$) -- ($(B)!1.2!(F)$);
            \node[blue] at ($(B)!0.6!(F)$) [anchor = south east, label distance = 0.3] {$t_l$};
        }

        \def\cutout{
            \coords
            \draw[thick, dashed, blue] (B) -- (W) -- (N) -- (O) -- (X) -- (Y) -- (T) -- (E);
        }

        \def\Bbound{
            \coords
            \draw[thick, dashed, blue] (B) -- (W) -- (N) -- (O) -- (X) -- (Y) -- (T) -- (Z);
        }

        \def\B{
            \coords
            \filldraw[color = blue, fill = SkyBlue] (B) -- (W) -- (N) -- (O) -- (X) -- (Y) -- (T) -- (Z) -- (A1) -- (B1) -- cycle;
        }

        \def\qr{
            \coords
            \filldraw[feasible region] (F) circle (2pt) node[anchor = south] {$q$};
            \filldraw[blue] (A1) circle (2pt) node[anchor = east] {$r$};
        }

        \def\tangents{
            \coords
            \draw[red] ($(N)!-0.5!(C)$) -- ($(N)!1.3!(C)$);
            \draw[red] ($(F)!-0.7!(Z)$) -- ($(F)!1.3!(Z)$);
            \draw[red] ($(A1)!-0.3!(D)$) -- ($(A1)!1.3!(D)$);
            \draw[red] ($(G)!-0.3!(A1)$) -- ($(G)!1.3!(A1)$);
        }

        \def \gcy{
            \coords
            \filldraw[feasible region] (C) circle (2pt) node[anchor = north west] {$g$};
            \filldraw[blue] (N) circle (2pt) node[anchor = north west] {$c$};
            \filldraw[black] (C1) circle (2pt) node[anchor = north] {$y$};
        }

        \begin{scope}[xshift = 5cm, yshift = -5cm]
            \B
            \F
            \tangents
            \target
        \end{scope}
            
    \end{tikzpicture}
    \caption{Common tangents drawn. One of which sees the line of sight from the optimal guard to the farthest point on $e_{i-1}$, which is visible from $F_{i-1}$}\label{fig:common_tangents}

\end{figure}

\begin{lemma}[Disjoint]
    $B$ and $F_{i-1}$ are disjoint.
\end{lemma}
\begin{proof}
    $F_{i-1}$ is contained in the free area and $B$ is contained below $t_m$. Thus, the only intersection points will be along $t_m$. By backtracking on the path of $\partial B$ before an intersection with $F_{i-1}$ and then going directly to $r$, we no longer touch the free area for the rest of the path, so $\partial B$ does not touch $F_{i-1}$ and must $B$ not either.
\end{proof}

Let $g$ be the vertex of $F_{i-1}$ that sees the most of $e_{i-1}$ and let $c$ be the first vertex of $\partial P$ blocking $g$ from seeing any more of $e_{i-1}$ (when moving along $\partial P$ from $v_i$ clockwise). We want to show that $\lineextension{g}{c}$ is a common tangent of $F_{i-1}$ and $B$. First, we show that $c$ is a vertex of $B$.

\begin{lemma}[$c$ is present]
    $c$ is a vertex of $B$.
\end{lemma}
\begin{proof}

    For $c$ to be the blocking vertex, it must be in the congested area. Many vertices of $P$, which are in the congested area, are vertices of $B$. The only ones that get skipped are ones that are hidden by other parts of $\partial P$ like vertex $h$ in Figure~\ref{fig:h_is_hidden}, and vertices that would be skipped, because we hit $F_{i-1}$ and head directly for $r$.

    Vertices such as $h$ cannot be the blocking vertex, since to draw a line from $F_{i-1}$ to $h$, one must cross other edges of $\partial P$, so we exclude these.

    Assume now that the path followed in Definition~\ref{def:blocking_polygon} where to intersect $F_{i-1}$, and the first time this happens is at point $s$. If $s$ itself is a vertex of $\partial P$, it cannot be $c$, as by placing a guard at $s$ (which is in $F_{i-1}$), one sees past $s$ and sees more of $e_{i-1}$ than $g$ would, contradicting that $s$ blocked the guard that sees the farthest. 

    Now assume that $c$ is a vertex of $\partial P$ in the congested area, which would have been hit after $s$ by following $\partial P$ and $t_\ell$ as in Definition~\ref{def:blocking_polygon} if we did not skip past them after hitting $s$. Since the part of $\partial P$, in the congested region that contains $s$ must enter and leave the congested area via $t_\ell$, the path after $s$ must continue left towards $t_\ell$ or stay away from $t_\ell$. So $c$ would be stuck to the left of the part of $\partial B$ until $s$. So we again have a problem, since drawing a line from $F_{i-1}$ to $s$, will hit $\partial B$ in the congested area, i.e., hit $\partial P$ and $c$ cannot be a blocking vertex for $g$.

    Since none of the above cases for $c$ are possible, it must be that $c$ is a vertex of $B$.

    \begin{figure}[h]
    \centering
    \begin{tikzpicture}[scale = 0.9]
        \def\coords{
            \coordinate (A) at (4,1);
            \coordinate (B) at (1,1);
            \coordinate (C) at (3.5,3);
            \coordinate (D) at (2,3.5);
            \coordinate (E) at (1,4);
            \coordinate (F) at (0.5,4.5);
            \coordinate (G) at (0.42773,5);
            \coordinate (H) at (1.5,5);
            \coordinate (I) at (2,4.5);
            \coordinate (J) at (3,5);
            \coordinate (K) at (3.5,5);
            \coordinate (L) at (4,4);
            \coordinate (M) at (0.5,1.5);
            \coordinate (N) at (2,1.5);
            \coordinate (O) at (1.5,2.5);
            \coordinate (P) at (0.5,2);
            \coordinate (Q) at (1,2);
            \coordinate (R) at (0,1.5);
            \coordinate (S) at (0,2.5);
            \coordinate (T) at (1,3);
            \coordinate (U) at (0,4.5);
            \coordinate (V) at (0,5);
            \coordinate (W) at (0.922, 1.5);
            \coordinate (X) at (0.8274, 2.1673);
            \coordinate (Y) at (0.7422, 2.8645);
            \coordinate (Z) at (1,3.5);
            \coordinate (A1) at (0.0512, 4.4437);
            \coordinate (B1) at (0.5445, 0.9416);
            \coordinate (C1) at (1.5, 1);
            }

        \def\target{
            \coords
            \draw[thick] (A) node[anchor = north] {$v_{i-1}$} -- (B) node[anchor = north] {$v_i$};
            \fill (A) circle (2pt);
            \fill (B) circle (2pt);
        }

        \def\F{
            \coords
            \filldraw[F] (C) -- (D) -- (E) -- (F) -- (G) -- (H) -- (I) -- (J) -- (K) -- (L)  -- cycle;
            \node[feasible region] at ($(I) + (0,0.6)$) {$F_{i-1}$};
            }

        \newcommand{\partP}[1]{
            \coords
            \draw[#1] (B) -- (M) -- (N) -- (O) -- (P) -- (Q) -- (R) -- (S) -- (T) -- (E) -- (F) -- (U) -- (V);
        }

        \def\tr{
            \coords
            \draw[blue, thick] ($(A)!-0.05!(L)$) -- ($(A)!1.4!(L)$);
            \node[blue] at ($(A)!0.5!(L)$) [anchor = west] {$t_r$};
        }

        \def\tm{
            \coords
            \draw[blue] ($(A)!-0.05!(F)$) -- ($(A)!1.2!(F)$);
            \node[blue] at ($(A)!0.5!(F)$) [anchor = south west, label distance = 0.3] {$t_m$};
        }

        \def\tl{
            \coords
            \draw[blue] ($(B)!-0.02!(F)$) -- ($(B)!1.2!(F)$);
            \node[blue] at ($(B)!0.6!(F)$) [anchor = south east, label distance = 0.3] {$t_l$};
        }

        \def\tlgray{
            \draw[gray] ($(B)!-0.02!(F)$) -- ($(B)!1.2!(F)$);
        }

        \def\cutout{
            \coords
            \draw[thick, dashed, blue] (B) -- (W) -- (N) -- (O) -- (X) -- (Y) -- (T) -- (E);
        }

        \def\Bbound{
            \coords
            \draw[thick, dashed, blue] (B) -- (W) -- (N) -- (O) -- (X) -- (Y) -- (T) -- (Z);
        }

        \def\B{
            \coords
            \filldraw[color = blue, fill = SkyBlue] (B) -- (W) -- (N) -- (O) -- (X) -- (Y) -- (T) -- (Z) -- (A1) -- (B1) -- cycle;
        }

        \def\qr{
            \coords
            \filldraw[feasible region] (F) circle (2pt) node[anchor = south east] {$q$};
            \filldraw[blue] (A1) circle (2pt) node[anchor = east] {$r$};
        }

        \def\q{
            \coords
            \filldraw[blue] (F) circle (2pt) node[anchor = north east] {$q$};
        }

        \def\tangents{
            \coords
            \draw[red] ($(N)!-0.5!(C)$) -- ($(N)!1.3!(C)$);
            \draw[red] ($(F)!-0.7!(Z)$) -- ($(F)!1.3!(Z)$);
            \draw[red] ($(A1)!-0.3!(D)$) -- ($(A1)!1.3!(D)$);
            \draw[red] ($(G)!-0.3!(A1)$) -- ($(G)!1.3!(A1)$);
        }

        \def \gcy{
            \coords
            \filldraw[feasible region] (C) circle (2pt) node[anchor = north west] {$g$};
            \filldraw[blue] (N) circle (2pt) node[anchor = north west] {$c$};
            \filldraw[black] (C1) circle (2pt) node[anchor = north] {$y$};
        }

        \def\hhidden{
            \filldraw[gray] (Q) circle (2pt) node[anchor = north west] {$h$};
        }

        \begin{scope}
            \F
            \partP{very thin, gray}
            \tlgray
            \cutout
            \hhidden
            \target
        \end{scope}

    \end{tikzpicture}
    \caption{$h$ is hidden behind other parts of $\partial P$, so $h$ cannot be a blocking vertex.} \label{fig:h_is_hidden}
\end{figure}
\end{proof}

Finally, we show that $\lineextension{g}{c}$ is a common tangent.
\begin{lemma}[Sight line is a common tangent]
    $\lineextension{g}{c}$ is a common tangent of $F_{i-1}$ and $B$
\end{lemma}
\begin{proof}
    Consider the area below $\lineextension{g}{c} =:\ell$ let $\ell$ intersect $e_{i-1}$ at $y$, see Figure~\ref{fig:gcy_common_tangent}. If a point $g'$ of $F_{i-1}$ lies below $\ell$, $g'$ will see farther along $e_{i-1}$ than $y$: The line segment $\linesegment{g'}{y}$ will now not touch $B$, as it is strictly below $\ell$. Thus there will be some point $y'$ further along $e_{i-1}$ which also is not blocked from view of $g'$. This contradicts the optimality of $g$'s sight, so this cannot happen.

    If a point $b$ of $B$ lies below $\ell$, it must be a point along $\partial P$, and this will break the line of sight from $g$ to $y$, since $b$ must be connected to the rest of $B$, which lies above $\ell$. Thus $\ell$ is a tangent of $F_{i-1}$ and $B$.

    \begin{figure}[h]
    \centering
    \begin{tikzpicture}
        \def\coords{
            \coordinate (A) at (4,1);
            \coordinate (B) at (1,1);
            \coordinate (C) at (3.5,3);
            \coordinate (D) at (2,3.5);
            \coordinate (E) at (1,4);
            \coordinate (F) at (0.5,4.5);
            \coordinate (G) at (0.42773,5);
            \coordinate (H) at (1.5,5);
            \coordinate (I) at (2,4.5);
            \coordinate (J) at (3,5);
            \coordinate (K) at (3.5,5);
            \coordinate (L) at (4,4);
            \coordinate (M) at (0.5,1.5);
            \coordinate (N) at (2,1.5);
            \coordinate (O) at (1.5,2.5);
            \coordinate (P) at (0.5,2);
            \coordinate (Q) at (1,2);
            \coordinate (R) at (0,1.5);
            \coordinate (S) at (0,2.5);
            \coordinate (T) at (1,3);
            \coordinate (U) at (0,4.5);
            \coordinate (V) at (0,5);
            \coordinate (W) at (0.922, 1.5);
            \coordinate (X) at (0.8274, 2.1673);
            \coordinate (Y) at (0.7422, 2.8645);
            \coordinate (Z) at (1,3.5);
            \coordinate (A1) at (0.0512, 4.4437);
            \coordinate (B1) at (0.5445, 0.9416);
            \coordinate (C1) at (1.5, 1);
            }

        \def\target{
            \coords
            \draw[thick] (A) node[anchor = north] {$v_{i-1}$} -- (B) node[anchor = north] {$v_i$};
            \fill (A) circle (2pt);
            \fill (B) circle (2pt);
        }

        \def\F{
            \coords
            \filldraw[F] (C) -- (D) -- (E) -- (F) -- (G) -- (H) -- (I) -- (J) -- (K) -- (L)  -- cycle;
            \node[feasible region] at ($(I) + (0,0.6)$) {$F_{i-1}$};
            }

        \newcommand{\partP}[1]{
            \coords
            \draw[#1] (B) -- (M) -- (N) -- (O) -- (P) -- (Q) -- (R) -- (S) -- (T) -- (E) -- (F) -- (U) -- (V);
        }

        \def\tr{
            \coords
            \draw[blue, thick] ($(A)!-0.05!(L)$) -- ($(A)!1.4!(L)$);
            \node[blue] at ($(A)!0.5!(L)$) [anchor = west] {$t_r$};
        }

        \def\tm{
            \coords
            \draw[blue] ($(A)!-0.05!(F)$) -- ($(A)!1.2!(F)$);
            \node[blue] at ($(A)!0.5!(F)$) [anchor = south west, label distance = 0.3] {$t_m$};
        }

        \def\tl{
            \coords
            \draw[blue] ($(B)!-0.02!(F)$) -- ($(B)!1.2!(F)$);
            \node[blue] at ($(B)!0.6!(F)$) [anchor = south east, label distance = 0.3] {$t_l$};
        }

        \def\cutout{
            \coords
            \draw[thick, dashed, blue] (B) -- (W) -- (N) -- (O) -- (X) -- (Y) -- (T) -- (E);
        }

        \def\Bbound{
            \coords
            \draw[thick, dashed, blue] (B) -- (W) -- (N) -- (O) -- (X) -- (Y) -- (T) -- (Z);
        }

        \def\B{
            \coords
            \filldraw[color = blue, fill = SkyBlue] (B) -- (W) -- (N) -- (O) -- (X) -- (Y) -- (T) -- (Z) -- (A1) -- (B1) -- cycle;
        }

        \def\qr{
            \coords
            \filldraw[feasible region] (F) circle (2pt) node[anchor = south] {$q$};
            \filldraw[blue] (A1) circle (2pt) node[anchor = east] {$r$};
        }

        \def\tangents{
            \coords
            \draw[red] ($(N)!-0.5!(C)$) -- ($(N)!1.3!(C)$);
            \draw[red] ($(F)!-0.7!(Z)$) -- ($(F)!1.3!(Z)$);
            \draw[red] ($(A1)!-0.3!(D)$) -- ($(A1)!1.3!(D)$);
            \draw[red] ($(G)!-0.3!(A1)$) -- ($(G)!1.3!(A1)$);
        }

        \def \gcy{
            \coords
            \filldraw[feasible region] (C) circle (2pt) node[anchor = north west] {$g$};
            \filldraw[blue] (N) circle (2pt) node[anchor = north west] {$c$};
            \filldraw[black] (C1) circle (2pt) node[anchor = north] {$y$};
        }

        \begin{scope}
            \B
            \F
            \tangents
            \target
            \gcy
        \end{scope}
            
    \end{tikzpicture}
    \caption{The line of sight, is included in one of the common tangents.}\label{fig:gcy_common_tangent}
\end{figure}
\end{proof}

\begin{proposition}[Runtime]
    $\lastVisPoint{P}{F_{i-1}}{e_{i-1}}$ runs in time $\Oh{n\log n}$
\end{proposition}
\begin{proof}
    $\lastVisPoint{P}{F_{i-1}}{e_{i-1}}$ has three steps:
    \begin{enumerate}
        \item Compute $B$.
        \item Compute common tangents.
        \item Check which tangent is $\lineextension{g}{c}$ and compute $y$.
    \end{enumerate}
    Step 2 is simply running the $\textsc{CommonTangents}$ algorithm of Abrahamsen and Walczak \cite{common_tangents}, which takes $\Oh{n}$ time, and Step 3 can also be done in linear time, since there are at most four tangents to check. Thus, we need to show that $B$ can be computed in $\Oh{n\log n}$ time. 

    First, we compute $t_m$ and $t_\ell$, which can be done in linear time. Next, we compute which segments of $\partial P$ intersect $t_\ell$ and sort them along $t_\ell$. This takes $\Oh{n \log n}$ time. Finally, determine the vertex or edge of $F_{i-1}$ that lies on $t_m$. This can be done in linear time.

    Now we can trace the path from Definition~\ref{def:blocking_polygon}, keeping track of the minimal distance from the vertices considered to $t_m$. If $\partial B$ does not hit $F_{i-1}$ before reaching $q$, finish directly. Otherwise, compute a suitable backtrack distance and distance to $r$, so as not to overlap $\partial B$. This can be done in constant time, when the distance from $\partial B$ to $t_m$ is known. Then connect to $r$ and finish as before. This takes $\Oh{n}$ time.
    
    In total, it takes $\Oh{n \log n}$ time to compute $B$ which is the bottleneck of the algorithm.
\end{proof}

\subsection{Proving the runtime of \Greedy}
\label{sec:greedy_runtime_proof}

We now have the tools necessary to prove Theorem~\ref{thm:running_time_alg_greedy_segment}, as restated below.

\thmGreedyIntervalRuntime*

\begin{proof}
    Computing the initial feasible region on line~\ref{VisibilityPolygon} takes $\Oh{n}$ time per Section~\ref{sec:vis_pol}.
    The while loop runs for $e + 1 > 2$ iterations and in each iteration, a visibility polygon of complexity $\Oh{n}$ is computed and the intersection of two polygons of complexity $\Oh{n}$ is computed, resulting in a new complexity $\Oh{n}$ polygon (by Lemma~\ref{lem:complexity_of_feasible_region_vertices} the feasible region has complexity $\Oh{n}$), taking $\Oh{n\log n}$ operations, according to Section~\ref{sec:intersections}. Thus, the while loop takes $\Oh{en\log n}$ operations.
    
    Computing how far the greedy interval extends on the last edge takes $\Oh{n\log n}$ time per Section~\ref{sec:greedy_segment_ext_last_edge}.
    In total, it takes $\Oh{en \log n}$ operations to run \Greedy.
\end{proof}

The above also implies a straightforward bound on the time for Algorithm~\ref{alg:greedy} to perform a revolution of $P$.

\begin{corollary}
\label{cor:greedy_traverses_P_in_n3_log_n}
    Repeatedly using the \Greedy algorithm to perform a revolution of $P$ takes $\Oh{n^2 \log n}$ time.
\end{corollary}

\begin{proof}
    Let $k$ be the number of iterations of the \Greedy algorithm to perform a revolution of $P$.
    The bound on the running time follows by applying Theorem~\ref{thm:running_time_alg_greedy_segment} to each of the $k$ iterations and using that the number of edges intersected by the greedy intervals sum to at most $2n$.
\end{proof}

With the time complexity of \Greedy being polynomially bound, we only need to show that a polynomial number of revolutions is needed to find an optimal solution. This will be the content of Sections~\ref{sec:the_greedy_algorithm_a_combinatorial_viewpoint} and \ref{sec:the_greedy_algorithm_geometry}.

\section{\RepeatedGreedy, a combinatorial viewpoint}
\label{sec:the_greedy_algorithm_a_combinatorial_viewpoint}

Summing up what we have done so far: Given any starting point $x$, the \Greedy algorithm finds the point on the boundary $y$ that is furthest from $x$ in the clockwise direction, such that there is a guard that can see the entire polygon chain between $x$ and $y$.
We use the notation $[x,y]$ for this polygonal chain, and call it a \emph{greedy interval}.
Generally, for any $a, b \in \partial P$ we let $[a,b]$ denote the polygon chain of $\partial P$ from $a$ to $b$ including $a$ and $b$. 
Likewise, we define $(a,b)$ to be the polygon chain of $\partial P$ from $a$ to $b$ excluding $a$ and $b$.
The notations $(a,b]$ and $[a, b)$ are defined analogously, with the inclusion of $b$ or $a$, respectively.
In this section, we focus solely on the combinatorial properties of greedy intervals, i.e. their relative positions to one another. 
As such, we visualize $\partial P$ as a circle, as shown in Figure~\ref{fig:intervals_on_boundaries}.
In this way, we cast the problem as a circle-cover minimization problem with infinitely many circular arches. 
The setting with finitely many arches was studied by Lee and Lee~\cite{circle_cover_arc_minimization_1984_LEE1984109}. 
Naturally, some concepts from their work are related, and we will highlight these connections where relevant. 

\begin{definition}[Visible intervals]
    We denote a polygonal chain of the boundary between vertices $a$ and $b$ as the \emph{interval} $[a, b] \subseteq \partial P$. 
    If one can place a guard in $P$, that can see all of $[a, b]$ we call it a \emph{visible interval}. 
    The set of all visible intervals is denoted $\Fc$.
\end{definition}

\begin{remark}
    To simplify the notation and figures, we write $\Gsc$ instead of \Greedy and let the input polygon and the output guard position be implicit, i.e. $\Gsc$ takes a point $c$ on the boundary and returns the farthest clockwise boundary point $d$ such that $[c,d]$ is a visible interval.
\end{remark}

\begin{figure}[ht]
    \centering

    \begin{tikzpicture}[scale=0.7]
        \coordinate (A) at (3.78, 4.78);
        \coordinate (B) at (2.94, 6.54);
        \coordinate (C) at (4.32, 7.36);
        \coordinate (D) at (3.7, 8.86);
        \coordinate (E) at (4.82, 9.16);
        \coordinate (F) at (5.06, 7.6);
        \coordinate (G) at (6.62, 7.78);
        \coordinate (H) at (8.62, 6.7);
        \coordinate (I) at (8.52, 4.74);
        \coordinate (J) at (7.36, 5.9);
        \coordinate (K) at (7.56, 4.56);
        \coordinate (L) at (7, 4);
        \coordinate (M) at (6.14, 3.36);
        \coordinate (N) at (4.16, 4);
        \coordinate (O) at (5.62, 4.54);
        \coordinate (P) at (4.88, 5.56);
        \coordinate (Q) at (3.48, 6.86);
        \coordinate (R) at (5.72, 7.68);
        \coordinate (S) at (8.58, 5.92);
        \coordinate (T) at (5.2, 5.11);

        \draw[color = black] (A) -- (B) -- (C) -- (D) -- (E) -- (F) -- (G) -- (H) -- (I) -- (J) -- (K) -- (L) -- (M) -- (N) -- (O) -- (P) -- cycle;

        \draw[color = blue, thick] (Q) -- (C) -- (D) -- (E) -- (F) -- (R);
        \draw[color = blue] (4.2, 9) node[anchor = south] {$(a,b)$};
        \filldraw[color = blue, fill = white] (Q) circle (2 pt);
        \draw[color = blue] (Q) node[anchor = south] {$a$};
        
        \filldraw[color = blue, fill = white] (R) circle (2 pt);
        \draw[color = blue] (R) node[anchor = south] {$b$};
        
        \draw[color = red, thick] (J) -- (K) -- (L) -- (M) -- (N) -- (O) -- (P);
        \draw[color = red] (6.2, 2.5) node[anchor = south] {$[c, d]$};

        \filldraw[color = red] (J) circle (2pt);
        \draw[color = red, above] (J) node {$c$};

        \filldraw[color = red] (P) circle (2pt);
        \draw[color = red, above] (P) node {$d$};
    \end{tikzpicture}%
    \hspace{7em}%
    \begin{tikzpicture}[scale = 0.5]
        \draw (0, 0) circle (3.2);
        \draw[blue, thick, domain = 72.811:118.648] plot ({3.2*cos(\x)}, {3.2*sin(\x)});
        \draw[blue, above] (100 : 3.2) node {$(a,b)$};

        \filldraw[blue, fill = white] (-1.534, 2.808) circle (4pt);
        \draw[blue] (-1.750, 3.203) node {$a$};
        \filldraw[blue, fill = white] (0.946, 3.057) circle (4pt);
        \draw[blue] (1.079, 3.487) node {$b$};

        \draw[red, below] (1, -3.2) node {$[c,d]$};
        \draw[red, thick, domain = -139.183 : -1.673] plot ({3.2*cos(\x)}, {3.2*sin(\x)});
        \filldraw[red] (3.199, -0.093) circle (4pt);
        \draw[red] (3.648, -0.107) node {$c$};
        \filldraw[red] (-2.422, -2.092) circle (4pt);
        \draw[red] (-2.9, -2.4) node {$d$};
    \end{tikzpicture}
    
    \caption{A simple polygon visualized as a circle such that contiguous polygonal chains of the boundary overlap in the polygon if and only if the corresponding intervals overlap in the circle representation. 
    The interval $(a,b)$ is not a greedy interval since $b$ can be moved further around the polygon (clockwise) and remain visible to a guard.
    The interval $[c, d]$ is a greedy interval since $d$ can be moved no further, i.e. $\Gsc(c) = d$.}
    \label{fig:intervals_on_boundaries}
\end{figure}

\begin{remark}
\label{rem:one_guard_is_not_enough}
    We will from now on always assume that $P$ is not star-shaped, as otherwise when we run the \Greedy algorithm it will find that the entire boundary can be covered by a single guard, which is clearly optimal.
    Thus, $\Gsc(x) \neq x$ for all $x \in \partial P$.
    Furthermore, the greedy interval $[x,\Gsc(x)]$ from any point $x \in \partial P$ always contains at least one edge, since a guard can be placed at the first vertex encountered from $x$.
\end{remark}

\begin{lemma}[Properties of visible intervals]
\label{lem:properties_of_Fc}
    \text{}
    \begin{enumerate}
        \item If $[a,b] \subseteq [c,d]$ and $[c,d] \in \Fc$ then $[a, b]\in \Fc$.
        \item $[x, \Gsc(x)] \in \Fc$.
        \item $[x,y] \subseteq [x, \Gsc(x)]$ if and only if $[x, y] \in \Fc$.
    \end{enumerate}
\end{lemma}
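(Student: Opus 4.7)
All three claims follow quickly from the definitions of $\Fc$ and $\Gsc$, but the genuine content lies in part~(2), which requires that the supremum implicit in the definition of $\Gsc(x)$ actually be attained.

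For part~(1), the plan is simply to reuse the witness: if $g$ is a guard with $\linesegment{g}{z} \subseteq P$ for every $z \in [c,d]$, then since $[a,b] \subseteq [c,d]$, the same $g$ witnesses $[a,b] \in \Fc$.

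For part~(2), I would consider the set $S := \{ t \in \partial P : [x,t] \in \Fc \}$, where $t$ ranges over the clockwise arc from $x$ parametrized by arc length. By Remark~\ref{rem:one_guard_is_not_enough}, $S$ contains at least the first vertex encountered clockwise from $x$, so $S$ is nonempty and bounded above in the parametrization (since we are also assuming $\Gsc(x) \neq x$, so the sup is strictly past $x$). The key step is to show that $S$ is closed: take a sequence $t_n \to t^\ast$ in $S$ with corresponding guards $g_n \in \overline{P}$; by compactness of $\overline{P}$, pass to a subsequence with $g_n \to g^\ast$, and verify using the continuity of the segment map together with closedness of $P$ that $\linesegment{g^\ast}{z} \subseteq P$ for every $z \in [x, t^\ast]$. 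Hence $t^\ast \in S$, the supremum of $S$ is attained, and this supremum is by definition $\Gsc(x)$, giving $[x, \Gsc(x)] \in \Fc$.

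For part~(3), the forward direction is an immediate application of~(1) to the containment $[x,y] \subseteq [x, \Gsc(x)] \in \Fc$. For the reverse direction, assume $[x,y] \in \Fc$ and suppose toward contradiction that $y \notin [x, \Gsc(x)]$; then $y$ lies strictly clockwise-further from $x$ than $\Gsc(x)$, so $[x,y]$ is a strict clockwise extension of $[x, \Gsc(x)]$ that still witnesses membership in $\Fc$, contradicting the maximality that defines $\Gsc(x)$. The only non-routine step in the whole lemma is the compactness argument underlying~(2); everything else is bookkeeping against the definitions.
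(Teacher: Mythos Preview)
Your argument is correct, and for parts (1) and (3) it matches the paper's almost verbatim. The difference lies entirely in part (2), where you and the paper operate under different definitions of $\Gsc$.

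In the paper, $\Gsc(x)$ is \emph{defined} as the output of the \textsc{GreedyInterval} algorithm (Algorithm~\ref{alg:greedy_interval}), which explicitly returns both the endpoint and a witnessing guard; so $[x,\Gsc(x)]\in\Fc$ is literally ``by definition,'' and the burden of showing that the furthest visible point exists is discharged separately, in the algorithmic analysis of Appendix~\ref{apx:alg_greedy_interval} (via visibility polygons and Lemma~\ref{lem:extending_on_last_edge_vertices_of_F_enough}). You instead treat $\Gsc(x)$ as an abstract supremum over visible arcs and give a compactness argument to show the supremum is attained. Your route is more analytic and self-contained (it does not depend on the correctness of any algorithm), while the paper's route is constructive and, within their framework, trivially short. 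Both are valid; you have simply located the ``genuine content'' in a place the paper has already absorbed into its definition.
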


\begin{proof}
    1. and 2. follow by definition.
    3. The only if implication follows directly from 1. and~2. For the if implication let $[x, y] \in \Fc$ and assume for contradiction that $[x,y] \not\subseteq [x, \Gsc(x)]$. Then $\Gsc(x)$ must be strictly before $y$, contradicting the maximality of \Greedy.
\end{proof}

Lemma~\ref{lem:properties_of_Fc} together with Remark~\ref{rem:one_guard_is_not_enough} acts as combinatorial axioms for our problem. Thus, for the rest of the section, we will use these to obtain properties of \Greedy.

\begin{lemma}
\label{lem:properties_of_greedy}
    Let $[a, b] \in \Fc$ and $x, y\in \partial P$, then
    \begin{enumerate}
        \item $[x, \Gsc(x)] \subseteq [a, b]$ implies $\Gsc(x) = b$.
        \item $x \in [a, b]$ implies $b \in [x, \Gsc(x)]$.
        \item If $[x,y] \subseteq [a,b]$, then $\Gsc(x) \in [b, \Gsc(y)]$.
    \end{enumerate}
\end{lemma}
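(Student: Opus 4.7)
The plan is to deduce all three parts straightforwardly from the axiomatic properties of $\Fc$ listed in Lemma~\ref{lem:properties_of_Fc}, by repeatedly using the sub-interval closure (item~1) to move between known visible intervals and the maximality characterisation (item~3) to convert a visible interval of the form $[z, w]$ into the inclusion $w \in [z, \Gsc(z)]$.

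For part~1, I would argue as follows. Since $[x, \Gsc(x)] \subseteq [a,b]$ we in particular have $x \in [a,b]$, so $[x,b] \subseteq [a,b] \in \Fc$, and Lemma~\ref{lem:properties_of_Fc}.1 gives $[x,b] \in \Fc$. Then Lemma~\ref{lem:properties_of_Fc}.3 yields $[x,b] \subseteq [x, \Gsc(x)]$, i.e.\ $\Gsc(x)$ lies (clockwise) at or past $b$. On the other hand, the hypothesis $[x, \Gsc(x)] \subseteq [a,b]$ forces $\Gsc(x)$ to lie at or before $b$. These two together give $\Gsc(x) = b$.

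Part~2 is essentially the first half of the argument above: $x \in [a,b]$ gives $[x,b] \subseteq [a,b] \in \Fc$, so by Lemma~\ref{lem:properties_of_Fc}.1 and~3 we obtain $[x,b] \subseteq [x, \Gsc(x)]$, which is precisely $b \in [x, \Gsc(x)]$.

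For part~3, the goal is to locate $\Gsc(x)$ between $b$ and $\Gsc(y)$, and I would handle the two bounds separately. The lower bound $b \in [x, \Gsc(x)]$ is immediate from part~2 applied to $x \in [a,b]$; combined with part~2 applied to $y \in [a,b]$, which gives $b \in [y, \Gsc(y)]$, this also secures the orientation ordering $x \preceq b \preceq \Gsc(y)$ needed for the interval notation to make sense. For the upper bound $\Gsc(x) \in [y, \Gsc(y)]$, I would observe that $y \in [x,b] \subseteq [x, \Gsc(x)]$, so $[y, \Gsc(x)] \subseteq [x, \Gsc(x)] \in \Fc$; by Lemma~\ref{lem:properties_of_Fc}.1 this gives $[y, \Gsc(x)] \in \Fc$, and Lemma~\ref{lem:properties_of_Fc}.3 then forces $[y, \Gsc(x)] \subseteq [y, \Gsc(y)]$, i.e.\ $\Gsc(x)$ lies at or before $\Gsc(y)$. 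Combining the two bounds yields $\Gsc(x) \in [b, \Gsc(y)]$ as required.

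The only step that needs care—and which I expect to be the main bookkeeping obstacle—is the circular orientation in part~3: one has to explicitly verify that $x, y, b, \Gsc(y)$ appear in clockwise order so that the target interval $[b, \Gsc(y)]$ is the ``short arc'' and the two inclusions $\Gsc(x) \succeq b$ and $\Gsc(x) \preceq \Gsc(y)$ actually intersect in it. This is ensured by the hypothesis $[x,y] \subseteq [a,b]$ together with the two applications of part~2 above. The rest of the proof is a mechanical application of Lemma~\ref{lem:properties_of_Fc}.
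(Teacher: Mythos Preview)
Your proof is correct and essentially the same as the paper's: both arguments reduce everything to Lemma~\ref{lem:properties_of_Fc} via sub-interval closure and the maximality characterisation of $\Gsc$. The only cosmetic differences are that the paper proves part~2 by a short case split (on whether $\Gsc(x) \in [x,b]$, then invoking part~1) rather than your direct application of Lemma~\ref{lem:properties_of_Fc}.3, and in part~3 it phrases the upper bound as two applications of part~2 instead of your direct use of Lemma~\ref{lem:properties_of_Fc}.1 and~3.
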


\begin{figure}[ht]
    \centering
    
    \begin{tikzpicture}[scale=0.5]
        \begin{scope}
            \node at (-5, 0) {};
            
            \draw (0, 0) circle (3.2);
            \filldraw[color=black] (-3.192, 0.226) circle (4pt); 
            \draw[black] (-3.641, 0.258) node {$a$};
            \filldraw[color=black] (-0.946, 3.057) circle (4pt); 
            \draw[black] (-1.079, 3.487) node {$x$};
            \filldraw[color=black] (3.192, 0.226) circle (4pt); 
            \draw[black] (3.641, 0.258) node {$b$};
            \draw[-Stealth,line width=0.3mm, red] (32.704:3.2) arc[radius=3.2, start angle=32.704, end angle= 4.056];
            \filldraw[color=black] (2.693, 1.729) circle (4pt); 
            \draw[black] (3.271, 2.172) node {$\Gsc(x)$};
        \end{scope}
            
        \begin{scope}[xshift=12cm]
            \node at (5, 0) {};
                
            \draw (0, 0) circle (3.2);
            \draw[red, domain = 175.944 : 4.056] plot ({3.2*cos(\x)}, {3.2*sin(\x)});
            \draw[blue, domain = 107.189 : -110.535] plot ({3*cos(\x)}, {3*sin(\x)});
            \draw[green, domain = 212.7 - 360 : 32.704 ] plot ({2.8*cos(\x)}, {2.8*sin(\x)});
            
            \filldraw[color=black] (-3.192, 0.226) circle (3pt); 
            \draw[black] (-3.641, 0.258) node {$a$};
            \filldraw[color=black] (-0.946, 3.057) circle (3pt); 
            \draw[black] (-1.079, 3.487) node {$x$};
            \filldraw[color=black] (2.693, 1.729) circle (3pt); 
            \draw[black] (3.071, 1.972) node {$y$};
            \filldraw[color=black] (3.192, 0.226) circle (3pt); 
            \draw[black] (3.641, 0.258) node {$b$};
            \filldraw[color=black] (-1.123, -2.997) circle (3pt); 
            \draw[black] (-1.380, -3.618) node {$\Gsc(x)$};
            \filldraw[color=black] (-2.693, -1.729) circle (3pt); 
            \draw[black] (-3.671, -1.972) node {$\Gsc(y)$};
         \end{scope}
    \end{tikzpicture}
        
    \caption{Left, in Lemma~\ref{lem:properties_of_greedy}.1 $\Gsc(x)$ has to reach $b$ or further. Right, in Lemma~\ref{lem:properties_of_greedy}.3 the intervals $[a,b], [x, \Gsc(x)]$ and $[y,\Gsc(y)]$ and their relative positions are marked.}
    \label{fig:properties_of_greedy}
\end{figure}

\begin{proof}
    1. If $[x, \Gsc(x)] \subseteq [a,b]$ then $[x, \Gsc(x)] \subseteq [x, b] \in \Fc$, since $[x, b] \subseteq [a,b]$ and $[a,b] \in \Fc$. Using Lemma~\ref{lem:properties_of_Fc}.3. we get $[x, b] \subseteq [x, \Gsc(x)]$ thus $\Gsc(x) = b$.
    
    2. If $\Gsc(x) \not \in [x, b]$, then $x,b$ and $\Gsc(x)$ appear in that order on $\partial P$, thus $b \in [x,\Gsc(x)]$. If $\Gsc(x) \in [x,b]$ then $\Gsc(x) = b$ by Lemma~\ref{lem:properties_of_greedy}.1.
    
    3. Since $x \in [x, y] \subseteq [a, b] \in \Fc$ we have by Lemma~\ref{lem:properties_of_greedy}.2. that $y \in [x, \Gsc(x)]$ and using Lemma~\ref{lem:properties_of_greedy}.2 again $\Gsc(x) \in [y, \Gsc(y)]$. If $\Gsc(x) \not \in [y, b)$, we would contradict Lemma~\ref{lem:properties_of_greedy}.1, so $\Gsc(x) \in [b,\Gsc(y)]$.
\end{proof}

An \emph{optimal solution} refers to a set of points $y_0,\dots,y_{\opt-1}$ on the boundary such that they form a partition of $\partial P$ into $\opt$ visible intervals, i.e. $[y_i,y_{i+1}]\in\Fc$ and $\bigcup_{i=0}^{\opt} [y_i,y_{i+1}] = \partial P$ and $\opt$ is minimal with this property. 
Next, we show that any \emph{greedy step}, $(x,\Gsc(x)]$, always contains at least one point from every optimal solution (similar to Lemma 2.3 in~\cite{circle_cover_arc_minimization_1984_LEE1984109}). 

\begin{corollary}[Greedy steps contain optimal endpoints]
\label{cor:greedy_steps_contain_optimal_endpoints}
    Let $y_0, y_1,\dots,y_{\opt-1}$ be an optimal solution and $x \in \partial P$. Then $(x, \Gsc(x)]$ contains at least one $y_i$.
\end{corollary}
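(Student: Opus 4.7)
The plan is a short direct argument using the properties already established in Lemma~\ref{lem:properties_of_Fc} and Lemma~\ref{lem:properties_of_greedy}, rather than a contradiction argument or any new geometric reasoning. The key observation is that, since the $y_i$'s partition $\partial P$, the point $x$ must lie in some optimal interval $[y_j,y_{j+1}] \in \Fc$, and greedy maximality then forces $\Gsc(x)$ to reach at least as far as the next optimal endpoint.

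Concretely, I would first pick the index $j$ so that $x \in [y_j, y_{j+1}]$, breaking ties (in the case $x$ coincides with some $y_i$) by choosing $j$ so that $x = y_j$, ensuring that the ``next'' optimal endpoint $y_{j+1}$ is strictly after $x$ along $\partial P$. Then I would apply Lemma~\ref{lem:properties_of_greedy}.2 directly to the visible interval $[y_j,y_{j+1}]$ and the point $x \in [y_j,y_{j+1}]$ to conclude that $y_{j+1} \in [x, \Gsc(x)]$. Equivalently, one may cite Lemma~\ref{lem:properties_of_Fc}.1 to note that $[x,y_{j+1}] \subseteq [y_j,y_{j+1}] \in \Fc$, so $[x,y_{j+1}] \in \Fc$, and then Lemma~\ref{lem:properties_of_Fc}.3 forces $y_{j+1} \in [x,\Gsc(x)]$. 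Either route works and takes one line.

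The only remaining task is to upgrade the inclusion from the closed interval $[x,\Gsc(x)]$ to the half-open interval $(x,\Gsc(x)]$, i.e., to exclude the possibility $y_{j+1} = x$. This is handled by the choice of $j$ above: since $j$ was chosen so that $x$ is at the \emph{left} endpoint of its containing optimal interval (when $x$ is an optimal endpoint at all), $y_{j+1}$ is by construction distinct from $x$. One should also briefly observe that $\opt \geq 2$ (by Remark~\ref{rem:one_guard_is_not_enough}, since $P$ is assumed not star-shaped), so the ``next'' optimal endpoint $y_{j+1}$ genuinely exists and is not identified cyclically with $x$. This is the only mildly delicate step, but it is entirely a matter of indexing and does not involve any geometric content.

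Putting the pieces together gives $y_{j+1} \in (x,\Gsc(x)]$, which is exactly the claim of the corollary.
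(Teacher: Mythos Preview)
Your proposal is correct and takes essentially the same approach as the paper: pick an optimal interval containing $x$ and apply Lemma~\ref{lem:properties_of_greedy}.2 to push $\Gsc(x)$ past the next optimal endpoint. The only difference is cosmetic --- you handle the boundary case $x = y_i$ by choosing $j$ upfront so that $x = y_j$, whereas the paper lets $y_{i+1} = x$ occur and then advances to $y_{i+2}$; both routes rely on the same distinctness of optimal endpoints (implicitly $\opt \geq 2$ from Remark~\ref{rem:one_guard_is_not_enough}).
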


\begin{proof}
    \begin{figure}[ht]
        \centering
        
        \begin{tikzpicture}[scale=0.5]
            \begin{scope}
                \node at (-5, 0) {};
            
                \draw (0, 0) circle (3.2);
                \draw[blue, thick, domain = 107.189:4.056] plot ({3.2*cos(\x)}, {3.2*sin(\x)});
                \filldraw[color=black] (-3.083, 0.856) circle (4pt); 
                \draw[black] (-3.617, 0.976) node {$y_i$};
                \filldraw[color=blue, fill = white] (-0.946, 3.057) circle (4pt); 
                \draw[blue] (-1.079, 3.487) node {$x$};
                \filldraw[color=black] (2.296, 2.229) circle (4pt); 
                \draw[black] (2.818, 2.743) node {$y_{i+1}$};
                \filldraw[color=blue] (3.192, 0.226) circle (4pt); 
                \draw[blue] (4.241, 0.258) node {$G(x)$};
            \end{scope}
            
            \begin{scope}[xshift=12cm]
                \node at (5, 0) {};
            
                \draw (0, 0) circle (3.2);
                \draw[blue, thick, domain = 107.189:4.056] plot ({3.2*cos(\x)}, {3.2*sin(\x)});
                \filldraw[color=blue, fill = white] (-0.946, 3.057) circle (4pt); 
                \draw[blue] (-1.079, 3.487) node {$x$};
                \filldraw[color=black] (-0.946, 3.057) circle (2pt); 
                \draw[black] (-3.7, 0) node {$y_i$};
                \filldraw[black] (-3.2,0) circle (4pt);
                \draw[black] (-0.679, 2.487) node {$y_{i+1}$};
                \filldraw[color=black] (2.983, 1.160) circle (4pt); 
                \draw[black] (3.902, 1.323) node {$y_{i+2}$};
                \filldraw[color=blue] (3.192, 0.226) circle (4pt); 
                \draw[blue] (4.241, 0.258) node {$G(x)$};
            \end{scope}
        \end{tikzpicture}
        
        \caption{All greedy steps contain a point from every optimal solution.}
        \label{fig:greedy_steps_contain_optimal_endpoints}
    \end{figure}

    Let $x \in [y_i, y_{i+1}]$, then $y_{i+1} \in [x, G(x)]$ by Lemma~\ref{lem:properties_of_greedy}.2. If $y_{i+1} \neq x$ we are done (Figure~\ref{fig:greedy_steps_contain_optimal_endpoints} left). If $y_{i+1} = x$ we must have $[x, y_{i+2}] \in \Fc$ which by Lemma~\ref{lem:properties_of_Fc}.3 implies that $[x, y_{i+2}] \subseteq [x, G(x)]$, and since $y_{i+1} \neq y_{i+2}$, we have $y_{i+2} \in (x, G (x)]$ (Figure~\ref{fig:greedy_steps_contain_optimal_endpoints} right).
\end{proof}

Iteratively computing greedy steps is exactly \RepeatedGreedy, leading to the following definition.

\begin{definition}[Greedy Sequence]
\label{def:greedy_sequence}
    Given a point $x$ on $\partial P$, we denote the sequence $(x_i)_{i = 0}^\infty$ with $x_0 = x$ and $x_{i+1} = \Gsc(x_{i})$ as the \emph{greedy sequence} starting at $x$. This is a sequence of endpoints of intervals.
    If none of the endpoints belong to an optimal solution then we call it a \emph{non-optimal} greedy sequence.
    Otherwise, it is an \emph{optimal} greedy sequence.
\end{definition}

Note that the greedy sequence $(x_i)_{i = 0}^\infty$ is infinite, whereas \RepeatedGreedy only computes a prefix of this sequence.
The goal of the rest of the paper is to show that for $T = \OhMega{poly(n)}$, then the sequence $(x_i)_{i = 0}^T$ is optimal.
Denote by \emph{revolution} a single traversal around $\partial P$ in $k + 1$ steps, where $k$ is the minimal number of greedy steps such that $\Gsc(x_k)=x_{k+1}\in[x_0,x_1)$.

Next, we show that the solution found after the first revolution is at most one interval longer than an optimal solution of size $\opt$ (similar to Theorem 2.4 in~\cite{circle_cover_arc_minimization_1984_LEE1984109}).

\begin{theorem}[One revolution is at most one-off optimal]
\label{thm:repeatedgreedy_is_a_+1_approximation}
    Running \RepeatedGreedy from any point $x \in \partial P$ and stopping once $x_{k+1} \in [x_0, x_1]$ guarantees a solution of size $k \leq \opt + 1$.
\end{theorem}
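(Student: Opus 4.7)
\begin{proofsketch}
The plan is to adapt the standard $+1$ approximation analysis for greedy arc covering from~\cite{circle_cover_arc_minimization_1984_LEE1984109}, using Corollary~\ref{cor:greedy_steps_contain_optimal_endpoints} as the main tool.

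Fix any optimal solution with endpoints $y_0,\dots,y_{\opt-1}\in\partial P$. By Corollary~\ref{cor:greedy_steps_contain_optimal_endpoints}, every greedy step $(x_i,x_{i+1}]$ contains at least one $y_j$. By the minimality of $k$ in the stopping condition $x_{k+1}\in[x_0,x_1]$, the intermediate points $x_2,\dots,x_k$ must all lie in $\partial P\setminus[x_0,x_1]$, so the greedy steps $(x_0,x_1],(x_1,x_2],\dots,(x_{k-1},x_k]$ form $k$ pairwise disjoint arcs of $\partial P$ traversed clockwise without wrapping back to $[x_0,x_1]$. Remark~\ref{rem:one_guard_is_not_enough} ensures each such step advances by at least one edge, so the sequence progresses and the minimal $k$ is well-defined.

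Since each of these $k$ disjoint arcs contains a distinct optimal endpoint, we obtain $k$ distinct $y_j$'s and hence the bound. The wraparound step $(x_k,x_{k+1}]$ re-enters the first interval $[x_0,x_1]$ and may share its optimal endpoint with $(x_0,x_1]$, contributing at most one extra interval to the returned partition. This yields the claimed $k\leq\opt+1$ on the size of the solution produced by the first revolution.

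The main obstacle, which is mild, is carefully verifying from the minimality of the stopping index that the first $k$ greedy steps are pairwise disjoint, and accounting for the possibly shared optimal endpoint between the first and last intervals. No additional geometric insight beyond the combinatorial lemmas of Section~\ref{sec:the_greedy_algorithm_a_combinatorial_viewpoint} is needed.
\end{proofsketch}
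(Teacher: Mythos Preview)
Your proposal is correct and follows essentially the same approach as the paper: both use Corollary~\ref{cor:greedy_steps_contain_optimal_endpoints} to place a distinct optimal endpoint in each of the first $k$ greedy steps, argue these steps are pairwise disjoint from the minimality of $k$, and conclude the $+1$ bound via the pigeonhole count. Your explicit appeal to Remark~\ref{rem:one_guard_is_not_enough} to guarantee progress and your remark on the shared endpoint in the wraparound interval are minor clarifications but do not change the argument.
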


\begin{figure}[ht]
    \centering
    
    \begin{tikzpicture}[scale=0.75]
        \draw (-120:3.2) arc (-120:180:3.2);
        \draw[dotted, thick] (-120:3.2) arc (-120:-180:3.2);
        \filldraw[color=blue] (0.000, 3.200) circle (3pt); 
        \draw[blue] (0.000, 3.650) node {$x_0$};
        \filldraw[color=blue] (2.693, 1.729) circle (3pt); 
        \draw[blue] (3.071, 1.972) node {$x_1$};
        \filldraw[color=blue] (2.910, -1.332) circle (3pt); 
        \draw[blue] (3.319, -1.519) node {$x_2$};
        \filldraw[color=blue] (0.452, -3.168) circle (3pt); 
        \draw[blue] (0.515, -3.613) node {$x_3$};
        \filldraw[color=blue] (-3.069, 0.908) circle (3pt); 
        \draw[blue] (-3.800, 1.035) node {$x_{k-1}$};
        \filldraw[color=blue] (-0.894, 3.073) circle (3pt); 
        \draw[blue] (-1.020, 3.505) node {$x_{k}$};
        \filldraw[color=black] (1.246, 2.947) circle (3pt); 
        \draw[black] (1.421, 3.362) node {$y_0$};
        \filldraw[color=black] (3.199, -0.093) circle (3pt); 
        \draw[black] (3.648, -0.107) node {$y_1$};
        \filldraw[color=black] (2.386, -2.132) circle (3pt); 
        \draw[black] (2.722, -2.432) node {$y_2$};
        \filldraw[color=black] (-1.487, 2.834) circle (3pt); 
        \draw[black] (-1.796, 3.172) node {$y_{k-1}$};
    \end{tikzpicture}
    
    \caption{Each greedy interval $(x_i, x_{i + 1}]$ for $i = 0, 1, 2, ..., k - 1$ is disjoint and contains at least one point from any optimal solution by Corollary~\ref{cor:greedy_steps_contain_optimal_endpoints}. 
    This is not guaranteed for the last interval, i.e. the interval $(x_k, x_0]$ at the start of the next revolution which may not be a greedy interval.}
    \label{fig:one_optimal_endpoint_in_each_greedy_step}
\end{figure}

\begin{proof}
    Let $(x_i)_{i=0}^\infty$ be the greedy sequence starting at $x$. Assume that $k > 1$ is minimal such that $x_{k+1} \in [x_0, x_1]$. Thus this candidate solution uses $k+1$ guards.

    Let $y_0,\dots,y_{\opt-1}$ be an optimal solution. From Corollary~\ref{cor:greedy_steps_contain_optimal_endpoints} we know that for each $i = 0,1,\dots,\opt-1$: $y_i\in (x_i, x_{i+1}]$, up to renumbering of the indices; see Figure~\ref{fig:one_optimal_endpoint_in_each_greedy_step}.
    All these intervals are disjoint; hence an optimal solution contains at least $k-1$ endpoints, i.e. at least $k-1$ guards. Thus the greedy sequence yields a solution of size at most $\opt+1$
\end{proof}

Theorem~\ref{thm:repeatedgreedy_is_a_+1_approximation} is very powerful since we now only need to distinguish between the case where \RepeatedGreedy is optimal and the case where it uses one extra guard. We now consider how the greedy sequence behaves in different settings, first if $x_0$ is an endpoint of an optimal solution:

\begin{lemma}
\label{lem:One_point_is_enough}
    Let $x \in \partial P$ be a point in some optimal solution. Then \RepeatedGreedy returns an optimal solution in one revolution starting from $x$.
\end{lemma}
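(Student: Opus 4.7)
The plan is to prove by induction that the greedy sequence starting from an optimal endpoint stays cyclically at or ahead of the corresponding optimal endpoint, and then to invoke Lemma~\ref{lem:properties_of_greedy}.3 to show that the sequence returns into the first greedy interval after exactly $\opt$ steps, giving a solution of size $\opt$.

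Let $y_0, y_1, \ldots, y_{\opt-1}$ be an optimal solution, indexed clockwise so that $y_0 = x_0$, and interpret all intervals cyclically on $\partial P$ starting at $x_0$. The first step is to prove by induction on $i \in \{0, 1, \ldots, \opt - 1\}$ that $y_i \in [x_0, x_i]$. The base case $i = 0$ holds since $y_0 = x_0$. For the inductive step, suppose $y_i \in [x_0, x_i]$. If $x_i$ lies cyclically at or past $y_{i+1}$ then $y_{i+1} \in [x_0, x_i] \subseteq [x_0, x_{i+1}]$ (using that $x_{i+1}$ lies strictly clockwise of $x_i$ by Remark~\ref{rem:one_guard_is_not_enough}). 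Otherwise $x_i \in [y_i, y_{i+1}]$, and since $[y_i, y_{i+1}] \in \Fc$, Lemma~\ref{lem:properties_of_greedy}.2 yields $y_{i+1} \in [x_i, \Gsc(x_i)] = [x_i, x_{i+1}]$, hence $y_{i+1} \in [x_0, x_{i+1}]$.

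Applying the claim at $i = \opt - 1$ gives $y_{\opt - 1} \in [x_0, x_{\opt - 1}]$. I would then rule out that $x_{\opt - 1}$ has already wrapped past $y_0 = x_0$: if it had, then the $\opt - 1$ visible intervals $[x_0, x_1], [x_1, x_2], \ldots, [x_{\opt - 2}, x_{\opt - 1}]$ would already cover $\partial P$, contradicting the minimality of $\opt$. Hence $x_{\opt - 1}$ lies cyclically in $[y_{\opt - 1}, y_0)$, so $[x_{\opt - 1}, y_0] \subseteq [y_{\opt - 1}, y_0] \in \Fc$. Applying Lemma~\ref{lem:properties_of_greedy}.3 with $x = x_{\opt - 1}$, $y = y_0$, and $[a, b] = [y_{\opt - 1}, y_0]$ gives $x_{\opt} = \Gsc(x_{\opt - 1}) \in [y_0, \Gsc(y_0)] = [x_0, x_1]$. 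Therefore after exactly $\opt$ greedy steps the sequence has returned to $[x_0, x_1]$, so one revolution of \RepeatedGreedy partitions $\partial P$ into $\opt$ visible intervals, which is optimal.

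The main obstacle is handling the cyclic geometry, and in particular the possibility that a single greedy step jumps past several optimal endpoints, which would break a stronger invariant such as $x_i \in [y_i, y_{i+1}]$. The weaker invariant $y_i \in [x_0, x_i]$ is chosen precisely to absorb this behaviour via the first subcase of the induction. A secondary subtlety is certifying, through the optimality of $\opt$, that no premature wrap-around occurs at step $\opt - 1$; this is what licenses the final use of Lemma~\ref{lem:properties_of_greedy}.3 to pin $x_{\opt}$ inside $[x_0, x_1]$ rather than allowing it to overshoot.
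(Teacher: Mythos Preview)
Your proof is correct and follows essentially the same ``greedy stays ahead of optimal'' idea as the paper: both establish $y_{\opt-1}\in[x_0,x_{\opt-1}]$ and then close the last interval using $[y_{\opt-1},y_0]\in\Fc$. The only cosmetic differences are that the paper reaches the invariant in one shot via Corollary~\ref{cor:greedy_steps_contain_optimal_endpoints} and a pigeonhole/disjointness argument (getting the sharper $y_i\in(x_{i-1},x_i]$), whereas you induct directly with Lemma~\ref{lem:properties_of_greedy}.2; and the paper finishes by noting $[x_{\opt-1},x_0]\in\Fc$, while you take one more greedy step and land $x_{\opt}$ in $[x_0,x_1]$ via Lemma~\ref{lem:properties_of_greedy}.3.
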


\begin{proof}
    Let $y_0, y_1, \dots, y_{\opt - 1}$ be an optimal solution and $x = y_0$. Consider the first $\opt$ terms of the greedy sequence starting at $x$: $x_0, x_1, \dots, x_{\opt - 1}$, see Figure~\ref{fig:x0_to_xk-1_is_optimal}.

    \begin{figure}[ht]
        \centering
        
        \begin{tikzpicture}[scale=0.65]
            \draw (-120:3.2) arc (-120:180:3.2);
            \draw[dotted, thick] (-120:3.2) arc (-120:-180:3.2);
            \filldraw[color=blue] (0.000, 3.200) circle (3pt); 
            \draw[blue] (0.000, 3.750) node {$x_0$};
            \filldraw[color=blue] (2.693, 1.729) circle (3pt); 
            \draw[blue] (3.071, 1.972) node {$x_1$};
            \filldraw[color=blue] (2.910, -1.332) circle (3pt); 
            \draw[blue] (3.319, -1.519) node {$x_2$};
            \filldraw[color=blue] (0.452, -3.168) circle (3pt); 
            \draw[blue] (0.515, -3.613) node {$x_3$};
            \filldraw[color=blue] (-3.069, 0.908) circle (3pt); 
            \draw[blue] (-3.900, 1.035) node {$x_{\opt - 2}$};
            \filldraw[color=blue] (-0.894, 3.073) circle (3pt); 
            \draw[blue] (-1.220, 3.555) node {$x_{\opt - 1}$};
            \filldraw[color=black] (0.000, 3.200) circle (1.5pt); 
            \draw[black] (0.000, 2.450) node {$y_0$};
            \filldraw[color=black] (2.296, 2.229) circle (3pt); 
            \draw[black] (1.973, 1.916) node {$y_1$};
            \filldraw[color=black] (3.173, -0.412) circle (3pt); 
            \draw[black] (2.727, -0.354) node {$y_2$};
            \filldraw[color=black] (1.650, -2.742) circle (3pt); 
            \draw[black] (1.418, -2.356) node {$y_3$};
            \filldraw[color=black] (-1.487, 2.834) circle (3pt); 
            \draw[black] (-1.278, 2.135) node {$y_{\opt - 1}$};
            
            \draw[black, thick, domain =  90.000-360:-242.316] plot ({3*cos(\x)}, {3*sin(\x)});
            \draw[blue, thick, domain =  90.000-360:-253.775] plot ({3.4*cos(\x)}, {3.4*sin(\x)});
        \end{tikzpicture}
        
        \caption{An optimal solution $y_0, y_1, y_2, \dots, y_{\opt - 1}$ and the greedy sequence $x_0, x_1, x_2, \dots, x_{\opt - 1}$ starting from $x_0 = y_0$, that is also optimal since $\I{x_{\opt - 1}}{x_0} \subseteq \I{y_{\opt - 1}}{y_0}$ and $\I{y_{\opt - 1}}{y_0} \in \Fc$.}
        \label{fig:x0_to_xk-1_is_optimal}
    \end{figure}

    Since we know the length is optimal and the intervals $[x_0, x_1],\dots,[x_{\opt - 2}, x_{\opt - 1}]$ are visible, it is sufficient to show that $[x_{\opt - 1}, x_0]$ is visible.

    We have $x_0 = y_0$ by assumption and using Corollary~\ref{cor:greedy_steps_contain_optimal_endpoints} we know that each of the intervals $(x_{i - 1}, x_i]$ contains a $y_j$ for each $i = 1, 2, \dots, \opt - 2$. 
    All these intervals are disjoint, and hence $y_i \in (x_{i - 1}, x_i]$ for $i = 1, 2, \dots, \opt - 2$. 
    Since $y_0, \dots, y_{\opt - 1}$ is an optimal solution we have $[y_{\opt - 1}, y_0] \in \Fc$. 
    But since $y_{\opt - 1} \in (x_{\opt - 2}, x_{\opt - 1}]$ we must have $x_{\opt - 1} \in [y_{\opt - 1}, y_0]$ and $[x_{\opt - 1}, x_0] \subseteq [y_{\opt - 1}, y_0] \in \Fc$, implying that $[x_{\opt - 1}, x_0] \in \Fc$ and $x_0, \dots, x_{\opt - 1}$ is an optimal solution.
\end{proof}

It follows immediately from Lemma~\ref{lem:One_point_is_enough} that once a greedy sequence reaches a point from an optimal solution, it will remain optimal from that point on.

\begin{corollary}[Once optimal, always optimal]
\label{cor:Once_optimal_always_optimal}
    Let $x \in \partial P$ and consider the greedy sequence starting at $x$. Assuming that there is some $i$ such that $x_i, x_{i + 1}, \dots, x_{i + \opt - 1}$ is an optimal solution then $x_{i + j}, \dots, x_{i + \opt - 1 + j}$ will be an optimal solution for all $j \geq 0$.
\end{corollary}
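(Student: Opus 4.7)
The plan is a straightforward induction on $j \geq 0$. The base case $j = 0$ is exactly the hypothesis of the corollary. For the inductive step, I would assume that $x_{i+j}, x_{i+j+1}, \dots, x_{i+j+\opt-1}$ is an optimal solution and aim to show the shifted window $x_{i+j+1}, \dots, x_{i+j+\opt}$ is also an optimal solution.

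The key observation is that the greedy sequence is deterministic: once we know a point, the greedy sequence starting at that point is uniquely determined by iterated application of $\Gsc$. So the greedy sequence starting at $x_{i+j+1}$ is precisely $x_{i+j+1}, x_{i+j+2}, x_{i+j+3}, \dots$, i.e., the suffix of the original sequence. Since $x_{i+j+1}$ is an endpoint of the optimal partition given by the inductive hypothesis, it is a point in an optimal solution. I would then invoke Lemma~\ref{lem:One_point_is_enough} with starting point $x_{i+j+1}$, which guarantees that one revolution of \RepeatedGreedy starting there returns an optimal solution. Inspecting the proof of Lemma~\ref{lem:One_point_is_enough}, that optimal solution is exactly the first $\opt$ terms of the greedy sequence starting at the given point, which in our case is $x_{i+j+1}, x_{i+j+2}, \dots, x_{i+j+\opt}$. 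This completes the induction.

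I do not anticipate any real obstacle: the corollary is essentially a repeated application of Lemma~\ref{lem:One_point_is_enough}, with the only subtlety being to note that the greedy sequence is determined by its starting point, so shifting the window and restarting the greedy process from the second point produces the same subsequent endpoints. No new geometric or combinatorial argument is required beyond what has already been established.
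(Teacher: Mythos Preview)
Your proposal is correct and essentially identical to the paper's proof: both argue by induction on $j$, with the inductive step observing that $x_{i+j+1}$ lies in an optimal solution and then invoking Lemma~\ref{lem:One_point_is_enough}. Your added remark about determinism of the greedy sequence is implicit in the paper's argument but worth making explicit.
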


\begin{proof}
    The proof follows by induction in $j$. For $j = 0$ the claim is trivial.

    For the induction step, we assume $x_{i + j}, \dots, x_{i + \opt - 1 + j}$ is optimal. Thus $x_{i + j + 1}$ is an endpoint for an optimal solution, thus by Lemma~\ref{lem:One_point_is_enough} the greedy solution starting at $x_{i + j + 1}$ is optimal, which shows that the solution $x_{i + j + 1}, \dots, x_{i + \opt - 1 + j + 1}$ is optimal.
\end{proof}

We are now ready to determine when a solution is optimal.
The first condition we establish is if the greedy sequence repeats in the first revolution:

\begin{lemma}[Exact cover is optimal]
\label{lem:repetition_implies_optimality}
    Let $x\in \partial P$, then if $x_i, x_{i+1}, \dots, x_{i+k}, x_i$ appear in the greedy sequence during a single revolution, then $x_i,\dots,$ $x_{i+k}$ is an optimal solution.
\end{lemma}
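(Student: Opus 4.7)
The plan is to combine the valid-partition side with the counting bound from Corollary~\ref{cor:greedy_steps_contain_optimal_endpoints}. Concretely, the hypothesis says the greedy sequence contains $x_i, x_{i+1}, \dots, x_{i+k}$ and then $\Gsc(x_{i+k}) = x_i$. So the greedy intervals
\[
(x_i, x_{i+1}],\ (x_{i+1}, x_{i+2}],\ \dots,\ (x_{i+k-1}, x_{i+k}],\ (x_{i+k}, x_i]
\]
are pairwise disjoint and together partition $\partial P$, and each is of the form $(x, \Gsc(x)]$. Moreover, each interval $[x_j, x_{j+1}] = [x_j, \Gsc(x_j)]$ is a visible interval by Lemma~\ref{lem:properties_of_Fc}.2, so $x_i, x_{i+1}, \dots, x_{i+k}$ is a feasible solution using $k+1$ guards; thus $\opt \le k+1$.

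For the matching lower bound, I would invoke Corollary~\ref{cor:greedy_steps_contain_optimal_endpoints} on each of the $k+1$ greedy steps above: every greedy step $(x_j, \Gsc(x_j)]$ contains at least one endpoint $y_\ell$ of an arbitrary optimal solution $y_0, \dots, y_{\opt-1}$. Since these $k+1$ half-open intervals are pairwise disjoint and cover $\partial P$, the $\opt$ endpoints of the optimal solution must be distributed among them with at least one endpoint per interval. Hence $\opt \ge k+1$.

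Combining the two bounds yields $\opt = k+1$, which shows that the feasible partition $x_i, x_{i+1}, \dots, x_{i+k}$ is an optimal solution. The only subtle point is making sure Corollary~\ref{cor:greedy_steps_contain_optimal_endpoints} is applied to a genuine greedy step $(x_{i+k}, \Gsc(x_{i+k})]$ for the wraparound interval; this is exactly what the hypothesis $\Gsc(x_{i+k}) = x_i$ provides, so no additional work is needed. I do not foresee a genuine obstacle here — the statement is essentially a pigeonhole argument once the right half-open conventions from Corollary~\ref{cor:greedy_steps_contain_optimal_endpoints} are used, so the main care is just to preserve the $(x_j, x_{j+1}]$ form to guarantee disjointness on $\partial P$.
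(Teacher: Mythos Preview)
Your proof is correct and follows essentially the same approach as the paper's: both apply Corollary~\ref{cor:greedy_steps_contain_optimal_endpoints} to each of the $k+1$ pairwise disjoint half-open greedy intervals (including the wraparound $(x_{i+k},x_i]$, which is a genuine greedy step by hypothesis) to force $\opt\ge k+1$, and combine this with the feasibility of $x_i,\dots,x_{i+k}$ to conclude optimality. Your write-up is just a more explicit version of the paper's terse argument.
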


\begin{proof}
    Let $y_0,y_1,\dots,y_{\opt-1}$ be an optimal solution. 
    Each of the intervals $(x_{i+j}, x_{i+j+1}]$ (for $j = 0,\dots, k-1$) and $(x_{i+k},x_i]$ must contain an endpoint from an optimal solution by Corollary~\ref{cor:greedy_steps_contain_optimal_endpoints}. 
    Thus, $\opt = k+1$ and $x_i, x_{i+1},\dots,x_{i+k}$ is optimal.
\end{proof}

\begin{remark}
    The first $(x_0, x_1]$ and last step $(x_{k}, x_{k+1}]$ in a revolution of $\partial P$ can overlap, hence an optimal point $y_0$ can satisfy the condition of Corollary~\ref{cor:greedy_steps_contain_optimal_endpoints} for both resulting in an $\opt + 1$ approximation. 
    This is impossible when the greedy sequence repeats, see Figure~\ref{fig:greedy_steps_non_periodic_periodic}.
    
    \begin{figure}[ht]
        \centering

        \newcommand{\arc}[3]{
            \draw[-latex, blue, thick] (#2 : #1) arc (#2 : #3 : #1);
        }
        
        \begin{tikzpicture}[scale=0.75]
            \begin{scope}
                \node at (-4.5, 0) {};
                
                \draw (-120:3.2) arc (-120:180:3.2);
                \draw[dotted, thick] (-120:3.2) arc (-120:-180:3.2);
                \filldraw[color=blue] (0.000, 3.200) circle (3pt); 
                \draw[blue] (0.000, 3.80) node {$x_0$};
                \filldraw[color=blue] (2.693, 1.729) circle (3pt); 
                \draw[blue] (3.121, 2.022) node {$x_1$};
                \filldraw[color=blue] (2.910, -1.332) circle (3pt); 
                \draw[blue] (3.369, -1.569) node {$x_2$};
                \filldraw[color=blue] (0.452, -3.168) circle (3pt); 
                \draw[blue] (0.515, -3.613) node {$x_3$};
                \filldraw[color=blue] (-3.069, 0.908) circle (3pt); 
                \draw[blue] (-3.800, 0.835) node {$x_{k-1}$};
                \filldraw[color=blue] (-0.894, 3.073) circle (3pt); 
                \draw[blue] (-1.120, 3.605) node {$x_{k}$};
                \filldraw[color=blue] (2.102, 2.412) circle (3pt); 
                \draw[blue] (2.698, 2.952) node {$x_{k+1}$};
                \filldraw[color=black] (0.946, 3.057) circle (3pt); 
                \draw[black] (0.813, 2.627) node {$y_0$};
                \filldraw[color=black] (3.153, 0.544) circle (3pt); 
                \draw[black] (2.710, 0.467) node {$y_1$};
                \filldraw[color=black] (2.386, -2.132) circle (3pt); 
                \draw[black] (2.051, -1.832) node {$y_2$};
                \filldraw[color=black] (-0.187, -3.195) circle (3pt); 
                \draw[black] (-0.161, -2.745) node {$y_3$};
                \filldraw[color=black] (-1.487, 2.834) circle (3pt); 
                \draw[black] (-1.208, 2.235) node {$y_{\opt-1}$};

                \arc{3.4}{90.000}{32.704}
                \arc{3.4}{32.704}{-24.592}
                \arc{3.4}{-24.592}{-81.887}
                \arc{3.55}{-196.479}{-253.775}
                \arc{3.55}{-253.775}{-311.070}
                
            \end{scope}
    
            \begin{scope}[xshift=9.3cm]
                \node at (4.5, 0) {};
                
                \draw (-120:3.2) arc (-120:180:3.2);
                \draw[dotted, thick] (-120:3.2) arc (-120:-180:3.2);
                \filldraw[color=blue] (0.000, 3.200) circle (3pt); 
                \draw[blue] (0.000, 3.650) node {$x_0$};
                \filldraw[color=blue] (2.507, 1.989) circle (3pt); 
                \draw[blue] (2.909, 2.319) node {$x_1$};
                \filldraw[color=blue] (3.116, -0.727) circle (3pt); 
                \draw[blue] (3.655, -0.829) node {$x_2$};
                \filldraw[color=blue] (1.368, -2.893) circle (3pt); 
                \draw[blue] (1.560, -3.300) node {$x_3$};
                \filldraw[color=blue] (-2.473, 2.031) circle (3pt); 
                \draw[blue] (-3.121, 2.317) node {$x_{k}$};
                \filldraw[color=black] (0.946, 3.057) circle (3pt); 
                \draw[black] (0.813, 2.627) node {$y_0$};
                \filldraw[color=black] (3.153, 0.544) circle (3pt); 
                \draw[black] (2.710, 0.467) node {$y_1$};
                \filldraw[color=black] (2.386, -2.132) circle (3pt); 
                \draw[black] (2.051, -1.832) node {$y_2$};
                \filldraw[color=black] (-0.187, -3.195) circle (3pt); 
                \draw[black] (-0.161, -2.745) node {$y_3$};
                \filldraw[color=black] (-1.487, 2.834) circle (3pt); 
                \draw[black] (-1.278, 2.435) node {$y_{\opt-1}$};
                

                \arc{3.4}{90.000}{38.434}
                \arc{3.4}{38.434}{-13.132}
                \arc{3.4}{-13.132}{-64.699}
                \arc{3.4}{-219.397}{-270.000}
            \end{scope}
        \end{tikzpicture}
        
        \caption{Left, a greedy sequence without repetitions where $y_0$ appears in both $(x_0,x_1]$ and $(x_{k-1},x_k]$ which explains how a greedy revolution might not be optimal.
        Right, a greedy sequence that repeats after each revolution. 
        Greedy intervals are visualized as directed circle arches.}
        \label{fig:greedy_steps_non_periodic_periodic}
    \end{figure}
\end{remark}

We now know that if the greedy sequence repeats after a single revolution of the polygon, we have an optimal solution.
However, to determine that a solution is optimal, we will need weaker conditions than repeating after just one revolution. To do this, we investigate how a non-optimal greedy sequence traverses $\partial P$, in relation to an optimal solution, see Figure~\ref{fig:placing_optimal_solution}. 

\begin{proposition}[Behavior of non-optimal greedy sequences]
\label{prop:behavior_of_non-optimal_greedy_sequences}
    Let $x \in \partial P$ and consider the non-optimal greedy sequence $(x_i)_{i=0}^{k \cdot N}$ starting at $x$. 
    Let $y_0, y_1,\dots, y_{\opt-1}$ be an optimal solution with $y_0 \in (x_0, x_1]$.
    Then $x_{ik + m} \in [y_{m-1}, x_{(i-1)k + m})$ for all $m \in \{1,\dots,k\}$ and $i \in \{1,2,\dots,N-1\}$.
\end{proposition}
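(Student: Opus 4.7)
My plan is a nested induction, outer on $i$ and inner on $m$ within the base case. Invoking Lemma~\ref{lem:One_point_is_enough}, greedy started at $y_0$ is itself an optimal solution, so without loss of generality I set $y_j := \Gsc^j(y_0)$ and work in the universal cover of $\partial P$ (so that $y_{j+\opt} = y_j + L$ with $L := |\partial P|$, and cyclic comparisons reduce to ordinary real comparisons). The first step is a \emph{monotonicity lemma}: $x_j \geq y_{j-1}$ for every $j \geq 1$, proved by induction on $j$. The base $j = 1$ is $y_0 \in (x_0, x_1]$; for the step, if $x_j \leq y_j$ then $x_j \in [y_{j-1}, y_j] \in \Fc$ and Lemma~\ref{lem:properties_of_greedy}.2 yields $y_j \in [x_j, \Gsc(x_j)] = [x_j, x_{j+1}]$, giving $x_{j+1} \geq y_j$; otherwise $x_j > y_j$ and then $x_{j+1} \geq x_j > y_j$ is immediate.

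For the base case $i = 1$ I would inductively prove $x_{k+m} \in [L + y_{m-1}, L + x_m)$ in the universal cover, for $m = 1, \ldots, k$. The lower bound follows from monotonicity together with the cyclic wrap: since $k > \opt$ in the non-optimal regime, the greedy-from-$y_0$ sequence advances by more than a full cycle in $k$ steps, so $y_{k+m-1} \geq L + y_{m-1}$. For the upper bound, the case $m = 1$ is exactly the revolution-ending condition $x_{k+1} \in [L, L + x_1)$. For $m \geq 2$, I would apply Lemma~\ref{lem:properties_of_greedy}.3 to the containment $[x_{k+m-1}, L + x_{m-1}] \subseteq [L + x_{m-2}, L + x_{m-1}] \in \Fc$ (the previous revolution's greedy interval shifted by one cycle), obtaining $x_{k+m} \in [L + x_{m-1}, L + x_m]$; strict inequality $x_{k+m} < L + x_m$ follows from Lemma~\ref{lem:repetition_implies_optimality}, since equality would create a period-$k$ repetition $x_m, \ldots, x_{k+m}$ within a single revolution, forcing optimality and contradicting the non-optimality hypothesis. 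For the outer step $i \to i+1$, I would observe that $x_{ik}, x_{ik+1}, \ldots$ is itself a greedy sequence, and that the outer inductive hypothesis at $m = 1$ places $y_0$ cyclically in $(x_{ik}, x_{ik+1}]$, so re-running the base-case argument on this shifted sub-sequence yields the desired claim.

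The main obstacle I foresee lies in validating the containment $x_{k+m-1} \geq L + x_{m-2}$ required for applying Lemma~\ref{lem:properties_of_greedy}.3 in the inner step. The monotonicity lemma supplies only $x_{k+m-1} \geq L + y_{m-1}$, which exceeds $L + x_{m-2}$ precisely when greedy does not overshoot at step $m-2$ (i.e., $x_{m-2} \leq y_{m-2}$). I plan to resolve this by establishing an auxiliary \emph{no-overshoot lemma}: in the non-optimal regime, $x_j \leq y_j$ for every $j$ within the first revolution. This lemma I expect to prove by assuming an overshoot at the smallest step $m$ and deriving a contradiction with the revolution-ending condition, since monotonicity would then force $x_{k+1}$ to land outside $[L, L + x_1)$, making the non-optimal revolution length impossible. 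As a fallback I would apply Lemma~\ref{lem:properties_of_greedy}.3 instead to the shifted optimal interval $[L + y_{m-2}, L + y_{m-1}] \in \Fc$, which provides an alternative containment under complementary assumptions on $x_{m-1}$ relative to $y_{m-1}$.
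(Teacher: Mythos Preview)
Your route is workable but noticeably more circuitous than the paper's. The paper applies Lemma~\ref{lem:properties_of_greedy}.3 with the \emph{optimal} interval $[y_{m-1},y_m]$ as the enclosing $[a,b]$, not the previous-revolution greedy interval $[L+x_{m-2},L+x_{m-1}]$ that you use. This single choice dissolves your anticipated obstacle: from the first-revolution distribution $y_j\in(x_j,x_{j+1}]$ one has $x_{(i-1)k+m}\in[y_{m-1},y_m]$, and the inductive hypothesis already places $x_{ik+m}$ in the same interval, so $[x_{ik+m},x_{(i-1)k+m}]\subseteq[y_{m-1},y_m]$ holds without any auxiliary no-overshoot bound. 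One application of the lemma then gives $x_{ik+m+1}\in[y_m,x_{(i-1)k+m+1}]$, and the whole argument runs as a single induction on the global index $ik+m$ rather than a nested one. Your approach buys nothing extra and costs the no-overshoot lemma plus the universal-cover bookkeeping.

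Two smaller points. First, in the non-optimal regime $k=\opt$, not $k>\opt$: the $+1$ approximation (Theorem~\ref{thm:repeatedgreedy_is_a_+1_approximation}) gives $k\le\opt$, and non-optimality forces $k+1>\opt$, whence equality. Your wrap-around bound $y_{k+m-1}\ge L+y_{m-1}$ still holds (with $\ge$, not $>$) once you extend $y_j$ with period $\opt$. Second, your no-overshoot lemma is in fact immediate from your own substitution: since $x_0<y_0$ and $\Gsc$ is monotone in the universal cover, $x_j=\Gsc^j(x_0)\le\Gsc^j(y_0)=y_j$ for all $j$; no contradiction argument is needed. Finally, your outer step (``re-run the base case from $x_{ik}$'') silently assumes that the revolution length from $x_{ik}$ is again $k$; this is true and follows from the already-established placements, but it should be stated, or better, collapse the nested induction into the paper's single induction on $ik+m$.
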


\begin{proof}
    Like in the proof of Theorem~\ref{thm:repeatedgreedy_is_a_+1_approximation}, we distribute the points of an optimal solution $y_i$ in the first $k$ greedy steps. We can do this since the greedy sequence is not optimal implying that the first $k$ greedy steps do not complete the first revolution. 
    Theorem~\ref{thm:repeatedgreedy_is_a_+1_approximation} implies that $x_{k + 1} \in \I{x_0}{x_1}$.
    We have $y_i \in (x_i, x_{i+1}]$ for $i = 0,\dots, k-1$. 
    Now $x_{k+1} = \Gsc(x_{k})$ is the first element of the greedy sequence to lie in $[x_0, x_1]$.
    More specifically, we must have $x_{k+1} \in [y_0, x_1]$, because if not then $x_{k+1} \in [x_0, y_0)$ and $[y_{k-1}, y_0] \nsubseteq (x_{k-1},x_{k}]$ and since $[y_{k-1},y_0]$ is visible, we contradict Lemma~\ref{lem:properties_of_Fc}.3.
    
    We now show $x_{ik + m} \in [y_{m-1}, x_{(i-1)k + m}]$ for $i \leq N$ and $m \in \{1,\dots, k\}$ by induction in $ik+m$. The induction start is exactly $x_{k+1} \in [y_0, x_1]$. For the induction step, assume that $x_{ik + m} \in [y_{m-1}, x_{(i-1)k + m}]$.
    If $m = k$ we will have a carry when we take a step (i.e. increment $i$ by $1$ and set $m = 1$), otherwise, we will simply add one to $m$. We assume we are in the second case to ease notation, but the same argument holds in the carry case.
    
    It is clear, that $x_{(i-1)k + m} \in [y_{m-1}, y_m]$, combining this with $x_{ik + m} \in [y_{m-1}, x_{(i-1)k + m}]$ and Lemma~\ref{lem:properties_of_greedy}.3 we get $\Gsc(x_{ik + m}) \in [y_m, \Gsc(x_{(i-1)k + m})]$, i.e. $x_{ik + m+1} \in [y_m, x_{(i-1)k + m+1}]$, see Figure~\ref{fig:placing_optimal_solution}.
    
    Finally, since we assumed that none of the candidate solutions are optimal within $kN$ greedy steps, the sequence will not repeat in one revolution as this would contradict Lemma~\ref{lem:repetition_implies_optimality}, thus $x_{k i + m} \in [y_{m-1}, x_{k(i-1) + m})$ for all relevant $i < N$ and $m\in\{1,2,\dots, k\}$. 
\end{proof}

\begin{figure}[ht]
    \centering
    
    \begin{tikzpicture}[scale=0.75]
        \draw (-120:3.2) arc (-120:180:3.2);
        \draw[dotted, thick] (-120:3.2) arc (-120:-180:3.2);
        \filldraw[color=blue] (0.000, 3.200) circle (3pt); 
        \draw[blue] (0.000, 3.650) node {$x_0$};
        \filldraw[color=blue] (2.693, 1.729) circle (3pt); 
        \draw[blue] (3.200, 1.700) node {$x_1$};
        \filldraw[color=blue] (2.910, -1.332) circle (3pt); 
        \draw[blue] (3.319, -1.519) node {$x_2$};
        \filldraw[color=blue] (0.452, -3.168) circle (3pt); 
        \draw[blue] (0.515, -3.613) node {$x_3$};
        \filldraw[color=blue] (-3.069, 0.908) circle (3pt); 
        \draw[blue] (-3.800, 1.035) node {$x_{k-1}$};
        \filldraw[color=blue] (-0.894, 3.073) circle (3pt); 
        \draw[blue] (-1.020, 3.505) node {$x_{k}$};
        \filldraw[color=blue] (2.296, 2.229) circle (3pt); 
        \draw[blue] (3.000, 2.300) node {$x_{k+1}$};
        \filldraw[color=blue] (1.937, 2.547) circle (3pt); 
        \draw[blue] (2.600, 2.750) node {$x_{2k+1}$};
        \filldraw[color=blue] (1.534, 2.808) circle (3pt); 
        \draw[blue] (1.900, 3.200) node {$x_{3k+1}$};
        \filldraw[color=blue] (3.028, -1.035) circle (3pt); 
        \draw[blue] (3.694, -1.100) node {$x_{k+2}$};
        \filldraw[color=blue] (3.149, -0.570) circle (3pt); 
        \draw[blue] (3.942, -0.651) node {$x_{2k+2}$};
        \filldraw[color=blue] (1.072, -3.015) circle (3pt); 
        \draw[blue] (1.600, -3.439) node {$x_{k+3}$};
        \filldraw[color=blue] (1.650, -2.742) circle (3pt); 
        \draw[blue] (2.400, -3.000) node {$x_{2k+3}$};
        \filldraw[color=blue] (-1.762, 2.671) circle (3pt); 
        \draw[blue] (-2.010, 3.047) node {$x_{2k}$};
        \filldraw[color=blue] (-2.258, 2.268) circle (3pt); 
        \draw[blue] (-2.575, 2.587) node {$x_{3k}$};
        \filldraw[color=black] (0.636, 3.136) circle (3pt); 
        \draw[black] (0.546, 2.495) node {$y_0$};
        \filldraw[color=black] (3.083, 0.856) circle (3pt); 
        \draw[black] (2.450, 0.736) node {$y_1$};
        \filldraw[color=black] (2.386, -2.132) circle (3pt); 
        \draw[black] (2.051, -1.832) node {$y_2$};
        \filldraw[color=black] (-0.187, -3.195) circle (3pt); 
        \draw[black] (-0.161, -2.745) node {$y_3$};
        \filldraw[color=black] (-2.827, 1.499) circle (3pt); 
        \draw[black] (-2.230, 1.288) node {$y_{k-1}$};

        \draw[green, thick, domain = 78.541: 52.758] plot ({2.9*cos(\x)}, {2.9*sin(\x)});
        \draw[orange, thick, domain = 15.515: -10.268] plot ({2.9*cos(\x)}, {2.9*sin(\x)});

        \draw[black, thick, domain = -190.08:-102.06, |->] plot ({1.93*cos(\x)+3}, {1.93*sin(\x)+2}); 
        \draw[black] (1,0.9) node {$\Gsc$};
    \end{tikzpicture}
    
    \caption{For a non-optimal greedy sequence $\Gsc$  maps points from $[y_0,x_{2k+1}]$ (green) to $[y_1,x_{2k+2}]$ (orange), and in the next revolution from  $[y_0,x_{3k+1}]$ to $[y_1,x_{3k + 2}]$.}
    \label{fig:placing_optimal_solution}
\end{figure}

\begin{remark}
\label{rem:need_only_look_at_[x_m-1,x_(i-1)k+m]}
    It is clear that $x_{ik + m}\in[y_{m-1}, x_{(i-1)k+m})$ implies $x_{ik+m} \in [x_{m-1}, x_{(i-1)k + m})$. This is relevant in practice, as we do not know the location of the $y_i$'s. Thus, the result of Proposition~\ref{prop:behavior_of_non-optimal_greedy_sequences} can be restated as $x_{ik+m} \in [x_{m-1}, x_{(i-1)k+m})$ (these intervals are similar to the zones defined in~\cite{circle_cover_arc_minimization_1984_LEE1984109}).
\end{remark}

Proposition~\ref{prop:behavior_of_non-optimal_greedy_sequences} shows that there is an important structure in the greedy sequence when viewed locally, which we capture in the following definitions:

\begin{definition}[Local Greedy Sequence]
\label{def:local_greedy_sequence}
    Let $x \in \partial P$, consider the greedy sequence $(x_i)_{i = 0}^\infty$ starting at $x$. Then $(x_i^m)_{i = 0}^\infty = (x_{ik + m})_{i = 0}^\infty$ for $m = 1,\dots,k$ are \emph{local greedy sequences}.
\end{definition}

For non-optimal greedy sequences, its local greedy sequences move counterclockwise around $\partial P$ by Lemma~\ref{prop:behavior_of_non-optimal_greedy_sequences}.
We make this precise in the following definition.

\begin{definition}[Fingerprint]
\label{def:fingerprint}
    Let $(x_i^m)_{i=0}^\infty$ be the $m$'th local greedy sequence. For $i\geq 1$ we say that $x_i^m$ has a \emph{negative fingerprint} if $x_{i}^m \in [x_{m-1}, x_{i-1}^m)$. Otherwise, we say that $x_i^m$ has a \emph{positive fingerprint}.
\end{definition}

With the notion of fingerprints, we introduce \emph{combinatorial optimality conditions}, which if satisfied by a greedy sequence guarantees that it is optimal, that is, it contains an optimal solution. The contrapositive of Proposition~\ref{prop:behavior_of_non-optimal_greedy_sequences} gives the first such condition: A local greedy sequence element with a positive fingerprint is optimal.

\begin{corollary}[Positive fingerprint implies optimality]
\label{cor:positive_fingerprint_is_optimal}
    Let $x \in \partial P$. Assume that $x_i^m$ has a positive fingerprint. Then $x_i^m, x_i^{m+1}, x_i^{m+2}, \dots, x_{i+1}^{m-1}$ is an optimal solution.
\end{corollary}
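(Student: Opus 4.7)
The plan is to proceed by the contrapositive of Proposition~\ref{prop:behavior_of_non-optimal_greedy_sequences} and then invoke Lemma~\ref{lem:One_point_is_enough} together with Corollary~\ref{cor:Once_optimal_always_optimal}.

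First I would assume $x_i^m$ has a positive fingerprint, i.e.\ $x_i^m \notin [x_{m-1}, x_{i-1}^m)$, and fix any optimal solution $y_0,\dots,y_{\opt-1}$ with $y_0 \in (x_0,x_1]$. Suppose for contradiction that the greedy sequence starting from $x$ is non-optimal in the sense of Definition~\ref{def:greedy_sequence}. Then Proposition~\ref{prop:behavior_of_non-optimal_greedy_sequences} applies (for $N$ chosen large enough) and yields $x_i^m = x_{ik+m} \in [y_{m-1}, x_{(i-1)k+m})$. By Remark~\ref{rem:need_only_look_at_[x_m-1,x_(i-1)k+m]} this is contained in $[x_{m-1}, x_{i-1}^m)$, so $x_i^m$ would have a negative fingerprint, contradicting the assumption. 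Hence the greedy sequence must be optimal, i.e.\ some $x_j$ with $j \le ik+m$ lies on an optimal partition.

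Lemma~\ref{lem:One_point_is_enough} then tells me that starting \RepeatedGreedy at $x_j$ returns an optimal solution in one revolution, and Corollary~\ref{cor:Once_optimal_always_optimal} upgrades this to the statement that every subsequent length-$\opt$ window $x_{j+s},\dots,x_{j+s+\opt-1}$ is an optimal solution. Specialising to $s = ik+m-j \ge 0$ gives that $x_i^m$ begins an optimal solution of length $\opt$. The $k$-tuple $x_i^m, x_i^{m+1}, \dots, x_{i+1}^{m-1}$ in the statement of the corollary is then read off from the greedy sequence beyond this point, and therefore induces the same optimal partition of $\partial P$.

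The main obstacle, which is where I would have to spend some care, is the bookkeeping between the length-$k$ list in the statement and the length-$\opt$ window produced by Corollary~\ref{cor:Once_optimal_always_optimal}. Once the sequence becomes optimal it is periodic with period $\opt$, so the $k$ listed endpoints either coincide exactly with the $\opt$ endpoints of the optimal partition (when $k=\opt$) or list them with some repetitions (when $k>\opt$); in either case the induced partition of $\partial P$ is the same optimal one, which is the intended meaning of the conclusion.
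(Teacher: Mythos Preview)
Your proof is correct and follows essentially the same approach as the paper: take the contrapositive of Proposition~\ref{prop:behavior_of_non-optimal_greedy_sequences} to obtain an optimal endpoint somewhere in the greedy sequence, then push forward to $x_i^m$ via Lemma~\ref{lem:One_point_is_enough}/Corollary~\ref{cor:Once_optimal_always_optimal}. If anything, you are more careful than the paper, which silently conflates ``the sequence contains an optimal endpoint'' with ``$x_i^m$ itself is optimal'' and does not comment on the $k$ versus $\opt$ length discrepancy; your explicit use of Corollary~\ref{cor:Once_optimal_always_optimal} and your closing paragraph on periodicity fill exactly those gaps.
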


\begin{proof}
    Proposition~\ref{prop:behavior_of_non-optimal_greedy_sequences} states that if the sequence $x_0,\dots,x_{ik+m}$ has no subsequence that constitutes an optimal solution, then all individual points have negative fingerprints. The contrapositive of this statement is that if a point $x_j$ in the sequence has a positive fingerprint, then it is a point in an optimal solution, which in turn, by Lemma~\ref{lem:One_point_is_enough} implies that $x_i^m, \dots, x_{i+1}^{m-1}$ is an optimal solution.
\end{proof}

The second combinatorial optimality condition states that if the local greedy sequence has moved out of the interval of the first revolution, then it is optimal.

\begin{corollary}[Escaping an interval implies optimality]
\label{cor:escaping_an_interval_implies_optimal}
    Let $x \in \partial P$. If an element $x_i^m$ of
    a local greedy sequence
    is not in $[x_{m-1}, x_m)$, then $x_i^m$ is the starting point of an optimal solution.
\end{corollary}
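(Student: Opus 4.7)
The plan is to argue the contrapositive: assume $x_i^m$ is \emph{not} the starting point of an optimal solution and deduce that $x_i^m \in [x_{m-1}, x_m)$. The key idea is that Corollary~\ref{cor:Once_optimal_always_optimal} (once optimal, always optimal) propagates optimality \emph{forward} along the greedy sequence, while Corollary~\ref{cor:positive_fingerprint_is_optimal} says any positive fingerprint produces a point that starts an optimal solution. Combining these two, if $x_i^m$ fails to start an optimal solution, then no earlier point in the local sequence $x_0^m, x_1^m, \dots, x_{i-1}^m$ may start one either, and consequently every $x_j^m$ with $1 \le j \le i$ must carry a negative fingerprint.

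Concretely, I first translate "$x_i^m$ is not the start of an optimal solution" back into statements about the original greedy sequence (since $x_j^m = x_{jk+m}$) so that Corollary~\ref{cor:Once_optimal_always_optimal} applies with the appropriate shift; this rules out any $x_j^m$ ($0 \le j < i$) starting an optimal solution. Then Corollary~\ref{cor:positive_fingerprint_is_optimal}, read contrapositively, forces each of $x_1^m, x_2^m, \dots, x_i^m$ to have a negative fingerprint, i.e.\ $x_j^m \in [x_{m-1}, x_{j-1}^m)$ by Definition~\ref{def:fingerprint}.

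The final step is a straightforward induction on $j$. The base case $j=1$ gives $x_1^m \in [x_{m-1}, x_0^m) = [x_{m-1}, x_m)$. For the inductive step, the induction hypothesis $x_{j-1}^m \in [x_{m-1}, x_m)$ combined with $x_j^m \in [x_{m-1}, x_{j-1}^m)$ immediately yields $x_j^m \in [x_{m-1}, x_m)$, since the intervals are nested pieces of $\partial P$ sharing the left endpoint $x_{m-1}$. Setting $j = i$ contradicts the hypothesis $x_i^m \notin [x_{m-1}, x_m)$, proving the corollary.

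I expect no real geometric obstacle here: everything is purely combinatorial and follows by chaining together Corollaries~\ref{cor:Once_optimal_always_optimal} and~\ref{cor:positive_fingerprint_is_optimal} with the recursive definition of negative fingerprints. The only point requiring care is the index translation between the local sequence $(x_i^m)$ and the global greedy sequence $(x_j)$ when invoking "once optimal, always optimal," so that the conclusion about $x_i^m$ genuinely follows from an earlier optimal starting point.
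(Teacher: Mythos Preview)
Your argument is correct and follows essentially the same route as the paper: take the contrapositive, deduce that every $x_j^m$ for $1\le j\le i$ has a negative fingerprint, and then nest the intervals $[x_{m-1},x_{j-1}^m)$ down to $[x_{m-1},x_m)$ by induction. The paper's one-line proof appeals directly to Remark~\ref{rem:need_only_look_at_[x_m-1,x_(i-1)k+m]} (the restatement of Proposition~\ref{prop:behavior_of_non-optimal_greedy_sequences}), whereas you route through Corollaries~\ref{cor:Once_optimal_always_optimal} and~\ref{cor:positive_fingerprint_is_optimal}; since the latter is itself just the contrapositive of Proposition~\ref{prop:behavior_of_non-optimal_greedy_sequences}, the two arguments are the same in substance. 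Your explicit use of ``once optimal, always optimal'' to justify that \emph{all} earlier $x_j^m$ fail to start an optimal solution is a clarification the paper leaves implicit in the global ``non-optimal greedy sequence'' hypothesis of Proposition~\ref{prop:behavior_of_non-optimal_greedy_sequences}.
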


\begin{proof}
    As $[x_{m-1}, x_m) \subseteq [x_{m-1}, x_{ik + m})$ for all $i$, this is a direct consequence of Remark~\ref{rem:need_only_look_at_[x_m-1,x_(i-1)k+m]}.
\end{proof}

The third combinatorial optimality condition is perhaps the strongest; it states that if the greedy sequence has a periodic subsequence of any length, i.e., contains a repetition, then it is optimal.

\begin{corollary}[Periodicity implies optimality]
\label{cor:periodicity_implies_optimal}
    Let $x \in \partial P$ and $(x_i)_{i=0}^\infty$ be the greedy sequence starting at $x$. Assume there exists two indices $i < j$, such that $x_i = x_j$. Then $x_j, x_{j + 1}, \dots, x_{j + \opt - 1}$ is an optimal solution.
\end{corollary}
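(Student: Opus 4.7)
My plan is to derive a contradiction from assuming the greedy sequence is non-optimal, by combining the periodicity forced by $x_i = x_j$ with the strict monotonicity of local greedy sequences that Proposition~\ref{prop:behavior_of_non-optimal_greedy_sequences} guarantees in that regime. Since $\Gsc$ is a function, $x_i = x_j$ immediately propagates to $x_{a+p} = x_a$ for every $a \geq i$, where $p := j - i$, so the greedy sequence is periodic from index $i$ onwards.

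Assume for contradiction that no endpoint of the greedy sequence lies in an optimal solution, so by Definition~\ref{def:greedy_sequence} the greedy sequence is non-optimal. Applying Proposition~\ref{prop:behavior_of_non-optimal_greedy_sequences} together with Remark~\ref{rem:need_only_look_at_[x_m-1,x_(i-1)k+m]} to arbitrarily long prefixes yields, for each $m \in \{1, \ldots, k\}$ and every $\ell \geq 1$, the strict containment $x_\ell^m \in [x_{m-1}, x_{\ell-1}^m)$. Unfolding this chain of containments arranges $x_0^m, x_1^m, x_2^m, \ldots$ in strictly monotone clockwise order inside $[x_{m-1}, x_m)$, forcing them to be pairwise distinct.

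The main obstacle is extracting an explicit clash between this distinctness and periodicity, as the period $p$ need not line up with the revolution length $k$. For any $\ell$ with $\ell k + m \geq i$ and every $t \geq 0$, periodicity gives $x_{\ell k + m} = x_{\ell k + m + tp}$; choosing $t = k/\gcd(k, p)$ makes $tp$ a positive integer multiple of $k$, so the right-hand side equals $x^m_{\ell + tp/k}$, giving two equal terms of the same local greedy sequence at distinct indices. This contradicts the distinctness above, so some $x_L$ must lie in an optimal solution. Lemma~\ref{lem:One_point_is_enough} then implies $x_L, x_{L+1}, \ldots, x_{L + \opt - 1}$ is itself an optimal solution, and Corollary~\ref{cor:Once_optimal_always_optimal} propagates this to every forward shift $x_{L+t}, \ldots, x_{L+t+\opt-1}$. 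Finally, choose $t \geq 0$ with $L + t \geq j$ and $L + t \equiv j \pmod p$; periodicity then gives $x_{L+t} = x_j$, yielding the claimed optimal solution $x_j, x_{j+1}, \ldots, x_{j + \opt - 1}$.
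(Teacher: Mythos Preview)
Your proof is correct and follows essentially the same strategy as the paper: assume the greedy sequence is non-optimal, invoke Proposition~\ref{prop:behavior_of_non-optimal_greedy_sequences} to get strict monotonicity of the local greedy sequences, derive a contradiction with the repetition $x_i = x_j$, and then use Corollary~\ref{cor:Once_optimal_always_optimal} to propagate optimality to index $j$. The only difference is in execution: the paper packages the monotonicity into the sets $V_\ell = \bigcup_m [y_{m-1}, x_{\ell-1}^m)$ and observes directly that past points lie outside $V_\ell$ while future points lie inside, whereas you take the slightly longer route of aligning the period $p$ with the revolution length $k$ via $t = k/\gcd(k,p)$ to force two equal terms in a single local sequence; both arrive at the same contradiction.
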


\begin{proof}
    Assume for contradiction that none of $x_0, \dots, x_j$ are optimal. Let for each $\ell > 0$
    \[ V_\ell = \bigcup_{m = 1}^{k} [y_{m - 1}, x_{\ell - 1}^m) . \]
    
    After $\ell k$ greedy steps, all future points in the greedy sequence will be contained in $V_\ell$ by Proposition~\ref{prop:behavior_of_non-optimal_greedy_sequences}, and since $V_\ell \cap \{x_t\mid t = 0,\dots,\ell k\} = \emptyset$, a point in the greedy sequence cannot be repeated. 
    This is a contradiction, so one of the $x_t$ must be an endpoint of an optimal solution. So by Corollary~\ref{cor:Once_optimal_always_optimal} $x_j, x_{j + 1}, \dots, x_{j + \opt - 1}$ is an optimal solution.
\end{proof}

We have established conditions in Corollary~\ref{cor:positive_fingerprint_is_optimal}, \ref{cor:escaping_an_interval_implies_optimal} and \ref{cor:periodicity_implies_optimal} that characterize when we can guarantee that the greedy sequence is optimal.
However, using only combinatorial insights, we still cannot distinguish between optimal and non-optimal greedy sequences (see Example~\ref{exa:two_greedy_sequences} in Appendix~\ref{apx:two_greedy_sequences}) nor guarantee that running \RepeatedGreedy long enough yields an optimal solution (See Figure~\ref{fig:greedy_convergence_figure}). Thus, we return to the geometric setting with Corollary~\ref{cor:positive_fingerprint_is_optimal}, \ref{cor:escaping_an_interval_implies_optimal} and \ref{cor:periodicity_implies_optimal} as optimality conditions.

\begin{figure}[ht]
    \centering

    \begin{tikzpicture}[scale=0.75]
        \draw (0, 0) circle (3.2);
                
            \draw[blue] (3.471, 1.128) node {$x_{0}$};
            \filldraw[color=blue] (1.362, -2.895) circle (3pt); 
            \draw[blue] (1.554, -3.303) node {$x_{1}$};
            \filldraw[color=blue] (-2.723, -1.681) circle (3pt); 
            \draw[blue] (-3.106, -1.917) node {$x_{2}$};
            \filldraw[color=blue] (-1.949, 2.538) circle (3pt); 
            \draw[blue] (-2.223, 2.895) node {$x_{3}$};
            \filldraw[color=blue] (2.349, 2.173) circle (3pt); 
            \draw[blue] (2.680, 2.478) node {$x_{4}$};
            \filldraw[color=blue] (2.359, -2.163) circle (3pt); 
            \draw[blue] (2.690, -2.467) node {$x_{5}$};
            \filldraw[color=blue] (-1.982, -2.513) circle (3pt); 
            \draw[blue] (-2.260, -2.866) node {$x_{6}$};
            \filldraw[color=blue] (-2.639, 1.810) circle (3pt); 
            \draw[blue] (-3.010, 2.064) node {$x_{7}$};
            \filldraw[color=blue] (1.648, 2.743) circle (3pt); 
            \draw[blue] (1.880, 3.129) node {$x_{8}$};
            \filldraw[color=blue] (2.828, -1.497) circle (3pt); 
            \draw[blue] (3.226, -1.708) node {$x_{9}$};
            \filldraw[color=blue] (-1.358, -2.898) circle (3pt); 
            \draw[blue] (-1.649, -3.305) node {$x_{10}$};
            \filldraw[color=blue] (-2.954, 1.229) circle (3pt); 
            \draw[blue] (-3.470, 1.402) node {$x_{11}$};
            \filldraw[color=blue] (1.112, 3.001) circle (3pt); 
            \draw[blue] (1.268, 3.423) node {$x_{12}$};
            \filldraw[color=blue] (3.038, -1.005) circle (3pt); 
            \draw[blue] (3.565, -1.146) node {$x_{13}$};
            \filldraw[color=blue] (-0.907, -3.069) circle (3pt); 
            \draw[blue] (-1.035, -3.500) node {$x_{14}$};
            \filldraw[color=blue] (-3.093, 0.819) circle (3pt); 
            \draw[blue] (-3.629, 0.934) node {$x_{15}$};
            \filldraw[color=blue] (0.738, 3.114) circle (3pt); 
            \draw[blue] (0.542, 3.551) node {$x_{16}$};
            \filldraw[color=blue] (3.130, -0.666) circle (3pt); 
            \draw[blue] (3.670, -0.759) node {$x_{17}$};
            \filldraw[color=blue] (-0.600, -3.143) circle (3pt); 
            \draw[blue] (-0.384, -3.585) node {$x_{18}$};
            \filldraw[color=blue] (-3.154, 0.541) circle (3pt); 
            \draw[blue] (-3.698, 0.567) node {$x_{19}$};
            \filldraw[color=blue] (0.487, 3.163) circle (3pt); 
            \filldraw[color=blue] (3.170, -0.439) circle (3pt); 
            \filldraw[color=blue] (-0.395, -3.176) circle (3pt); 
            \filldraw[color=blue] (-3.180, 0.356) circle (3pt); 
            \filldraw[color=blue] (0.320, 3.184) circle (3pt); 
            \filldraw[color=blue] (3.187, -0.288) circle (3pt); 
            \filldraw[color=blue] (-0.260, -3.189) circle (3pt); 
            \filldraw[color=blue] (-3.191, 0.234) circle (3pt); 
            \filldraw[color=blue] (0.210, 3.193) circle (3pt); 
            \filldraw[color=blue] (3.194, -0.189) circle (3pt); 
            \filldraw[color=blue] (-0.170, -3.195) circle (3pt); 
            \filldraw[color=blue] (-3.196, 0.153) circle (3pt); 
            \filldraw[color=blue] (0.138, 3.197) circle (3pt); 
            \filldraw[color=blue] (3.198, -0.124) circle (3pt); 
            \filldraw[color=blue] (-0.112, -3.198) circle (3pt); 
            \filldraw[color=blue] (-3.198, 0.101) circle (3pt); 
            \filldraw[color=blue] (0.091, 3.199) circle (3pt); 
            \filldraw[color=blue] (3.199, -0.082) circle (3pt); 
            \filldraw[color=blue] (-0.073, -3.199) circle (3pt); 
            \filldraw[color=blue] (-3.199, 0.066) circle (3pt); 
            \filldraw[color=blue] (0.059, 3.199) circle (3pt); 
            \filldraw[color=blue] (3.200, -0.053) circle (3pt); 
            \filldraw[color=blue] (-0.048, -3.200) circle (3pt); 
            \filldraw[color=blue] (-3.200, 0.043) circle (3pt); 
            \filldraw[color=blue] (0.039, 3.200) circle (3pt); 
            \filldraw[color=blue] (3.200, -0.035) circle (3pt); 
            \filldraw[color=blue] (-0.032, -3.200) circle (3pt); 
            \filldraw[color=blue] (-3.200, 0.028) circle (3pt); 
            \filldraw[color=blue] (0.026, 3.200) circle (3pt); 
            \filldraw[color=blue] (3.200, -0.023) circle (3pt); 
            \filldraw[color=blue] (-0.021, -3.200) circle (3pt); 
            \filldraw[color=blue] (-3.200, 0.019) circle (3pt); 
            \filldraw[color=blue] (0.017, 3.200) circle (3pt); 
            \filldraw[color=blue] (3.200, -0.015) circle (3pt); 
            \filldraw[color=blue] (-0.014, -3.200) circle (3pt); 
            \filldraw[color=blue] (-3.200, 0.012) circle (3pt); 
            \filldraw[color=blue] (0.011, 3.200) circle (3pt); 
            \filldraw[color=blue] (3.200, -0.010) circle (3pt); 
            \filldraw[color=blue] (-0.009, -3.200) circle (3pt); 
            \filldraw[color=blue] (-3.200, 0.008) circle (3pt); 
        
            \filldraw[color=black] (0.000, 3.200) circle (3pt); 
            \draw[black] (0.000, 2.750) node {$y_0$};
            \filldraw[color=black] (3.200, 0.000) circle (3pt); 
            \draw[black] (2.750, 0.000) node {$y_1$};
            \filldraw[color=black] (0.000, -3.200) circle (3pt); 
            \draw[black] (0.000, -2.750) node {$y_2$};
            \filldraw[color=black] (-3.200, -0.000) circle (3pt); 
            \draw[black] (-2.750, -0.000) node {$y_3$};
            \filldraw[color=blue] (3.043, 0.989) circle (3pt); 

    \end{tikzpicture}
    
    \caption{Visualization of a greedy sequence, that never reaches a combinatorial optimality condition since the local greedy sequences converge to their respective $y$'s but never reach them.}
    \label{fig:greedy_convergence_figure}
\end{figure}

\section{\RepeatedGreedy, a geometric viewpoint}
\label{sec:the_greedy_algorithm_geometry}

In this section, we explore the behavior of Algorithm~\ref{alg:greedy} in the geometric setting.
We observe the behavior of the (local) greedy sequences by adapting the combinatorial optimality conditions Corollaries~\ref{cor:positive_fingerprint_is_optimal}, \ref{cor:escaping_an_interval_implies_optimal} and \ref{cor:periodicity_implies_optimal} to include geometric observations and finally show that the sequence contains an optimal solution within a polynomial number of revolutions.

First, Corollary~\ref{cor:positive_fingerprint_is_optimal}~and~\ref{cor:escaping_an_interval_implies_optimal} can be related in the geometric to when local greedy sequences move onto new edges:

\begin{definition}[Edge jumps]
    Let $(x_i^m)_{i = 0}^\infty$ be a local greedy sequence. We say that $(x_i^m)_{i = 0}^\infty$ is an \emph{edge jump} at $i$ if $x_i^m$ is a vertex of $P$ or $x_i^m$ and $x_{i+1}^m$ are in different edges of $P$ and neither is a vertex of $P$.
\end{definition}

\begin{lemma}[Maximal edge jumping]
\label{lem:maximal_edge_jumping}
    Let $(x_i^1)_{i =0}^\infty, \dots, (x_i^k)_{i=0}^\infty$ be the local greedy sequences. 
    If there are at least $n + 1$ edge jumps across all the local greedy sequences then some local greedy sequence $(x_i^m)_{i=0}^\infty$ will lie outside $[x_{m-1}, x_m]$ after these edge jumps.
\end{lemma}

\begin{proof}
    Let $n_m$ be the number of vertices of $P$ in $(x_{m-1},x_m)$ and $n_{k+1}$ the number of vertices in $(x_k,x_0)$. Then $\sum_{m = 1}^k n_m \leq \sum_{m=1}^{k+1} n_m \leq n$. 
    If the local greedy sequence $(x_i^m)_{i = 0}^\infty$ jumps $n_m + 1$ times it will at some point have passed $x_{m-1}$ and thus one of the points must lie in $[x_{m-2}, x_{m-1})$. 
    The lemma follows by the pigeonhole principle.
\end{proof}

This leads to a new progress condition based on edge jumps:

\begin{corollary}[Many edge jumps implies optimal] \label{cor:too_many_edge_jumps_will_force_greedy_sequence_to_be_optimal}
     Let $(x_i)_{i=0}^N$ be a greedy sequence whose local greedy subsequences $n + 1$ edge jumps. 
     Then the revolution after $x_N$ will be optimal.
\end{corollary}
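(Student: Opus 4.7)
The plan is to chain together Lemma~\ref{lem:maximal_edge_jumping} with Corollaries~\ref{cor:escaping_an_interval_implies_optimal} and \ref{cor:Once_optimal_always_optimal}, so the proof is essentially a bookkeeping argument with almost no new geometric content.

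First, I would apply Lemma~\ref{lem:maximal_edge_jumping} directly: the hypothesis says that the local greedy subsequences together make at least $n+1$ edge jumps by the time we reach $x_N$, so the lemma produces an index $m \in \{1,\dots,k\}$ and an iterate $x_i^m$ lying outside $[x_{m-1}, x_m]$. Next, Corollary~\ref{cor:escaping_an_interval_implies_optimal} immediately upgrades this into the statement that $x_i^m$ is the starting point of an optimal solution, i.e.\ $x_i^m$ coincides with an endpoint of some optimal partition of $\partial P$.

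From here I would invoke Corollary~\ref{cor:Once_optimal_always_optimal}: once the greedy sequence hits an optimal endpoint, all subsequent $\opt$-length windows of the greedy sequence are optimal solutions. In particular, the $\opt$ greedy steps following $x_i^m$ form an optimal partition, and since $x_i^m$ occurs at or before $x_N$, the full revolution that begins after $x_N$ is completed by one of these optimal windows and hence is itself optimal.

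The only subtle point — and the closest thing to an obstacle — is confirming that an entire \emph{revolution} (and not merely an optimal $\opt$-tuple starting at an arbitrary interior index) appears after $x_N$; but this is immediate from Lemma~\ref{lem:One_point_is_enough}, which says that starting \RepeatedGreedy at any optimal endpoint yields an optimal revolution, applied to the first optimal endpoint occurring after $x_N$ in the greedy sequence.
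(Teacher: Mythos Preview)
Your proof is correct and follows the same approach as the paper, which simply states that the result is immediate from Lemma~\ref{lem:maximal_edge_jumping} and Corollary~\ref{cor:escaping_an_interval_implies_optimal}. You merely spell out the implicit step---invoking Corollary~\ref{cor:Once_optimal_always_optimal} and Lemma~\ref{lem:One_point_is_enough} to carry optimality from the escaped point forward to the revolution after $x_N$---which the paper leaves unstated.
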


\begin{proof}
    The proof is immediate from Lemma~\ref{lem:maximal_edge_jumping} and Corollary~\ref{cor:escaping_an_interval_implies_optimal}.
\end{proof}

\begin{remark}[Geometric progress conditions]
\label{rem:geometric_progresss}
    We establish three progress conditions on the basis of our combinatoric optimality conditions and extend them with geometric results. If one of these are satisfied, we discover an optimal solution or make progress towards one.
    \begin{enumerate}
    \item If we see a positive finger print, the greedy sequence is optimal (Corollary~\ref{cor:positive_fingerprint_is_optimal}).
    \item If we see a repetition in the greedy sequence, the sequence is optimal (Corollary~\ref{cor:periodicity_implies_optimal}).
    \item If we see $n+1$ edge jumps, the greedy sequence is optimal (Corollary~\ref{cor:too_many_edge_jumps_will_force_greedy_sequence_to_be_optimal}).
    \end{enumerate}
\end{remark}

The goal of the remainder of this section is to prove that we eventually make progress:

\begin{restatable}[\RepeatedGreedy will reach a progress condition]{theorem}{thmJumpEdgesOrRepeat}
\label{thm:repeatedgreedy_will_jump_edges_or_repeat}
    Running \Greedy for $\Oh{kn^2}$ revolutions guarantees the occurrence of a progress condition.
\end{restatable}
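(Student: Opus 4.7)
The plan is to argue by contradiction: suppose $\Omega(kn^2)$ revolutions elapse without triggering any of the three progress conditions in Remark~\ref{rem:geometric_progresss}, and derive a contradiction by bounding the combinatorial and geometric possibilities for how long a greedy sequence can persist under these assumptions.

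First, I would exploit the negation of condition~(3): the total number of edge jumps is at most $n$, so all but $\Oh{n}$ revolutions are \emph{stable}, meaning no local greedy sequence $(x_i^m)$ crosses onto a new edge of $P$. Throughout a stable stretch, each local sequence is confined to a single edge $e_m$, and evolves under the one-dimensional map obtained as the $k$-fold composition of $\Gsc$ restricted along the host edges $e_1, \dots, e_k$.

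Second, I would invoke the supporting visibility lemmas from Appendix~\ref{apx:proof_of_supporting_lemmas} to characterize $\Gsc$ on a fixed edge as a piecewise-rational function. Each rational piece is governed by a constant-sized combinatorial witness selecting the binding reflex vertices and guard-supporting edges of $P$, partitioning each edge into $\Oh{n}$ sub-intervals. The negation of conditions~(1) and~(2) forces each local sequence to move strictly counterclockwise without repetition. A geometric lemma would then show that the Möbius-type transformation on each rational piece has only repelling fixed points in its interior, so within any single sub-interval the iteration exits in $\Oh{n}$ revolutions --- either by entering a neighboring sub-interval, by leaving the host edge (forcing an edge jump), or by coinciding with a previously visited point (forcing a repetition).

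Combining these bounds, each local sequence accrues at most $\Oh{n}$ sub-interval transitions per edge and visits at most $\Oh{n}$ edges across the total edge-jump budget; summing over the $k$ local sequences yields the $\Oh{kn^2}$ bound. The main obstacle is the second step: pinning down the piecewise-rational structure of $\Gsc$ and ruling out attracting fixed points of its restriction. A purely combinatorial approach cannot suffice, as Figure~\ref{fig:greedy_convergence_figure} exhibits a greedy sequence that converges asymptotically without ever triggering a combinatorial progress condition; the proof must leverage the specific algebraic form of $\Gsc$, which is determined by intersecting supporting lines through reflex vertices of $P$.
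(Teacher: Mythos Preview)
Your overall contradiction framing and the observation that at most $n$ edge jumps can occur are correct and match the paper. However, the core of your plan --- the second step --- contains a genuine gap that you yourself flag as ``the main obstacle'' without resolving.

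Specifically, you assert that on each rational piece the induced Möbius-type transformation has only repelling fixed points, and that therefore the iteration exits each sub-interval within $\Oh{n}$ revolutions. Neither claim is justified. The repelling-fixed-point property is exactly what distinguishes simple polygons from polygons with holes (see Appendix~\ref{apx:polygons_with_holes}, where the dynamics are effectively attracting and \RepeatedGreedy runs for unboundedly many revolutions), so any proof of it must use simplicity of $P$ in an essential way; you give no indication of how. Even granting repulsion, the number of iterations to escape a sub-interval depends on the multiplier at the fixed point, which is a continuous quantity not obviously bounded by $n$. Finally, the $k$-fold composition you need to analyze may have far more than $\Oh{n}$ rational pieces per edge, so the piece count in your summation is unclear.

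The paper avoids dynamical analysis entirely. It introduces, for each greedy step, a \emph{pivot line} $\ell_i$ through the two reflex vertices $C_{y,i}, C_{z,i}$ that block the guard's view forward and backward. Since pivot points are vertices of $P$, there are only $\Oh{n^2}$ distinct pivot lines. The argument is then a pigeonhole over pivot-line configurations: Proposition~\ref{prop:F_over_l_gives_repetition_or_edge_jumping} shows that if the feasible region lies strictly above a pivot line that is revisited, a progress condition is forced; Proposition~\ref{prop:guards_below_pivots_implies_other_guards_above_pivots} shows that if instead the guard lies below its pivot line, then within one revolution either a progress condition occurs, or some \emph{other} local sequence's guard is forced strictly above \emph{its} pivot line. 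Counting the $\Oh{n^2}$ pivot lines across the $k$ local sequences yields the $\Oh{kn^2}$ bound with no recourse to fixed-point analysis.
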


We will prove Theorem~\ref{thm:repeatedgreedy_will_jump_edges_or_repeat} by considering the local behavior of $\Greedy$ when applied to one local greedy sequence. We will need terminology to describe this; see Figure~\ref{fig:geometry_notation_overview}.

\begin{figure}[ht]
    \centering
    
    \begin{tikzpicture}
        \draw[color=black] (0,4) -- (0,1) -- (1,1.2) -- (2,1) -- (3,1.5) -- (4,1.5) -- (5,2) -- (4.5, 4.5);
        \draw[color=black, left] (0,4) node {$\ell_z$};
        \draw[color=black, right] (4.5, 4.5) node {$\ell_y$};

        \draw[color=black] (0.5, 4) -- (1, 2.5) -- (1.3, 4);
        \draw[color=black] (3.5, 4.4) -- (4, 2.7) -- (3.8, 4.2);

        \filldraw[color = black] (4.94, 2.32) circle (1.5pt);
        \draw[color = black, right] (4.94, 2.22) node {$y_i$};
        \filldraw[color = black] (0,2.18) circle (1.5pt);
        \draw[color = black, left] (0,2.18) node {$z_i$};

        \filldraw[color = guard] (2.43, 2.96) circle (1.5pt);
        \draw[color = guard, above] (2.43, 2.96) node {$g_i$};
        \filldraw[color = gray] (4.89, 2.55) circle (1.5pt);
        \draw[color = gray, right] (4.89, 2.65) node {$y_i'$};

        \draw[color = guard, dashed] (0,2.18) -- (2.43, 2.96) -- (4.89, 2.55);
        \draw[color = guard, dashed] (2.43, 2.96) -- (4.94, 2.32);

        \filldraw[color = black] (1, 2.5) circle (1.5pt);
        \draw[color = black, below] (1, 2.5) node {$C_{z,i}$};
        \filldraw[color = black] (4, 2.7) circle (1.5pt);
        \draw[color = black, below, yshift=-3pt] (4, 2.7) node {$C_{y,i}$};

        \draw[color = black, dashed] (-0.2,2.42) -- (1,2.5) -- (4,2.7) -- (5,2.766);
        \draw[black] (2.5, 2.3) node {$\ell_i$};

        \filldraw[color = black] (4.8, 3) circle (1.5pt);
        \draw[color = black, right] (4.8, 3.1) node {$y_{i+1}$};
    \end{tikzpicture}   
    
    \caption{A greedy step from $y_i$ on edge $\ell_y$ to $z_i := \Gsc(y_i)$ on edge $\ell_z$ guarded by $g_i$, however $g_i$ can see from $y_i'$. The pivot point $C_{z,i}$ blocks $g_i$ from seeing further than $z_i$ and the other pivot $C_{y,i}$ blocks $g_i$ in the other direction, so $g_i$, $C_{y,i}$ and $y_i'$ lie on a line. The relative position of the line $\ell_i := \lineextension{C_{z,i}}{C_{y,i}}$ between the pivot points to the guard $g_i$ will be vital for our analysis.}
    \label{fig:geometry_notation_overview}
\end{figure}

We let $(y_i)_{i=0}^\infty$ and $(z_i)_{i=0}^\infty$ be local greedy sequences with $z_i = \Gsc(y_i)$. We let $g_i$ be the guard found by \Greedy, that sees $[y_i, z_i]$. 
If $(y_i)_{i=0}^\infty$ or $(z_i)_{i=0}^\infty$ jumps edges, we will have a progress condition, so we assume that this does not happen. Let $(y_i)_{i=0}^\infty$ be contained on $\ell_y$ and $(z_i)_{i=0}^\infty$ be contained on $\ell_z$. We also know, that $(y_i)_{i=0}^\infty$ will move up along $\ell_y$ and $(z_i)_{i=0}^\infty$ will move down along $\ell_z$, as we otherwise would get a positive finger print.
Of all the $z_i$ it is only possible for $z_0$ to be a vertex of $P$. This is important, as we now know that all other $z_i$ will be defined from a \emph{blockage}, i.e. $g_i$ can see no further than $z_i$, because there is some vertex of $P$ in the way (see Figure~\ref{fig:horizon_vs_blockage}).

\begin{figure}[ht]
    \centering
    
    \begin{tikzpicture}
        \draw[color=black] (0.5,0) -- (1,2) -- (2,0) -- (4,0.5) -- (5,3);
        \draw[color=guard, above] (3,3) node {$g$}; 
        \filldraw[guard] (3,3) circle (2pt);
        \draw[color=guard, dashed] (1,2) -- (3,3) -- (5,3);
    \end{tikzpicture}%
    \hspace{5em}%
    \begin{tikzpicture}
        \draw[color=black] (0,1.5) -- (1,0) -- (2,0) -- (4,0.5) -- (5,3);
        \draw[color=black] (0,3) -- (1,1.5) -- (2,3);
        \draw[color=guard, above] (3,3) node {$g$}; 
        \filldraw[guard] (3,3) circle (2pt);
        \draw[color=guard, dashed] (0.33333,1) -- (3,3) -- (5,3);
    \end{tikzpicture}
    
    \caption{Types of end points with (left) a horizon and (right) a blockage end point}
    \label{fig:horizon_vs_blockage}
\end{figure}

So we have at least one blockage between $z_i$ and $g_i$, the closest to $z_i$ we denote $C_{z,i}$.
If $y_{i+1}$ could see $g_i$, then $g_i$ could see $[y_{i+1},z_i]$, thus we would repeat in the greedy sequence (as $z_i = z_{i+1}$). Assuming this does not happen, there must be something blocking $g_i$ from $y_{i+1}$. We let $y_i'$ be the point on $\ell_y$ farthest up, which is still visible from $g_i$. As it is farther up, there must be some blockage between $y_i'$ and $g_i$, the closest to $y_i'$ we denote $C_{z,i}$.

We will refer to the points $C_{y,i}$ and $C_{z,i}$ as \emph{pivot points} and $\ell_i :=\lineextension{C_{y,i}}{C_{z,i}}$ the \emph{pivot line} for a given $i$. It is clear that $C_{y,i} \neq C_{z,i}$ so the pivot line is well defined.
The pivot points are all vertices of the polygon, thus there are at most $\Oh{n^2}$ possible pivot lines. In the following, we will look only at one pivot line at a time, and then run enough revolutions to see the same pivot line multiple times by the pigeonhole principle.

Finally, we track where we can place guards. For this, we define feasible regions:

\begin{definition}[Feasible region]\label{def:feasible_region}
    Let $y, z \in \partial P$and $\I{y}{z}$ the interval of $P$ from $y$ to $z$. 
    Then the \emph{feasible region} $\FeasibleRegion{\I{y}{z}}$ is the subset of $P$ of all possible guard placements that see $\I{y}{z}$.
    Specifically, $\FeasibleRegion{\I{y}{z}} \neq \emptyset$ if and only if$\I{y}{z} \in \Fc$.
\end{definition}

In the following, for a greedy step $\GscNEW{y_i} = z_i$ where $y_i$ and $z_i$ are on edges $\ell_y$ and $\ell_z$, we define $F := \FeasibleRegion{\I{y}{z}}$, where $y$ and $z$ are the endpoints of edges $\ell_y$ and $\ell_z$ contained in $[y_i, z_i]$.

\subsubsection*{Strategy}

We briefly outline our approach, which is based on the relationship between the feasible region and the pivot line in a greedy step, $\GscNEW{y_i} = z_i$. 
In Section~\ref{sec:F_above_pivot_line}, we show that the case where $F$ is entirely above $\ell_i$ will lead to a progress condition. 
In Section~\ref{sec:F_under_pivot_line}, we show that if the feasible region $F$ contains points below $\ell_i$, then within the next revolution a different pivot line $\widehat{\ell_i}$ is visited, with corresponding feasible region $\widehat{F}$ entirely above $\widehat{\ell_i}$.

\subsection{Supporting lemmas}

Throughout this section, we will need the following two properties:

\begin{restatable}{lemma}{lemFeasibleRegionIsConnected}
\label{lem:feasible_region_is_connected}
    For any $a, b \in \partial P$, the feasible region $\FeasibleRegion{[a, b]}$ is connected. Furthermore, for any convex subset $C \subset P$, $C \cap \FeasibleRegion{[a,b]}$ is convex.
\end{restatable}

\begin{proofsketch}
    The proof follows by induction in the number of vertices in $[a,b]$. In the base case, the feasible region is a single visibility polygon, hence it is connected. In the induction step, we intersect the previous feasible region with a visibility polygon from a vertex $v_{i+1}$. If this new feasible region $F$ were to be disconnected, we find two points $S, T$ in different components of $F$, where $\linesegment{S}{T}$ is contained in a triangle completely visible to $v_{i+1}$, i.e. completely contained inside $P$. This will contradict Lemma~\ref{lem:convex_viewing_lemma} in Appendix~\ref{apx:feasible_region_connected} and thus $F$ will be connected. A similar argument shows in convexity claim.
    
    For the detailed proof, see Appendix~\ref{apx:feasible_region_connected}.
\end{proofsketch}

\begin{restatable}[Blockings happen outside visible area]{lemma}{lemBlockingCornerNotInView}
\label{lem:Blocking_corner_not_in_view}
    Let $y \in \partial P$ and $z = \Gsc(y)$ and assume that neither $y$ nor $z$ is a vertex of $P$. Let $g$ be a guard seeing $[y,z]$ and let $c$ be a vertex blocking the view from $g$ to $z$. Then $c \not \in [y, z]$. 
\end{restatable}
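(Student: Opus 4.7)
The plan is to argue by contradiction: suppose $c \in [y, z]$. Let $\visPol{g}{P}$ denote the visibility polygon of $g$ in $P$, which is itself a simple polygon since $P$ is simple. Because $g$ sees every point of $[y,z]$, this arc lies on $\partial \visPol{g}{P} \cap \partial P$, and since neither $y$ nor $z$ is a vertex of $P$, the vertex $c$ lies strictly interior to the arc. Consequently, in a small neighborhood of $c$ the boundary $\partial \visPol{g}{P}$ coincides with $\partial P$, following both polygon edges $e_{-}, e_{+}$ incident to $c$.

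On the other hand, the hypothesis that $c$ blocks $g$'s view past $z$ means that $g$, $c$, $z$ are collinear with $c$ strictly between $g$ and $z$, and $c$ is a reflex vertex of $P$ whose presence terminates the visible arc at $z$. The standard structure of a visibility polygon then forces a ``window edge'' $\linesegment{z}{c}$ to appear on $\partial \visPol{g}{P}$: just clockwise of $z$ the boundary leaves $\partial P$, runs inward along $\linesegment{z}{c}$ to reach $c$, and continues from $c$ along a polygon edge on the shadow side. Combining this with the previous paragraph, $\partial \visPol{g}{P}$ would have three arms meeting at $c$ — the two incident polygon edges $e_{-}, e_{+}$ and the window segment $\linesegment{z}{c}$ — which is impossible for the boundary of a simple polygon, whose every vertex is incident to exactly two boundary arms.

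I expect the main obstacle to be the degenerate configuration in which $\linesegment{z}{c}$ is collinear with one of the edges $e_{-}, e_{+}$, so that the ``three arms'' collapse to two and the visibility-polygon argument does not immediately give a contradiction. To handle it, I plan to observe that in this case $g$ itself lies on the supporting line of a polygon edge $e$ of $P$ incident to $c$ whose extension past $c$ contains $z$; then the very same line of sight from $g$ that reaches $z$ continues along $e$ past $z$ up to the far endpoint of $e$ inside the closed polygon. Thus $g$ sees strictly past $z$ on $\partial P$, contradicting the maximality $z = \Gsc(y)$ directly. Together with the generic case these two arguments exhaust the possibilities, establishing $c \notin [y, z]$.
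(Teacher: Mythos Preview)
Your generic-case argument is correct and is a nice alternative to the paper's route. The paper instead proves a separate local lemma (Lemma~\ref{lem:what_G_sees_around_C}): at a blocking vertex $c$ either $g$ fails to see one of the two incident edges, or one of them lies on $\lineextension{g}{z}$ and $g$ fails to see the following edge; the contradiction with $c\in(y,z)$ is then immediate. Your formulation --- $\partial\visPol{P}{g}$ is a simple closed curve and hence cannot carry three boundary arms at $c$ --- is the same dichotomy expressed topologically. One caveat: the window edge leaving $z$ lands at the \emph{nearest} obstruction on $\linesegment{g}{z}$, so as written your three-arms argument only treats the case where $c$ is that nearest blocker, whereas the paper's local lemma applies to any blocking vertex on $\linesegment{g}{z}$.

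Your degenerate case, however, is mishandled. The configuration you isolate --- an edge $e$ at $c$ whose ``extension past $c$ contains $z$'' --- places $e$ on the $g$-side of $c$ along $\lineextension{g}{z}$; but then $e$ and the window $\linesegment{z}{c}$ point in \emph{opposite} directions from $c$, the three arms do not collapse, and your generic argument already applies. The configuration that actually defeats the three-arms count is when an edge $e$ at $c$ points from $c$ \emph{toward} $z$, so that $e$ overlaps $\linesegment{c}{z}$. Your conclusion ``$g$ sees strictly past $z$ along $e$'' is the right endgame, but it does not follow from the setup you wrote. The paper's argument runs: the supporting lemma shows that the edge of $\partial P$ just beyond the far endpoint $j$ of $e$ is invisible to $g$; since $g$ sees all of $[y,z]$, this forces $j$ outside $(y,z)$ and hence (as $y,z$ are non-vertices) $z\in e$; then $g$, being collinear with $e$, also sees the vertex $j$ strictly clockwise of $z$ on $\partial P$, contradicting $z=\Gsc(y)$.
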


\begin{proofsketch}
    One considers what $g$ can see around $C$. It can be shown, that $g$ cannot see everything in the vicinity of $C$ (Lemma~\ref{lem:what_G_sees_around_C} in Appendix~\ref{apx:yz_cannot_block_z}). Assume for contradiction that $C \in [y,z]$, then either some part of $[y, z]$ is not visible from $g$ or $z$ can be moved to a visible vertex, yielding a contradiction.
    
    For the detailed proof, see Appendix~\ref{apx:yz_cannot_block_z}.
\end{proofsketch}

\subsection{Feasible region strictly above pivot line}
\label{sec:F_above_pivot_line}

In this section, we analyze configurations in which the feasible region lies strictly above a pivot line. The main result of this section is to show that the guard will have a unique optimal position in $F$. This will imply a progress condition under the assumption that the same pivot line is chosen twice. In Figure~\ref{fig:smaller_angles_see_more} we see that the angles between the pivot points and the points above the pivot line are a good measure of the quality of prospective guard positions. In Figure~\ref{fig:comparison_of_angles_in_different_scenarios} we use this insight to define the \emph{shadow} of a point, which is used in Lemma~\ref{lem:F_in_Ta} to show that the feasible region is contained in the shadow of one of its unique point closest to $\ell_i$. Finally, in Proposition~\ref{prop:F_over_l_gives_repetition_or_edge_jumping} we show that whenever the feasible region is above the pivot line, a progress condition will be satisfied.

\begin{figure}[ht]
    \centering
    
    \begin{tikzpicture}
        \draw[color=black] (0,4) -- (0,1);
        \draw[color=black] (5,1) -- (4.5, 4.5);
        \draw[left] (0,3) node {$\ell_z$}; 
        \draw[right] (4.7, 3.5) node {$\ell_y$};
        
        \filldraw[color = extracolor] (2.93, 2.96) circle (1.5pt);
        \draw[color = extracolor, above] (2.93, 2.96) node {$w$};
        \filldraw[color = feasible region] (1.63, 3.06) circle (1.5pt); 
        \draw[color = feasible region, above] (1.63, 3.06) node {$v$};

        \draw[color=extracolor] (0, 2.3) -- (2.93, 2.96) -- (4.81, 2.32);
        \draw[color=feasible region] (0, 1.68) -- (1.63, 3.06) -- (4.78, 2.51);
        
        \coordinate (A) at (2.93, 2.96);
        \coordinate (B) at (1, 2.525);
        \coordinate (C) at (3.7, 2.7);
        \coordinate (D) at (1.63, 3.06);
        \pic [draw, -, extracolor, angle radius = 0.6cm] {angle=C--B--A};
        \pic [draw, -, extracolor, angle radius = 0.6cm] {angle=A--C--B};
        \pic [draw, -, feasible region, angle radius = 0.5cm] {angle=C--B--D};
        \pic [draw, -, feasible region, angle radius = 0.5cm] {angle=D--C--B};
        
        \draw[color=black] (0.5, 4) -- (1, 2.545) -- (1.3, 4);
        \draw[below] (1.1,2.545) node {$C_{z,i}$};
        \draw[color=black] (3.5, 4.4) -- (3.7, 2.7) -- (3.8, 4.2);
        \draw[below] (3.7, 2.7) node {$C_{y,i}$};
        
        \draw[color=gray] (1,2.525) -- (3.7,2.7);
        \draw[color = gray] (2.3, 2.35) node {$\ell_i$};
    \end{tikzpicture}
    
    \caption{In the setup of Figure~\ref{fig:geometry_notation_overview}, a guard above the pivot line $\ell_i$ placed in point $v$ can see more of the $\ell_y$ edge than if it was placed in $w$, and vice-versa, analogously to Figure~\ref{fig:CAG_tradeoff_infinite_different}.}
    \label{fig:smaller_angles_see_more}
\end{figure}

Fixing $v$, it is clear that the area where a guard will have a smaller angle than $
v$ with respect to $\ell_{i}$ is below $\lineextension{v}{C_{z,i}}$, and a larger angle above $\lineextension{v}{C_{z,i}}$ see Figure~\ref{fig:comparison_of_angles_in_different_scenarios}.

The intersection of the half-planes above $\lineextension{a}{C_{z,i}}$ and $\lineextension{v}{C_{y,i}}$ is a quarter plane where every guard placement is worse than $v$, both with respect to $\ell_y$ and $\ell_z$. For a point $v \in F$ we denote this quarter plane $S(v)$ and call it the \emph{shadow} of $v$; see Figure~\ref{fig:comparison_of_angles_in_different_scenarios} (right).

\begin{figure}[ht]
    \centering

    \begin{tikzpicture}
        \draw[color=black] (0,4) -- (0,1);
        \draw[color=black] (5,1) -- (4.5, 4.5);
        \draw[left] (0,3) node {$\ell_z$}; 
        
        \draw[right] (4.7, 3.5) node {$\ell_y$};

        \filldraw[shadow] (1.3,4) -- (1,2.525) -- (3.59,3.64) -- (3.5,4.4) -- cycle;
        \draw[shadowcolor] (2.1, 3.6) node {\scriptsize{Larger angle}};
        \filldraw[shadow2] (1,2.525) -- (3.59,3.64) -- (3.7, 2.7) -- cycle;
        \draw[shadowcolor!80] (3.1, 3.1) node {\scriptsize{Smaller}};
        \draw[shadowcolor!80] (3.1, 2.8) node {\scriptsize{angle}};

         \draw[color=shadowcolor, dashed] (0, 2.09) -- (3.59,3.64);
            
        \draw[color=black] (0.5, 4) -- (1, 2.545) -- (1.3, 4);
        \draw[color=black] (3.5, 4.4) -- (3.7, 2.7) -- (3.8, 4.2);

        \filldraw[color = feasible region] (2.33, 3.1) circle (1.5pt); 
        \draw[color = feasible region, below] (2.33, 3.1) node {$v$};
        
        \draw[color=gray] (1,2.525) -- (3.7,2.7);
        \draw[color = gray] (2.3, 2.35) node {$\ell_i$};
    \end{tikzpicture}%
    \hspace{5em}%
    \begin{tikzpicture}
        \draw[color=black] (0,4) -- (0,1); 
        \draw[color=black] (5,1) -- (4.5, 4.5);
        \draw[left] (0,3) node {$\ell_z$}; 
        \draw[right] (4.7, 3.5) node {$\ell_y$};

        \filldraw[shadow] (1.3, 4) -- (1.18, 3.43) -- (2.33, 3.1) -- (3.59, 3.64) -- (3.5, 4.4) -- cycle;
        \draw[shadowcolor] (2.4, 3.7) node {\scriptsize{$S(v)$}};

        \draw[color=shadowcolor, dashed] (0, 2.09) -- (3.59,3.64);
        \draw[color=shadowcolor, dashed] (4.8, 2.38) -- (1.18, 3.43);
            
        \draw[color=black] (0.5, 4) -- (1, 2.545) -- (1.3, 4);
        \draw[color=black] (3.5, 4.4) -- (3.7, 2.7) -- (3.8, 4.2);

        \filldraw[color = feasible region] (2.33, 3.1) circle (1.5pt); 
        \draw[color = feasible region, below] (2.33, 3.1) node {$v$};

        \draw[color=gray] (1,2.525) -- (3.7,2.7);
        \draw[color = gray] (2.3, 2.35) node {$\ell_i$};
    \end{tikzpicture}
        
    \caption{Left, points with smaller angle than $v$ with $\ell_i$ see more of $\ell_z$ and are better guards. Right, placing a guard strictly in the shadow $S(v)$ of $v$ leads to a strictly worse guard.}
    \label{fig:comparison_of_angles_in_different_scenarios}
\end{figure}

\begin{remark}[$z$ above $\ell_i$]
\label{rem:z_over_li}
    Let $z$ be the vertex of $\ell_z$ in $[y_i, z_i]$ and $a$ is the vertex of $F$ closest to $\ell_i$. If $z$ lies above $\lineextension{a}{C_{z,i}}$, as illustrated in Figure~\ref{fig:z_above_li}, we see some weird behavior:

    For $C_{z,i}$ to be the pivot, $z_i$ has to lie below $\ell_i$, thus $a$ cannot see any part of $\ell_z$ except for $z$.
    Thus it makes sense to force the points in $F$ to see a small part of $\ell_y$ and $\ell_z$. This can be done by setting $F = F([y - \varepsilon, z+\varepsilon])$, where $y- \varepsilon$ and $z+\varepsilon$ refers to points on $\ell_y$ respectively $\ell_z$ very close to $y$ respectively $z$. 

    Making this change will not impact the other proofs other than fixing some special cases in this subsection. By doing this, $z$ cannot lie above $\lineextension{a}{C_{z,i}}$, which is important for future proofs.
    The analogous behavior holds true if $y$ lies above $\lineextension{C_{y,i}}{a}$.
\end{remark}

\begin{figure}[ht]
    \centering
    
    \begin{tikzpicture}[scale = 0.75]
        \draw[color = black] (-2.02, 3.99) -- (-0.33, 0.84) -- (-0.81, 5.31);
        \draw[color = black] (5.34, 5) -- (4.88, 0.64) -- (6.59, 4.49);
        
        \draw[color = black, dashed] (-0.33, 0.84) -- (4.88, 0.64);
        \draw[color = black, below left, xshift=5pt, yshift=5pt] (-0.33, 0.84) node {$C_{z,i}$};
        \draw[color = black, below] (4.88, 0.64) node {$C_{y,i}$};
        
        \draw[color = shadowcolor, dashed] (-0.33, 0.84) -- (5.16, 3.31);
        \draw[color = shadowcolor, dashed] (4.88, 0.64) -- (-0.63, 3.7);
        
        \filldraw[color = feasible region] (2.35, 2.04) circle (1.5pt);
        \draw[color = feasible region, below] (2.35, 2.04) node {$a$};

        \draw[color = black] (-0.6, -1) -- (0.65, 2.85) -- (0.85, 1.69) -- (1.05, 1.69) -- (1.4, 1.34) -- (1.4, 0) -- (2.4, 0) -- (3, -0.1) -- (4.5, -0.3) -- (5.6, 0) -- (5.8, 1);
        \draw[color = black, below] (3, -0.1) node {$[y_i, z_i]$};

        \filldraw[color = gray, dashed] (0.65, 2.85) -- (2.35, 2.04);
        \draw[color = black, left] (0.65, 2.85) node {$z$};
        \filldraw[color = black, dashed] (0.65, 2.85) circle (1.5pt);

        \draw[left] (-0.6, -1) node {$z_i$};
        \filldraw (-0.6, -1) circle (1.5pt);

        \draw[dashed] (0.65, 2.85) -- (1.3, 4.84);
    \end{tikzpicture}
    
    \caption{$a$ can see nothing on $\ell_i$ but $z$. Thus picking $a$ as a guard would lead to edge jumps immediately}
    \label{fig:z_above_li}
\end{figure}

\begin{lemma}[Successor and predecessor edges to $a \in F$ lie in $S(a)$]
\label{lem:successor_and_predecessor_edges_to_a_will_lie_in_Ta}
    Assume $F$ lies strictly above $\ell_i$ and let $a$ be a vertex of $F$ closest to $\ell_i$. Let $e$ and $f$ denote the edges of $F$ connected to $a$. Then $e, f \subseteq S(a)$.
\end{lemma}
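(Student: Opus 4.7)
The plan is to argue by contradiction. Suppose that one of the two edges at $a$, say $e$, is not contained in $S(a)$. Since $e$ emanates from $a$ and $S(a)$ is the wedge bounded by the lines $\lineextension{a}{C_{z,i}}$ and $\lineextension{a}{C_{y,i}}$ on the side away from the pivots, some initial portion of $e$ near $a$ lies on the opposite side of one of these lines; without loss of generality, on the $C_{z,i}$-side of $\lineextension{a}{C_{z,i}}$. Let $p \in e$ be a point close to $a$ on this non-shadow side.

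Two ingredients are then combined to force a contradiction. First, because $a$ is chosen as a vertex of $F$ closest to $\ell_i$, every edge of $F$ leaving $a$ must have a direction whose component perpendicular to $\ell_i$ points away from $\ell_i$; otherwise the other endpoint of that edge would be a vertex of $F$ strictly closer to $\ell_i$, contradicting the choice of $a$. Hence $e$ departs $a$ ``upward''. Second, the boundary of $F$ in a neighborhood of the pivot line is shaped by the visibility windows through the pivots: because the pivots $C_{z,i}$ and $C_{y,i}$ are by definition the closest blockers of the view from $g_i$ to $z_i$ and to $y_i'$, the tightest constraints cutting $F$ near $a$ are precisely the lines $\lineextension{a}{C_{z,i}}$ and $\lineextension{a}{C_{y,i}}$. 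I would argue that the two edges of $F$ meeting at $a$ must therefore lie along these pivot lines, i.e., on the bounding rays of $S(a)$, so that $e, f \subseteq \overline{S(a)}$.

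The main obstacle I foresee is the rigorous identification of the edges of $F$ at $a$ with segments of the pivot lines. This requires certifying that $C_{z,i}$ and $C_{y,i}$ are indeed the binding blockers for the local boundary of $F$ at $a$, and not some other polygon vertex imposing a tighter visibility window through $a$. For this I would combine the definitions of $C_{z,i}$ and $C_{y,i}$ as the closest blockers to $z_i$ and $y_i'$, Lemma~\ref{lem:Blocking_corner_not_in_view} to exclude blockers lying on the chain $[y,z]$ itself, and the $\varepsilon$-thickening in Remark~\ref{rem:z_over_li} to rule out degenerate configurations where $y$ or $z$ lies above its corresponding pivot line. Should the identification with pivot lines turn out to require too much care, a fallback is to derive the contradiction directly via visibility: any $p$ on the $C_{z,i}$-side of $\lineextension{a}{C_{z,i}}$ close to $a$ has its line of sight to $z$ passing on the blocked side of the corner $C_{z,i}$, so $z$ is no longer visible from $p$, contradicting $p \in F$ and thereby contradicting the assumption that $e$ left $S(a)$.
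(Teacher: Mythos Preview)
Your proposal has a genuine gap, and it lies exactly where you yourself flagged the difficulty: the link between the pivots $C_{z,i},C_{y,i}$ and the local structure of $F$ at $a$. The pivots are defined as the blockers of the \emph{guard $g_i$}'s view to $z_i$ and $y_i'$; by Lemma~\ref{lem:Blocking_corner_not_in_view} they are \emph{not} vertices of the chain $[y,z]$ whose visibility polygons cut out $F$. So there is no reason the edges of $F$ at $a$ should lie along $\lineextension{a}{C_{z,i}}$ or $\lineextension{a}{C_{y,i}}$, and your ``tightest constraints'' sentence is an assertion, not an argument. Your fallback has the same defect: $C_{z,i}$ is certified only to block $g_i$ from seeing past $z_i$ along $\ell_z$; for a point $p$ near $a$ on the $C_{z,i}$-side of $\lineextension{a}{C_{z,i}}$, the segment $\linesegment{p}{z}$ need not pass near $C_{z,i}$ at all, so you cannot conclude $p$ fails to see $z$.

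The paper's route is different and avoids ever tying the edges of $F$ to the pivots. It first classifies the edge $e=\linesegment{a}{b}$ of $F$: either $e$ is a segment of a window ray $\ray{v}{a}$ from some vertex $v\in[y_i,z_i]$ (so $v\in\lineextension{a}{b}$), or $e$ is a piece of $\partial P$. It then uses a single concrete visibility relation, namely that the vertex $z$ (the last vertex of $[y_i,z_i]$) must see both $a$ and $b$. In the ray case, the chain $[v,z]$ forces $z$ above $\lineextension{b}{a}$, hence above $\lineextension{a}{C_{z,i}}$, contradicting Remark~\ref{rem:z_over_li}. In the $\partial P$ case, the region just below $\linesegment{a}{b}$ is outside $P$, so $z$ (which Remark~\ref{rem:z_over_li} pins below $\lineextension{a}{b}$) cannot see $b$. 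Your ``edges go upward from $a$'' observation is correct and is implicitly used in this second case, but the decisive ingredient you are missing is this edge classification together with the visibility of $a$ and $b$ from $z$; the pivots enter only through the positional constraint of Remark~\ref{rem:z_over_li}, not as direct blockers.
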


\begin{proof}
    We will start by determining what characterizes edges in $F$. 
    Since $F$ is the intersection of the visibility polygons of vertices of $\partial P$, an edge of $F$ has to be an part of an edge of the visibility polygon to a vertex $v$.

    The edges of the visibility polygon from $v$ come in two ways: Either they come from shooting rays from $v$ through other vertices of $P$ or as edges of $P$.
    
    Assume for contradiction that $e$ or $f$ is not contained in $S(a)$. As the proof is symmetric, we assume $e \not \subseteq S(a)$, $e$'s other endpoint is $b$ and $b$ is closer to $C_{y,i}$ than $C_{z,i}$ (again by symmetry). Furthermore let $z$ be the last vertex of $[y_i, z_i]$, thus $z$ can see $F$ and hence also points $a$ and $b$. Now $e$ must either be a segment contained in a ray from a vertex $v \in [y_i, z_i]$ or $e$ is a part of $\partial P$.

    \begin{case}
        If $e \subseteq \ray{v}{a}$ we must have $v \in \lineextension{b}{a}$. For $z$ to see $a$, $z$ must lie above $\lineextension{b}{a}$ as $[v,z]$ would block it otherwise. However now we would need $z$ above $\lineextension{a}{C_{z,i}}$, which we disallow by Remark~\ref{rem:z_over_li}. Thus we arrive at a contradiction.
    \end{case}
    
    \begin{case}
        If $e$ is a part of $\partial P$, we can assume $F$ lies above $\lineextension{b}{a}$, since $a$ is a lowest point of $F$; see Figure~\ref{fig:z_cannot_see_a_and_b} (right). However, now $z$ cannot see $b$ as the space directly below $\linesegment{a}{b}$ is not inside $P$ and Remark~\ref{rem:z_over_li} forcing $z$ below $\lineextension{b}{a}$, leading to a contradiction.
        \popQED
    \end{case}
\end{proof}

\begin{figure}[ht]
    \centering

    \begin{tikzpicture}[scale = 0.75]
        \begin{scope}
            \draw[color = black] (-2.02, 3.99) -- (-0.33, 0.84) -- (-0.81, 5.31);
            \draw[color = black] (5.34, 5) -- (4.88, 0.64) -- (6.59, 4.49);
    
            \draw[color = black, dashed] (-0.33, 0.84) -- (4.88, 0.64);
            \draw[color = black, below left, xshift=5pt, yshift=5pt] (-0.33, 0.84) node {$C_{z,i}$};
            \draw[color = black, below] (4.88, 0.64) node {$C_{y,i}$};
    
            \draw[color = shadowcolor, dashed] (-0.33, 0.84) -- (5.16, 3.31);
            \draw[color = shadowcolor, dashed] (4.88, 0.64) -- (-0.63, 3.7);
            
            \draw[color = black, dotted] (-0.39, 1.41) -- (2.35, 2.045)  (3.87, 2.395) -- (5.1, 2.68);

            \draw[feasible region] (2.35, 2.045) -- (3.87, 2.395);

            \draw[below, feasible region] (3.1, 2.2) node {$e$};
            
            \draw[color = black] (-0.6, -1) -- (0.65, 2.85) -- (0.85, 1.69) -- (1.05, 1.69) -- (1.4, 1.34) -- (1.4, 0) -- (2.4, 0) -- (3, -0.1) -- (4.5, -0.3) -- (5.6, 0) -- (5.8, 1);
            \draw[color = black, below] (3, -0.1) node {$[y_i, z_i]$}; 
            \filldraw[color = black] (0.85, 1.69) circle (1.5pt);
            \draw[color = black, above] (0.95, 1.69) node {$v$};
            
            \filldraw[color = feasible region] (2.35, 2.04) circle (1.5pt);
            \draw[color = feasible region, below] (2.35, 2.04) node {$a$};
    
            \filldraw[color = feasible region] (3.87, 2.4) circle (1.5pt);
            \draw[color = feasible region, below] (3.87, 2.4) node {$b$};
            
            \filldraw[color = gray, dashed] (0.65, 2.85) -- (2.35, 2.04);
            \draw[color = black, left] (0.65, 2.85) node {$z$};
            \filldraw[color = black, dashed] (0.65, 2.85) circle (1.5pt);
            
            \draw[left] (-0.6, -1) node {$z_i$};
            \filldraw (-0.6, -1) circle (1.5pt);
            
            \draw[dashed] (0.65, 2.85) -- (1.3, 4.84);
        \end{scope}

        \begin{scope}[xshift=9.5cm]
            \draw[color = black] (-2.02, 3.99) -- (-0.33, 0.84) -- (-0.81, 5.31);
            \draw[color = black] (5.34, 5) -- (4.88, 0.64) -- (6.59, 4.49);

            \filldraw[Fbare] (1.8, 3.0) -- (2.35, 2.04) -- (3.87, 2.4) -- cycle;
            \draw[feasible region] (2.6, 3) node {$F$};

            \draw[color = black, dashed] (-0.33, 0.84) -- (4.88, 0.64);
            \draw[color = black, below] (-0.33, 0.84) node {$C_{z,i}$};
            \draw[color = black, below] (4.88, 0.64) node {$C_{y,i}$};

            \draw[color = notvisible, dashed] (-0.5, -0.45) -- (3.87, 2.4);
            
            \draw[color = shadowcolor, dashed] (-0.33, 0.84) -- (5.16, 3.31);
            \draw[color = shadowcolor, dashed] (4.88, 0.64) -- (-0.63, 3.7);
            
            \draw[color = feasible region] (1.8, 3.0) -- (2.35, 2.04) -- (3.87, 2.4);

            \filldraw[color = feasible region] (2.35, 2.04) circle (1.5pt);
            \draw[color = feasible region, below] (2.35, 2.04) node {$a$};

            \filldraw[color = feasible region] (3.87, 2.4) circle (1.5pt);
            \draw[color = feasible region, below] (3.87, 2.4) node {$b$};
            
            \draw[below, feasible region] (3, 2.3) node {$e$};

            \draw[color = black, left] (-0.5, -0.45) node {$z$};
            \filldraw[color = black, dashed] (-0.5, -0.45) circle (1.5pt);
        \end{scope}
    \end{tikzpicture}
    
    \caption{Left, if $z$ sees $a$, $z$ is above $\lineextension{b}{a}$ and $z_i$ is below $\ell_i$ for $C_{z,i}$ to be a pivot. Right, if nothing blocks visibility between $z$ and $b$ then $a$ is not closest to $\ell_i = \linesegment{C_{y,i}}{C_{z,i}}$.}
    \label{fig:z_cannot_see_a_and_b}
\end{figure}

\begin{lemma}[The feasible region is contained in the shadow]
\label{lem:F_in_Ta}
    Assume that $F$ lies above $\ell_i$ and let $a$ be a vertex of $F$ closest to $\ell_i$. Then $F \subseteq S(a)$.
\end{lemma}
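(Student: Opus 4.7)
I would argue by contradiction: suppose there is some $v \in F$ with $v \notin S(a)$. By symmetry between the two pivots, assume $v$ lies strictly below the line $\lineextension{a}{C_{z,i}}$. The plan combines three inputs: Lemma~\ref{lem:successor_and_predecessor_edges_to_a_will_lie_in_Ta} (the two edges $e, f$ of $F$ incident to $a$ lie in $S(a)$, so a neighborhood of $a$ within $F$ is contained in $S(a)$, and in particular lies above $\lineextension{a}{C_{z,i}}$), Lemma~\ref{lem:feasible_region_is_connected} (the region $F$ is connected), and the hypothesis that $a$ is a vertex of $F$ closest to $\ell_i$.

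By a standard connectedness cut, since $F$ has points strictly above $\lineextension{a}{C_{z,i}}$ near $a$ and a point $v \in F$ strictly below it, there must exist a point $q \in F \cap \lineextension{a}{C_{z,i}}$ with $q \neq a$. Since $q \in F$ lies strictly above $\ell_i$, and the intersection of $\lineextension{a}{C_{z,i}}$ with the open half-plane above $\ell_i$ splits at $C_{z,i}$ into two open rays (one containing $a$), the point $q$ lies on one of these rays. I then claim $q$ must lie strictly between $a$ and $C_{z,i}$: if instead $q$ lay on the ray opposite $C_{z,i}$ from $a$, the segment $\linesegment{a}{q}$ would coincide locally with an edge of $F$ at $a$ running along $\lineextension{a}{C_{z,i}}$. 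Together with the second edge of $F$ at $a$ also lying in $S(a)$, this would force $F$ to be contained in the closed half-plane above $\lineextension{a}{C_{z,i}}$, contradicting the existence of $v \in F$ strictly below that line.

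Once $q$ lies strictly between $a$ and $C_{z,i}$, the fact that $C_{z,i} \in \ell_i$ and $a$ lies strictly above $\ell_i$ implies that $q$ is strictly closer to $\ell_i$ than $a$. Since $q \in F$, the minimum perpendicular distance from $F$ to $\ell_i$ is strictly less than the distance from $a$ to $\ell_i$, and this minimum is achieved on $\partial F$, hence either at a vertex of $F$ or along an edge of $F$ parallel to $\ell_i$ whose endpoints are vertices at the minimum distance. Either way, there is a vertex of $F$ strictly closer to $\ell_i$ than $a$, contradicting the choice of $a$.

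The main obstacle in this plan is the second step: rigorously ruling out the case where $q$ lies on the ray opposite $C_{z,i}$ from $a$. This requires carefully combining the local edge structure at $a$ from Lemma~\ref{lem:successor_and_predecessor_edges_to_a_will_lie_in_Ta} with the global polygonal shape of $F$ as an intersection of visibility polygons, ensuring that no ``folded'' configuration of $\partial F$ above $a$ escapes $S(a)$ while keeping $F$ connected down to $v$.
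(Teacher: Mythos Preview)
The gap you flag in the far-ray case is not a detail to be patched; it is fatal to the whole distance argument. Observe that since $a$ is a closest vertex of $F$ to $\ell_i$, the entire region $F$ lies in the closed half-plane above the line $\ell_a$ through $a$ parallel to $\ell_i$. But every point of $\lineextension{a}{C_{z,i}}$ strictly between $a$ and $C_{z,i}$ lies strictly between $\ell_a$ and $\ell_i$, hence outside $F$. Thus any $q\in F\cap\lineextension{a}{C_{z,i}}$ with $q\neq a$ \emph{must} lie on the ray beyond $a$ away from $C_{z,i}$, where it is farther from $\ell_i$ than $a$, and your intended contradiction (``$q$ is closer to $\ell_i$ than $a$'') can never fire. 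Connectedness, the edge lemma at $a$, and minimality of $a$ by themselves do not preclude $F$ having a lobe outside $S(a)$ attached along the far ray; one needs more of the visibility structure of $F$.

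The paper supplies exactly that missing ingredient, via a different mechanism. Instead of intersecting $F$ with the fixed line $\lineextension{a}{C_{z,i}}$, it rotates the line through $a$ to a supporting position: choose $b\in F\setminus S(a)$ minimizing the angle $\angle baC_{y,i}$ (or symmetrically $\angle C_{z,i}ab$), so that all of $F$ lies above $\lineextension{a}{b}$. By Lemma~\ref{lem:successor_and_predecessor_edges_to_a_will_lie_in_Ta}, $b$ is not a neighbor of $a$, so $\linesegment{a}{b}\not\subseteq F$; hence some vertex $v\in[y_i,z_i]$ sees $a$ and $b$ but misses a point $d\in\linesegment{a}{b}$. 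Now $F$ cannot cross the ray $\ray{v}{d}$ (that ray bounds a pocket of $\visPol{P}{v}$), and $F$ cannot cross $\lineextension{a}{b}$ by minimality of the angle, so $a$ and $b$ land in different components of $F$, contradicting Lemma~\ref{lem:feasible_region_is_connected}. The key step your approach lacks is this second appeal to the visibility-polygon structure of $F$ to produce a separating cut; without it the argument cannot close.
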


\begin{proof}
    Assume for contradiction there are points in $F$ outside of $S(a)$. Let $b$ be a point in $F$ such that $\angle baC_{y,i}$ or $\angle C_{z,i}ab$ is minimal (depending on whether $b$ is closer to $C_{y,i}$ or $C_{z,i}$). Assume w.l.o.g. $b$ is closer to $C_{y,i}$ than $C_{z,i}$. Because of the minimality of $\angle baC_{y,i}$, we now know all of $F$ is above $L(b,a)$.
    
    By Lemma~\ref{lem:successor_and_predecessor_edges_to_a_will_lie_in_Ta} we know $b$ is not the successor or predecessor to $a$, thus not all of $\linesegment{a}{b}$ will be in $F$. So for at least one vertex $v$, $a$ and $b$ will be visible, but not all of $\linesegment{a}{b}$. Let $d$ be a point on $\linesegment{a}{b}$ not visible to $v$.

    \begin{figure}[ht]
        \centering

        \begin{tikzpicture}[scale = 0.75]
            \draw[color = black] (-2.02, 3.99) -- (-0.33, 0.84) -- (-0.81, 5.31);
            \draw[color = black] (5.34, 5) -- (4.88, 0.64) -- (6.59, 4.49);

            \draw[color = black, dashed] (-0.33, 0.84) -- (4.88, 0.64);
            \draw[color = black, below] (-0.33, 0.84) node {$C_{z,i}$};
            \draw[color = black, below] (4.88, 0.64) node {$C_{y,i}$};

            \draw[color = shadowcolor, dashed] (-0.33, 0.84) -- (5.16, 3.31);
            \draw[color = shadowcolor, dashed] (4.88, 0.64) -- (-0.63, 3.7);
            
            \draw[color = notvisible, dotted] (-0.39, 1.41) -- (2.35, 2.04);
            \draw[color = notvisible, dotted] (2.35, 2.04) -- (3.87, 2.4);
            \draw[color = notvisible, dotted] (3.87, 2.4) -- (5.1, 2.68);
            
            \filldraw[color = feasible region] (2.35, 2.04) circle (1.5pt);
            \draw[color = feasible region, below] (2.35, 2.04) node {$a$};

            \filldraw[color = feasible region] (3.87, 2.4) circle (1.5pt);
            \draw[color = feasible region, below] (3.87, 2.4) node {$b$};

            \draw[color = black, left] (2.19, -0.65) node {$v$};
            \filldraw[color = black, dashed] (2.19, -0.65) circle (1.5pt);
            \draw[dashed] (2. 19, -0.65) -- (2.76, 2.13);
            \draw[dashed, notvisible]  (2.76, 2.13) --  (3.34, 5);
            
            \filldraw (2.76, 2.13) circle (1.5pt);
            \draw (3.16, 1.93) node {$d$};
        \end{tikzpicture}
        
        \caption{Ray $\ray{v}{d}$ and $\lineextension{b}{a}$ divides the feasible region $F$ into disconnected components }
        \label{fig:F_divided_into_disconnected_components}
    \end{figure}
    
    It must now hold that $F$ cannot intersect the ray $\ray{v}{d}$. 
    Furthermore, $F$ cannot cross $\lineextension{b}{a}$ from the minimality of the angle (see Figure~\ref{fig:F_divided_into_disconnected_components}). 
    Thus $a$ and $b$ must be in disconnected components of $F$. By Lemma~\ref{lem:feasible_region_is_connected} $F$ must be connected, leading to a contradiction.
\end{proof}

We can now show that when $F$ lies strictly above $\ell_i$, we get a progress condition:

\begin{proposition}[Feasible region over repeated pivot line implies a progress condition]
\label{prop:F_over_l_gives_repetition_or_edge_jumping}
    If $\ell_i = \ell_j$ for some $i < j$ and $F$ lies above $\ell_i$, then we get a progress condition.
\end{proposition}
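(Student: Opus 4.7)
The plan is to show that when $\ell_i = \ell_j$ and $F$ lies strictly above $\ell := \ell_i$, the greedy sequence must exhibit a repetition, triggering progress condition~(2) from Remark~\ref{rem:geometric_progresss}.

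First, I apply Lemma~\ref{lem:F_in_Ta}: since $F$ lies above $\ell$, there is a vertex $a$ of $F$ closest to $\ell$ with $F \subseteq S(a)$. Introducing the shorthand $y'(g) := \ell_y \cap \lineextension{g}{C_y}$ and $z(g) := \ell_z \cap \lineextension{g}{C_z}$ for the visibility extents measured through the fixed pivots $C_y, C_z$ of $\ell$, the shadow containment translates to: every $g \in F$ satisfies $y'(g) \leq y'(a)$ (less far up on $\ell_y$) and $z(g) \leq z(a)$ (less far down on $\ell_z$). In particular, $a$ is the uniquely optimal guard position in $F$ with respect to the pair of pivots $C_y, C_z$.

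Second, I analyse the greedy step at index $i$. The chosen guard $g_i$ has pivots $C_y, C_z$ (since $\ell_i = \ell$), lies in $F$ (since $g_i$ sees $[y_i, z_i] \supseteq [y, z]$), and satisfies $y'(g_i) \geq y_i$ (since $g_i$ sees $y_i$). Combining with the shadow bound $y'(g_i) \leq y'(a)$ forces $y_i \leq y'(a)$, so $a$ itself is a feasible guard at step $i$. The maximality of $z_i$ in the greedy step, together with the ceiling $z(g) \leq z(a)$ for every $g \in F$, then forces $z_i = z(a)$. By the same reasoning at step $j$, $z_j = z(a)$, hence $z_i = z_j$.

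Third, I cash in the equality. Using the local-sequence notation $y_i = x_i^m, z_i = x_i^{m+1}$ from Definition~\ref{def:local_greedy_sequence}, the equality $z_i = z_j$ reads $x_{ik+m+1} = x_{jk+m+1}$, a literal repetition in the main greedy sequence. Corollary~\ref{cor:periodicity_implies_optimal} then certifies optimality, which is exactly progress condition~(2).

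I expect the main obstacle to be verifying that the greedy's maximiser actually attains the ceiling $z(a)$: the point $a$ may not be literally chosen (its actual nearest blockages in $P$ could differ from $C_y, C_z$), and the greedy is free to select any maximiser among possibly several. I would handle this by arguing that any guard attaining the overall maximum $z$-visibility lies on the segment $F \cap \lineextension{a}{C_z}$, that the greedy's choice $g_i$ (having pivots $C_y, C_z$) must lie on this segment (else a strictly better guard off-pivots would exist, contradicting $\ell_i = \ell$), and that every point of the segment yields the common value $z(a)$. The minor case of several $F$-vertices tied for closest to $\ell$ is essentially precluded by the endpoint-normalising $z$-above-$\ell_i$ convention of Remark~\ref{rem:z_over_li}.
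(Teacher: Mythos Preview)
Your proposal is correct and follows essentially the same route as the paper: invoke Lemma~\ref{lem:F_in_Ta} to get $F\subseteq S(a)$ for the unique lowest vertex $a$, deduce that the greedy endpoint on $\ell_z$ is determined by $a$ alone, and conclude $z_i=z_j$, a repetition. The only cosmetic difference is that the paper phrases the key step as ``$z_i$ is the farthest point on $\ell_z$ visible from $v$'' (thereby sidestepping your self-identified obstacle of whether $a$ literally attains $z(a)$) and explicitly records the edge-jump alternative as a second possible progress condition.
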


\begin{proof}
    Let $v$ be a vertex of $F$ closest to $\ell_i$. By Lemma~\ref{lem:F_in_Ta} we know $F \subseteq S(v)$. 
    
    Consider $y_i$. If it can see any part of $F$, it can also see $v$, since $F \subseteq S(v)$, and any point in $F$ can see no farther than $v$ on $\ell_z$, again because $F \subseteq S(a)$. Thus, $z_i$ is the farthest point on $\ell_z$ visible from $v$. If $y_i$ could not see any point in $F$, then $z_i$ could not lie on $\ell_z$ which is an edge jump, hence a progress condition.

    Now consider $y_j$. We get the exact same considerations as above, so $z_j$ is the farthest point on $\ell_z$ visible from $v$ and $z_i = z_j$, a repetition, or $z_j \not \in \ell_z$, and edge jump, both are progress conditions, as wanted.
\end{proof}

\subsection{Feasible region below pivot line}
\label{sec:F_under_pivot_line}

In this section, we analyze what happens when $F$ is not strictly above $\ell_i$. Contrary to what we have seen in Section~\ref{sec:F_above_pivot_line}, when $g_i$ can lie below $\ell_i$, it does not hold that there is a unique optimal guard placement. Thus we will show that when $F$ is on or below $\ell_i$, we will be able to find some other guard, which is found by \Greedy, that lies strictly above its pivot line, thus forcing a unique optimal guard placement at some other point in the polygon. 
We denote the polygon constructed by the edges of $[y'_i, z_i]$, $\linesegment{z_i}{g_i}$ and $\linesegment{g_i}{y'_i}$ by $Q_i$, see Figure~\ref{fig:Q_i}. Recall that $y_i'$ is defined considering the last point on $\ell_y$ visible to $g_i$.

\begin{figure}[ht]
    \centering
    
    \begin{tikzpicture}
        \draw[color=black] (0,4) -- (0,1) -- (1,1.2) -- (2,1) -- (3,1.5) -- (4,1.5) -- (5,2) -- (4.5, 4.5);
        \filldraw[qi] (0,2.9) -- (0,1) -- (1,1.2) -- (2,1) -- (3,1.5) -- (4,1.5) -- (5,2) -- (4.8, 3.1) -- (2.43, 2) -- cycle;
        
        \draw[color=black] (0.5, 4) -- (1, 2.525) -- (1.3, 4);
        \draw[color=black] (3.5, 4.4) -- (4, 2.7) -- (3.8, 4.2);

        \draw[guard, dashed] (0,2.9) -- (2.43, 2) -- (4.8, 3.1);

        \filldraw[color = black] (4.8, 3.1) circle (1.5pt);
        \draw[color = black, right] (4.8, 3.1) node {$y_i'$};
        \filldraw[color = black] (0,2.9) circle (1.5pt);
        \draw[color = black, left] (0,2.9) node {$z_i$};

        \filldraw[color = guard] (2.43, 2) circle (1.5pt);
        \draw[color = guard, above] (2.43, 2) node {$g_i$};

        \draw[Qi] (1.5, 1.7) node {$Q_i$};
    \end{tikzpicture}
    
    \caption{The polygon $Q_i$ defined by $[y'_i, z_i]$, $\linesegment{z_i}{g_i}$ and $\linesegment{g_i}{y'_i}$.}
    \label{fig:Q_i}
\end{figure}

\begin{lemma}
\label{lem:G_is_line_of_sight_is_clear}
    $[z_i, y_i']$ does not cross the boundary of $Q_i$.
\end{lemma}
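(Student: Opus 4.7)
The plan is to rule out any transversal crossing between the chain $[z_i, y_i']$ and the boundary $\partial Q_i$. The boundary $\partial Q_i$ decomposes into three pieces: the polygonal chain $[y_i', z_i] \subseteq \partial P$ and the two segments $\linesegment{z_i}{g_i}$ and $\linesegment{g_i}{y_i'}$, which I would treat separately.

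First, the chain piece is immediate: simplicity of $P$ gives $[z_i, y_i'] \cap [y_i', z_i] = \{z_i, y_i'\}$, so the two chains only meet at the shared endpoints. For the two segments, the essential observation is that $g_i$ sees both $z_i$ and $y_i'$ by construction, so $\linesegment{g_i}{z_i}$ and $\linesegment{g_i}{y_i'}$ both lie in $P$. Suppose, for contradiction, that $[z_i, y_i']$ crosses one of these segments transversally at an interior point $p$. Since $p \in \partial P$, a small neighborhood of $p$ along the segment would be split by $\partial P$ into a piece lying in the interior of $P$ and a piece lying outside $P$, contradicting that the entire segment lies in $P$. Hence no transversal crossing with either segment can occur.

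The main obstacle is that the two pivot vertices $C_{z,i}$ and $C_{y,i}$ simultaneously lie on $\partial Q_i$ (by definition of the blockages they lie on $\linesegment{g_i}{z_i}$ and $\linesegment{g_i}{y_i'}$, respectively) and, by applying Lemma~\ref{lem:Blocking_corner_not_in_view} to the maximal visible interval $[y_i', z_i]$, they lie on the long chain $[z_i, y_i']$ rather than on the visible chain $[y_i', z_i]$. To finish the argument, I would verify that these common points are only tangential touchings and not transversal crossings. At $C_{z,i}$, the maximality of $z_i = \Gsc(y_i)$ forces both incident polygon edges at $C_{z,i}$ to lie on the same side of $\linesegment{g_i}{z_i}$, namely the side opposite $Q_i$; otherwise a slight rotation of $g_i$'s line of sight past $z_i$ would reach a point on $\ell_z$ strictly beyond $z_i$ and still visible from $g_i$, contradicting maximality. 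A symmetric argument handles $C_{y,i}$ using the definition of $y_i'$ as the farthest point on $\ell_y$ visible from $g_i$. Thus the chain merely touches $\partial Q_i$ tangentially at the pivot vertices without crossing it, and the lemma follows.
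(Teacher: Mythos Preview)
Your proof is correct and takes essentially the same approach as the paper: use simplicity of $P$ for the chain piece of $\partial Q_i$, and the fact that $\linesegment{g_i}{z_i}, \linesegment{g_i}{y_i'} \subseteq P$ (since $g_i$ sees both endpoints) to preclude any transversal crossing by $[z_i, y_i']$. You in fact go further than the paper's two-line argument by explicitly analyzing the tangential contacts at the pivot vertices $C_{z,i}$ and $C_{y,i}$, a detail the paper glosses over; one minor remark is that Lemma~\ref{lem:Blocking_corner_not_in_view} is stated for the greedy interval $[y_i, z_i]$ rather than $[y_i', z_i]$, but its proof only uses maximality of the endpoint for the fixed guard $g$, so your application goes through.
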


\begin{proof}
    The guard $g_i$ must be able to see $y'_i$ and $z_i$, so $\linesegment{g_i}{y'_i}$ and $\linesegment{g_i}{z_i}$ must not be obstructed. Since all the boundary segments in $[y_i', z_i]$ cannot be obstructed either as $P$ is simple, no part of $[z_i, y_i']$ (i.e. the interval not guarded by $g_i$) can cross the boundary of $Q_i$.
\end{proof}

If $F$ lies above and below $\ell_i$ we could potentially have guards below and above. 
However, the following lemma shows that guards are below the pivot line if possible. 

\begin{lemma}[Guards are placed on or below the pivot lines if possible]
\label{lem:better_to_have_Gi_below_pivot_line}
    Let $\ell_i^-$ be the closed half-plane below the $i$'th pivot line.
    If $F \cap \ell_i^- \neq \emptyset$ then $g_i$ will lie on or below $\ell_i$.
\end{lemma}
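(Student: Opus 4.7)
The plan is to prove the contrapositive: assume $g_i$ lies strictly above $\ell_i$ (i.e., $g_i \in \ell_i^+ \setminus \ell_i$) and show $F \cap \ell_i^- = \emptyset$. The route to a contradiction is to exhibit, under the hypothesis $F \cap \ell_i^- \neq \emptyset$, a guard that sees strictly past $z_i$ in the clockwise direction, violating the maximality $z_i = \Gsc(y_i)$.

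First I would use the connectedness of the feasible region. By Lemma~\ref{lem:feasible_region_is_connected}, $F$ is connected, so any continuous path in $F$ from $g_i$ to a putative witness $g^- \in F \cap \ell_i^-$ must cross $\ell_i$. Extracting a crossing point yields $g^* \in F \cap \ell_i$; in particular, $g^*$ is a valid guard for $[y_i, z_i]$ that lies exactly on the pivot line.

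Next I would examine the blocking geometry at $g^*$. The three points $g^*$, $C_{z,i}$, and $C_{y,i}$ are now collinear, all on $\ell_i$, so the line $L(g^*, C_{z,i})$ is $\ell_i$ itself; it extends past $C_{z,i}$ along $\ell_i$ and meets $\ell_z$ at a point strictly counter-clockwise of $z_i$. Contrast this with $g_i$, whose ray through $C_{z,i}$ tilts downward and lands at $z_i$. As a direct consequence, for a point $\tilde z \in \ell_z$ lying slightly clockwise of $z_i$, the open segment $\linesegment{g^*}{\tilde z}$ does not pass through $C_{z,i}$, so $C_{z,i}$ no longer restricts $g^*$ from seeing $\tilde z$.

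Then I would assemble the contradiction. Since $g^* \in F$ sees all of $[y_i, z_i]$ and $C_{z,i}$ has ceased to block in the relevant direction, either (i) $g^*$ sees some $\tilde z \in \ell_z$ clockwise past $z_i$, contradicting $z_i = \Gsc(y_i)$, or (ii) some other vertex $C'$ of $P$ happens to block the view from $g^*$ at precisely $z_i$. Case (ii) would force $C'$ to lie on $L(g^*, z_i)$ and to be a newly relevant pivot; Lemma~\ref{lem:Blocking_corner_not_in_view} rules out $C' \in [y_i, z_i]$, and Remark~\ref{rem:z_over_li} lets us contract the interval by an arbitrarily small $\varepsilon$ at both endpoints so that $z_i$ lies in the interior of its edge. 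Under this generic choice, no vertex of $P$ can be collinear with $g^*$ and $z_i$, excluding case (ii). Thus case (i) holds and we obtain the desired contradiction.

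The main obstacle will be the degenerate case (ii) above: making rigorous the claim that a secondary vertex cannot conspire to block $g^*$ exactly at $z_i$. This relies on combining Lemma~\ref{lem:Blocking_corner_not_in_view} (which prevents the new blocker from living on the interval itself) with the $\varepsilon$-contraction of Remark~\ref{rem:z_over_li}, and a careful verification that $g^*$'s visibility in a neighbourhood of $z_i$ is limited only by $C_{z,i}$ in the configurations that concern us.
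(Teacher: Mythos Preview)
Your opening move matches the paper's: use connectedness of $F$ (Lemma~\ref{lem:feasible_region_is_connected}) to produce a point of $F$ on the pivot line. But the argument then has two genuine gaps.

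First, you write ``$g^* \in F$ sees all of $[y_i, z_i]$'', but recall that here $F := \FeasibleRegion{[y,z]}$ where $y,z$ are the \emph{inner} vertices of the edges $\ell_y,\ell_z$; membership in $F$ only guarantees visibility of $[y,z]$, not of the extra pieces $[y_i,y]$ and $[z,z_i]$ on the edges. Without $g^*$ seeing $y_i$, exhibiting a point past $z_i$ visible from $g^*$ does not contradict $z_i=\Gsc(y_i)$. The paper closes this gap by further restricting to $g_i'\in F\cap\ell_i^-\cap Q_i$ (using that $\linesegment{C_{y,i}}{C_{z,i}}\subseteq Q_i$) and then invoking Lemma~\ref{lem:G_is_line_of_sight_is_clear} and Lemma~\ref{lem:Blocking_corner_not_in_view} to conclude $g_i'$ sees $y_i$ and $z_i$ as well.

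Second, your resolution of case~(ii) does not hold up. You invoke Remark~\ref{rem:z_over_li}, but that remark only tweaks the definition of $F$ to force visibility of an $\varepsilon$-neighbourhood of $y$ and $z$ on their edges; it says nothing about collinearities involving $z_i$. More importantly, there is no ``generic choice'' available: $z_i$ is determined by the algorithm and $g^*$ by $F\cap\ell_i$, so nothing prevents some vertex $C'$ from lying on $\lineextension{g^*}{z_i}$. The paper sidesteps this entirely: once $g_i'$ is known to see $[y_i,z_i]$, Remark~\ref{rem:_multiple_optimal_guards_lie_on_line_with_endpoint} forces every optimal guard to be collinear with $z_i$, hence $g_i'\in \lineextension{z_i}{g_i}=\lineextension{z_i}{C_{z,i}}$; combined with $g_i'\in\ell_i^-$ this pins $g_i'$ to the segment $\linesegment{z_i}{C_{z,i}}$. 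Since $C_{z,i}$ is by definition the blocker \emph{closest} to $z_i$, that segment is blocker-free and $g_i'$ sees strictly past $z_i$---no secondary vertex $C'$ can intervene. Your ray-geometry argument about $C_{z,i}$ ``no longer restricting'' $g^*$ is on the right track intuitively, but the collinearity-of-optimal-guards step is what actually makes the case analysis collapse.
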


\begin{proof}
    Assume for contradiction that $g_i$ lies strictly above $\ell_i$. 
    It holds that $F$ is connected by Lemma~\ref{lem:feasible_region_is_connected}. 
    Combining this with the $F \cap \ell^-_i \neq \emptyset$ condition, then there must be a point on $\linesegment{C_{y,i}}{C_{z,i}}\subset\ell_i$ also contained in $F$ ($\ell_i$ is the extended line, where $\linesegment{C_{y,i}}{C_{z,i}}$ is only the line segment). Since $\linesegment{C_{y,i}}{C_{z,i}} \subseteq Q_i$ we know that $F \cap \ell^-_i \cap Q_i \neq \emptyset$. Let $g_i' \in F \cap \ell^-_i \cap Q_i$.

    \begin{figure}[ht]
        \centering
    
        \begin{tikzpicture}
            \begin{scope}
                \draw[color=black] (0,4) -- (0,1) -- (1,1.2) -- (2,1) -- (3,1.5) -- (4,1.5) -- (5,2) -- (4.5, 4.5);
                \filldraw[qi] (0,2.18) -- (0,1) -- (1,1.2) -- (2,1) -- (3,1.5) -- (4,1.5) -- (5,2) -- (4.89, 2.55) -- (2.43, 2.96) -- cycle;
                \filldraw[F] (1.5, 2) -- (2,2.5) -- (2, 3) -- (2.4, 3.3) -- (2.6, 3) -- (3, 2) -- (3, 1.7) -- cycle;
                
                \draw[color=black] (0.5, 4) -- (1, 2.525) -- (1.3, 4);
                \draw[color=black] (3.5, 4.4) -- (4, 2.7) -- (3.8, 4.2);
        
                \filldraw[color = black] (4.89, 2.55) circle (1.5pt);
                \draw[color = black, right] (4.89, 2.65) node {$y_i'$};;
                \filldraw[color = black] (0,2.18) circle (1.5pt);
                \draw[color = black, left] (0,2.18) node {$z_i$};
                
                \filldraw[color = guard] (2.43, 2.96) circle (1.5pt);
                \draw[color = guard, above, xshift=3pt] (2.63, 2.96) node {$g_i$};
    
                \draw[color = black, dashed] (1,2.5) -- (4,2.7);
                \filldraw[color = guard] (2.5, 2.6) circle (1.5pt);
                \draw[color = guard, below] (2.5, 2.6) node {$g_i'$};
                \draw[Qi] (3.5, 2) node {$Q_i$};
                \draw[feasible region] (2, 2.1) node {$F$};
                \draw (3.5, 3) node {$l_i$};
            \end{scope}
    
            \begin{scope}[xshift=7cm]
                \draw[color=black] (0,4) -- (0,1) -- (1,1.2) -- (2,1) -- (3,1.5) -- (4,1.5) -- (5,2) -- (4.5, 4.5);
                \filldraw[qi] (0,2.18) -- (0,1) -- (1,1.2) -- (2,1) -- (3,1.5) -- (4,1.5) -- (5,2) -- (4.89, 2.55) -- (2.43, 2.96) -- cycle;
                \filldraw[F] (0.5, 2.342) -- (1, 2.5) -- (2, 3) -- (2.4, 3.3) -- (2.6, 3) -- (3, 2.7) -- (2.4, 2.65) -- (1, 2.5) -- cycle;
        
                \draw[color=black] (0.5, 4) -- (1, 2.525) -- (1.3, 4);
                \draw[color=black] (3.5, 4.4) -- (4, 2.7) -- (3.8, 4.2);
        
                \draw[color = black, dashed] (1,2.5) -- (4,2.7);
                \draw[color=guard, dashed] (0.75, 2.42) -- (0,2.8);
                
                \filldraw[color = black] (4.89, 2.55) circle (1.5pt);
                \draw[color = black, right] (4.89, 2.65) node {$y_i'$};
                \filldraw[color = black] (0,2.18) circle (1.5pt);
                \draw[color = black, left] (0,2.18) node {$z_i$};
    
                \filldraw[color = black] (0,2.8) circle (1.5pt);
                \draw[color = black, left] (0,2.8) node {$z_i'$};
        
                \filldraw[color = guard] (2.43, 2.96) circle (1.5pt);
                \draw[color = guard, above, xshift=3pt] (2.63, 2.96) node {$g_i$};
        
                \draw[red] (3.5, 2) node {$Q_i$};
                \draw[feasible region] (1.6, 3) node {$F$};
                \draw (3.5, 3) node {$l_i$};
        
                \filldraw[guard] (0.75, 2.42) circle (1.5pt);
                \draw[guard, above] (0.7, 2.42) node {$g_i'$};
                
            \end{scope}
        \end{tikzpicture}
        
        \caption{Left, the setup of Lemma~\ref{lem:better_to_have_Gi_below_pivot_line} with feasible region $F$, guard $g_i$ that can see all of $Q_i$ along with placement of $g_i'$. Right, $g_i$ cannot be optimal as $g_i'$ can see further than $g_i$.}
        \label{fig:Q_and_F_under_li}
    \end{figure}

    As $g_i' \in F$, $g_i'$ can see every vertex in $[y_i,z_i]$, and since $g_i' \in Q_i$, $g_i'$ can also see $y_i$ and $z_i$ by Lemma~\ref{lem:G_is_line_of_sight_is_clear} and Lemma~\ref{lem:Blocking_corner_not_in_view}, 
    see Figure~\ref{fig:Q_and_F_under_li} left.
    Thus $g_i'$ must also see all of $[y_i, z_i]$.
    However, we know from analyzing \Greedy (in Remark~\ref{rem:_multiple_optimal_guards_lie_on_line_with_endpoint}), that multiple optimal guards must be collinear with $z_i$, so $g_i' \in \lineextension{z_i}{g_i} = \lineextension{z_i}{C_{z,i}}$ and since $g_i' \in \ell_i^-$, $g_i'$ must lie in $\linesegment{z_i}{C_{z,i}}$.
    But now no vertex of $P$ is between $g_i'$ and $z_i$ (since $C_{z,i}$ is the vertex closest to $z_i$ between $z_i$ and $g_i$), so $g_i'$ must be able to see more than $z_i$ making $g_i$ not optimal, leading to a contradiction. 
\end{proof}

Now we know that the placement of the guard must be below $\ell_i$, if possible. We next prove that the points $y_i$ and $z_i$ will both be above $\ell_i$ if we see $\ell_i$ again:

\begin{lemma}[$y_j$ and $z_j$ will lie above $\ell_j$]
\label{lem:yj_and_zj_above_li}
    If $F \cap \ell_i^- \neq \emptyset$ and $\ell_i = \ell_j$ with $i<j$. Then $y_j$ and $z_j$ will both lie above $\ell_i$ or we reach a progress condition.
\end{lemma}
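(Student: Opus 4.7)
The plan is to prove the two containments separately: first, that $z_j$ lies on or above $\ell_i$, then that $y_j$ does. Throughout, I will assume no progress condition has already been reached (otherwise the conclusion is immediate).

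\emph{Locating $z_j$.} Because the edges $\ell_y$ and $\ell_z$ are fixed along the sequence between iterations $i$ and $j$ (no edge jumps, i.e., no progress condition~3 from Remark~\ref{rem:geometric_progresss}), the feasible region $F$ is the same at iteration $j$ as at iteration $i$. Combined with $\ell_j = \ell_i$ and the hypothesis $F \cap \ell_i^- \ne \emptyset$, Lemma~\ref{lem:better_to_have_Gi_below_pivot_line} applied at iteration $j$ yields $g_j \in \ell_i^-$. By definition of the pivots, $C_{z,j} = C_{z,i}$ lies on $\ell_i$ and strictly between $g_j$ and $z_j$ on the line they span. Since $g_j$ is on or below $\ell_i$ and $C_{z,i}$ is on $\ell_i$, the point $z_j$ must lie on the opposite side of $\ell_i$ from $g_j$, i.e.\ on or above $\ell_i$.

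\emph{Locating $y_j$.} The analogous collinearity argument applied to $g_j$, $C_{y,i}$, and $y_j'$ only gives $y_j'$ on or above $\ell_i$, because it is $y_j'$ (not $y_j$) that is pinned down by the pivot $C_{y,i}$ in the current greedy step. Since the local greedy sequence satisfies $y_j \le y_j'$ in the up direction on $\ell_y$, $y_j$ could a priori lie strictly below $\ell_i$. I would argue by contradiction: suppose $y_j$ is strictly below $\ell_i$, so that $p_y := \ell_y \cap \ell_i$ lies strictly between $y_j$ and $y_j'$, and by monotonicity together with $y_{j+1} > y_j'$ the sequence $(y_i)$ crosses $\ell_i$ in a single local step. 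I would then trace back to the greedy step in the full sequence whose endpoint is $y_j$, identify its blocking pivot $C$, and split into cases: if $C = C_{y,i}$, collinearity of $g, C_{y,i}, y_j$ with $C_{y,i}$ between them and $y_j \in \ell_i^-$ forces that step's guard strictly above $\ell_i$, and a second application of Lemma~\ref{lem:better_to_have_Gi_below_pivot_line} yields an inconsistency with the feasible region at that step; if $C \ne C_{y,i}$, the pivots have changed between iterations $i$ and $j$, and I would combine this with Lemma~\ref{lem:maximal_edge_jumping} and the fact that $(y_i)$ must cross $\ell_i$ while the pivot line is revisited to force an edge jump or a repetition in some local greedy sequence.

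The main obstacle is the $y_j$ claim: unlike $z_j$, $y_j$ is not pinned down by any pivot of the greedy step at iteration $j$ but is inherited from the preceding step, so one cannot argue purely locally at $j$. Converting the geometric assumption $y_j \in \ell_i^-$ into one of the three progress conditions of Remark~\ref{rem:geometric_progresss} requires a delicate traceback through the greedy sequence, and carefully handling the case where the blocking pivot producing $y_j$ differs from $C_{y,i}$---which keeps the greedy step's geometry from being directly controlled by $\ell_i$---will be the technical heart of the argument.
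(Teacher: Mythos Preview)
Your argument for $z_j$ is correct and matches the paper. The gap is in your treatment of $y_j$: you are looking at the wrong iteration. You correctly observe that knowing $y_j'$ lies above $\ell_i$ does not directly locate $y_j$, since $y_j \le y_j'$. But the paper's trick is to use the geometry at iteration $i$, not $j$. By the same application of Lemma~\ref{lem:better_to_have_Gi_below_pivot_line} you already invoked, $g_i$ lies on or below $\ell_i$; collinearity of $g_i$, $C_{y,i}$, $y_i'$ then forces $y_i'$ above $\ell_i$. Now recall from the standing setup that either $y_{i+1}$ sees $g_i$ (repetition, a progress condition) or $y_{i+1}$ lies strictly above $y_i'$. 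Assuming no positive fingerprints, the local sequence $(y_l)$ is monotone in the ``up'' direction on $\ell_y$, so $y_j \ge y_{i+1} > y_i'$, and $y_j$ is above $\ell_i$. That is the whole argument; no traceback is needed.

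Your proposed traceback also has a concrete flaw in Case~1. The greedy step producing $y_j$ lives between a different pair of edges and therefore has its own feasible region and its own pivot line $\widehat{\ell} = L(\widehat{C}_y, C)$; Lemma~\ref{lem:better_to_have_Gi_below_pivot_line} only constrains that step's guard relative to $\widehat{\ell}$, not relative to $\ell_i$. Even when $C = C_{y,i}$, there is no reason $\widehat{\ell} = \ell_i$, so deducing ``$g$ above $\ell_i$'' does not contradict the lemma. Case~2 is too vague to evaluate, but since the simple monotonicity argument above already closes the proof, the case split is unnecessary.
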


\begin{figure}[ht]
    \centering
    
    \begin{tikzpicture}
        \draw[color=black] (0,4) -- (0,1) -- (1,1.2) -- (2,1) -- (3,1.5) -- (4,1.5) -- (5,2) -- (4.5, 4.5);
        \draw[color=black] (0.5, 4) -- (1, 2.525) -- (1.3, 4);
        \draw[color=black] (3.5, 4.4) -- (4, 2.7) -- (3.8, 4.2);

        \filldraw[color = black] (4.94, 2.32) circle (1.5pt);
        \draw[color = black, right] (4.94, 2.22) node {$y_i$};
        \filldraw[color = black] (4.79, 3.07) circle (1.5pt);
        \draw[color = black, right] (4.79, 3.07) node {$y'_i$};
        \filldraw[color = black] (4.72, 3.4) circle (1.5pt);
        \draw[color = black, right] (4.72, 3.45) node {$y_j$};

        \filldraw[color = black] (0,2.92) circle (1.5pt);
        \draw[color = black, left] (0,2.92) node {$z_i$};
        \filldraw (0, 2.62) circle (1.5pt);
        \draw[left] (0, 2.62) node {$z_j$};

        \filldraw[color = guard] (2.43, 1.96) circle (1.5pt);
        \draw[color = guard, below] (2.33, 1.96) node {$g_i$};

        \coordinate (g) at (3.42, 2.2);
        
        \filldraw[color = guard] (g) circle (1.5pt);
        \draw[color = guard, below] (g) node {$g_j$};

        \draw[color = guard, dashed] (0, 2.62) -- (g) -- (4.72, 3.4);
        
        \draw[color = guard, dashed] (0,2.92) -- (2.43, 1.96) -- (4.79, 3.07);
        \draw[color = guard, dashed] (0,2.47) -- (4.85,2.75);

        \draw (2.6, 2.9) node {$\ell_i$};
    \end{tikzpicture}
    
    \caption{If the pivot line $\ell_i$ is reused at step $i < j$ then both $y_j$ and $z_j$ are above it.}
    \label{fig:yj_and_zj_above_li_main}
\end{figure}

\begin{proof}

    We also have that $C_{y,i}$ lies on $\linesegment{g_i}{y_i'}$, thus $y_i'$ lies above $\ell_i$, and since $i < j$, either $y_j$ will also lie above $\ell_i$ or we get a positive fingerprint.

    Analogously, $C_{y,i}$ lies on $\linesegment{g_i}{y_i'}$, thus $y_i'$ lies above $\ell_i$, and since $i < j$ (and we assume negative fingerprint), $y_j$ will also lie above $\ell_i$.
\end{proof}

We are now ready to give the main result of the subsection, essential in proving Theorem~\ref{thm:repeatedgreedy_will_jump_edges_or_repeat}.

\begin{proposition}[Guard placed below pivot implies other guard placed strictly above pivot]
\label{prop:guards_below_pivots_implies_other_guards_above_pivots}
    Let $g_i$ lies below the pivot line and $y_i, z_i$ lie above the pivot line, then one of the following will occur:
    \begin{enumerate}
        \item In the next revolution, we will see a progress condition.
        \item $z_{i+1}$ will lie below the pivot line $\ell_i$.
        \item There is some guard $\widehat{g}$ found in the next revolution, which is below its pivot line $\widehat{\ell}$.
    \end{enumerate}
\end{proposition}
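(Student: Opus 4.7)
The approach is by contradiction: assume that options~1 and~2 both fail, and deduce option~3. Under the failure of option~1, none of the progress conditions of Remark~\ref{rem:geometric_progresss} triggers within the next revolution, so both $(y_j)$ and $(z_j)$ remain on their edges $\ell_y,\ell_z$ with only negative fingerprints. Under the failure of option~2, $z_{i+1}$ lies on $\ell_z$ strictly above $\ell_i$. My concrete goal is to exhibit $\widehat g = g_{i+1}$ lying on or below $\widehat\ell = \ell_{i+1}$.

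The first step is to argue that $g_i \in F_{i+1}$, so that $F_{i+1}$ contains the point $g_i$ lying strictly below $\ell_i$. Since the local $y$-sequence has only negative fingerprints, $y_{i+1}$ lies on $\ell_y$ between $y_i$ and $y_i'$: if $y_{i+1}$ were past $y_i'$, then $g_i$ could not see it, and the greedy step from $y_{i+1}$ would be forced to use a fundamentally different guard, which together with the no-edge-jump assumption from option~1 failing would yield a repetition in the $z$-sequence. Combined with the fact that $z_{i+1}$ lies between $z_i$ and the intersection $\ell_z \cap \ell_i$ (from option~2 failing and the negative fingerprint of $z_{i+1}$), Lemma~\ref{lem:G_is_line_of_sight_is_clear} together with Lemma~\ref{lem:Blocking_corner_not_in_view} yields that $g_i$ sees all of $[y_{i+1}, z_{i+1}]$, hence $g_i \in F_{i+1}$.

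The second step is a case split on $\ell_{i+1}$ versus $\ell_i$. In the subcase $\ell_{i+1} = \ell_i$, Lemma~\ref{lem:yj_and_zj_above_li} places $y_{i+1}, z_{i+1}$ above $\ell_{i+1}$, and since $g_i \in F_{i+1}$ lies strictly below $\ell_{i+1} = \ell_i$, we have $F_{i+1} \cap \ell_{i+1}^- \neq \emptyset$; Lemma~\ref{lem:better_to_have_Gi_below_pivot_line} then forces $g_{i+1}$ on or below $\ell_{i+1}$, establishing option~3. In the subcase $\ell_{i+1} \neq \ell_i$, at least one of the blocking pivots has changed; my plan is to show that $g_i$ nevertheless still lies on or below the new pivot line $\ell_{i+1}$, using that the new pivots $C_{y,i+1}, C_{z,i+1}$ must be vertices of $[y_{i+1}, z_{i+1}]$ lying on $\partial Q_i$ (cf.\ Figure~\ref{fig:Q_i}), and so the swing from $\ell_i$ to $\ell_{i+1}$ cannot cross $g_i$ without also blocking $g_i$ from seeing the new interval, contradicting $g_i \in F_{i+1}$.

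The main obstacle is this last subcase, where the pivot changes. Here one must carefully track the migration of the blocking vertices and argue that whatever new vertex replaces an old pivot lies in a position forcing $\ell_{i+1}$ to sweep away from $g_i$ rather than across it. I expect this to reduce to a case analysis on which of the two pivots changes (and in which direction), in each case invoking the visibility structure of $Q_i$ from Lemmas~\ref{lem:G_is_line_of_sight_is_clear} and~\ref{lem:Blocking_corner_not_in_view} to conclude that $g_i$ remains on or below $\ell_{i+1}$, so that Lemma~\ref{lem:better_to_have_Gi_below_pivot_line} once again delivers $\widehat g = g_{i+1}$ on or below $\widehat\ell = \ell_{i+1}$.
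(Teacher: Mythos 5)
There is a genuine gap, and it sits at the very first step. Your plan hinges on showing $g_i \in F_{i+1}$, justified by placing $y_{i+1}$ in $[y_i, y_i']$. But the paper's setup for this whole section establishes the opposite: if $g_i$ could see $y_{i+1}$ (i.e.\ $y_{i+1}$ were at or before $y_i'$), then $g_i$ would see all of $[y_{i+1}, z_i]$, forcing $z_{i+1} = z_i$ --- a repetition, hence a progress condition. So under your standing assumption that option~1 fails, $y_{i+1}$ lies strictly past $y_i'$, $g_i$ cannot see $y_{i+1}$, and $g_i \notin F_{i+1}$. Your justification ("a fundamentally different guard ... would yield a repetition in the $z$-sequence") asserts the implication in the wrong direction and is not supported by anything in the paper. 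With $g_i \notin F_{i+1}$ you lose the hypothesis $F_{i+1} \cap \ell_{i+1}^- \neq \emptyset$ needed for Lemma~\ref{lem:better_to_have_Gi_below_pivot_line}, and both of your subcases collapse; the remaining subcase ($\ell_{i+1}\neq\ell_i$) is in any event only sketched.

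There is also a more basic issue of target. Despite the literal wording of item~3, the proposition's title, the concluding sentence of the paper's proof ("$\widehat{\ell}$ must lie below $\widehat{g}$ ... we have showed what we wanted"), and its use in Theorem~\ref{thm:repeatedgreedy_will_jump_edges_or_repeat} all require producing a guard $\widehat{g}$ strictly \emph{above} its pivot line $\widehat{\ell}$: that is what lets the contrapositive of Lemma~\ref{lem:better_to_have_Gi_below_pivot_line} place the entire feasible region above $\widehat{\ell}$ and trigger Proposition~\ref{prop:F_over_l_gives_repetition_or_edge_jumping}. Exhibiting yet another guard \emph{below} its pivot line, as you set out to do, merely perpetuates the configuration and yields no progress. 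The paper's actual argument is entirely different in structure: it takes the point $x$ of $\partial P$ inside $\triangle g_i C_{z,i} C_{y,i}$ lying furthest below $\ell_i$, observes that some guard $\widehat{g}$ (generally not $g_{i+1}$, and not tied to the same local greedy sequence) must cover $x$ within the next revolution, and then runs a case analysis on whether $\widehat{g}$ lies in $Q_i$ and on the positions of its pivots $\widehat{C}_y,\widehat{C}_z$ relative to $\widehat{g}$, concluding in every branch with a progress condition, with $z_{i+1}$ below $\ell_i$, or with $\widehat{g}$ strictly above $\widehat{\ell}$.
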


\begin{proof}
    Consider the triangle $\triangle gC_{z,i}C_{y,i}$. Let $x$ be a point on $\partial P \cap \triangle gC_{z,i} C_{y,i}$ which is furthest below $\ell_i$ (it may be that $x = C_{z,i}$ or $x=C_{y,i}$). During the next $k+1$ greedy steps, we will at some point have a guard, that sees $x$ as part of its interval. If $x$ is the endpoint of such an interval, we have an edge jump. So we assume $x$ is not an endpoint. Next, notice that $x \in [C_{z,i}, C_{y,i}]$, since if $x \in [y_i', z_i]$, $g_i$ would not be able to see $y_i'$ and $z_i$ and if $x \in [z_i, C_{z,i}]$ or $[C_{y,i}, y_i']$ $x \in [z_i, C_{z,i}]$ or $x\in[C_{y,i}, y_i']$, that piece of the boundary would need to enter $Q_i$ contradicting Lemma~\ref{lem:G_is_line_of_sight_is_clear}.

    \begin{figure}[ht]
        \centering

        \begin{tikzpicture}
            \draw[color=black] (0,4) -- (0,1) -- (1,1.2) -- (2,1) -- (3,1.5) -- (4,1.5) -- (5,2) -- (4.5, 4.5);
            \filldraw[qi] (0,3.06) -- (0,1) -- (1,1.2) -- (2,1) -- (3,1.5) -- (4,1.5) -- (5,2) -- (4.7, 3.48) -- (2.43, 1.76) -- cycle;
            \filldraw[color = gray, fill = gray!20] (2.43, 1.76) -- (1, 2.525) -- (4, 2.95) -- cycle;
            
            \draw[color=black] (0.86, 4.05) -- (1.51, 2.76) -- (1, 2.525) -- (1.8, 2.4) -- (2.08, 3.09) -- (1.88, 4.26);
            \draw[color=black] (3.5, 4.4) -- (4, 2.95) -- (3.8, 4.2);

            \draw[guard, dashed] (0,3.06) -- (2.43, 1.76) -- (4.7, 3.48);
    
            \filldraw[color = black] (4.77, 3.17) circle (1.5pt);
            \draw[color = black, right] (4.77, 3.17) node {$y_i$}; 
            \filldraw[color = black] (4.7, 3.48) circle (1.5pt);
            \draw[color = black, right] (4.7, 3.48) node {$y'_i$};
    
            \filldraw[color = black] (0,3.06) circle (1.5pt);
            \draw[color = black, left] (0,3.06) node {$z_i$};
            
            \filldraw (1, 2.525) circle (1.5pt);
            \draw (1, 2.525) node[above] {$C_{z,i}$};
            
            \filldraw (4, 2.95) circle (1.5pt);
            \draw (4, 2.95) node[below right=-4pt] {$C_{y,i}$};
            
            \filldraw[color=black] (1.8, 2.4) circle (1.5pt);
            \draw[color=black] (1.8, 2.4) node[below right=-2pt] {$x$}; 
            
            \filldraw[color = guard] (2.43, 1.76) circle (1.5pt);
            \draw[color = guard, below] (2.43, 1.76) node {$g_i$};
            
            \draw[color = black, right] (2.3, 3) node {$\ell_i$};
        \end{tikzpicture}
        
        \caption{A point $x\in\partial P$ furthest below $\ell_i$ must lie in the triangle $\triangle g_iC_{z,i} C_{y,i}$.}
        \label{fig:where_to_place_x}
    \end{figure}
    
    Let $\widehat{g}$ be the guard that sees $x$. We introduce the same terminology around $\widehat{g}$ as for $g_i$: $\widehat{g}$ guards $[\widehat{y},\widehat{z}]$ and $\widehat{g}$ is blocked by $\widehat{C}_{z}$ and $\widehat{C}_y$ with $\widehat{y}\,'$ being the furthest $\widehat{g}$ can see when going backwards. Let $\widehat{\ell}= L(\widehat{C}_y, \widehat{C}_z)$ be the pivot line associated with $\widehat{g}$. 

    In the following, we will consider where $\widehat{g}, \widehat{C}_{z}$ and $\widehat{C}_y$ can be placed in relation to each other. We will see that in all these cases one of the three criteria in the proposition will be satisfied.
    
    Let $\widehat{F}$ be the feasible region for the vertex $x$ and the edges of $P$ connected to $x$. $\widehat{F}$ is obtained by extending the edges connected to $x$ as in Figure~\ref{fig:G_hat}.
    
    Finally, let $z$ be the vertex on $[y_i, z_i]$ just before $z_i$.
    
    \begin{figure}[ht]
        \centering

        \begin{tikzpicture}
            \draw[color=black] (0,4) -- (0,1) -- (1,1.2) -- (2,1) -- (3,1.5) -- (4,1.5) -- (5,2) -- (4.5, 4.5);
            \filldraw[qi] (0,3.06) -- (0,1) -- (1,1.2) -- (2,1) -- (3,1.5) -- (4,1.5) -- (5,2) -- (4.7, 3.48) -- (2.43, 1.76) -- cycle;
            
            \draw[color=black] (0.86, 4.05) -- (1.51, 2.76) -- (1, 2.525) -- (1.8, 2.4) -- (2.08, 3.09) -- (1.88, 4.26);
            \draw[color=black] (3.5, 4.4) -- (4, 2.95) -- (3.8, 4.2);
            
            \filldraw[color = black] (0,3.06) circle (1.5pt);
            \draw[color = black, left] (0,3.06) node {$z_i$};
            
            \filldraw[F] (1.8,2.4) -- (4.85, 1.92) -- (4,1.5) -- (3,1.5) -- (2,1) -- (1.29, 1.14) -- cycle;
            
            \filldraw (1,2.525) circle (1.5pt);
            \draw (1, 2.525) node[above] {$C_{z,i}$};
            
            \filldraw[color=black] (1.8, 2.4) circle (1.5pt);
            \draw[color = black, right] (1.8, 2.6) node {$x$}; 
            \filldraw[color = black] (4.77, 3.17) circle (1.5pt);
            \draw[color = black, right] (4.77, 3.17) node {$y_i$}; 
            \filldraw[color = black] (4.7, 3.48) circle (1.5pt);
            \draw[color = black, right] (4.7, 3.48) node {$y'_i$};
            
            \filldraw[color = guard] (2.43, 1.76) circle (1.5pt);
            \draw[color = guard, below] (2.43, 1.76) node {$g_i$};
            
            \filldraw[guard] (3, 1.8) circle (1.5pt);
            \draw[right, guard] (3,1.8) node {$\widehat{g}$};
    
            \draw[color = feasible region,right] (2.6,1.1) node {$\widehat{F}$};
            
            \filldraw (0,1) circle (1.5pt);
            \draw[left] (0,1) node {$z$};
        \end{tikzpicture}
        
        \caption{The guard $\widehat{g}$ that guards $x$ must lie in the feasible region $\widehat{F}$ of the two edges that share $x$ as an endpoint.}
        \label{fig:G_hat}
    \end{figure}
    
    Since $x$ is chosen lowest in $\triangle C_{z,i}C_{y,i}g_i\cap \partial P$, both edges in $\partial P$ connected to $x$ will point downwards (where $\ell_i$ is horizontal). Thus $\widehat{F}$, whose boundary is the continuation of these edges, will be contained below $\ell_i$. 
    
    We now look at the case, where $\widehat{g}$ is not in $Q_i$:
    
    If $\widehat{g}$ is not in $Q_i$, it lies above $L(y_i',g_i)$ and $L(g_i, z_i)$, as they are extensions of borders of $Q_i$. From the considerations above, we know that $\widehat{g}$ must also lie below $\ell_i$, thus $\widehat{g} \in \triangle g_iC_{z,i}C_{y,i}$.   
    Furthermore, we know that $x$ is the lowest point of $\partial P$ inside $\triangle C_{z,i}C_{y,i}g_i$, and $\widehat{g}$ is below $x$, thus below $\widehat{g}$ there is no part of $\partial P$ inside $\triangle gC_{z,i} C_{y,i}$. 
    
    We now assume for contradiction, that on of the pivots for $\widehat{g}$, $\widehat{C}_y$ or $\widehat{C}_z$, lies below $\widehat{g}$. Then $\widehat{y}\,'$ or $\widehat{z} \in [y_i', z_i]$, since $\widehat{y}\,'$, respectively $\widehat{z}$, lie on the ray $\ray{\widehat{g}}{\widehat{C}_y}$, respectively the ray $\ray{\widehat{g}}{\widehat{C}_z}$, and these rays will be contained in the region of $\triangle g_i C_{z,i} C_{y,i}$ below $\widehat{g}$ and $Q_i$ (until they hit $\partial P$), where no part of $[z_i, y_i']$ can enter by Lemma~\ref{lem:G_is_line_of_sight_is_clear} (see Figure~\ref{fig:G_hat_not_in_Qi} left).
    
    \begin{figure}[ht]
        \centering

        \begin{tikzpicture}
            \begin{scope}
                \draw[color=black] (0,4) -- (0,1) -- (1,1.2) -- (2,1) -- (3,1.5) -- (4,1.5) -- (5,2) -- (4.5, 4.5);
                \filldraw[qi] (0,3.06) -- (0,1) -- (1,1.2) -- (2,1) -- (3,1.5) -- (4,1.5) -- (5,2) -- (4.7, 3.48) -- (2.43, 1.76) -- cycle;
                \filldraw[color = gray, fill = gray!20] (2.43, 1.76) -- (1, 2.525) -- (4, 2.95) -- cycle;
    
                \draw[color=black] (0.86, 4.05) -- (1.51, 2.76) -- (1, 2.525) -- (1.8, 2.4) -- (2.08, 3.09) -- (1.88, 4.26);
                \draw[color=black] (3.5, 4.4) -- (4, 2.95) -- (3.8, 4.2);
    
                \filldraw[color = black] (4.77, 3.17) circle (1.5pt);
                \draw[color = black, right] (4.77, 3.17) node {$y_i$}; 
                \filldraw[color = black] (4.7, 3.48) circle (1.5pt);
                \draw[color = black, right] (4.7, 3.48) node {$y'_i$};
    
                \filldraw[color = black] (0,3.06) circle (1.5pt);
                \draw[color = black, left] (0,3.06) node {$z_i$};
    
                \filldraw (1,2.525) circle (1.5pt);
                \draw (1, 2.525) node[above] {$C_{z,i}$};
    
                \filldraw (4, 2.95) circle (1.5pt);
                \draw (3.5, 2.95) node[above] {$C_{y,i}$};
    
                \filldraw[color = guard] (2.43, 1.76) circle (1.5pt);
                \draw[color = guard, below] (2.43, 1.76) node {$g_i$};
    
                \filldraw[guard] (2.4,2.1) circle (1.5pt);
                \draw[guard] (2.4,2.1) node[above] {$\widehat{g}$};
    
                \filldraw[color=black] (0.6, 1.7) circle (1.5pt);
                \draw (0.6, 1.7) node[above] {$\widehat{C}_{y}$};
    
                \draw[dashed, guard] (2.4,2.1) -- (0,1.56);
                \filldraw (0,1.56) circle (1.5pt);
                \draw (0,1.56) node[left] {$\widehat{y}\,'$};
    
                \filldraw (0,1) circle (1.5pt);
                \draw[left] (0,1) node {$z$};
            \end{scope}
    
            \begin{scope}[xshift=7cm]
                \draw[color=black] (0,4) -- (0,1) -- (1,1.2) -- (2,1) -- (3,1.5) -- (4,1.5) -- (5,2) -- (4.5, 4.5);
                \filldraw[qi] (0,3.06) -- (0,1) -- (1,1.2) -- (2,1) -- (3,1.5) -- (4,1.5) -- (5,2) -- (4.7, 3.48) -- (2.43, 1.76) -- cycle;
                \filldraw[color = gray, fill = gray!20] (2.43, 1.76) -- (1, 2.525) -- (4, 2.95) -- cycle;
    
                \draw[color=black] (0.86, 4.05) -- (1.51, 2.76) -- (1, 2.525) -- (1.8, 2.4) -- (2.08, 3.09) -- (1.88, 4.26);
                \draw[color=black] (3.5, 4.4) -- (4, 2.95) -- (3.8, 4.2);
    
                \filldraw[color = black] (4.77, 3.17) circle (1.5pt);
                \draw[color = black, right] (4.77, 3.17) node {$y_i$}; 
                \filldraw[color = black] (4.7, 3.48) circle (1.5pt);
                \draw[color = black, right] (4.7, 3.48) node {$y'_i$};

                \filldraw (1,2.525) circle (1.5pt);
                \draw (1, 2.6) node[above] {$C_{z,i}$};
    
                \filldraw (4, 2.95) circle (1.5pt);
                \draw (3.5, 2.95) node[above] {$C_{y,i}$};
    
                \filldraw[color = guard] (2.43, 1.76) circle (1.5pt);
                \draw[color = guard, below] (2.43, 1.76) node {$g_i$};

                \draw[dashed, guard] (2.4,2.1) -- (0,2.83);
                \filldraw (0,2.83) circle (1.5pt);
    
                \draw[dashed, red, thick] (2.4,2.1) -- (0,3.06);
    
                \filldraw[color = black] (0,3.06) circle (1.5pt);
                \draw[color = black, left] (0,3.06) node {$z_i$};
                
                \filldraw[guard] (2.4,2.1) circle (1.5pt);
                \draw[guard] (2.4,2.1) node[above] {$\widehat{g}$};
                
                \filldraw (0,1) circle (1.5pt);
                \draw[left] (0,1) node {$z$};
            \end{scope}
        \end{tikzpicture}
        
        \caption{Assuming $\widehat{g}$ is not in $Q_i$ and $\widehat{C}_y$ below $\widehat{g}$, we must have $\widehat{y}$ in $[y'_i, z_i]$ (left). However $C_{y,i}$ blocks the view for $\widehat{g}$ to see $z_i$ (right), so $[\widehat{y},x]$ cannot both be visible from $\widehat{g}$.}
        \label{fig:G_hat_not_in_Qi}
    \end{figure}
    
    This implies that we would need $\widehat{g}$ to be able to see all of $[z_i, x]$ or $[x, y_i']$. However this is not possible, as $\widehat{g}$ is strictly above the lines $L(C_{y,i}, g_i)$ and $L(g_i, C_{z,i})$ and thus the pivots $C_{y,i}$ and $C_{z,i}$ will block $\widehat{g}$ from seeing $y_i'$ and $z_i$. This yields a contradicition. 

    Now both $\widehat{C}_y$ and $\widehat{C}_z$ are above $\widehat{g}$. From the assumption that $\widehat{g}$ sees $x$, we know that $x\in[\widehat{y}\,', \widehat{z}]$, thus $\widehat{C}_y$ will lie to the left of $\linesegment{\widehat{g}}{x}$ and $\widehat{C}_z$ to the right. So now $\widehat{\ell}$ must lie below $\widehat{g}$ and we have showed what we wanted (see Figure~\ref{fig:G_hat_not_in_Qi_and_Cy_above_G}).
    
    \begin{figure}
        \centering
        \begin{tikzpicture}
            \draw[color=black] (0,4) -- (0,1) -- (1,1.2) -- (2,1) -- (3,1.5) -- (4,1.5) -- (5,2) -- (4.5, 4.5);
            \filldraw[qi] (0,3.06) -- (0,1) -- (1,1.2) -- (2,1) -- (3,1.5) -- (4,1.5) -- (5,2) -- (4.7, 3.48) -- (2.43, 1.76) -- cycle;
            \filldraw[color = gray, fill = gray!20] (2.43, 1.76) -- (1, 2.525) -- (4, 2.95) -- cycle;
    
            \draw[color=black] (0.86, 4.05) -- (1.51, 2.76) -- (1, 2.525) -- (1.8, 2.4) -- (2.08, 3.09) -- (1.88, 4.26);
            \draw[color=black] (3.5, 4.4) -- (4, 2.95) -- (3.8, 4.2);
    
            \filldraw[color = black] (4.77, 3.17) circle (1.5pt);
            \draw[color = black, right] (4.77, 3.17) node {$y_i$}; 
            \filldraw[color = black] (4.7, 3.48) circle (1.5pt);
            \draw[color = black, right] (4.7, 3.48) node {$y'_i$};
    
            \filldraw[color = black] (0,3.06) circle (1.5pt);
            \draw[color = black, left] (0,3.06) node {$z_i$};
    
            \filldraw (1, 2.525) circle (1.5pt);
            \draw (1, 2.525) node[above] {$C_{z,i}$};
    
            \filldraw (4, 2.95) circle (1.5pt);
            \draw (3.5, 2.95) node[above] {$C_{y,i}$};
    
            \filldraw[color=black] (0.6, 2.4) circle (1.5pt);
            \draw (0.6, 2.4) node[left] {$\widehat{C}_{y}$};
    
            \filldraw[color = black] (4.2,2.6) circle (1.5pt);
            \draw[color = black, right] (4.2,2.8) node {$\widehat{C}_z$};
    
            \draw[dashed] (0.6, 2.4) -- (4.2,2.6);
            \draw[thick] (0,1.76) -- (4.91, 2.46);
            \draw[thick] (2.4,2.1) -- (1.8,2.4);

            \filldraw[color = guard] (2.43, 1.76) circle (1.5pt);
            \draw[color = guard, below] (2.43, 1.76) node {$g_i$};
    
            \filldraw[guard] (2.4,2.1) circle (1.5pt);
            \draw[guard] (2.42,2.2) node[above] {$\widehat{g}$};
    
            \filldraw (1.8,2.4) circle (1.5pt);
            \draw (1.7,2.4) node[above] {$x$};
            \filldraw (0,1) circle (1.5pt);
            \draw[left] (0,1) node {$z$};
        \end{tikzpicture}
        
        \caption{If $\widehat{C}_y$ and $\widehat{C}_z$ are above $\widehat{g}$, and $x \in [\widehat{y}\,', \widehat{z}]$ then we must have $\widehat{g}$ above $\widehat{\ell}$.}
        \label{fig:G_hat_not_in_Qi_and_Cy_above_G}
    \end{figure}

    Now we assume that $\widehat{g}$ lies in $Q_i$, and again we consider, where $\widehat{C}_y$ and $\widehat{C}_z$ can be located. Since $x$ is in $[\widehat{y}\,', \widehat{z}]$, we cannot have $\widehat{z} \in (z_i, x)$, as this would contradict the maximality of greedy step from $y_i$ to $z_i$. Likewise we cannot have $\widehat{y}\,' \in (x, y_i)$, again due to maximality. The possible placements of $\widehat{z}$ and $\widehat{y}$, along with the possible placements of $\widehat{C}_z$ and $\widehat{C}_y$ are marked on Figure~\ref{fig:place_z_and_y_hat}. 

    \begin{figure}[ht]
        \centering

        \begin{tikzpicture}
            \begin{scope}
                \filldraw[extra] (1.7, 1.8) -- (1.8, 2.4) -- (2.08, 3.09) -- (1.88, 4.26) -- (3.5, 4.4) -- (4, 2.7) -- (3.8, 4.2) --  (4.5, 4.5) -- (4.79, 3.07) -- (5,2) -- (4,1.5) -- (3,1.5) -- (2,1) -- (1,1.2) -- (0,1) -- (0,3.06) -- cycle;
                
                \draw (0,3.06) -- (0,4);
                \draw[color=black] (0.86, 4.05) -- (1.51, 2.76) -- (1, 2.525) -- (1.8, 2.4);
                \draw[color = extracolor] (1.8, 2.4) -- (2.08, 3.09) -- (1.88, 4.26);
                \draw[color=extracolor] (0,3.06) -- (0,1) -- (1,1.2) -- (2,1) -- (3,1.5) -- (4,1.5) -- (5,2) -- (4.5, 4.5);
    
                \draw[color = extracolor, dashed] (0,3.06) -- (1.7, 1.8) -- (1.8, 2.4);
    
                \draw[color=extracolor] (3.5, 4.4) -- (4, 2.7) -- (3.8, 4.2);
    
                \filldraw[color = black] (4.77, 3.17) circle (1.5pt);
                \draw[color = black, right] (4.77, 3.17) node {$y_i$};  
                \filldraw[color = black] (4.7, 3.48) circle (1.5pt);
                \draw[color = black] (4.7, 3.48) node[above right=-2pt] {$y'_i$};
    
                \filldraw[color = black] (0,3.06) circle (1.5pt);
                \draw[color = black, left] (0,3.06) node {$z_i$};
    
                \filldraw (1.8,2.4) circle (1.5pt);
                \draw (1.8,2.4) node[right] {$x$};
    
                \filldraw[guard] (1.7, 1.8) circle (1.5pt);
                \draw[guard] (1.7, 1.8) node[below] {$\widehat{g}$};
                \filldraw (0,1) circle (1.5pt);
                \draw[left] (0,1) node {$z$};
            \end{scope}
    
            \begin{scope}[xshift=7cm]
                \filldraw[extra] (1.7, 1.8) --  (4.77, 3.17) -- (5,2) -- (4,1.5) -- (3,1.5) -- (2,1) -- (1,1.2) -- (0,1) -- (0,4) -- (0.86, 4.05) -- (1.51, 2.76) -- (1, 2.525) -- (1.8, 2.4) -- cycle;
                
                \draw (4.5,4.5) --  (4.77, 3.17);
                \draw[color=extracolor] (0.86, 4.05) -- (1.51, 2.76) -- (1, 2.525) -- (1.8, 2.4);
                \draw[color = black] (1.8,2.4) -- (2.08, 3.09) -- (1.88, 4.26);
                \draw[color=extracolor] (0,4) -- (0,1) -- (1,1.2) -- (2,1) -- (3,1.5) -- (4,1.5) -- (5,2) --  (4.77, 3.17);
    
                \draw[color = extracolor, dashed]  (4.77, 3.17) -- (1.7, 1.8) -- (1.8, 2.4);
    
                \draw[color=black] (3.5, 4.4) -- (4, 2.75) -- (3.8, 4.2);
    
                \filldraw[color = black] (4.77, 3.17) circle (1.5pt);
                \draw[color = black, right] (4.77, 3.17) node {$y_i$};   
                \filldraw[color = black] (4.7, 3.48) circle (1.5pt);
                \draw[color = black] (4.7, 3.48) node[above right=-2pt] {$y'_i$};
    
                \filldraw[color = black] (0,2.92) circle (1.5pt);
                \draw[color = black, left] (0,2.92) node {$z_i$};
                
                \filldraw (1.8,2.4) circle (1.5pt);
                \draw (1.8,2.4) node[right] {$x$};
                
                \filldraw[guard] (1.7, 1.8) circle (1.5pt);
                \draw[guard] (1.7, 1.8) node[below] {$\widehat{g}$};
                \filldraw (0,1) circle (1.5pt);
                \draw[left] (0,1) node {$z$};
            \end{scope}
        \end{tikzpicture}
        
        \caption{Left, $\widehat{z}$ must lie in $[x,z_i]$ and $\widehat{C}_z$ in the polygon defined by $[x,z_i]$, $\linesegment{z_i}{\widehat{g}}$ and $\linesegment{\widehat{g}}{x}$. Right, analogously for $\widehat{y}$ and $\widehat{C}_y$.}
        \label{fig:place_z_and_y_hat}
    \end{figure}

    \begin{case}
        If $\widehat{z} \in (y_i', y_i]$, then $\widehat{z}$ will be in the same local greedy sequence as $y_i$, hence $\widehat{z} = y_{i+1}$, but now $g_i$ sees $[y_{i+1}, z_i]$, thus we must repeat endpoints and the greedy sequence repeats, which is one of the desired conditions.
    \end{case}
    
    \begin{case}
        If $\widehat{z} \in (y_i, z_i]$, then $\widehat{z}$ must be in the same local greedy sequence as $z_i$, i.e. $\widehat{z} = z_{i+1}$, implying that $\widehat{y}$ must be in the same local greedy sequence as $y_i$, so $\widehat{y} = y_{i+1}$, but $\widehat{y}$ is in $[y_i', C_{z,i}]$, so $y_{i+1}$ either jumps edges, has a positive fingerprint or $z_{i+1} = z_i$, a repetition.
    \end{case}
    
    \begin{case}
        Now $\widehat{z} \in [x, y_i']$. Consider the subcases where $g_i$ lies above or below $L(y_i', \widehat{g})$: If $g_i$ lies above, $\widehat{C}_z$ must lie above $L(y_i', \widehat{g})$, since we otherwise would have the ray $\ray{\widehat{g}}{\widehat{C}_z}$ contained in $Q_i$ (until it hits $\partial P$), since $\widehat{g} \in Q_i$, making $\widehat{z} \in (y_i', z_i]$. If $g_i$ lies below $L(y_i',\widehat{g})$, we must have $\widehat{C}_z$ above $L(y_i', g_i)$ and $L(g_i, \widehat{g})$ for the same reason (see Figure~\ref{fig:place_Cz_hat_depending_on_G_i}).     
    
        \begin{figure}[ht]
            \centering

            \begin{tikzpicture}
                \begin{scope}
                    \filldraw[extra] (1.67,1.88) --  (1.8, 2.4) -- (2.08, 3.09) -- (1.88, 4.26) --  (3.5, 4.4) -- (4, 2.95) -- (3.8, 4.2) --  (4.5, 4.5) -- (4.7, 3.48) -- (4.75, 3.24) -- cycle;
                    
                    \draw (0,2.92) -- (0,4);
                    \draw[color=black] (0,2.92) -- (0,1) -- (1,1.2) -- (2,1) -- (3,1.5) -- (4,1.5) -- (5,2) -- (4.75, 3.24);
                    \draw[extracolor] (4.5, 4.5) -- (4.75, 3.24);
                    \draw[color = extracolor, dashed] (4.75, 3.24)  -- (1.67,1.88) -- (1.8,2.4);
        
                    \draw[color=black] (0.86, 4.05) -- (1.51, 2.76) -- (1, 2.525) -- (1.8, 2.4);
                    \draw[color = extracolor] (1.8, 2.4) -- (2.08, 3.09) -- (1.88, 4.26);
                    
                    \draw[color=extracolor] (3.5, 4.4) -- (4, 2.95) -- (3.8, 4.2);
        
                    \filldraw[color = black] (4.83, 2.84) circle (1.5pt);
                    \draw[color = black, right] (4.83, 2.84) node {$y_i$};   
                    \filldraw[color = black] (4.75, 3.24) circle (1.5pt);
                    \draw[color = black, right] (4.75, 3.44) node {$y'_i$};
        
                    \filldraw[color = black] (0,2.89) circle (1.5pt);
                    \draw[color = black, left] (0,2.89) node {$z_i$};
                
                    \filldraw[color = guard] (1.98, 2.17) circle (1.5pt);
                    \draw[color = guard, right] (1.88, 2.37) node {$g_i$};
                    \filldraw (1.8,2.4) circle (1.5pt);
                    \draw (1.7,2.4) node[above] {$x$};
        
                    \filldraw[guard] (1.67,1.88) circle (1.5pt);
                    \draw[guard] (1.67,1.88) node[left] {$\widehat{g}$};
                    \filldraw (0,1) circle (1.5pt);
                    \draw[left] (0,1) node {$z$};
                \end{scope}
        
                \begin{scope}[xshift=7cm]
                    \filldraw[extra] (1.7, 1.8) --  (1.8, 2.4) -- (2.08, 3.09) -- (1.88, 4.26) --  (3.5, 4.4) -- (4, 2.9) -- (3.8, 4.2) --  (4.5, 4.5) -- (4.7, 3.48) -- (2.43, 1.76) -- cycle;
                    
                    \draw (0,2.92) -- (0,4);
                    \draw[color=black] (0,2.92) -- (0,1) -- (1,1.2) -- (2,1) -- (3,1.5) -- (4,1.5) -- (5,2) -- (4.7, 3.48);
                    \draw[extracolor] (4.5, 4.5) -- (4.7, 3.48);
                    \draw[color = extracolor, dashed] (4.7, 3.48) -- (2.43, 1.76) -- (1.7, 1.8) -- (1.8,2.4);
        
                    \draw[color=black] (0.86, 4.05) -- (1.51, 2.76) -- (1, 2.525) -- (1.8, 2.4);
                    \draw[color = extracolor] (1.8, 2.4) -- (2.08, 3.09) -- (1.88, 4.26);
                    
                    \draw[color=extracolor] (3.5, 4.4) -- (4, 2.95) -- (3.8, 4.2);
        
                    \filldraw[color = black] (4.77, 3.17) circle (1.5pt);
                    \draw[color = black, right] (4.77, 3.17) node {$y_i$};   
                    \filldraw[color = black] (4.7, 3.48) circle (1.5pt);
                    \draw[color = black] (4.7, 3.48) node[above right=-2pt] {$y'_i$};
        
                    \filldraw[color = black] (0,3.06) circle (1.5pt);
                    \draw[color = black, left] (0,3.06) node {$z_i$};
                
                    \filldraw[color = guard] (2.43, 1.76) circle (1.5pt);
                    \draw[color = guard, below] (2.43, 1.76) node {$g_i$};
                    \filldraw (1.8,2.4) circle (1.5pt);
                    \draw (1.8,2.4) node[right] {$x$};
        
                    \filldraw[guard] (1.7, 1.8) circle (1.5pt);
                    \draw[guard] (1.7, 1.8) node[left] {$\widehat{g}$};
                    \filldraw (0,1) circle (1.5pt);
                    \draw[left] (0,1) node {$z$};
                \end{scope}
            \end{tikzpicture}
            
            \caption{Further restrictions to Figure~\ref{fig:place_z_and_y_hat}. On the left, the possible placement of $\widehat{z}$ and $\widehat{C}_z$ are marked in red, when $g_i$ lies above $\lineextension{y_i'}{\widehat{g}}$. On the right, the possible placements when $g_i$ lies below $\lineextension{y_i'}{\widehat{g}}$}
            \label{fig:place_Cz_hat_depending_on_G_i}
        \end{figure}
    
        We now look at where we can place $\widehat{C}_y$ so that $\widehat{g}$ lies below $\widehat{\ell}$ in both of these cases:
        
        \begin{subcase}[$g_i$ above $L(y'_i, \widehat{g})$]
            If $g_i$ lies above $L(y'_i, \widehat{g})$, we must place $\widehat{C}_y$ below $L(y_i', \widehat{g})$ for $\widehat{\ell}$ to lie above $\widehat{g}$. Combined with the previous restrictions, we see on Figure~\ref{fig:G_i_over_L_where_to_put_Cy}, the possible placements for $\widehat{C}_y$.
        
            \begin{figure}
                \centering
                
                \begin{tikzpicture}
                    \filldraw[extra] (1.67,1.88) --  (0,1.15) -- (0,1) -- (1,1.2) -- (2,1) -- (3,1.5) -- (4,1.5) -- (5,2) -- (4.83, 2.84) -- cycle;
                    
                    \draw (0,1.15) -- (0,4);
                    \draw[color=extracolor] (0,1.15) -- (0,1) -- (1,1.2) -- (2,1) -- (3,1.5) -- (4,1.5) -- (5,2) -- (4.83, 2.84);
                    \draw[black] (4.5, 4.5) -- (4.83, 2.84);
                    \draw[color = extracolor, dashed] (4.83, 2.84)  -- (1.67,1.88) -- (0,1.15);
                    \draw[color = extracolor, loosely dashdotted] (1.67,1.88) -- (4.75, 3.24);
            
                    \draw[color=black] (0.86, 4.05) -- (1.51, 2.76) -- (1, 2.525) -- (1.8, 2.4) -- (2.08, 3.09) -- (1.88, 4.26);
                    
                    \draw[color=black] (3.5, 4.4) -- (4, 2.95) -- (3.8, 4.2);
            
                    \filldraw[color = black] (4.83, 2.84) circle (1.5pt);
                    \draw[color = black, right] (4.83, 2.84) node {$y_i$};  
                    \filldraw[color = black] (4.75, 3.24) circle (1.5pt);
                    \draw[color = black, right] (4.75, 3.44) node {$y'_i$};
            
                    \filldraw[color = black] (0,2.89) circle (1.5pt);
                    \draw[color = black, left] (0,2.89) node {$z_i$};
                
                    \filldraw[color = guard] (1.98, 2.17) circle (1.5pt);
                    \draw[color = guard, right] (1.88, 2.37) node {$g_i$};
            
                    \filldraw[guard] (1.67,1.88) circle (1.5pt);
                    \draw[guard] (1.67,1.88) node[above] {$\widehat{g}$};
                    
                    \filldraw (0,1) circle (1.5pt);
                    \draw[left] (0,1) node {$z$};
                \end{tikzpicture}
                \caption{When $g_i$ is over $\lineextension{y_i'}{\widehat{g}}$ then $\widehat{y}$ and $\widehat{C}_y$ are restricted by $y_i$ and $\lineextension{\widehat{g}}{y_i'}$} \label{fig:G_i_over_L_where_to_put_Cy}
            \end{figure}
            
            Now the ray $\ray{\widehat{g}}{\widehat{C}_y}$ will be contained in $Q_i$, so $\widehat{y}\,' \in [y_i, z_i]$. This means, that $\widehat{y}$ will be in the same local greedy sequence as $z_i$, thus $\widehat{y} = z_i$. If $\widehat{y}\,' \in [y_i, z]$, either $z_{i+1} \in [y_i, \widehat{y}\,']$ and we get an edge jump, or $z_{i+1} \in [\widehat{y}\,', z_i]$ and we will repeat since $\Gsc(z_i) = \Gsc(z_{i+1})$, both are terminal conditions of the proposition. Thus, it remains to analyze the case $\widehat{y}\,' \in (z, z_i]$. 
        
            We have $y_i'$ above $\ell_i$ and $\widehat{g}$ below $\ell_i$. Thus $\widehat{y}\,'$ will also lie below $\ell_i$. If $z_{i+1} \in [\widehat{y}\,', z_i]$ we will again get a repetition, so we need $z_{i+1}\in [z, \widehat{y}]$, and $z_{i+1}$ is below $\ell_i$, which is the third terminal condition of the proposition.
        \end{subcase}
    
        \begin{subcase}[$g_i$ below $L(y'_i, \widehat{g})$]
            If $g_i$ lies below $L(y_i', \widehat{g})$ we again consider where to place $\widehat{C}_y$. To the right of $\widehat{g}$, it must be below $L(y_i, \widehat{g})$ by the previous arguments. To the left of $\widehat{g}$ it must be below $L(g_i, \widehat{g})$ for $\widehat{g}$ to lie below $\widehat{\ell}$ (see Figure~\ref{fig:G_i_below_L_where_to_put_Cy}).
    
            \begin{figure}[ht]
                \centering
                
                \begin{tikzpicture}
                    \filldraw[extra] (1.7, 1.8) -- (0,1.88) -- (0,1) -- (1,1.2) -- (2,1) -- (3,1.5) -- (4,1.5) -- (5,2) -- (4.77, 3.17) -- cycle;
                        
                    \draw (0,1.88) -- (0,4);
                    \draw[color=extracolor] (0,1.88) -- (0,1) -- (1,1.2) -- (2,1) -- (3,1.5) -- (4,1.5) -- (5,2) -- (4.77, 3.17);
                    \draw[black] (4.5, 4.5) -- (4.77, 3.17);
                    \draw[color = extracolor, dashed] (0,1.88) -- (1.7,1.8) -- (4.77, 3.17);
                
                    \draw[color = extracolor, dashdotted] (2.43, 1.76) -- (1.7, 1.8);
                
                    \draw[color=black] (0.86, 4.05) -- (1.51, 2.76) -- (1, 2.525) -- (1.8, 2.4) -- (2.08, 3.09) -- (1.88, 4.26);
                    
                    \draw[color=black] (3.5, 4.4) -- (4, 2.95) -- (3.8, 4.2);
                
                    \filldraw[color = black] (4.77, 3.17) circle (1.5pt);
                    \draw[color = black, right] (4.77, 3.17) node {$y_i$};  
                    \filldraw[color = black] (4.7, 3.48) circle (1.5pt);
                    \draw[color = black] (4.7, 3.48) node[above right=-2pt] {$y'_i$};
                
                    \filldraw[color = black] (0,3.06) circle (1.5pt);
                    \draw[color = black, left] (0,3.06) node {$z_i$};
                    
                    \filldraw[color = guard] (2.43, 1.76) circle (1.5pt);
                    \draw[color = guard, below] (2.43, 1.76) node {$g_i$};
                    
                    \filldraw[guard] (1.7, 1.8) circle (1.5pt);
                    \draw[guard] (1.7, 1.8) node[below] {$\widehat{g}$};
                
                    \filldraw (0,1.88) circle (1.5pt);
                    \draw (0,1.88) node[left] {$z_{\max}$};
                    \filldraw (0,1) circle (1.5pt);
                    \draw[left] (0,1) node {$z$};
                \end{tikzpicture}
                
                \caption{Case where $g_i$ is below $\lineextension{y_i'}{\widehat{g}}$: Where to put $\widehat{y}$ and $\widehat{C}_y$}
                \label{fig:G_i_below_L_where_to_put_Cy}
            \end{figure}
            
            Since $\widehat{g}$ lies in $Q_i$, we must have that the ray $\ray{\widehat{g}}{g_i}$ is contained in $Q_i$. Let the intersection of this ray with $l_z$ be $z_{\max}$. Thus $\widehat{y}\,' \in [y_i, z_{\max}]$ and we again have $\widehat{y} = z_i$, likewise, if $\widehat{y}\,' \in [y_i, z]$ we will get an edge jump or repetition as in case 3.1. So we assume $\widehat{y} \in (z, z_i]$.
            
            Now we must have $\widehat{C}_y$ inside the triangle $\triangle \widehat{g}z_{\max} z$, which is contained in $Q_i$. Thus $\widehat{C}_y \in [y_i, z_i]$. Since $\widehat{C}_y$ lies below $L(g_i, \widehat{g})$ and $\widehat{y}\,'$ lies on the ray $\widehat{g}\widehat{C}_y$, we must have that $\widehat{y}\,'$ lies below $L(g_i, \widehat{g})$ and since $\widehat{C}_y \in \linesegment{\widehat{g}}{\widehat{y}\,'}$, $\widehat{C}_y$ must lie above $L(g_i, \widehat{y}\,')$. It is also clear, that $z$ must lie below $L(g_i, \widehat{y}\,')$, but $\widehat{C}_y$ and $z$ are connected by edges in $\partial P$. This now means, that $g_i$ cannot see $\widehat{y}\,'$, contradicting the fact that $g_i$ is a guard that sees $[y_i, z_i]$ (see Figure~\ref{fig:Cy_blocks_Gi_view}). 
     
            \begin{figure}[ht]
                \centering
                
                \begin{tikzpicture}[scale = 2]
                    \filldraw[extra] (1,1) -- (1.7, 1.95) -- (1, 2.13) -- cycle;
                    \draw[extracolor, dashed] (1,1) -- (1.7, 1.95) -- (1, 2.13);
                    \draw[extracolor, loosely dashdotted] (1.7, 1.95) -- (2.82, 1.72);
                    \draw (1,1) -- (1,3);
                    \draw[dashed] (2.82, 1.72) -- (1, 1.56);
                    
                    \filldraw (1,1) circle (0.75pt);
                    \draw[left] (1,1) node {$z$};
                    \draw[dashed] (1.7, 1.95) -- (1, 1.56);
            
                    \filldraw[black] (1, 1.56) circle (0.75pt);
                    \draw[left] (1, 1.56) node {$\widehat{y}\,'$};
            
                    \filldraw (1.28, 1.72) circle (0.75pt);
                    \draw[above] (1.28, 1.72) node {$\widehat{C}_y$};
            
                    \filldraw[guard] (2.82, 1.72) circle (0.75pt);
                    \draw[guard, below] (2.82, 1.72) node {$g_i$};
            
                    \filldraw[guard] (1.7, 1.95) circle (0.75pt);
                    \draw[above, guard] (1.7, 1.95) node {$\widehat{g}$};
            
                    \filldraw (1, 2.13) circle (0.75pt);
                    \draw[left] (1, 2.13) node {$z_{\max}$};
            
                    \filldraw (1,3) circle (0.75pt);
                    \draw[left] (1,3) node {$z_i$};
                    
                \end{tikzpicture}
                
                \caption{$\widehat{C}_y$ will block $g_i$'s view to $\widehat{y}\,'$}
                \label{fig:Cy_blocks_Gi_view}
            \end{figure}
        \end{subcase}
    \end{case}

    Considering all the above cases, we show that when $\widehat{g}$ lies below $\widehat{\ell}$ it implies either an edge jump, a repetition, a positive finger print or $z_{i+1}$ being below $\ell_i$, as wanted.
\end{proof}

We now have all the tools needed to prove Theorem~\ref{thm:repeatedgreedy_will_jump_edges_or_repeat}, which was the goal of this section. 

\thmJumpEdgesOrRepeat*

\begin{proof}
    Assume for contradiction that we see no progress conditions in the first $\Oh{kn^2}$ revolutions of \RepeatedGreedy. 
    Let $(x^m_i)_{i = 0}^\infty$ for $m = 1, \dots, k$ be the local greedy sequences. 
    All these are contained on their respective edges (since we see no edge jumps).
    Let $F^m$ be the feasible region for the vertices of $\partial P$ in the interval $[x_{m-1}, x_m]$.     
    Furthermore let $\ell^m_i$ be the pivot line for the pair $(x_i^{m-1}, x_i^m)$. 
    Consider the choice of $\ell^1_i$ for $i = 1,\dots,\Oh{k n^2}$ when \RepeatedGreedy makes $\Oh{k n^2}$ revolutions. 
    We assume for contradiction that we see no progress conditions and consider the following possible cases:

    \begin{case}
        If $F^1$ lies strictly above $\ell_i^1$, we cannot repeat $\ell_i^1$, as this leads to a progress condition (Proposition~\ref{prop:F_over_l_gives_repetition_or_edge_jumping}). 
        Thus this can occur at most $\Oh{n^2}$ times (i.e. once for each possible pivot line).
    \end{case}
    
    \begin{case}
        If $F^1$ lies at or below $\ell_i^1$ we consider whether it is the first time we have seen this pivot line or not:

        \begin{subcase}
            If it is the first time $\ell_i^1$ is a pivot line (i.e. $\ell_i^1 \neq \ell_j^1$ for all $j < i$), we cannot deduce anything. 
            However, this can happen at most $\Oh{n^2}$ times (i.e. once for each possible pivot line).
        \end{subcase}
        
        \begin{subcase}
            If it is not the first time, Lemma~\ref{lem:yj_and_zj_above_li} states that $x_{i-1}^k$ and $x_i^1$ lie above $\ell_i^1$. 
            Now the conditions of Proposition~\ref{prop:guards_below_pivots_implies_other_guards_above_pivots} are satisfied, so we either obtain a progress condition within one revolution, $x_{i+1}^1$ lies below $\ell_i^1$, or some guard $g_i^m$ that lies above $\ell_i^m$. We consider the last two cases.

            \begin{subsubcase}
                If $x_{i+1}^1$ lies below $\ell_i$, we cannot use $\ell_i$ again as this would break Lemma~\ref{lem:yj_and_zj_above_li}. Thus this case can happen at most $\Oh{n^2}$ times.
            \end{subsubcase}

            We have shown, that at most $\Oh{n^2}$ of the $\Oh{kn^2}$ $i$'s can fall into the above cases, hence we still have $\Oh{kn^2}$ $i$'s left for this final case:

            \begin{subsubcase}
                Finally some other $g_i^m$ lies above its pivot line $\ell_i^m$. 
                Since there are $k-1$ other feasible regions, and each of these has $\Oh{n^2}$ pivot lines, at some point, we will have chosen the same feasible region with the same pivot line pair twice, by the pigeonhole principle. Since the guard is above this pivot line, the entire feasible region will also be above by contraposition of Lemma~\ref{lem:better_to_have_Gi_below_pivot_line}. 
                Now we have seen the same pivot line twice with the same feasible region strictly above it, so Proposition~\ref{prop:F_over_l_gives_repetition_or_edge_jumping} yields a progress condition.
                \popQED
            \end{subsubcase}
        \end{subcase}
    \end{case}
\end{proof}

\subsection{Proof of main theorem}
\label{sec:Combining_it_all}

Finally, we prove Theorem~\ref{thm:CAG_is_in_P}:

\thmCAGIsInP*

\begin{proof}
    Running \RepeatedGreedy for $\Oh{k n^2} = \Oh{\opt n^2}$ revolutions guarantees a progress condition i.e. a positive fingerprint, a repetition, or an edge jump by Theorem~\ref{thm:repeatedgreedy_will_jump_edges_or_repeat}.
    Thus, repeating $n+1$ times guarantees at least one positive fingerprint, at least one repetition, or $n+1$ edge jumps. 
    A positive fingerprint or repeatetition implies optimality by Corollary~\ref{cor:positive_fingerprint_is_optimal} and \ref{cor:periodicity_implies_optimal}. 
    If we have $n+1$ edge jumps, we are optimal by Corollary~\ref{cor:too_many_edge_jumps_will_force_greedy_sequence_to_be_optimal}.
    Now after $\Oh{\opt n^3}$ revolutions \RepeatedGreedy will find an optimal endpoint, and by Corollary~\ref{cor:Once_optimal_always_optimal} output an optimal solution. 
    The algorithm makes a revolution in $\Oh{n^2 \log n}$ arithmetic operations by Corollary~\ref{cor:greedy_traverses_P_in_n3_log_n}, the total number of arithmetic operations is $\Oh{\opt n^5 \log n}$. 
\end{proof}

\section{Bit Complexity of \RepeatedGreedy}
\label{sec:bit_complexity}

In this section we show how to bound the bit complexity of the points calculated by the \Greedy algorithm, to then bound the overall bit complexity of the arithmetic operations of both \Greedy and \RepeatedGreedy.
We define the bit complexity as follows, following the definition of Gr{\"{o}}tschel, Lov{\'{a}}sz and Schrijver~\cite{bit_complexity_book}.

\newcommand{\com}[1]{\left\langle #1 \right\rangle}
\newcommand{\maxcom}[2]{\max \{ \com{#1}, \com{#2} \}}
\begin{definition}[Bit complexity]
\label{def:bit_complexity}
    Let $\com{v}$ denote the bit complexity of value $v$.
    \begin{itemize}
        \item If $v$ is an integer, then $\com{v} = 1 + \lceil \log_2 (|v| + 1) \rceil$, i.e. the number of bits used to encode $v$, including the sign bit.
        \item If $v$ is a fraction $\frac{v^n}{v^d}$, then $\com{v} = \com{v^n} + \com{v^d}$.
        \item If $v$ is a point $(v_x, v_y)$, then $\com{v} = \com{v_x} + \com{v_y}$.
    \end{itemize}
\end{definition}

Note that $\com{v}$ denotes the number of bits used to represent the value $v$ up to a constant factor, as it does not include bits used to separate, i.e., the numerator and denominator of a fraction.
The notation is stronger than $\mathcal{O}$-notation, as it does not allow to hide a possible constant, to ensure that the hidden constant in the $\mathcal{O}$-notation truly is a constant.
In the following lemmas, we will need the following rewriting properties of bit complexity on integers.

\begin{lemma}[Properties of bit complexity]
\label{lem:integer_bit_complexity_prop}
    Let $a$ and $b$ be integers. Then the following holds:
    \begin{itemize}
        \item $\com{a \cdot b} \le \com{a} + \com{b}$.
        \item $\com{a + b} \le 1 + \max \{ \com{a}, \com{b} \}$.
    \end{itemize}
\end{lemma}
\begin{proof}
    Follows from Definition~\ref{def:bit_complexity} and logarithm rules.
\end{proof}

In this section we shall assume, that the input polygon $P$ is encoded in $N$ bits as a list of points, i.e., $4n$ integers encoded in binary. The polygon may be encoded using fewer bits by a different clever encoding, however, as the main goal is to show that \cagp is in the complexity class \textsc{P}, this simple encoding will suffice.

To bound the complexity of the points produced by \Greedy, we need the notion of a \emph{scalar point}, which is defined as a point using two vertices of the input polygon and a fractional scalar, which denotes at what fractional point along the line defied by the two vertices the scalar point is located. Note that the scalar point does not need to be on the segment between the two points that defines it.
Crucially, the intersection between two lines defined from three input vertices and a scalar point is a scalar point, as shown in the following lemma, which also bounds the bit complexity of the new scalar point as a function of the old.

\begin{lemma}[Bit complexity of segment intersection]
\label{lem:segment_intersection_complexity}
    Let $N$ be the number of bits used to represent the input polygon.
    Let $V$, $W$, $B$, $C$, and $D$ be input vertices, with $V = \left( \frac{V_x^n}{V_x^d}, \frac{V_y^n}{V_y^d} \right)$, and similarly for the other input vertices, where all values $v_x^n,v_x^d,v_y^n,v_y^d$ of the points are integers.
    Let $s$ be a scalar $\frac{s^n}{s^d}$, with $s^n$ and $s^d$ being integers, where $\com{s}$ may be of arbitrary complexity. Let $A$ be the scalar point $s V + (1 - s) W$.
    Assume $\lineextension{A}{B}$ and $\lineextension{C}{D}$ are non-parallel lines and let the point $E$ be their intersection.
    Then there exists a scalar $t = \frac{t^n}{t^d}$, such that the intersection $E$ is a scalar point of the form $t C + (1 - t) D$, where $\maxcom{t^n}{t^d} \le \maxcom{s^n}{s^d} + \Oh{N}$.
\end{lemma}
\begin{proof}
    \begin{figure}
        \centering
        
        \tikzstyle{point} = [circle, fill=blue, inner sep=0pt, minimum size=1ex]
        \begin{tikzpicture}[scale=0.6]
            \node[point, label={above left:$V$}] (V) at (0, 0) {};
            \node[point, label={right:$W$}] (W) at (6, -1) {};
            \draw[dashed] (V) -- (W);
            
            \node[point, fill=red, label={3:$A = s V + (1 - s) W$}] (A) at (2, -0.333) {};
            \node[point, label={above right:$B = \left( \frac{B_x^n}{B_x^d}, \frac{B_y^n}{B_y^d} \right)$}] (B) at (4.5, 7) {};
            \node[point, label={above left:$C$}] (C) at (-1, 3.5) {};
            \node[point, label={above right:$D$}] (D) at (5.5, 4.5) {};
            \draw (A) -- (B);
            \draw (C) -- (D);

            \node[point, fill=red, label={280:$E = t C + (1 - t) D$}] (E) at (3.55, 4.2) {};
        \end{tikzpicture}
        
        \caption{%
        Setup of the intersection.
        Blue points denote input vertices and the red points denote scalar points.
        Note that it is not a requirement that the segments $\linesegment{A}{B}$ and $\linesegment{C}{D}$ intersect, only that their lines intersect.}
        \label{fig:line_segment_intersection_gadget}
    \end{figure}

    The setup is illustrated in Figure~\ref{fig:line_segment_intersection_gadget}.
    Using a formula for the intersection of the lines defined by the segments $\linesegment{A}{B}$ and $\linesegment{C}{D}$ as stated by Goldman~\cite{GOLDMAN1990}, the values of $t^n$ and $t^d$ can be expressed as follows:
    \begin{align*}
        t^n &=& s^n (&
            + B_x^d B_y^d C_x^d C_y^d D_x^d D_y^n V_x^n V_y^d W_x^d W_y^d
            + B_x^d B_y^d C_x^d C_y^d D_x^n D_y^d V_x^d V_y^d W_x^d W_y^n
        \\ & & &
            + B_x^d B_y^n C_x^d C_y^d D_x^d D_y^d V_x^d V_y^d W_x^n W_y^d
            + B_x^n B_y^d C_x^d C_y^d D_x^d D_y^d V_x^d V_y^n W_x^d W_y^d
        \\ & & &
            - B_x^d B_y^d C_x^d C_y^d D_x^d D_y^n V_x^d V_y^d W_x^n W_y^d
            - B_x^d B_y^d C_x^d C_y^d D_x^n D_y^d V_x^d V_y^n W_x^d W_y^d
        \\ & & &
            - B_x^d B_y^n C_x^d C_y^d D_x^d D_y^d V_x^n V_y^d W_x^d W_y^d
            - B_x^n B_y^d C_x^d C_y^d D_x^d D_y^d V_x^d V_y^d W_x^d W_y^n
        ) \\
        & & + s^d (&
            + B_x^d B_y^d C_x^d C_y^d D_x^d D_y^n V_x^d V_y^d W_x^n W_y^d
            + B_x^d B_y^n C_x^d C_y^d D_x^n D_y^d V_x^d V_y^d W_x^d W_y^d
        \\ & & &
            + B_x^n B_y^d C_x^d C_y^d D_x^d D_y^d V_x^d V_y^d W_x^d W_y^n
            - B_x^d B_y^d C_x^d C_y^d D_x^n D_y^d V_x^d V_y^d W_x^d W_y^n
        \\ & & &
            - B_x^d B_y^n C_x^d C_y^d D_x^d D_y^d V_x^d V_y^d W_x^n W_y^d
            - B_x^n B_y^d C_x^d C_y^d D_x^d D_y^n V_x^d V_y^d W_x^d W_y^d
        ) \\ \\
        t^d &=& s^n (&
            + B_x^d B_y^d C_x^d C_y^d D_x^d D_y^n V_x^n V_y^d W_x^d W_y^d
            + B_x^d B_y^d C_x^d C_y^d D_x^n D_y^d V_x^d V_y^d W_x^d W_y^n
        \\ & & &
            + B_x^d B_y^d C_x^d C_y^n D_x^d D_y^d V_x^d V_y^d W_x^n W_y^d
            + B_x^d B_y^d C_x^n C_y^d D_x^d D_y^d V_x^d V_y^n W_x^d W_y^d
        \\ & & &
            - B_x^d B_y^d C_x^d C_y^d D_x^d D_y^n V_x^d V_y^d W_x^n W_y^d
            - B_x^d B_y^d C_x^d C_y^d D_x^n D_y^d V_x^d V_y^n W_x^d W_y^d
        \\ & & &
            - B_x^d B_y^d C_x^d C_y^n D_x^d D_y^d V_x^n V_y^d W_x^d W_y^d
            - B_x^d B_y^d C_x^n C_y^d D_x^d D_y^d V_x^d V_y^d W_x^d W_y^n
        ) \\
        & & + s^d (&
            + B_x^d B_y^d C_x^d C_y^d D_x^d D_y^n V_x^d V_y^d W_x^n W_y^d
            + B_x^d B_y^d C_x^n C_y^d D_x^d D_y^d V_x^d V_y^d W_x^d W_y^n
        \\ & & &
            + B_x^d B_y^n C_x^d C_y^d D_x^n D_y^d V_x^d V_y^d W_x^d W_y^d
            + B_x^n B_y^d C_x^d C_y^n D_x^d D_y^d V_x^d V_y^d W_x^d W_y^d
        \\ & & &
            - B_x^d B_y^d C_x^d C_y^d D_x^n D_y^d V_x^d V_y^d W_x^d W_y^n
            - B_x^d B_y^d C_x^d C_y^n D_x^d D_y^d V_x^d V_y^d W_x^n W_y^d
        \\ & & &
            - B_x^d B_y^n C_x^n C_y^d D_x^d D_y^d V_x^d V_y^d W_x^d W_y^d
            - B_x^n B_y^d C_x^d C_y^d D_x^d D_y^n V_x^d V_y^d W_x^d W_y^d
        )
    \end{align*}
    Following Lemma~\ref{lem:integer_bit_complexity_prop} it holds that $\com{t^n}$ and $\com{t^d}$ both are bounded by $\maxcom{s^n}{s^d} + \Oh{N}$, as the point values of the input vertices are bounded trivially by $N$, and therefore $\maxcom{t^n}{t^d} \le \maxcom{s^n}{s^d} + \Oh{N}$.
\end{proof}

Further, the bit complexity of a scalar point, when computed into a point, is bounded.

\begin{lemma}[Bit complexity of scalar point]
\label{lem:scalar_point_complexity}
    Let $N$ be the number of bits used to represent the input polygon.
    Let $V$ and $W$ be input vertices, with $V = \left( \frac{V_x^n}{V_x^d}, \frac{V_y^n}{V_y^d} \right)$, and similarly for $W$, where each component of the points are integers.
    Let $s$ be a scalar $\frac{s^n}{s^d}$ of arbitrary complexity, with $s^n$ and $s^d$ being integers, and let $A$ be the scalar point $s V + (1 - s) W$.
    Then $\com{A} \le 4 \maxcom{s^n}{s^d} + \Oh{N}$.
\end{lemma}
\begin{proof}
    Let $A = (A_x, A_y)$. It then holds that
    \[ A_X = \frac{V_x^d W_x^n s^d - V_x^d W_x^n s^n + V_x^n W_x^d s^n}{V_x^d W_x^d s^d} \; , \]
    and similarly for $A_y$.
    Applying Definition~\ref{def:bit_complexity} and Lemma~\ref{lem:integer_bit_complexity_prop} immediately concludes the proof.
\end{proof}

Using the lemmas on scalar points and intersections, we can show that when the input point to \Greedy is a scalar point, then the output point is a scalar point, which scalar complexity is bounded by the scalar complexity of the input point.

\begin{lemma}[Bound output bit complexity of \Greedy]
\label{lem:bit_complexity_greedy}
    Let $P$ be a simple polygon encoded in $N$ bits and let $x$ be a scalar point on $\partial P$. Let $V$ and $W$ be vertices of $P$ and $s$ be a scalar $\frac{s^n}{s^d}$, s.t. $x = s V + (1 - s) W$.
    Then there exists vertices $V'$ and $W'$ of $P$ and a scalar $s' = \frac{s'^n}{s'^d}$, s.t. $\Gsc(x) = s' V' + (1 - s') W'$, and it holds that $\maxcom{s'^n}{s'^d} \le \maxcom{s^n}{s^d} + \Oh{N}$.
\end{lemma}
\begin{proof}
    \Greedy proceeds in two phases. First, the feasible region $F$ of $x$ and a maximal number of contiguous edges from $x$ is computed. Next, the guard seeing the furthest along the next edge of the boundary is then a corner of $F$, and $\Gsc(x)$ is found.
    We therefore need to bound the complexity of the corners of $F$ to then bound the final output point.

    The feasible region $F$ is found by the intersection of the visibility polygons from $x$ and some of the input vertices.
    It must therefore hold that each vertex of $F$ is the intersection of lines from the visibility polygons.
    Each line of a visibility polygon is either from a segment of $P$, which is therefore a line from a input vertex to a input vertex, a line from a reflex vertex to a input vertex, which is a input vertex to a input vertex line, or from the point $x$ to a reflex vertex.
    Two lines both containing $x$ must intersect in $x$, and such point is by definition a scalar point between input vertices.
    Any other intersection is then either between four input vertices, or three input vertices and the point $x$.
    By Lemma~\ref{lem:segment_intersection_complexity}, it holds that each corner of the feasibility region $F$ is a scalar point on the form $s'' V'' + (1 - s'') W''$, for some input vertices $V''$ and $W''$, and scalar $s'' = \frac{s''^n}{s''^d}$, and it holds that $\maxcom{s''^n}{s''^d} \le \maxcom{s^n}{s^d} + \Oh{N}$.

    Next, the point $\Gsc(x)$ is found as the intersection of the line segment of some edge of $P$, and the line between a guard placement and a reflex vertex of $P$. As the guard placement is a corner of $F$, then the intersection is between three input vertices and a scalar point. Let the guard point be $s'' V'' + (1 - s'') W''$.
    By Lemma~\ref{lem:segment_intersection_complexity}, then there are vertices $V'$ and $W'$, which is the endpoints of the edge $\Gsc(x)$ is on, and a scalar $s' = \frac{s'^n}{s'^d}$, s.t. $\Gsc(x) = s' V' + (1 - s') W'$, and it holds that $\maxcom{s'^n}{s'^d} \le \maxcom{s''^n}{s''^d} + \Oh{N} \le \maxcom{s^n}{s^d} + \Oh{N}$.
    This therefore concludes the lemma.
\end{proof}

Similarly, it can be shown that the internal arithmetic operations \Greedy performs are bounded by the complexity of the input point.

\begin{lemma}[Bound internal bit complexity of \Greedy]
\label{lem:bit_complexity_internal_greedy}
    Let $P$ be a simple polygon encoded in $N$ bits and let $x$ be a point on $\partial P$. Then the internal arithmetic operations of \Greedy with input $x$ is polynomially bounded in complexity by $N$ and $\com{x}$.
\end{lemma}
\begin{proof}
    Each internal arithmetic computation of \Greedy are computed by a constant size straight line program, which therefore is computeable in time linear in $N$ and $\com{x}$. By Theorem~\ref{thm:running_time_alg_greedy_segment} \Greedy performs polynomially many such computations in $N$, which concludes the lemma.
\end{proof}

We now have the building blocks necessary to show the main theorem.

\thmCAGIsTruelyInP*

\begin{proof}
    \RepeatedGreedy repeatably applies \Greedy to make revolutions around the polygon.
    By Lemma~\ref{lem:bit_complexity_internal_greedy} the internal arithmetic operations of \Greedy are bounded by the complexity of the input point and the polygon. It therefore suffices to argue for the bit complexity of each intermediate point on $\partial P$ where \Greedy is applied to.

    \RepeatedGreedy starts at some point on $\partial P$. Let this starting point be some vertex $V$ of $P$. Let $W$ be any other vertex of $P$. It then holds that the starting point is on the form $sV + (1-s)W$ for scalar $s=1/1$.
    Let $s_t = \frac{s_t^n}{s_t^d}$ be the scalar after $t$ applications of \Greedy. By induction and Lemma~\ref{lem:bit_complexity_greedy} it holds that $\maxcom{s_t^n}{s_t^d} \le \maxcom{s_0^n}{s_0^d} + t \cdot \Oh{N} = \Oh{t N}$.
    By Theorem~\ref{thm:CAG_is_in_P} the number of applications of \Greedy is $\Oh{\opt^2 n^3}$, which bounds $t$.
    As each scalar $s_t$ corresponds to a scalar point is between some input vertices of $P$, then by Lemma~\ref{lem:scalar_point_complexity} it therefore holds that the bit complexity of any point computed by \Greedy on the boundary $\partial P$ is bounded by $\Oh{\opt^2 n^3 \cdot N}$.
    This, by Lemma~\ref{lem:bit_complexity_internal_greedy}, concludes the proof.
\end{proof}

\section{Open problems}
\label{sec:open_problems}

We showed that the \cagp is solvable in $\Oh{\opt n^5 \log n}$ time by bounding the number of revolutions before \RepeatedGreedy finds an optimal solution to $\Oh{\opt n^3}$ and showing that the bit complexity is bounded.
We conjecture that $\Oh{1}$ revolutions are sufficient. 
To provide evidence for this we simulated more than 2.000.000 random art galleries using the provided \Cpp implementation, and in all instances, it found an optimal solution (a repetition even) within $4$ revolutions.
If this conjecture is true, the complexity of \RepeatedGreedy becomes $\Oh{n^2 \log n}$. 
Analyzing the behavior of \RepeatedGreedy on axis-aligned input polygons appears to be a good step towards proving this conjecture.

We leave it as an intriguing open problem to decide whether the \cagp with holes is polynomial-time solvable, in contrast to the other art gallery variants we considered (see Table~\ref{tab:table_of_related_work}).

If an unrestricted guard may cover $h$ intervals ($h > 1$), we believe the problem is \textsc{NP}-hard. 
This is the case for $h=n$ since this is exactly the edge-covering art gallery problem.



\clearpage
\appendix

\section{Vertex restricted contiguous art gallery}
\label{apx:vertex_restricted_variants}

In the following, we show how to efficiently compute an optimal solution to the \cagp, when the problem is vertex restricted, as mentioned in Section~\ref{sec:limitations_of_existing_approaches} and Table~\ref{tab:table_of_related_work}. There are two variants of this restriction, one is when the \emph{intervals are restricted to vertices} and the other is when \emph{guards are restricted to vertices}, which we cover in Appendix~\ref{apx:vertex_restrict_intervals} and \ref{apx:vertex_restrict_guards}, respectively.
We let for the following the input be a simple polygon $P$ containing $n$ vertices.

\subsection{Intervals are restricted to vertices}
\label{apx:vertex_restrict_intervals}

If the contiguous interval seen by a guard is restricted to start and end at a vertex, the problem can be solved as follows.
In a polygon of $n$ vertices, there are $\OhTheta{n^2}$ contiguous intervals of the boundary, that is, from any vertex to any other. However, not all of these may be valid as there may not exist a guard that can see the whole interval.
When all valid intervals have been found, the algorithm of Lee and Lee~\cite{circle_cover_arc_minimization_1984_LEE1984109} can be used to find an optimal solution.
Note that if a valid interval is contained entirely in another valid interval, then it can be pruned by Lemma~\ref{lem:properties_of_Fc}.1, without removing optimality.
For the valid intervals generated, we report only the intervals starting at a vertex and being maximal clockwise.

This is exactly equal to computing \Greedy from a starting vertex and restricting the final point to the last vertex on the computed interval.
An iteration of \Greedy, when the interval contains $e$ edges, runs in $\Oh{e n \log n} = \Oh{n^2 \log n}$ time (Theorem~\ref{thm:running_time_alg_greedy_segment}), leading to total $\Oh{n^3 \log n}$ time used to compute the intervals. Computing the optimal solution of these intervals takes $\Oh{n}$ time, as the intervals are generated in sorted order, which leads to $\Oh{n^3 \log n}$ time in total.

Note that if the interval $[v_i, v_j]$ is computed from vertex $v_i$, then the interval starting at $v_{i + 1}$ must be able to reach at least $v_j$. Recomputing \Greedy for each vertex does not use this fact, and leads to the following optimization.
If the visibility polygon of a vertex can be efficiently removed from the feasible polygon \Greedy used to determine the guard location and therefore also to decide whether an interval is maximal, then recomputation time can be optimized.
The intersection algorithm is capable of intersecting both visibility polygons and feasible regions. Computing the feasible polygon of a set of visibility regions is, therefore, solvable by a simple divide-and-conquer algorithm.

The result of Overmars~\cite{Overmars_static_to_dynamic} show how to convert such static divide-and-conquer algorithms into a dynamic version, allowing for both insertion and deletion of visibility polygons from the set.
The intersection algorithm takes two objects and computes their intersection. Let $m$ be the total number of underlying visibility polygons on which the intersection is computed, with $m_1$ and $m_2$ visibility polygons contained in the two objects to intersect.
The intersection is computed in $\Oh{(n + h) \log n}$ time, with $h$ describing the number of intersections between the two objects, due to an algorithm by Martínez, Rueda, and Feito~\cite{martinez_2009_clipping_MARTINEZ20091177}, see Section~\ref{sec:intersections}. Note that intersections must be either some vertex of an input object intersecting the edge of the other input object, or a proper intersection between edges of the input objects.
Every such proper intersection must lead to a vertex in the output object. The complexity of the output object is $\Oh{n}$ (Lemma~\ref{lem:complexity_of_feasible_region_vertices}), and as there similarly is $\Oh{n}$ and $\Oh{n}$ vertices in the input objects, then the number of total intersections $h \le \Oh{n} + \Oh{n} + \Oh{n} = \Oh{n}$.
The running time of the intersection is therefore $\Oh{n \log n}$. This, by Overmars~\cite[Theorem 3.4]{Overmars_static_to_dynamic}, yields a data structure over $m$ visibility polygons, allowing for both insertions and deletions of visibility polygons in $\Oh{n \log n \log m}$ time, while maintaining the feasible polygon over all current visibility polygons.

The optimized algorithm is then as follows.
Start at any vertex $v_i$, and insert the visibility polygon of this vertex in the data structure that maintains the feasible region polygon. Then repeatably insert the visibility polygon of the next vertex until the feasible polygon becomes empty. Let this lastly inserted vertex be $v_{j+1}$. Then there must exist a guard that can see the interval $[v_i, v_j]$, which is outputted.
Note that on an insertion, the visibility polygon of the vertex must be computed in $\Oh{n}$ time, using the algorithm by Gindy and Avis~\cite{visibility_polygon_1981_ELGINDY1981186}, see Section~\ref{sec:vis_pol}, to then be inserted.
To compute the interval starting at $v_{i+1}$, then $v_i$ is deleted from the data structure, and until the feasible polygon again becomes empty, vertices from $v_{j+2}$ and onward are inserted. This procedure then generates the interval starting at $v_{i+1}$.
This process continues to compute the maximal interval starting at every vertex.
During execution of this procedure, the number of vertices in the data structure is at most $n$, and therefore $m \le n$. Every vertex is inserted and deleted $\Oh{1}$ times.
It takes $n \cdot \Oh{n} = \Oh{n^2}$ total time to compute the visibility polygons of every vertex.
The overall running time is therefore $\Oh{n^2} + n \cdot \Oh{1} \cdot \Oh{n \log n \log m} = \Oh{n^2 \log^2 n}$.
This outputs $n$ maximal intervals. Computing an optimal solution from the intervals takes $\Oh{n}$ time, as the intervals are generated in sorted order, leading to an overall $\Oh{n^2 \log^2 n}$ time to compute an optimal solution.

\subsection{Guards are restricted to vertices}
\label{apx:vertex_restrict_guards}

If the guard locations are restricted to the vertices, the problem can be solved as follows.
If the polygon $P$ has $n$ vertices, then there is only $n$ possible placements of a guard.
Note that the visibility polygon from a point $v$ yields exactly the area that can be seen from $v$. By intersecting the visibility polygon with the boundary $\partial P$, the contiguous intervals of $\partial P$ can be computed. The visibility polygon from a vertex $v_i$ can be computed in $\Oh{n}$ time by the algorithm of Gindy and Avis~\cite{visibility_polygon_1981_ELGINDY1981186}, see Section~\ref{sec:vis_pol}. The intersection with $\partial P$ is computeable in $\Oh{n}$ time.
Each guard may see multiple contiguous intervals of $\partial P$, however, there is at most $\Oh{n}$ intervals for each guard.
In total, all contiguous intervals visible to a guard located in any vertex of $P$ are computeable in $n \cdot \Oh{n} = \Oh{n^2}$ time yielding $\Oh{n^2}$ intervals.
The intervals are not computed in sorted order, so the Lee and Lee~\cite{circle_cover_arc_minimization_1984_LEE1984109} algorithm computes an optimal solution in $\Oh{n^2 \log n}$.

\section{Proof of supporting lemmas}
\label{apx:proof_of_supporting_lemmas}

\subsection{Feasible regions are connected}
\label{apx:feasible_region_connected}

An important property of the feasible region $F$ is that it is connected. To prove this, we show a generalization of Avis and Toussain~\cite[Lemma~1]{visibility_of_polygon_from_an_edge_convex_viewing_lemma_1981_1675729}:

\begin{lemma}[Convex viewing lemma]
\label{lem:convex_viewing_lemma}
    Let $a, b$ and $c$ be points in the interior of $P$ (or on $\partial P)$ such that $a$ sees both $b$ and $c$ and that there is some point $d\in P$ on $\linesegment{b}{c}$, which is not visible from $a$. Then $\partial P$ must intersect $\linesegment{b}{c}$.
\end{lemma}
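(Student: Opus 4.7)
The plan is to prove the Convex Viewing Lemma by contradiction: assume that $\partial P \cap \linesegment{b}{c} = \emptyset$ and show that then $a$ would see every point of $\linesegment{b}{c}$, in particular $d$, contradicting the hypothesis.

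The first step is to observe that if either $b$ or $c$ lies on $\partial P$, then $\partial P \cap \linesegment{b}{c}$ already contains that endpoint and the conclusion holds trivially. So I would reduce to the case $b, c \in \mathrm{int}(P)$. Under the contradiction hypothesis, $\linesegment{b}{c}$ is a connected subset of $P$ that meets $\mathrm{int}(P)$ but avoids $\partial P$, hence lies entirely in $\mathrm{int}(P)$.

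The heart of the argument is the triangle $T := \tri{abc}$. Its boundary decomposes as $\partial T = \linesegment{a}{b} \cup \linesegment{b}{c} \cup \linesegment{c}{a}$. The two slanted sides lie in $P$ because $a$ sees $b$ and $a$ sees $c$, and the bottom side lies in $\mathrm{int}(P)$ by the previous paragraph; thus $\partial T \subseteq P$. I would then argue $T \subseteq P$: suppose some $q \in T$ lay in the open set $\R^2 \setminus P$, and let $C$ be the connected component of $\R^2 \setminus P$ containing $q$. Because $P$ is a simple polygon without holes, $\R^2 \setminus P$ has a unique (unbounded) component, so $C$ is unbounded. A path in $C$ from $q$ to infinity necessarily exits the bounded set $T$ and therefore crosses $\partial T$; but $\partial T \subseteq P$, contradicting that the path stays in $\R^2 \setminus P$. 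Hence $T \subseteq P$, and in particular $\linesegment{a}{d} \subseteq T \subseteq P$, so $a$ sees $d$ — contradiction.

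The only delicate step is the simple-connectivity appeal used to conclude $T \subseteq P$ from $\partial T \subseteq P$; everything else is bookkeeping about open/closed sets and the endpoint cases. Since $P$ is explicitly a simple polygon (no holes), this step is standard, but it is worth stating the ``no bounded complementary components'' fact explicitly so that the generalization to polygons with holes is visibly where the lemma (and hence \RepeatedGreedy) would break.
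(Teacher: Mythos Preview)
Your proof is correct and follows essentially the same approach as the paper: both argue that the triangle $\tri{abc}$ must lie in $P$ because its boundary does and $P$ has no holes, whence $a$ would see $d$. Your version is more careful (handling the endpoint cases and making the simple-connectivity step explicit), whereas the paper phrases the same idea as ``$\partial P$ must enter the triangle to block $\linesegment{a}{d}$, and can only do so through $\linesegment{b}{c}$.''
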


\begin{proof}
    \begin{figure}[ht]
        \centering
        
        \begin{tikzpicture}
            \coordinate (A) at (0, 4);
            \coordinate (B) at (0, 1);
            \coordinate (C) at (2, 3);
            \coordinate (D) at (1, 2);
            \draw[color=black, dashed] (A) -- (B);
            \draw[color=black, dashed] (A) -- (C);
            \draw[color=black, dotted] (B) -- (C);
            \draw[color=notvisible, dashed] (A) -- (D);
    
            \draw[color=black] (2.1,2.6) -- (0.2, 3) -- (1.6, 2.1);
    
            \filldraw[color = red] (D) circle (1.5pt);
            \draw[color = red, right] (D) node {$d$};
            \draw[color = black, left] (A) node {$a$};
            \draw[color = black, left] (B) node {$b$};
            \draw[color = black, right] (C) node {$c$};
            \draw[color = black, below] (2,2.6) node {$\partial P$};
        \end{tikzpicture}
        
        \caption{When $a$ can see $b$ and $c$, but not $d$, some part of $\partial P$ intersects $\linesegment{b}{c}$.}
        \label{fig:dP_intersecting_BC}
    \end{figure}
    
    Since $a$ can see both $b$ and $c$, $\linesegment{a}{c}$ and $\linesegment{b}{c}$ are unobstructed.
    However, since $a$ cannot see $d$, $\partial P$ must intersect $\linesegment{a}{d}$. 
    Since $a, b$ and $c$ are all inside $P$ then $\partial P$ must intersect triangle $abc$ to enter, so it can block $d$ from $a$, however, this can only happen by intersecting $\linesegment{b}{c}$ (see Figure~\ref{fig:dP_intersecting_BC}). 
\end{proof}

The feasible regions are constructed as the intersection between \emph{visibility polygons} for single points on $\partial P$. 
We cover in detail how the \Greedy algorithm computes these in Section~\ref{sec:alg_greedy_interval}.

A visibility polygon from a point $a \in \partial P$ can be decomposed into \emph{visible triangles} by drawing lines from $a$ to all the vertices of $\partial P$ visible from $a$ (see Figure~\ref{fig:visible_triangles_from_A}).

\begin{figure}[ht]
    \centering
    
    \begin{tikzpicture}[scale = 0.8]
        \coordinate (A) at (0.85, -1.42);
        \coordinate (B) at (1.78, -0.52);
        \coordinate (C) at (3.43, -0.74);
        \coordinate (D) at (1.84, 0.86);
        \coordinate (E) at (3.09, 1.66);
        \coordinate (F) at (1.88, 3.79);
        \coordinate (G) at (0.91, 2.72);
        \coordinate (H) at (1.1, 3.57);
        \coordinate (I) at (-1.15, 3.48);
        \coordinate (J) at (-1.42, 2.7);
        \coordinate (K) at (-0.05, 2.57);
        \coordinate (L) at (-2, 1.02);
        \coordinate (M) at (-2.34, 3.62);
        \coordinate (N) at (-4, 2);
        \coordinate (O) at (-2.47, 1.38);
        \coordinate (P) at (-3.16, -2.65);
        \coordinate (Q) at (-2.01, 0.13);
        \coordinate (R) at (-1.15, -0.09);
        \coordinate (S) at (-1.07, -0.81);
        \coordinate (T) at (-2.57, 0.86);
        \coordinate (U) at (-0.27, 3.51);
        \coordinate (V) at (2.58, 2.55);
        \coordinate (W) at (2.5, 0.18);
        \coordinate (Z) at (0.92, 3.56);
        \coordinate (A1) at (-3.62,2.37);

        \filldraw[visible triangle] (A) -- (W) -- (D) -- (V) -- (F) -- (G) -- (Z) -- (U) -- (K) -- (L) -- (A1) -- (O) -- (T) -- (R) -- (S) -- cycle;
        \filldraw[nv] (B) -- (C) -- (W) -- cycle;
        \filldraw[nv] (D) -- (E) -- (V) -- cycle;
        \filldraw[nv] (G) -- (H) -- (Z) -- cycle;
        \filldraw[nv] (U) -- (I) -- (J) -- (K) -- cycle;
        \filldraw[nv] (T) -- (P) -- (Q) -- (R) -- cycle;
        \filldraw[nv] (O) -- (N) -- (M) -- (L) -- cycle;

        \draw[thick] (A) -- (B) -- (C) -- (1.855, 0.86) -- (E) -- (F) -- (0.925, 2.72) -- (H) -- (I) -- (J) -- (-0.13, 2.57) -- (L) -- (M) -- (N) -- (O) -- (P) -- (Q) -- (R) -- (S) -- cycle; 
        \draw[gray, thick] (A) -- (W);
        \draw[gray, thick] (A) -- (V);
        \draw[gray, thick] (A) -- (F);
        \draw[gray, thick] (A) -- (Z);
        \draw[gray, thick] (A) -- (U);
        \draw[gray, thick] (A) -- (A1);
        \draw[gray, thick] (A) -- (T);
        \draw[gray, thick] (A) -- (R);
        \draw[gray, thick] (A) -- (S);

        \draw[black, below] (A) node {$a$};
    \end{tikzpicture}
    
    \caption{The grey region is a visibility polygon $\visPol{P}{a}$ from point $a$ in polygon $P$. It is further subdivided into visible triangles and the red non-visible areas are pockets that connect to $\visPol{P}{a}$ at windows.}
    \label{fig:visible_triangles_from_A}
\end{figure}
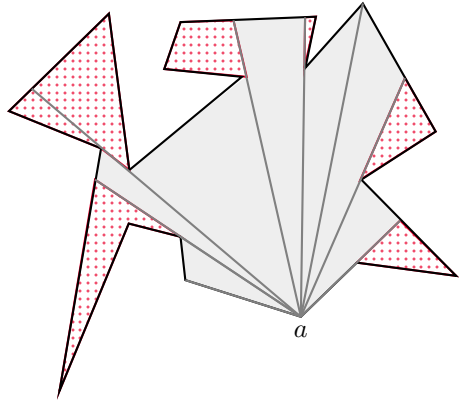

Note that these visible triangles may be degenerate (i.e. just a line). 
The areas marked in red, which are not visible to $a$, we denote \emph{pockets}. 
Each pocket is connected to the visibility polygon through line segments shared with one visible triangle. 
We denote these shared line segments as \emph{windows} (similar to Gindy and Avis~\cite{visibility_polygon_1981_ELGINDY1981186}). 
In Figure~\ref{fig:visible_triangles_from_A} each pocket has exactly one window. 
The following lemma shows that this is always true:

\begin{lemma}[Pockets have exactly one window]
\label{lem:pockets_are_connected_to_exactly_on_visible_triangle}
    Let $a \in \partial P$ and let $R$ be a pocket of $\visPol{P}{a}$. Then there is exactly one window for $R$.
\end{lemma}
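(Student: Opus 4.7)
The plan is to argue by contradiction. Suppose that the pocket $R$ has at least two distinct windows $w_1$ and $w_2$. Each window $w_i$ is a chord of $P$: a line segment whose interior lies in the interior of $P$ and whose two endpoints lie on $\partial P$. Since $P$ is simply connected, removing $w_i$ from $P$ disconnects it into exactly two open components; let $A_i$ denote the one containing $a$ and $B_i$ the other (the ``far side'' of $w_i$).

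The first key step is to show that $R \subseteq B_i$ whenever $w_i$ is a window of $R$. The visibility polygon $V := \visPol{P}{a}$ is star-shaped from $a$ (hence connected) and does not meet the interior of $w_i$, so $V \subseteq A_i$. The pocket $R$ is connected, disjoint from $V$, and has $w_i$ in its boundary; locally, at any interior point of $w_i$, one side inside $P$ is occupied by $V$ and the other by the unique pocket adjacent to $w_i$, which must be $R$. Since the $V$-side is $A_i$, the pocket $R$ approaches $w_i$ from $B_i$, and connectedness of $R$ then forces $R \subseteq B_i$. The second key step is to observe that $B_1 \cap B_2 = \emptyset$: because $w_1$ and $w_2$ are two disjoint chords of the simply connected polygon $P$, the set $P \setminus (w_1 \cup w_2)$ has exactly three connected components $P_a, P_1, P_2$, with $a \in P_a$ and each of $P_1, P_2$ adjacent to only one of the two windows. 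A short check, using that the interior of $w_2$ lies in $A_1$ and the interior of $w_1$ lies in $A_2$, identifies $B_1 = P_1$ and $B_2 = P_2$, which are disjoint.

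Combining the two steps gives $R \subseteq B_1 \cap B_2 = \emptyset$, contradicting that $R$ is nonempty, so $R$ has at most one window. That $R$ has at least one window is easy: otherwise $\partial R \subseteq \partial P$ would make $R$ both open and closed in the interior of $P$ while disjoint from the nonempty open set $V^\circ$, which is impossible. The main obstacle I anticipate is making precise the local claim in the first step, namely that along the interior of $w_i$ the non-$V$ side is filled by a single pocket $R$ throughout; this requires using that the non-$V$ side of $w_i$ is a locally connected strip inside $P$, so it cannot be split between two distinct pockets (since distinct pockets are separated by $V$, and no portion of $V$ can lie on the non-$V$ side of $w_i$).
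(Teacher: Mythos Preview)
Your argument is correct and takes a genuinely different route from the paper. The paper's proof picks a vertex $b$ of $R$ where a window meets a piece of $\partial P\cap\partial R$, and then does a four-way case analysis on where $\partial P$ can continue from $b$, showing in each case that simplicity of $P$ is violated (a hole is created, or $\partial P$ enters a visible triangle, etc.). Your approach is purely topological: each window is a chord of the simply connected region $P$; the star-shaped visibility polygon lies entirely on the $a$-side of every such chord, so the pocket lies on the far side; and two distinct chords (edges of the simple polygon $V$, hence with disjoint interiors) have disjoint far sides once you know $a$'s component borders both. This is cleaner and avoids case analysis; the paper's version is more hands-on and makes the role of ``no holes'' visible through the tracking of $\partial P$.

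Two small points worth tightening. First, since $a\in\partial P$, ``the component containing $a$'' should be read as ``the component whose closure contains $a$''; concretely, the open segment from $a$ to the reflex vertex defining $w_i$ lies in that component. Second, the claim that $w_2^\circ\subseteq A_1$ (used in your second step) is not automatic from $w_2\subseteq V$, since $w_2\subseteq\partial V$ rather than $V^\circ$; the clean justification is that $w_1$ and $w_2$ lie on distinct rays from $a$, so the segment from $a$ to any interior point of $w_2$ avoids $w_1$ entirely, placing $w_2^\circ$ on the $a$-side of $w_1$. With these in place your three-component picture is fully justified, and $R\subseteq B_1\cap B_2=\emptyset$ gives the contradiction.
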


\begin{proof}
    We show the lemma by contradiction. It is clear, that $R$ must be connected to at least one visible triangle by an edge, as $R$ otherwise would be disconnected from $P$ and hence not a part of $P$.

    Thus we assume there are at least 2 windows for $R$.
    Consider the connected components of $\partial P \cap \partial R$, i.e. the boundary of $R$ without the windows.
    As parts of $\partial P$ they are connected in a cycle with other parts of $\partial P$ and as parts of $\partial R$ they are connected in a cycle by windows.
    We consider a point $b$, which is a vertex of $R$, where one edge connected to $b$ is in a component of $\partial P \cap \partial R$, which we denote $f$ and the other is a window $w_b$.
    Furthermore, we assume $B$ is the closer point to $a$, of the endpoints of $w_b$. Let $c$ be the other endpoint $f$.
    See Figure~\ref{fig:obstructed_area_connected_to_multiple_triangles}.

    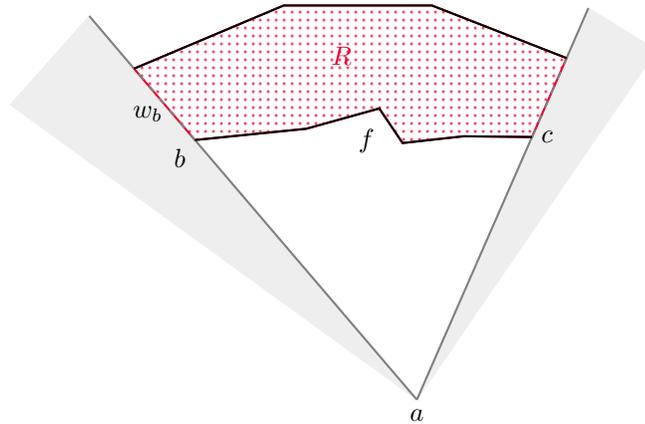
\begin{figure}[ht]
        \centering
        
        \begin{tikzpicture}[scale = 0.5]
            \coordinate (T) at (-11.59, 6.9);
            \coordinate (A) at (-10.4, 5.5);
            \coordinate (B) at (-6.42, 7.18);
            \coordinate (C) at (-2.48, 7.18);
            \coordinate (D) at (1.1, 5.78);
            \coordinate (E) at (0.18, 3.68);
            \coordinate (F) at (-1.64, 3.7);
            \coordinate (G) at (-3.26, 3.52);
            \coordinate (H) at (-3.88, 4.44);
            \coordinate (I) at (-5.82, 3.9);
            \coordinate (J) at (-8.78, 3.6);
            \coordinate (K) at (-2.88, -3.31);
            
            \filldraw[visible triangle] (K) -- (T) -- (-13.69, 4.54) -- cycle;
            \filldraw[visible triangle] (K) -- (1.68, 7.11) -- (3.5, 6) -- cycle;
            
            \filldraw[nv] (A) -- (B) -- (C) -- (D) -- (E) -- (F) -- (G) -- (H) -- (I) -- (J) -- cycle;
            \draw[below] (K) node {$a$};
            \draw[right] (E) node {$c$};
            \draw[below left, yshift=2pt] (J) node {$b$};
            \draw[notvisible] (-4.88, 5.84) node {$R$};
            \draw[left] (-3.76, 3.57) node {$f$};
            
            \draw[thick, gray] (K) -- (J);
            \draw[thick, gray,dashed] (J) -- node[midway, below left, yshift = 3pt, xshift = 3pt, black] {$w_b$} (A);
            \draw[thick, gray] (A) -- (T);
            \draw[thick, gray] (K) -- (E);
            \draw[thick, gray, dashed] (E) -- (D);
            \draw[thick, gray] (D) -- (1.68, 7.11);
            \draw[thick] (A) -- (B) -- (C) -- (D);
            \draw[thick] (E) -- (F) -- (G) -- (H) -- (I) -- (J);
        \end{tikzpicture}
        
        \caption{When a pocket is connected by multiple windows, the polygon cannot be simple.}
        \label{fig:obstructed_area_connected_to_multiple_triangles}
    \end{figure}
    
    We now consider where $\partial P$ can continue from $b$. We consider four cases:
    
    \begin{case}
        $\partial P$ can continue along $w_b$. However, this will contradict the assumption that $b$ is connected to the window.
    \end{case}
    
    \begin{case}
        $\partial P$ can continue into the visible triangle, but then the triangle will no longer be completely visible, a contradiction.
    \end{case}
    
    \begin{case}
        $\partial P$ can continue into $R$, however $R$ is completely contained in $P$, and if an edge of $\partial P$ lies in $R$, there will be some area on one side of the edge, which is not contained in $P$. Thus, we have a contradiction. \end{case}
    
    \begin{case}
        $\partial P$ can continue along $\linesegment{a}{b}$ or the white region below $R$ (see Figure~\ref{fig:obstructed_area_connected_to_multiple_triangles}). Once it enters here, it will become stuck inside the area bounded by the two visible triangles and $f$. It can exit through $C$, but this will create a loop in $\partial P$ not including the top of $R$, which would introduce a hole in $P$.
        
        It could also enter $a$. However, we could do the same case analysis for $C$, and it also need to enter into $a$, thus we again have a loop, and we are done.
        \popQED
    \end{case}
\end{proof}

With this, we are now ready to prove the main lemma of this subsection:

\lemFeasibleRegionIsConnected*

\begin{proof}
    The proof of connectivity follows by induction in the number of visible polygons we intersect in the \Greedy algorithm (see Section~\ref{sec:alg_greedy_interval}). 

    For the base case, i.e. $\FeasibleRegion{[a,a]}$ where $a\in \partial P$, the feasible region is the visibility polygon $\visPol{P}{a}$.
    Visibility polygons are star-shaped, hence connected.
    
    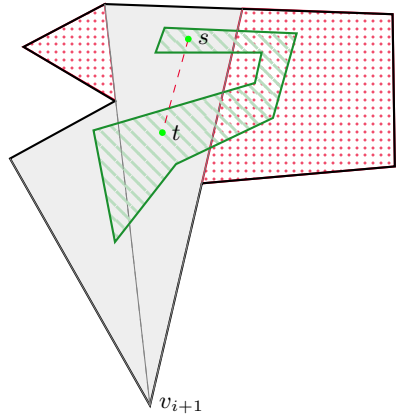
\begin{figure}[ht]
        \centering
        
        \begin{tikzpicture}[scale = 1.5]
            \coordinate (A) at (1.22, 1.11);
            \coordinate (B) at (1.68, 3.08);
            \coordinate (C) at (2.04, 4.63);
            \coordinate (D) at (0.82, 4.67);
            \coordinate (E) at (3.39, 3.23);
            \coordinate (F) at (3.37, 4.58);
            \coordinate (G) at (0.92, 3.81);
            \coordinate (H) at (-0.02, 3.3);
            \coordinate (I) at (1.35, 4.46);
            \coordinate (J) at (1.27, 4.24);
            \coordinate (K) at (2.52, 4.41);
            \coordinate (L) at (2.31, 3.66);
            \coordinate (M) at (2.21, 4.24);
            \coordinate (N) at (2.15, 3.97);
            \coordinate (O) at (1.45, 3.25);
            \coordinate (P) at (0.72, 3.55);
            \coordinate (Q) at (0.91, 2.56);
            \coordinate (R) at (0.1, 4.29);
            \coordinate (S) at (1.56, 4.36);
            \coordinate (T) at (1.33, 3.53);
            
            \filldraw[nv] (D) -- (R) -- (G) -- cycle;
            \filldraw[visible triangle] (A) -- (C) -- (D) -- (G) -- (H) -- cycle;
            \filldraw[nv] (B) -- (E) -- (F) -- (C) -- cycle;
    
            \draw[thick] (A) -- (B) -- (E) -- (F) -- (D) -- (R) -- (0.91, 3.81) -- (H) -- cycle;
            \draw[gray] (A) -- (C);
            \draw[gray] (A) -- (D);
            \draw[gray] (A) -- (H);
            
            \filldraw[F] (I) -- (J) -- (M) -- (N) -- (P) -- (Q) -- (O) -- (L) -- (K) -- cycle;
            
            \draw[notvisible, dashed] (S) -- (T);
            \filldraw[green] (S) circle (0.7pt);
            \filldraw[green] (T) circle (0.7pt);
    
            \draw[dashed, gray, nearly opaque] (A) -- (C);
            \draw[dashed, gray, nearly opaque] (A) -- (D);
    
            \draw[black, right] (A) node {\small{$v_{i+1}$}};
            \draw[black, right] (S) node {\small{$s$}};
            \draw[black, right] (T) node {\small{$t$}};
        \end{tikzpicture}
        
        \caption{In the induction step, we intersect $\visPol{P}{v_{i+1}}$ (in gray) with the feasible region $\FeasibleRegion{[a, v_i]}$ (in green) and assume for contradiction that this disconnects the feasible region. Taking points $s$ and $t$ in different components, we now use convex viewing lemma (Lemma~\ref{lem:convex_viewing_lemma}) to get a contradiction}
        \label{fig:when_feasible_region_gets_split_by_new_intersection}
    \end{figure}

    For the induction step, we assume the feasible region seeing $[a, v_i]$ is connected and let $v_{i+1}$ be the next vertex along $\partial P$ or the point $b$ if no such vertex exits. 
    We assume for contradiction $\FeasibleRegion{[a,v_{i+1}]}$ is not connected. For this to happen, the feasible region $\FeasibleRegion{[a,v_i]}$ has to enter a pocket of the visibility polygon somewhere and exit elsewhere, so that the visibility polygon contains two disconnected components of $\FeasibleRegion{[a, v_i]}$ (see Figure~\ref{fig:when_feasible_region_gets_split_by_new_intersection}). Since each pocket only has one window by Lemma~\ref{lem:pockets_are_connected_to_exactly_on_visible_triangle}, the two components will intersect the same visible triangle. Let points $s$ and $t$ from different components of $\FeasibleRegion{[a,v_i]}$ in the same visible triangle.

    Since the previous feasible region is the intersection of visibility polygons for points $[a, v_i]$, there must be some point for which $s$ and $t$ are visible while some point on $\linesegment{s}{t}$ is not. 
    By Lemma~\ref{lem:convex_viewing_lemma}, there must be some part of $\partial P$ that intersects $\linesegment{s}{t}$, however, $\linesegment{s}{t}$ is contained in a visible triangle, where no part of $\partial P$ can lie. 
    Hence, the new intersected feasible region will continue to be connected, finishing the induction.

    The same considerations about $s$ and $t$ show the convexity claim.
\end{proof}

\subsection{Visible intervals cannot block their endpoints}
\label{apx:yz_cannot_block_z}

Since we only look at blockage (Remark~\ref{fig:horizon_vs_blockage}), it will be important to look at the point that blocks parts of edges from guards:

\begin{lemma}[What $g$ sees around blocking vertices]
\label{lem:what_G_sees_around_C}
    Let $g$ be a guard seeing $[y,z]$ and assume that $c$ is a vertex of $P$ that blocks $g$ from seeing more than $z$. Then either $g$ cannot see both edges connected to $c$, or one of these edges $\ell$ is a segment of $\lineextension{z}{g}$ and $g$ cannot see both of the edges connected to $\ell$.
\end{lemma}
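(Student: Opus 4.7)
Plan: Since $c$ blocks $g$'s view past $z$, the vertex $c$ lies on the segment $\linesegment{g}{z}$, and the two edges at $c$ must both lie in the same closed half-plane defined by $\lineextension{g}{z}$: otherwise one of them would properly cross $\linesegment{g}{z}$ and obstruct $g$'s view of $z$. Just past $z$ on $\partial P$ (in the direction where $g$ loses visibility) there lies a \emph{pocket}, i.e., a maximal connected region of $P \setminus \visPol{P}{g}$, bounded on one side by a window lying on $\lineextension{z}{g}$ and on the other by a portion of $\partial P$ running from a vertex on the sight line around to (a point near) $z$. The strategy is to identify this pocket, observe that one of the ``relevant'' edges named in the lemma's conclusion must lie on its $\partial P$-boundary, and conclude that this edge is not fully visible from $g$.

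In the non-degenerate case, where neither edge at $c$ lies on $\lineextension{z}{g}$ so both lie strictly on one open side, the vertex $c$ is itself the reflex vertex generating the window $\linesegment{c}{z}$. The pocket's $\partial P$-boundary begins at $c$ and runs along exactly one of its two edges, say $e$; every point on $e$ sufficiently close to $c$ then lies in the pocket, so $e$ is not fully visible from $g$. This establishes the first alternative of the lemma.

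In the degenerate case, where one edge at $c$---call it $\ell$---lies along $\lineextension{z}{g}$, let $c'$ denote its other endpoint, and let $e_2$ and $e'$ be the edges other than $\ell$ meeting $\ell$ at $c$ and $c'$, respectively. Because $\partial P$ coincides with $\lineextension{z}{g}$ along $\ell$, the pocket's $\partial P$-boundary must depart from $\lineextension{z}{g}$ either at $c$ along $e_2$ or at $c'$ along $e'$; whichever is the case, that departing edge has points in the pocket arbitrarily close to its departing vertex, so it is not fully visible from $g$. This yields the second alternative of the lemma.

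The main technical hurdle will be to rigorously justify that the pocket's $\partial P$-boundary does start along the claimed edge in each case. I expect this to reduce to Lemma~\ref{lem:pockets_are_connected_to_exactly_on_visible_triangle} (each pocket has a unique window) together with a local orientation argument: the pocket lies in the open half-plane of $\lineextension{z}{g}$ that contains the relevant non-sight-line edges, and tracing $\partial P$ in the appropriate direction, it leaves $\lineextension{z}{g}$ precisely at the pocket-bordering vertex along the pocket-bordering edge. The degenerate case additionally requires a careful sub-case analysis depending on the direction in which $\ell$ extends from $c$ along the sight line.
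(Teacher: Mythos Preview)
Your approach is sound and takes a genuinely different route from the paper. The paper argues by direct contradiction with elementary angle-chasing: assuming $g$ sees points $a,b$ on both edges at $c$, a case split on whether $a,b$ lie on the same side, opposite sides, or on $\lineextension{g}{c}$ shows that either one of the segments $\linesegment{g}{a},\linesegment{g}{b}$ is obstructed by the other edge, or the chain from $b$ to $a$ obstructs $\linesegment{g}{z}$; the collinear sub-case is then pushed one vertex further (to your $c'$) and the same dichotomy is replayed. Your argument instead leverages the pocket structure of $\visPol{P}{g}$ developed in Appendix~\ref{apx:feasible_region_connected}: since $c$ is a reflex vertex on the sight ray, it is a vertex of $\visPol{P}{g}$ where a window begins, and exactly one incident edge of $\partial P$ bounds the adjacent pocket, hence is invisible near $c$. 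This is conceptually cleaner and reuses Lemma~\ref{lem:pockets_are_connected_to_exactly_on_visible_triangle}, whereas the paper's argument is more self-contained and never names the visibility polygon. Two refinements you will need when carrying it out: the window at $c$ ends at $z$ only when $c$ is the \emph{closest} blocker---otherwise it ends at the next boundary point along the ray---but one incident edge still bounds a pocket regardless, so your conclusion survives; and in the degenerate case the ray coincides with $\ell\subset\partial P$ so no window begins at $c$ in the usual sense, and your promised direction-based sub-case split (the window begins at $c$ with $e_2$ bounding the pocket when $\ell$ points toward $g$, and at $c'$ with $e'$ bounding it when $\ell$ points toward $z$) is exactly what is required.
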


\begin{proof}
    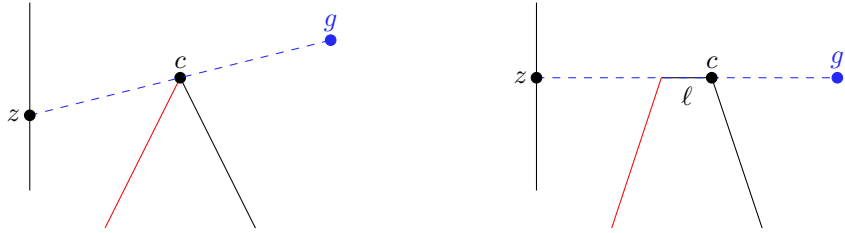
\begin{figure}[ht]
        \centering
        
        \begin{tikzpicture}
            \draw[color=black] (1,2) -- (2,0);
            \draw[color=red] (1,2) -- (0,0);
            \draw[color=black] (-1,3) -- (-1,0.5);
            \draw[color=guard, above] (3,2.5) node  {$g$}; 
            \filldraw[guard] (3,2.5) circle (2pt);
            \draw[color=black, above] (1,2) node {$c$};
            \filldraw[black] (1,2) circle (2pt);
            
            \draw[color=guard, dashed] (-1,1.5) --  (3,2.5);
            \draw[color = black, left] (-1,1.5) node {$z$};
            \filldraw[black] (-1,1.5) circle (2pt);
        \end{tikzpicture}%
        \hspace{6em}%
        \begin{tikzpicture}
            \draw[color=black] (0.66,2) --    (1.33,2) -- (2,0);
            \draw[color=red] (0,0) -- (0.66, 2);
            \draw[color=black] (-1,3) -- (-1,0.5);
            \draw[color=black, below] (1,2) node{$\ell$};
        
            \draw[color=guard, dashed] (-1,2) --(3,2);
    
            \draw[color=guard, above] (3,2) node {$g$}; 
            \filldraw[guard] (3,2) circle (2pt);
            \draw[color=black, above] (1.33,2) node    {$c$};
            \filldraw[black] (1.33,2) circle (2pt);
            \draw[color = black, left](-1,2) node {$z$};
            \filldraw[black] (-1,2) circle (2pt);
        \end{tikzpicture}
        
        \caption{Left, $c$ blocks $g$ and $g$ sees only one edge connected to $c$. Right, $g$ sees both edges, but one is a segment of $\lineextension{g}{c}$ and the edge after is not visible.}
        \label{fig:types_of_blocking}
    \end{figure}
    
    Assume $g$ can see both edges connected to $c$. Since $c$ blocks $g$ from seeing any more than $z$, we know $z$, $g$ and $c$ are collinear. Let $a$ and $b$ be points on the edge after, respectively before $c$, which are visible from $c$. We now show, that either $a$ and $b$ lie on different sides of $\lineextension{g}{c}$, or one of the points lies on $\lineextension{g}{c}$.
    
    So assume for contradiction $a$ and $b$ lie on the same side of $\lineextension{g}{c}$. Assume w.l.o.g. $\angle cga \leq \angle cgb$.

    Thus the ray $\ray{g}{a}$ intersects $\linesegment{b}{c}$, at some point which we denote $d$. Let $m$ be the midpoint of $\linesegment{a}{c}$ and $q$ the midpoint of $\linesegment{c}{d}$. We split into two cases: Either $a$ lies on $\linesegment{g}{d}$ or $d$ lies on $\linesegment{g}{a}$ (see Figure~\ref{fig:A_and_B_on_same_side_of_GC}).
    
    \begin{figure}[ht]
        \centering
        
        \begin{tikzpicture}[scale = 2]
            \draw[color=black] (0.6,0.9) -- (1.33,2) -- (1.6,1.2);
            \draw[notvisible, dashed] (1.5,1.5) -- (0.857,1.2857);
            \draw[color=guard, dashed] (1.5,1.5) -- (3,2);
            
            \draw[color=guard, dashed] (1,2) -- (3,2);
    
            \draw[color=guard, above] (3,2) node {$g$}; 
            \filldraw[guard] (3,2) circle (1pt);
            \draw[color=black, above] (1.33,2) node {$c$};
            \filldraw[black] (1.33,2) circle (1pt);
            \draw[color=black, above] (1.5,1.5) node {$a$};
            \filldraw[black] (1.5,1.5) circle (1pt);
            \draw[color=black, above] (0.66,1) node {$b$};
            \filldraw[black] (0.66,1) circle (1pt);
            \draw[color=black, above] (0.857,1.2857) node {$d$};
            \filldraw[black] (0.857,1.2857) circle (1pt);
            \draw[color=black, above] (1.0952, 1.6428) node {$q$};
            \filldraw[black] (1.0952, 1.6428) circle (1pt);
        \end{tikzpicture}%
        \hspace{5em}%
        \begin{tikzpicture}[scale = 2]
            \draw[color=black] (0.8,2.8) -- (1.33,2) -- (1.52,3.1);
            \draw[color= guard, dashed] (3,2) -- (1.33,2);
            \draw[color=notvisible, dashed] (1.4, 2.4) -- (1,2.5);
            \draw[color=guard, dashed] (3,2) -- (1.4, 2.4);
            \draw[color=guard, above] (3,2) node {$g$}; 
            \filldraw[guard] (3,2) circle (1pt);
            \draw[color=black, below] (1.33,2) node {$c$};
            \filldraw[black] (1.33,2) circle (1pt);
            \draw[color=black, above] (1,2.5) node {$a$};
            \filldraw[black] (1,2.5) circle (1pt);
            \draw[color=black, right] (1.5, 3) node {$b$};
            \filldraw[black] (1.5, 3) circle (1pt);
            \draw[color=black, right] (1.4, 2.5) node {$d$};
            \filldraw[black] (1.4, 2.4) circle (1pt);
            \draw[color=black, left] (1.166, 2.25) node {$m$};
            \filldraw[black] (1.166, 2.25) circle (1pt);
        \end{tikzpicture}
        
        \caption{$a$ and $b$ on the same side of $\lineextension{g}{c}$.}
        \label{fig:A_and_B_on_same_side_of_GC}
    \end{figure}
    
    \begin{case}
        If $a$ lies on $\linesegment{g}{d}$ then $g$ cannot see $q$ (Figure~\ref{fig:A_and_B_on_same_side_of_GC} (left)).
    \end{case}
    
    \begin{case}
        If $d$ lies on $\linesegment{g}{a}$ then $g$ cannot see $m$ (Figure~\ref{fig:A_and_B_on_same_side_of_GC} (right)).
    \end{case}
    
    Since we assumed that $g$ sees $a$, $b$, and $c$, $g$ must also be able to see $\linesegment{a}{c}$ and $\linesegment{b}{c}$, which is impossible as either $q$ or $m$ cannot be seen, thus $a$ and $b$ must either lie on different sides of $\lineextension{g}{c}$ or one must lie on $\lineextension{g}{c}$ (we cannot have both lie on $\lineextension{g}{c}$ since $c$ is a vertex).
    
    Assuming $a$ and $b$ are on either side of $\lineextension{g}{c}$, now the interval $[a,b]$ will block the view from $g$ to $z$ completely, thus we can discard this case. We assume w.l.o.g. $a$ lies on $\lineextension{g}{c}$. Let the endpoint of the edge containing $a$ different from $c$ be $j$ and let $m$ be the other edge connected to $j$. If $g$ can see no more of $m$ than $j$, we are done, so assume $g$ can see some point $u$ on $m$. $u$ cannot lie on $\lineextension{g}{c} = \lineextension{g}{j}$ since $j$ is a vertex. If $u$ and $b$ lie on the same side of $\lineextension{g}{c}$ we have the same problem as for $a$ and $b$ above (Figure~\ref{fig:something_breaks_if_G_sees_E} left). 
    If $k$ and $b$ lie on different sides of $\lineextension{g}{c}$ then $[b, u]$ will block the view from $g$ to $z$ (Figure~\ref{fig:something_breaks_if_G_sees_E}).
\end{proof}

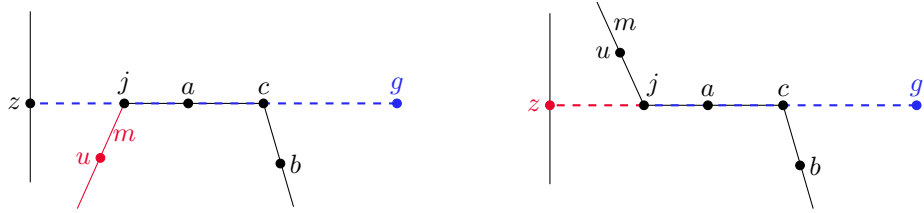
\begin{figure}[ht]
    \centering
    
    \begin{tikzpicture}[scale = 2.5]
        \draw (4, 2.25) -- (4, 3.16);
        \draw (4, 2.67) node[anchor=east] {$z$};
        \draw[guard] (5.95, 2.67) node[anchor=south] {$g$};
        \draw (5.4, 2.12) -- (5.24, 2.67) -- (4.5, 2.67);
        \draw[notvisible] (4.5, 2.67) -- (4.25, 2.11);
        \filldraw[color=guard] (5.95, 2.67) circle (0.66pt);
        \draw[guard, dashed, thick] (5.95, 2.67) -- (4, 2.67);
        \filldraw[color=black] (4, 2.67) circle (0.66pt);
        \filldraw (4.84, 2.67) circle (0.66pt);
        \draw (4.84, 2.67) node[anchor=south] {$a$};
        \filldraw (5.24, 2.67) circle (0.66pt);
        \draw (5.24, 2.67) node[anchor=south] {$c$};
        \filldraw (5.33, 2.35) circle (0.66pt);
        \draw (5.33, 2.35) node[anchor=west] {$b$};
        \filldraw (4.5, 2.67) circle (0.66pt);
        \draw (4.5, 2.67) node[anchor=south] {$j$};
        \filldraw[notvisible] (4.373, 2.38) circle (0.66pt);
        \draw[notvisible] (4.373, 2.38) node[anchor=east] {$u$};
        \draw[notvisible] (4.5, 2.5) node {$m$};
    \end{tikzpicture}%
    \hspace{4em}%
    \begin{tikzpicture}[scale = 2.5]
        \draw (4, 2.25) -- (4, 3.16);
        \filldraw[color=notvisible] (4, 2.67) circle (0.66pt);
        \draw[notvisible] (4, 2.67) node[anchor=east] {$z$};
        \draw[guard] (5.95, 2.67) node[anchor=south] {$g$};
        \draw (5.4, 2.12) -- (5.24, 2.67) -- (4.5, 2.67);
        \draw (4.5, 2.67) -- (4.25, 3.22);
        \filldraw[color=guard] (5.95, 2.67) circle (0.66pt);
        \draw[guard, dashed, thick] (5.95, 2.67) -- (4.5, 2.67);
        \draw[notvisible, dashed, thick] (4.5, 2.67) -- (4, 2.67);
        \filldraw (4.84, 2.67) circle (0.66pt);
        \draw (4.84, 2.67) node[anchor=south] {$a$};
        \filldraw (5.24, 2.67) circle (0.66pt);
        \draw (5.24, 2.67) node[anchor=south] {$c$};
        \filldraw (5.33, 2.35) circle (0.66pt);
        \draw (5.33, 2.35) node[anchor=west] {$b$};
        \filldraw (4.5, 2.67) circle (0.66pt);
        \draw (4.55, 2.67) node[anchor=south] {$j$};
        \filldraw (4.373, 2.95) circle (0.66pt);
        \draw (4.373, 2.95) node[anchor=east] {$u$};
        \draw (4.4, 3.1) node {$m$};
    \end{tikzpicture}
    
    \caption{Left, $u$ lies on same side as $b$ and is not visible from $g$. Right, $u$ lies on different side, but now $z$ is not visible from $g$.}
    \label{fig:something_breaks_if_G_sees_E}
\end{figure}

This can now be used to show where the blocking points are located in the figure:

\lemBlockingCornerNotInView*

\begin{proof}
    Assume for contradiction that $C \in [y, z]$. Since $c$ is a vertex of $P$, $c$ is neither $y$ nor $z$, thus $y$ and $z$ must lie on different sides of $c$ along $[y,z]$. From Lemma~\ref{lem:what_G_sees_around_C} we have two cases for what $g$ can see close to $c$ along $\partial P$.

    \begin{case}
        If $g$ cannot see both edges connected to $c$, either $[y,c]$ or $[c,z]$ is not visible to $g$ contradicting the fact that $g$ sees $[y,z]$.
    \end{case}
    
    \begin{case}
        If $g$ is able to see both edges connected to $c$, then we know one is contained in $\lineextension{g}{z}$ and the next edge is not visible from $g$, hence $z$ must lie on the edge connected to $c$, which is contained in $\lineextension{g}{z}$, for $[c,z]$ to be visible. However now $z$ can be moved until the endpoint of the edge containing $z$, hence $z$ is a vertex, which is contradicting the assumption (see Figure~\ref{fig:G_sees_more_than_just_z}).
        \popQED
    \end{case}
\end{proof}

\begin{figure}[ht]
    \centering
    \begin{tikzpicture}[scale = 2.5]
        \draw (4.27, 1.53) -- (4,2) -- (3,2);
        \draw[guard, dashed, thick] (4.75, 2) -- (3.5, 2);
        \draw[notvisible, dashed, thick] (3, 2) -- (3.5, 2);
        \filldraw[color=guard] (4.75, 2) circle (0.66pt);
        \draw[guard] (4.75, 2) node[anchor=west] {$g$};
        \filldraw (4, 2) circle (0.66pt);
        \draw (4, 2) node[anchor=south] {$c$};
        \filldraw (3.5, 2) circle (0.66pt);
        \draw (3.5, 2) node[anchor=south] {$z$};
    \end{tikzpicture}
    
    \caption{If $c$ is to act as blockage between $g$ and $z$, then $g$ should see no more than $z$. However, when $g, c$ and $z$ are collinear $g$ can see the entire edge containing $z$.}
    \label{fig:G_sees_more_than_just_z}
\end{figure}
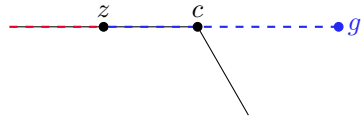

\section{Insightful examples}
\label{apx:counter_examples}

\subsection{Example of two combinatorially indistinguishable functions}
\label{apx:two_greedy_sequences}

\begin{example}
\label{exa:two_greedy_sequences}
    An examples of functions, which combinatorially look identically, but behave very different is given below:
    
    For this, we represent points on the circle by numbers in $[0,1)$.
    We then consider the two functions $\Gsc_1$ and $\Gsc_2$.
    The first (see Figure~\ref{fig:two_greedy_functions} (left)) is defined as $\Gsc_1(x) = \{x + \frac1k - \frac\varepsilon k\}$ where $\varepsilon$ is some fixed small number and $\{a\}$ is the decimal part of $a$ (i.e. $a = \lfloor a \rfloor + \{ a \}$). This will require $k + 1$ guards, but $\varepsilon^{-1}$ steps of \Greedy will be required to satisfy one of the optimality conditions (in this case we satisfy Corollary~\ref{cor:escaping_an_interval_implies_optimal}).
    
    Secondly, we define $\Gsc_2(x) = \{x + \frac1k - \alpha\frac{\{kx\}}{k}\}$, where $\alpha$ is some number in $(0,1)$ (see Figure~\ref{fig:two_greedy_functions} (right)). It has an optimal solution at $y_i = i/k$ with $k$ intervals. However starting anywhere other than in an optimal solution and running \RepeatedGreedy will never yield an optimal solution, since $\frac{\{xk\}}k$ is the distance $x$ needs to move backwards to hit the optimal solution, however we only move $\alpha$ times that distance each step.
    
    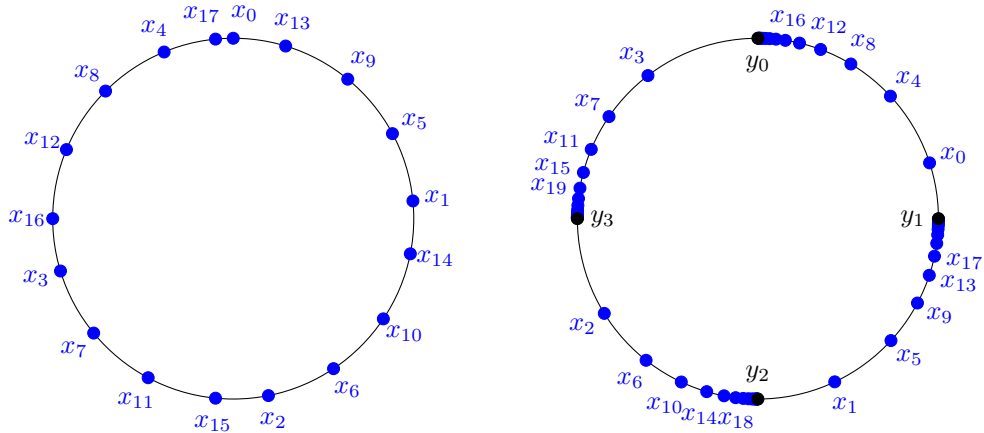
\begin{figure}[ht]
        \centering
    
        \begin{tikzpicture}[scale=0.75]
            \begin{scope}
                \node at (-4.5, 0) {};
    
                \draw (0, 0) circle (3.2);
                \filldraw[color=blue] (0.000, 3.200) circle (3pt); 
                \draw[blue] (0.200, 3.650) node {$x_{0}$};
                \filldraw[color=blue] (3.185, 0.314) circle (3pt); 
                \draw[blue] (3.632, 0.358) node {$x_{1}$};
                \filldraw[color=blue] (0.624, -3.139) circle (3pt); 
                \draw[blue] (0.712, -3.580) node {$x_{2}$};
                \filldraw[color=blue] (-3.062, -0.929) circle (3pt); 
                \draw[blue] (-3.493, -1.060) node {$x_{3}$};
                \filldraw[color=blue] (-1.225, 2.956) circle (3pt); 
                \draw[blue] (-1.397, 3.372) node {$x_{4}$};
                \filldraw[color=blue] (2.822, 1.508) circle (3pt); 
                \draw[blue] (3.219, 1.721) node {$x_{5}$};
                \filldraw[color=blue] (1.778, -2.661) circle (3pt); 
                \draw[blue] (2.028, -3.035) node {$x_{6}$};
                \filldraw[color=blue] (-2.474, -2.030) circle (3pt); 
                \draw[blue] (-2.821, -2.316) node {$x_{7}$};
                \filldraw[color=blue] (-2.263, 2.263) circle (3pt); 
                \draw[blue] (-2.581, 2.581) node {$x_{8}$};
                \filldraw[color=blue] (2.030, 2.474) circle (3pt); 
                \draw[blue] (2.316, 2.821) node {$x_{9}$};
                \filldraw[color=blue] (2.661, -1.778) circle (3pt); 
                \draw[blue] (3.035, -2.028) node {$x_{10}$};
                \filldraw[color=blue] (-1.508, -2.822) circle (3pt); 
                \draw[blue] (-1.721, -3.219) node {$x_{11}$};
                \filldraw[color=blue] (-2.956, 1.225) circle (3pt); 
                \draw[blue] (-3.372, 1.397) node {$x_{12}$};
                \filldraw[color=blue] (0.929, 3.062) circle (3pt); 
                \draw[blue] (1.060, 3.493) node {$x_{13}$};
                \filldraw[color=blue] (3.139, -0.624) circle (3pt); 
                \draw[blue] (3.580, -0.712) node {$x_{14}$};
                \filldraw[color=blue] (-0.314, -3.185) circle (3pt); 
                \draw[blue] (-0.358, -3.632) node {$x_{15}$};
                \filldraw[color=blue] (-3.200, -0.000) circle (3pt); 
                \draw[blue] (-3.650, -0.000) node {$x_{16}$};
                \filldraw[color=blue] (-0.314, 3.185) circle (3pt); 
                \draw[blue] (-0.558, 3.632) node {$x_{17}$};
            \end{scope}
    
            \begin{scope}[xshift=9.3cm]
                \node at (4.5, 0) {};
                
                \draw (0, 0) circle (3.2);
                
                \draw[blue] (3.471, 1.128) node {$x_{0}$};
                \filldraw[color=blue] (1.362, -2.895) circle (3pt); 
                \draw[blue] (1.554, -3.303) node {$x_{1}$};
                \filldraw[color=blue] (-2.723, -1.681) circle (3pt); 
                \draw[blue] (-3.106, -1.917) node {$x_{2}$};
                \filldraw[color=blue] (-1.949, 2.538) circle (3pt); 
                \draw[blue] (-2.223, 2.895) node {$x_{3}$};
                \filldraw[color=blue] (2.349, 2.173) circle (3pt); 
                \draw[blue] (2.680, 2.478) node {$x_{4}$};
                \filldraw[color=blue] (2.359, -2.163) circle (3pt); 
                \draw[blue] (2.690, -2.467) node {$x_{5}$};
                \filldraw[color=blue] (-1.982, -2.513) circle (3pt); 
                \draw[blue] (-2.260, -2.866) node {$x_{6}$};
                \filldraw[color=blue] (-2.639, 1.810) circle (3pt); 
                \draw[blue] (-3.010, 2.064) node {$x_{7}$};
                \filldraw[color=blue] (1.648, 2.743) circle (3pt); 
                \draw[blue] (1.880, 3.129) node {$x_{8}$};
                \filldraw[color=blue] (2.828, -1.497) circle (3pt); 
                \draw[blue] (3.226, -1.708) node {$x_{9}$};
                \filldraw[color=blue] (-1.358, -2.898) circle (3pt); 
                \draw[blue] (-1.649, -3.305) node {$x_{10}$};
                \filldraw[color=blue] (-2.954, 1.229) circle (3pt); 
                \draw[blue] (-3.470, 1.402) node {$x_{11}$};
                \filldraw[color=blue] (1.112, 3.001) circle (3pt); 
                \draw[blue] (1.268, 3.423) node {$x_{12}$};
                \filldraw[color=blue] (3.038, -1.005) circle (3pt); 
                \draw[blue] (3.565, -1.146) node {$x_{13}$};
                \filldraw[color=blue] (-0.907, -3.069) circle (3pt); 
                \draw[blue] (-1.035, -3.500) node {$x_{14}$};
                \filldraw[color=blue] (-3.093, 0.819) circle (3pt); 
                \draw[blue] (-3.629, 0.934) node {$x_{15}$};
                \filldraw[color=blue] (0.738, 3.114) circle (3pt); 
                \draw[blue] (0.542, 3.551) node {$x_{16}$};
                \filldraw[color=blue] (3.130, -0.666) circle (3pt); 
                \draw[blue] (3.670, -0.759) node {$x_{17}$};
                \filldraw[color=blue] (-0.600, -3.143) circle (3pt); 
                \draw[blue] (-0.384, -3.585) node {$x_{18}$};
                \filldraw[color=blue] (-3.154, 0.541) circle (3pt); 
                \draw[blue] (-3.698, 0.567) node {$x_{19}$};
                \filldraw[color=blue] (0.487, 3.163) circle (3pt); 
                \filldraw[color=blue] (3.170, -0.439) circle (3pt); 
                \filldraw[color=blue] (-0.395, -3.176) circle (3pt); 
                \filldraw[color=blue] (-3.180, 0.356) circle (3pt); 
                \filldraw[color=blue] (0.320, 3.184) circle (3pt); 
                \filldraw[color=blue] (3.187, -0.288) circle (3pt); 
                \filldraw[color=blue] (-0.260, -3.189) circle (3pt); 
                \filldraw[color=blue] (-3.191, 0.234) circle (3pt); 
                \filldraw[color=blue] (0.210, 3.193) circle (3pt); 
                \filldraw[color=blue] (3.194, -0.189) circle (3pt); 
                \filldraw[color=blue] (-0.170, -3.195) circle (3pt); 
                \filldraw[color=blue] (-3.196, 0.153) circle (3pt); 
                \filldraw[color=blue] (0.138, 3.197) circle (3pt); 
                \filldraw[color=blue] (3.198, -0.124) circle (3pt); 
                \filldraw[color=blue] (-0.112, -3.198) circle (3pt); 
                \filldraw[color=blue] (-3.198, 0.101) circle (3pt); 
                \filldraw[color=blue] (0.091, 3.199) circle (3pt); 
                \filldraw[color=blue] (3.199, -0.082) circle (3pt); 
                \filldraw[color=blue] (-0.073, -3.199) circle (3pt); 
                \filldraw[color=blue] (-3.199, 0.066) circle (3pt); 
                \filldraw[color=blue] (0.059, 3.199) circle (3pt); 
                \filldraw[color=blue] (3.200, -0.053) circle (3pt); 
                \filldraw[color=blue] (-0.048, -3.200) circle (3pt); 
                \filldraw[color=blue] (-3.200, 0.043) circle (3pt); 
                \filldraw[color=blue] (0.039, 3.200) circle (3pt); 
                \filldraw[color=blue] (3.200, -0.035) circle (3pt); 
                \filldraw[color=blue] (-0.032, -3.200) circle (3pt); 
                \filldraw[color=blue] (-3.200, 0.028) circle (3pt); 
                \filldraw[color=blue] (0.026, 3.200) circle (3pt); 
                \filldraw[color=blue] (3.200, -0.023) circle (3pt); 
                \filldraw[color=blue] (-0.021, -3.200) circle (3pt); 
                \filldraw[color=blue] (-3.200, 0.019) circle (3pt); 
                \filldraw[color=blue] (0.017, 3.200) circle (3pt); 
                \filldraw[color=blue] (3.200, -0.015) circle (3pt); 
                \filldraw[color=blue] (-0.014, -3.200) circle (3pt); 
                \filldraw[color=blue] (-3.200, 0.012) circle (3pt); 
                \filldraw[color=blue] (0.011, 3.200) circle (3pt); 
                \filldraw[color=blue] (3.200, -0.010) circle (3pt); 
                \filldraw[color=blue] (-0.009, -3.200) circle (3pt); 
                \filldraw[color=blue] (-3.200, 0.008) circle (3pt); 
            
                \filldraw[color=black] (0.000, 3.200) circle (3pt); 
                \draw[black] (0.000, 2.750) node {$y_0$};
                \filldraw[color=black] (3.200, 0.000) circle (3pt); 
                \draw[black] (2.750, 0.000) node {$y_1$};
                \filldraw[color=black] (0.000, -3.200) circle (3pt); 
                \draw[black] (0.000, -2.750) node {$y_2$};
                \filldraw[color=black] (-3.200, -0.000) circle (3pt); 
                \draw[black] (-2.750, -0.000) node {$y_3$};
                \filldraw[color=blue] (3.043, 0.989) circle (3pt); 
    
            \end{scope}
        \end{tikzpicture}
        
        \caption{Left, $\Gsc_1$ with $k = 4$ is used to generate a greedy sequence, where $x_{17}$ shows, that we are optimal. 
        Right, $\Gsc_2$ with $k = 4$ and $\alpha = 0.1$ is used. Starting at $x_0 = 0.2$, we never see an optimal solution.}
        \label{fig:two_greedy_functions}
    \end{figure}
    
    Combinatorially, these two functions are identical (until $\varepsilon^{-1}$ rounds have passed), so we have no guarantee, that the algorithm terminates with a solution in polynomial time.
\end{example}

\subsection{Polygons with holes}
\label{apx:polygons_with_holes}
One natural generalization of the \cagp is the \cagp with holes. In this section, we study the variant in which the goal is to guard the external boundary of the polygon. 

Many parts of the geometric analysis break down when we introduce holes. The most glaring is the fact that the entire strategy of Proposition~\ref{prop:guards_below_pivots_implies_other_guards_above_pivots} does not work, as we are using the fact that the pivot points (and edges close to them) have to be guarded by a guard at some point and now these pivot points could be on a hole.

Furthermore, the algorithm given for $\Greedy$ in Section~\ref{sec:alg_greedy_interval} does not work as a point in $P$ can now see two vertices of $P$ without seeing the entire edge between them. This issue can be fixed, but even if it is, we show that \RepeatedGreedy can run in superpolynomial time in the number of vertices when $P$ has holes.

\begin{example}[Octagon with hole]\label{exa:octagon_with_holes}

Consider a regular octagon with a rotated regular octagon inside like shown on Figure~\ref{fig:octagon_in_octagon}, where the inner octagons edges line up exactly with the dashed line segments. Placing guards $g_1$ and $g_2$ will guard the entire boundary contiguously. 

\begin{figure}[ht]
    \centering
    \begin{tikzpicture}[scale = 0.8]
        \foreach \x in {0,1,...,7}{
            \coordinate (A\x) at (45 * \x + 22.5: 3);
            \coordinate (B\x) at (45 * \x : 3 / 1.3065);
        }
        \foreach \L in {A, B}{
        \draw (\L0) -- (\L1) -- (\L2) -- (\L3) -- (\L4) -- (\L5) -- (\L6) -- (\L7) -- cycle;
        }

        \draw[dashed, guard, thick] (A1) -- (A3) -- (A5) -- (A7) -- cycle;

        \foreach \p in {A3, A7}{
        \node[draw, fill = guard, circle, inner sep = 0, minimum size = 4pt, color = guard] at (\p) {};
        }
        
        \newcommand{\drawNode}[3]
        {
            \node[draw, fill = guard, circle, inner sep = 0, minimum size = 4pt, color = guard, nearly opaque, label={[centered, color = guard, label distance = 0.2cm]#3:{$#2$}}] at (#1) {};
        }

        \drawNode{A3}{g_1}{west}
        \drawNode{A7}{g_2}{east}
    \end{tikzpicture}
    \caption{$P$ is an octagon with an octagonal hole in the center. Guards at $g_1$ and $g_2$ will guard the entire boundary contiguously.}\label{fig:octagon_in_octagon}
\end{figure}

Now we enlarge the inner polygon by a tiny $\varepsilon$, which will make this solution invalid. When we run \RepeatedGreedy in this slightly different polygon, it is evident, that the best guard is placed on the outer boundary of $P$ as far as the starting point can see (see Figure~\ref{fig:greedy_sequence_on_octagon}) and the greedy interval is found by taking the furthest point which this guard can see.

\begin{figure}[ht]
    \centering
    \begin{tikzpicture}[scale = 0.8]
        \foreach \x in {0,1,...,7}{
            \coordinate (A\x) at (45 * \x + 22.5: 3);
            \coordinate (B\x) at (45 * \x : 3 / 1.306562964 - 0.005);
        }
        \foreach \L in {A, B}{
            \draw (\L0) -- (\L1) -- (\L2) -- (\L3) -- (\L4) -- (\L5) -- (\L6) -- (\L7) -- cycle;}

        \coordinate (G1) at (-2.77,1.11);
        \coordinate (x1) at (0.89, 2.77);
        \coordinate (G2) at (2.77, -0.15);
        \coordinate (x2) at (0.74, -2.77);
        \coordinate (G3) at (-2.77, -1.07);
        \coordinate (x3) at (-1.14, 2.77);

        \newcommand{\drawNode}[4]
        {
            \node[draw, fill = guard, circle, inner sep = 0, minimum size = 4pt, color = guard, nearly opaque, label={[color = #4,centered, label distance = 0.2cm]#3:{$#2$}}] at (#1) {};
        }
        
        \drawNode{G1}{g_1}{west}{guard}
        \drawNode{G2}{g_2}{east}{guard}
        \drawNode{G3}{g_3}{west}{guard}
        \drawNode{A5}{x_0}{south}{black}
        \drawNode{x1}{x_1}{north}{black}
        \drawNode{x2}{x_2}{south}{black}
        \drawNode{x3}{x_3}{north}{black}

        \draw[dashed, guard, nearly opaque, thick] (A5) -- (G1) -- (x1) -- (G2) -- (x2) -- (G3) -- (x3) ;
    
    \end{tikzpicture}
    \caption{We have here enlarged the inner polygon with $\varepsilon = 0.001$, we now start \RepeatedGreedy at $x_0$. The optimal guard for $x_0$ is found by finding the furthest point along the outer boundary which $x_0$ can see. This point is denoted $g_1$ which can see until $x_1$ and so on.} \label{fig:greedy_sequence_on_octagon}
\end{figure}

As long as the found endpoint/guard is closer to the next vertex than the previous vertex (along the outer boundary of $P$), it is the same vertex of the inner polygon which will block the view to the next guard/endpoint (the one marked in Figure~\ref{fig:octagon_with_coordinates}). 

We consider how far a point on the lower edge of $P$ can see on the left vertical edge of $P$. To do this, we embed $P$ into a coordinate system with the lower edge having endpoints in $(0,0)$ and $(1,0)$. The relevant coordinates are drawn on Figure~\ref{fig:octagon_with_coordinates}.

\begin{figure}[ht]
    \centering
    \begin{tikzpicture}[scale = 2.5]
        \coordinate (A0) at (-0.7071, 1.7071);
        \coordinate (A1) at (-0.7071, 0.7071);
        \coordinate (A2) at (0,0);
        \coordinate (A3) at (1,0);
        \coordinate (Am1) at ($(A0) + (0.2, 0.2)$);
        \coordinate (A4) at ($(A3) + (0.2, 0.2)$);

        \coordinate (B0) at (-0.5 , 0.5 + 0.7071);
        \coordinate (B1) at (0.5 - 0.7071, 0.5);
        \coordinate (B2) at (0.5, 0.7071 - 0.5);

        \coordinate (a) at (0.3, 0);
        \coordinate (b) at (-0.7071, 1);

        \draw (Am1) -- (A0) -- (A1) -- (A2) -- (A3) -- (A4);
        \draw (B0) -- (B1) -- (B2);

        \draw[dashed, gray] (a) -- (b);
        
        \node[draw, fill = black, circle, inner sep = 0, minimum size = 4pt, label = {[centered, label distance = 0.2cm]south:{$(\delta, 0)$}}] at (a) {};
        \node[label = {[centered, label distance = 0.2cm]south:{$(0, 0)$}}] at (A2) {};
        \node[label = {[centered, label distance = 0.2cm]south:{$(1, 0)$}}] at (A3) {};
        \node[label = {west:{$\left(-\sqrt{1/2}, \sqrt{1/2}\right)$}}] at (A1) {};
        \node[label = {west:{$\left(-\sqrt{1/2}, 1 + \sqrt{1/2}\right)$}}] at (A0) {};

        \node[draw, fill = black, circle, inner sep = 0, minimum size = 4pt, label = {north east:
        {$\left(\frac{-\sqrt{2} + 1}{2} - \varepsilon, \frac{1}{2} - \varepsilon \right)$}
        }] at (B1) {};
        
        \node[draw, fill = black, circle, inner sep = 0, minimum size = 4pt, label = {west:
        {$\left(\sqrt{1/2}, 1 +  \sqrt{1/2} - \gamma \right)$}
        }] at (b) {};

        \draw[black, decorate, decoration={brace, amplitude=1ex, raise=0.5ex}] 
        ($(A0)$) -- node[right, xshift = 0.15cm] {$\gamma$} ($(b)$);
        
        \draw[black, decorate, decoration={brace, amplitude=1ex, raise=0.5ex}] 
        ($(A2)$) -- node[above, yshift = 0.15cm] {$\delta$} ($(a)$);

    \end{tikzpicture}
    \caption{$P$ embedded into a coordinate system with coordinates marked. A guard placed in $(\delta, 0)$ can see no longer than $(-\sqrt{1/2}, 1 + \sqrt{1/2} - \gamma)$. Here $\gamma$ is the distance between the new point and the next vertex of $P$.}\label{fig:octagon_with_coordinates}
\end{figure}
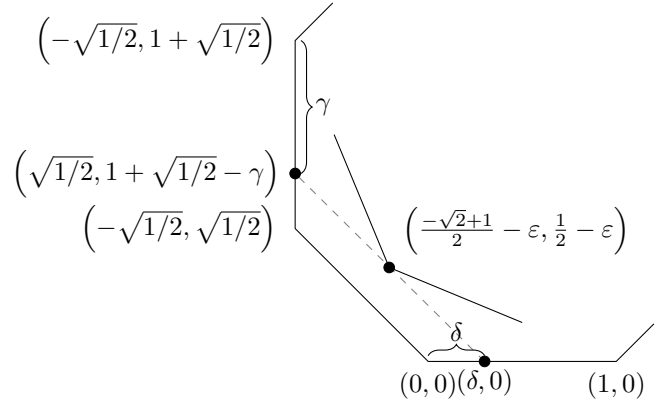

Calculating, we get that the equation of the dashed line is:

\begin{align*}
 y = \left(\frac{\sqrt{2}+2\delta}{\sqrt{2}-1 +2\varepsilon + 2\delta}-1\right)(\delta - x) 
\end{align*}

Inserting $x = -\sqrt{1/2}$ in this equation will yield the y-coordinate, $y'$, of the intersection with the dashed line and the left vertical edge:

\begin{align*}
    y' &= \left(\frac{\sqrt{2}+2\delta}{\sqrt{2}-1 +2\varepsilon + 2\delta}-1\right)\left(\delta + \sqrt{\frac12}\right)\\
    &= \frac{1 + 2\sqrt{2}\delta + 2\delta^2}{\sqrt{2}-1 +2\varepsilon + 2\delta} - \sqrt{\frac{1}{2}} - \delta
\end{align*}

And now $\gamma$ is calculated as $\gamma = 1 + \sqrt{\frac{1}{2}} - y'$:
\begin{align*}
    \gamma &= \delta + \frac{2\delta - 2\delta^2 + (2 + 2\sqrt{2})\varepsilon}{\sqrt{2}-1 +2\varepsilon + 2\delta}\\
    &\in (\delta, 7\delta + 12\varepsilon)
\end{align*}

We now chooce $\varepsilon = \frac1{2 \cdot 12 \cdot 8 ^ {2N}}$ where $N$ is some large number. Consider the sequence $(\gamma_i)_{i=0}^{2N}$ of distances from the furthest visible point and the next vertex when taking such steps. Note that in Figure~\ref{fig:octagon_with_coordinates} we have, say, $\delta=\gamma_i$ and $\gamma = \gamma_{i+1}$, i.e. the $\delta$ is the member of the of the $\gamma_i$ sequence that precedes $\gamma$. The above bound then becomes $\gamma_{i+1} \in (\gamma_{i},7\gamma_{i}+12\varepsilon)$ and especially $\gamma_i < \gamma_{i+1}$ hence the associated local greedy sequences will have negative fingerprints and no repetitions.

Furthermore we show by induction that $\gamma_i < \frac1{2 \cdot 8^{2N - i}}$ as $\gamma_0 = 0$ and if $\gamma_i < \frac1{2 \cdot 8^{2N - i}}$, we get:
\begin{align*}
    \gamma_{i+1} &< 7 \gamma_i + 12 \varepsilon\\
    &= \frac7{2\cdot 8^{2N-i}} + \frac1{2\cdot 8^{2N}}\\
    &\leq \frac1{2\cdot 8^{2N-(i+1)}}
\end{align*}
And we thus have $\gamma_i \in [0, 1)$ for all $\gamma = 0,1,\dots, 2N$, thus for all the guards will be on the same two edges, i.e. we get no edge jumps. Thus in the first $N$ greedy steps we do not reach a geometric progress condition. 

Since $N$ is independent of $n$ the runtime of \RepeatedGreedy is unbounded in the real RAM model.
    
\end{example}

\end{document}